\newtheorem{theorem}{Theorem}[section]
\theoremstyle{plain}
\newtheorem{claim}{Claim}
\newtheorem{corollary}[theorem]{Corollary}
\newtheorem{definition}[theorem]{Definition}
\newtheorem{lemma}[theorem]{Lemma}
\newtheorem{proposition}[theorem]{Proposition}
\newtheorem{remark}[theorem]{Remark}
\numberwithin{equation}{section}
\begin{document}
\title[Parabolic Equations and Markov Processes on Adeles]{Parabolic Type Equations and Markov Stochastic Processes on Adeles}
\author[S. M. Torba]{Sergii M. Torba}
\address[S. Torba and W. Zúñiga-Galindo]{Department of Mathematics, CINVESTAV del IPN,
Unidad Querétaro,\\
Libramiento Norponiente No. 2000, Fracc. Real de Juriquilla,\\
Querétaro, Qro. C.P. 76230 MEXICO}
\email[S. Torba]{storba@math.cinvestav.edu.mx}
\author{W. A. Zúñiga-Galindo}
\email[W. Zúñiga-Galindo]{wazuniga@math.cinvestav.edu.mx}
\thanks{Research of the first named author was partially supported by SNSF,
Switzerland (JRP IZ73Z0 of SCOPES 2009--2012)}
\thanks{The second author was partially supported by CONACYT under Grant \# 127794.}
\subjclass[2000]{Primary 35K90, 60J25; Secondary 36S10, 35K08}
\keywords{Adeles, Parabolic equations, Pseudodifferential operators, Heat kernels,
Markov processes, Ultrametricity, Non-Archimedean analysis}
\dedicatory{ }
\begin{abstract}
In this paper we study the Cauchy problem for new classes of parabolic type
pseudodifferential equations over the rings of finite adeles and adeles. We
show that the adelic topology is metrizable and give an explicit metric. We
find explicit representations of the fundamental solutions (the heat kernels).
These fundamental solutions are transition functions of Markov processes which are adelic analogues of the Archimedean Brownian motion. We show that the Cauchy problems for these equations are well-posed and find explicit representations of the evolution semigroup and formulas for
the solutions of homogeneous and non-homogeneous equations.

\end{abstract}
\maketitle

\section{Introduction}

During the last twenty years the interest on stochastic models on $p$-adics
and adeles has been increasing mainly because these models are convenient for
describing phenomena whose space of states display a hierarchical structure.
All these developments have been motivated by a conjecture in statistical
physics asserting that the non exponential relaxation of several models
describing complex systems, such as glasses and proteins, is a consequence of
a hierarchical structure of the state space which can in turn be put in
connection with $p$-adic structures. The pioneering work of Avetisov
\textit{et al.} on $p$-adic techniques for describing spontaneous symmetry
breaking in the models of spin glasses and relaxation processes in complex
systems gives a very strong motivation for developing a theory of parabolic
type pseudodifferential equations and their corresponding stochastic processes
on $p$-adics and adeles, see \cite{A-B}, \cite{Av-1}, \cite{Av-2},
\cite{Av-3}, \cite{Av-4}, \cite{Av-5}, \cite{Av-6}, \cite{Av-7}, \cite{B},
\cite{Blair}, \cite[and references therein]{Dra-Kh-K-V}, \cite{Ga-Zu}
\cite{H-S-S-S}, \cite{K-M}, \cite[and references therein]{Koch}, \cite{K-A-1},
\cite{K-A-2}, \cite{P-S}, \cite{R-Zu}, \cite{Va1}, \cite[and references
therein]{V-V-Z}, \cite{Ya}, \cite{Zu}, among others.

Another two motivations for studying pseudodifferential equations on adeles
are the following. In \cite{Haran} Haran established a connection between
explicit formulas for the Riemann zeta function and adelic pseudodifferential
operators, see also \cite{C}. In \cite{Manin}\ Manin posed the conjecture that the physical space
is adelic, which can be considered as an extension of the Volovich conjecture
on the non Archimedean nature of the physical space at the Planck scale
\cite{Vol2}, \cite{Vo}, \cite{Va2}. This conjecture conducts naturally to
consider models involving partial differential equations on adelic spaces.
Some preliminary results such as studying pseudodifferential operators are
presented in \cite[and references therein]{Dra}, \cite{D-R-K}, \cite{Kh-Ra},
\cite{R-R}.

In this article we work exclusively with complex valued functions on adeles,
this due to the fact that most of the physical models that motivate our theory
require `real valued probabilities'. However, recently new models of complex
systems using `$p$-adic valued probabilities' have emerged, see e.g.
\cite{K-M-1}, \cite{K-M-2}. The use of complex-valued functions allow us to
take advantage of the classical harmonic and functional analysis. However,
the classical derivative is not defined for complex-valued functions on
adeles, implying the consideration of pseudodifferential operators. For the
sake of simplicity we formulated all our results for finite adeles and adeles
on $\mathbb{Q}$, however all the results are still valid if the field of
rational numbers $\mathbb{Q}$ is replaced by a global field, i.e. by an
algebraic number field, or by the function field of an algebraic curve over a
finite field. We study the Cauchy problem for parabolic type
pseudodifferential equations over the rings of finite adeles and adeles
involving a natural generalization of the Taibleson operator \cite{A-K-S},
\cite{R-Zu}. The considered pseudodifferential operator is not a
straightforward generalization of the Taibleson operator and is natural only
from the point of view of its connection with the adelic topology and the
Fourier transform. By so far we are unaware of any similar results. Other
adelic pseudodifferential operators with different symbols have been studied
in \cite{Haran}, \cite{Kh-Ra}, \cite{Kh-K-Sh}, \cite{R-R}.

The article is organized as follows. In Section \ref{Section2} we summarize
some well-known results on $p$-adic and adelic analysis. In Sections
\ref{Sect3}, \ref{Sect4} we introduce metric structures on the rings of finite
adeles and adeles, see Propositions \ref{Prop1}, \ref{pro1a}. These metric
structures induce the adelic topology and are naturally connected with the Fourier transform. In addition, they allow us to use classical results on Markov processes, see e.g.
\cite{Dyn}. We compute the Fourier transform of radial functions
defined on the ring of finite adeles, see Theorem \ref{LemmaFourierRadial},
and we introduce adelic analogues of the Taibleson operators and Lizorkin spaces of the second kind and prove some basic properties of them. In Section \ref{SectAdelicHeat} we study the heat kernels
on the ring of finite adeles, see Definition \ref{DefAdelicheathkernel} and
Theorem \ref{Theo1}. We give an `explicit formula' for the heat kernel as a
series involving Chebyshev type functions, i.e. products of powers of primes,
some arithmetic operators and exponential functions depending on $t$, see
Proposition \ref{pro2}. We require the Prime Number Theorem to establish the
existence of the adelic heat kernels, see Proposition \ref{pro1}. In Section
\ref{SectMarkov} we show that the adelic heat kernels are the transition
functions of Markov processes, see Theorem \ref{Theo2}. In Sections
\ref{SEctHeatKernelA}, \ref{SEctMarkovA} we study the heat kernels on the
ring of adeles, see Definition \ref{DefHeatKerA} and Theorem \ref{Theo5}, and
show that the heat kernels are the transition functions of Markov processes, see
Theorem \ref{Theo6}. In Sections \ref{SectCauchy}, \ref{SectionCauchyA} we
study Cauchy problems for parabolic type equations involving adelic versions
of the Taibleson operator. We show that these problems are well-posed and find
explicit formulas for the solutions of homogeneous and non-homogeneous
equations, see Proposition \ref{Corr3}, Theorems \ref{Theo4minus}, \ref{Theo3},
\ref{Theo4}, Proposition \ref{propoAdelic} and Theorems \ref{Theo9},
\ref{Theo10}.

Finally we hope that this article will raise interest on studying
pseudodifferential equations and stochastic processes on adeles. We are still at the beginning to develop a complete theory, there are
many open problems and questions, among them, we propose the study of adelic Schrödinger equations and their connection with Feynman and Feynman-Kac integrals.

\section{\label{Section2}Preliminaries}

In this section we fix the notation and collect some basic results on $p$-adic
and adelic analysis that we will use through the article. For a detailed
exposition on $p$-adic and adelic analysis the reader may consult
\cite{A-K-S}, \cite{G-H}, \cite{Koch}, \cite{R-V}, \cite{Taibleson},
\cite{V-V-Z}.

\subsection{Adeles on $\mathbb{Q}$}

\label{SubsectAdeles}

Let $p$ be a fixed prime number, and let $x$ be a nonzero rational number.
Then $x$ may be represented uniquely as $x=p^{k}\frac{a}{b}$ with $p\nmid ab$
and $k\in\mathbb{Z}$. The function
\[
|x|_{p}:=%
\begin{cases}
p^{-k} & \text{ if }x\neq0,\\
0 & \text{ if }x=0
\end{cases}
\]
gives rise to a non-Archimedean absolute value on $\mathbb{Q}$. The field of
$p$-adic numbers $\mathbb{Q}_{p}$ is defined as the completion of $\mathbb{Q}$
with respect to the distance induced by $|\cdot|_{p}$. Any non-zero $p$-adic
number $x_{p}$ has a unique representation of the form%
\begin{equation}
x_{p}=p^{\gamma}\sum_{i=0}^{\infty}a_{i}p^{i}, \label{eq1}%
\end{equation}
where $\gamma=\gamma(x_{p})\in\mathbb{Z},\ a_{i}\in\{0,1,\dots,p-1\},\ a_{0}%
\neq0$. Series (\ref{eq1}) converges in the $p$-adic absolute value. The
integer $\gamma$ is called the $p$\textit{-adic order of} $x_{p}$, and it will
be denoted as $\operatorname{ord}_{p}(x_{p})$, with $\operatorname{ord}%
_{p}(0):=+\infty$. Note that $|x_{p}|_{p}=p^{-\operatorname{ord}_{p}\left(
x_{p}\right)  }$. With the topology induced by $|\cdot|_{p}$, $\mathbb{Q}_{p}$
is a locally compact topological field. The unit ball $\mathbb{Z}_{p}$\ of
$\mathbb{Q}_{p}$ is a compact topological ring. Let $dx_{p}$ denote the Haar
measure of the topological group $\left(  \mathbb{Q}_{p},+\right)  $
normalized by the condition $\operatorname{vol}(\mathbb{Z}_{p})=1$. For a
detailed presentation of the integration theory on $\mathbb{Q}_{p}$ see
\cite{G-H}, \cite{V-V-Z}.

Along the article, the variables $p$, $q$ will denote `primes', including `the
infinite prime', denoted by $\infty$. To each prime $p$ corresponds an
absolute value $| \cdot| _{p}$ on $\mathbb{Q}$, with $| \cdot| _{\infty}$
corresponding to the usual Euclidean norm. In addition, $\mathbb{Q}_{p}$
denotes the completion of $\mathbb{Q}$ with respect to $| \cdot| _{p}$, note
that $\mathbb{Q}_{\infty}=\mathbb{R}$.

The \textit{ring of adeles of} $\mathbb{Q}$, denoted $\mathbb{A}$, is defined
by%
\[
\mathbb{A}=\bigl\{  \left(  x_{\infty},x_{2},x_{3},\ldots\right)  :x_{p}%
\in\mathbb{Q}_{p},\text{ and }x_{p}\in\mathbb{Z}_{p}\text{ for all but
finitely many }p\bigr\}.
\]

Alternatively, we can define $\mathbb{A}$ as \textit{the restricted product}
of the $\mathbb{Q}_{p}$ with respect to the $\mathbb{Z}_{p}$. The
componentwise addition and multiplication give to $\mathbb{A}$ a ring
structure. Furthermore, $\mathbb{A}$ can be made into a locally compact
topological ring by taking as a base for the topology, certainly the\textit{
restricted product topology}, all the sets of the form $U\times%
{\textstyle\prod\nolimits_{p\notin S}}
\mathbb{Z}_{p}$ where $S$ is any finite set of primes containing $\infty$, and
$U$ is any open subset in $%
{\textstyle\prod\nolimits_{p\in S}}
\mathbb{Q}_{p}$.

The restricted product topology is not equal to the product topology. However,
the following relation holds. Take $S$ as before and consider the group%
\[
G_{S}=%
{\displaystyle\prod\limits_{p\in S}}
\mathbb{Q}_{p}\times%
{\displaystyle\prod\limits_{p\notin S}}
\mathbb{Z}_{p}.
\]
Then, the product topology on $G_{S}$ is identical to the one induced by the
restricted product topology on $G_{S}$, thus $G_{S}$ is a locally compact
subgroup of $\mathbb{A}$, and the locally compact topological group $\left(
\mathbb{A},+\right)  $ has a Haar measure, denoted $dx_{\mathbb{A}}$, which
coincides on $G_{S}$ with the product measure $%
{\textstyle\prod\nolimits_{p}}
dx_{p}$, where $dx_{\infty}$ is the Lebesgue measure of $\mathbb{R}$.
We also note that any set of the form
\begin{equation}
\prod_{p\in S}p^{l_{p}}\mathbb{Z}_{p}\times\prod_{p\notin S}\mathbb{Z}_{p},
\label{compactsubset}%
\end{equation}
where $l_{p}$ are arbitrary integers, is a compact subset of $\mathbb{A}$.

The \textit{ring of finite adeles} over $\mathbb{Q}$, denoted $\mathbb{A}_{f}%
$, is defined by
\[
\mathbb{A}_{f}=\bigl\{\left(  x_{2},x_{3},\ldots\right)  :x_{p}\in
\mathbb{Q}_{p},\text{ and }x_{p}\in\mathbb{Z}_{p}\text{ for all but finitely
many }p\bigr\}.
\]
From now on, we consider $\mathbb{A}_{f}$ as a topological ring with respect
to the restricted product topology. Then $\mathbb{A}=\mathbb{R\times A}_{f}$.
Since $\left(  \mathbb{A}_{f},+\right)  $ is a locally compact topological
group, it has a Haar measure, denoted $dx_{\mathbb{A}_{f}}$, which agrees with
the product measure $%
{\textstyle\prod\nolimits_{p<\infty}}
dx_{p}$ on open subgroups of type
\[%
{\displaystyle\prod\limits_{p\leq N}}
\mathbb{Q}_{p}\times%
{\displaystyle\prod\limits_{p>N}}
\mathbb{Z}_{p},\quad\text{with }N\in\mathbb{N}\text{.}%
\]
Furthermore $dx_{\mathbb{A}}=dx_{\infty}dx_{\mathbb{A}_{f}}$. For a detailed
presentation of the integration theory on $\mathbb{A}$ and $\mathbb{A}_{f}$
see \cite[Chapter 1]{G-H}, see also \cite{R-V}, \cite{We}.

In this article we work exclusively with complex valued functions on adeles.
Having complex valued functions defined on a locally compact topological
group, we have the notion of continuous function and may use the functional
spaces $L^{\varrho}(\mathbb{A}_{f})$ and $L^{\varrho}(\mathbb{A})$, $\rho
\geq1$ defined in the standard way.

For studying solutions of parabolic equations we need notations for several
spaces of functions which depend on time and adelic (space) variables. We
denote by:

\begin{itemize}
\item[(i)] $C(I, X)$ the space of continuous functions $u$ on a time interval
$I$ with values in $X$;

\item[(ii)] $C^{1}(I, X)$ the space of continuously differentiable functions
$u$ on a time interval $I$ such that $u^{\prime}\in X$;

\item[(iii)] $L^{1}(I, X)$ the space of measurable functions $u$ on $I$ with
values in $X$ such that $\| u\|$ is integrable;

\item[(iv)] $W^{1,1}(I, X)$ the space of measurable functions $u$ on $I$ with
values in $X$ such that $u^{\prime}\in L^{1}\bigl(I, X\bigr)$.
\end{itemize}

\subsection{Fourier transform on adeles}

\label{SubSectFourier}

Let $p$ be a finite prime, and let $\chi_{p}:\mathbb{Q}_{p}\rightarrow
\mathbb{C}^{\times}$ be the additive character defined by
\[
\chi_{p}\left(  x_{p}\right)  =\exp\left(  -2\pi i\left\{  x_{p}\right\}
\right)  ,
\]
where%
\[
\left\{  x_{p}\right\}  :=%
\begin{cases}
\sum\limits_{i=-k}^{-1}a_{i}p^{i} & \text{if}\ x=\sum\limits_{i=-k}^{\infty
}a_{i}p^{i}\text{ with }k>0\text{ and }0\leq a_{i}\leq p-1,\\
0 & \text{otherwise.}%
\end{cases}
\]

A function $f_{p}:\mathbb{Q}_{p}\rightarrow\mathbb{C}$ which is locally
constant with compact support is called a \textit{Bruhat-Schwartz function}.
The space of such functions is denoted as $\mathcal{S}\left(  \mathbb{Q}%
_{p}\right)  $. Note that in $p$-adic analysis the space $\mathcal{S}\left(
\mathbb{Q}_{p}\right)  $ coincides with the space of test functions
$\mathcal{D}(\mathbb{Q}_{p})$, see \cite{V-V-Z} for details. For $f_{p}%
\in\mathcal{S}\left(  \mathbb{Q}_{p}\right)  $, its Fourier transform
$\widehat{f_{p}}$ is defined by
\[
\widehat{f_{p}}\left(  \xi_{p}\right)  =\int_{\mathbb{Q}_{p}}\chi_{p}\left(
-x_{p}\xi_{p}\right)  f_{p}\left(  x_{p}\right)  dx_{p}.
\]
The Fourier transform induces a linear isomorphism of $\mathcal{S}\left(
\mathbb{Q}_{p}\right)  $\ onto $\mathcal{S}\left(  \mathbb{Q}_{p}\right)  $
satisfying $\widehat{\widehat{f_{p}}}\left(  \xi_{p}\right)  =f_{p}\left(
-\xi_{p}\right)  $.

In the case $p=\infty$ the additive character is defined by $\chi_{\infty
}\left(  x_{\infty}\right)  :=\exp\left(  2\pi ix_{\infty}\right)  $. Let
$\mathcal{S}(\mathbb{R})$ denote the Schwartz space. The Fourier transform of
$f_{\infty}\in\mathcal{S}(\mathbb{R})$, denoted $\widehat{f_{\infty}}$, is
defined by%
\[
\widehat{f_{\infty}}\left(  \xi_{\infty}\right)  =\int_{\mathbb{R}}%
\chi_{\infty}\left(  -x_{\infty}\xi_{\infty}\right)  f_{\infty}\left(
x_{\infty}\right)  dx_{\infty}.
\]
The Fourier transform induces a linear isomorphism of $\mathcal{S}\left(
\mathbb{Q}_{\infty}\right)  $\ onto $\mathcal{S}\left(  \mathbb{Q}_{\infty
}\right)  $ satisfying $\widehat{\widehat{f_{\infty}}}\left(  \xi_{\infty
}\right)  =f_{\infty}\left(  -\xi_{\infty}\right)  $.


The \textit{additive adelic character} $\chi:\mathbb{A\rightarrow C}$ is
defined by%
\[
\chi\left(  x\right)  =\prod_{p} \chi_{p}\left(  x_{p}\right)  \quad\text{ for
}x=\left(  x_{\infty},x_{2},x_{3},\ldots\right)  .
\]
An adelic function is said to be \textit{Bruhat-Schwartz} if it can be
expressed as a finite linear combination, with complex coefficients, of
factorizable functions $f=%
{\textstyle\prod\nolimits_{p\leq\infty}}
f_{p}$, where $f_{p}$ satisfies the following conditions: (A1) $f_{\infty}%
\in\mathcal{S}(\mathbb{R})$; (A2) $f_{p}\in\mathcal{S}\left(  \mathbb{Q}%
_{p}\right)  $ for $p<\infty$; (A3)\ $f_{p}$ is the characteristic function of
$\mathbb{Z}_{p}$ for all but finitely many $p<\infty$. The adelic space of
Bruhat-Schwartz functions is denoted as $\mathcal{S}(\mathbb{A})$. The space
of Bruhat-Schwartz functions $\mathcal{S}(\mathbb{A}_{f})$ is defined in a
similar form except that only conditions A2 and A3 are required.

The Fourier transform of a factorizable adelic Bruhat-Schwartz function is
defined by
\begin{equation}\label{defFTfactorisable}
\widehat{f}\left(  \xi\right)  =\prod_{p\leq\infty}\int_{\mathbb{Q}_{p}}%
f_{p}\left(  x_{p}\right)  \chi\left(  -x_{p}\xi_{p}\right)  dx_{p}.
\end{equation}
This definition may be extended to arbitrary adelic Bruhat-Schwartz functions
by linearity. The Fourier transform gives a linear isomorphism of
$\mathcal{S}\left(  \mathbb{A}\right)  $\ to $\mathcal{S}\left(
\mathbb{A}\right)  $ satisfying $\hat{\hat{f}}\left(  \xi\right)  =f\left(
-\xi\right)  $. Analogous definitions and results hold for the Fourier
transform on $\mathbb{A}_{f}$. The Fourier transform may be extended to the
space $L^{2}(\mathbb{A})$ (or to $L^{2}(\mathbb{A}_{f})$), where it is a
unitary operator and the Steklov--Parseval equality holds.

We will also use the notation $\mathcal{F}\varphi$ for the Fourier transform
and $\mathcal{F}^{-1}\varphi$ for the inverse Fourier transform. We used as a
main reference for this section \cite[Chapter 1]{G-H}, see also \cite{Kh-Ra},
\cite{R-V}, \cite{We}.

Since $\mathbb{A}$ (resp. $\mathbb{A}_{f}$) is a locally compact topological
group, a convolution operation between functions is also defined on
$\mathcal{S}(\mathbb{A})$ and $L^{2}(\mathbb{A})$ (resp. $\mathcal{S}%
(\mathbb{A}_{f})$ and $L^{2}(\mathbb{A}_{f})$). It is connected with the
Fourier transform in the usual way, see e.g. \cite{Rudin} for details.

\section{\label{Sect3}Metric structures, Distributions and Pseudodifferential
Operators on $\mathbb{A}_{f}$}

\subsection{A structure of complete metric space for the finite adeles}

In the previous section the restricted product topology on adeles was
described. By so far authors are unaware of any article introducing metric on
adeles producing the same topology. We show that for the finite adeles the
topology is metrizable and present a non-Archimedean metric on $\mathbb{A}%
_{f}$. Moreover, in this metric each ball is a compact set and the Fourier
transform of a radial function is again a radial function. Hence, despite of
the complicated form of the presented metric we believe that it is natural for
the ring of finite adeles.

Consider the following two functions:
\begin{equation}
\label{Norm0}\|x\|_{1} := \max_{p} | x_{p}|_{p},\qquad x\in\mathbb{A}_{f},
\end{equation}
and
\begin{equation}
\label{Norm1}\|x\|_{0}:=\max_{p}\frac{|x_{p}|_{p}}{p},\qquad x\in
\mathbb{A}_{f}.
\end{equation}
Both functions are well defined and may be used to introduce a metric on
$\mathbb{A}_{f}$. However, the topology induced by the metric $\|x-y\|_{1}$
does not coincide with the restricted product topology which may be easily
seen from the following example. The sequence of adeles $x^{(k)} :=
\bigl( \underbrace{0,0,\ldots,0}_{k-1}, 1, 0,\ldots\bigr)$, $k\in\mathbb{N}$
converges to 0 in the restricted product topology, but does not converge in
the metric generated by $\|\cdot\|_{1}$. The metric $\|x-y\|_{0}$ induces the
same topology as the restricted product topology, however it does not satisfy
the above mentioned properties. For instance, with respect to this metric only
balls of radiuses less than 1 are compact, and the Fourier transform of a radial
function is not necessary a radial function. We left checking of these
statements to reader, all required proofs may be obtained similarly to the
proofs in this article.

To overcome the mentioned problems we define a function
\begin{equation}
\label{Norm}\|x\| :=
\begin{cases}
\|x\|_{0} & \text{if } x\in\prod_{p}\mathbb{Z}_{p},\\
\|x\|_{1} & \text{if } x\not \in \prod_{p}\mathbb{Z}_{p},
\end{cases}
\end{equation}
for arbitrary $x\in\mathbb{A}_{f}$. Note that $\|x\|_{0}\le\|x\|\le\|x\|_{1}$
for any $x\in\mathbb{A}_{f}$. We introduce the function (our metric)
\begin{equation}
\label{MetrAf}\rho(x,y) := \|x-y\|,\qquad x,y\in\mathbb{A}_{f}.
\end{equation}

The function $\Vert\cdot\Vert$ can be also represented as
\[
\Vert x\Vert=\max_{p}p^{-[[\operatorname{ord}_{p}(x_{p})]]},\qquad
x\in\mathbb{A}_{f}\setminus\{0\},
\]
where
\begin{equation}
\lbrack\lbrack t]]:=%
\begin{cases}
\lbrack t] & \text{if }t\geq0\\[0pt]%
\lbrack t]+1 & \text{if }t<0,
\end{cases}
\label{def[[]]}%
\end{equation}
here $[ \cdot] $ denotes the integer part function.

\begin{remark}
\label{Rmk Metric Range} The range of values of the function $\rho$ coincides
with the set $\{0\}\cup\bigl\{ p^{j}: p\text{ is prime, }j\in\mathbb{Z}%
\setminus\{0\}\bigr\}$.
\end{remark}

\begin{remark}
It may seem odd that the proposed metric does not attain the value 1. It is
possible to define another metric, using instead of $\|\cdot\|_{0}$ in
\eqref{Norm} the function $(\|\cdot\|_{0})_{+}$, see \eqref{nplusdef} for the
definition of `$_{+}$' operator. Due to Bertrand's postulate the generated
metrics are equivalent. In some cases like in Corollary \ref{FTBall} such
change simplifies formulas. However, calculations with this metric become more
complicated. For this reason we do not use it in this article.
\end{remark}

\begin{proposition}
\label{Prop1}The restricted product topology on $\mathbb{A}_{f}$ is
metrizable, the metric is given by \eqref{MetrAf}. Furthermore, $\left(
\mathbb{A}_{f},\rho\right)  $ is a complete non-Archimedean metric space.
\end{proposition}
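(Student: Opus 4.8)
The plan is to verify the metric axioms first, then identify the two topologies by comparing neighbourhood bases at $0$, and finally deduce completeness from the fact that small balls are compact. Throughout I would use the given inequality $\|x\|_0\le\|x\|\le\|x\|_1$, the per-prime ultrametric inequality $|a_p+b_p|_p\le\max(|a_p|_p,|b_p|_p)$ in each $\mathbb{Q}_p$, and the fact that $\prod_p\mathbb{Z}_p$ is an additive subgroup of $\mathbb{A}_f$.

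Positivity and $\|z\|=0\iff z=0$ follow from $0\le\|z\|_0=\max_p|z_p|_p/p\le\|z\|$, and symmetry from $|-z_p|_p=|z_p|_p$ together with the stability of $\prod_p\mathbb{Z}_p$ under $z\mapsto-z$. The substantive point is the strong triangle inequality $\|a+b\|\le\max(\|a\|,\|b\|)$, which I would prove by splitting on whether $c:=a+b$ lies in $\prod_p\mathbb{Z}_p$. If $c\in\prod_p\mathbb{Z}_p$, then $\|c\|=\|c\|_0=\max_p|c_p|_p/p\le\max_p\max(|a_p|_p/p,|b_p|_p/p)=\max(\|a\|_0,\|b\|_0)\le\max(\|a\|,\|b\|)$. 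If $c\notin\prod_p\mathbb{Z}_p$, then $\|c\|=\|c\|_1=|c_r|_r$ for some prime $r$ with $|c_r|_r>1$; the index (say $a$) realizing $\max(|a_r|_r,|b_r|_r)\ge|c_r|_r>1$ must then have $a_r\notin\mathbb{Z}_r$, so $\|a\|=\|a\|_1\ge|a_r|_r\ge|c_r|_r=\|c\|$, and again $\|c\|\le\max(\|a\|,\|b\|)$. This makes $\rho$ a non-Archimedean metric, ordinary subadditivity being a formal consequence.

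Since $\rho(x+t,y+t)=\rho(x,y)$, both topologies are translation invariant, so it suffices to compare neighbourhood filters at $0$. The conceptual heart is the observation that every non-integral adele has large norm: if $x\notin\prod_p\mathbb{Z}_p$ then $\|x\|=\|x\|_1\ge 2$. Hence for $0<r<1$ the closed ball $\bar B_r(0)=\{x:\|x\|\le r\}$ contains only integral adeles, so $\|x\|\le r$ unwinds into the per-prime conditions $|x_p|_p\le\min(1,pr)$, i.e. $\bar B_r(0)=\prod_p p^{k_p(r)}\mathbb{Z}_p$ with $k_p(r)=0$ for all but the finitely many primes $p<1/r$ and $k_p(r)\ge 1$ otherwise. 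Each such ball is a basic compact-open subgroup of the form \eqref{compactsubset}, hence open in the restricted product topology; conversely, given a basic neighbourhood $\prod_{p\in S}p^{l_p}\mathbb{Z}_p\times\prod_{p\notin S}\mathbb{Z}_p$, taking $r$ small makes $k_p(r)\ge l_p$ on the finite set $S$, so that $\bar B_r(0)$ is contained in it. As closed balls in an ultrametric space are open and form a neighbourhood base at $0$, the two neighbourhood filters at $0$ coincide, and therefore so do the topologies.

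For completeness let $(x^{(n)})$ be $\rho$-Cauchy. From $\|z\|\ge\|z\|_0\ge|z_p|_p/p$ one gets $|x_p^{(n)}-x_p^{(m)}|_p\le p\,\rho(x^{(n)},x^{(m)})$, so each coordinate is Cauchy in the complete field $\mathbb{Q}_p$ and converges to some $x_p$. Taking $\varepsilon<1$ in the Cauchy condition confines the tail to a single translate of a compact ball $\prod_p p^{k_p}\mathbb{Z}_p$, which forces $x_p\in\mathbb{Z}_p$ for all but finitely many $p$, so $x:=(x_p)_p\in\mathbb{A}_f$. To see $x^{(n)}\to x$, fix $\varepsilon=p_0^{-L}<1$, pick $N$ with $x^{(n)}-x^{(m)}\in\bar B_\varepsilon(0)=\prod_p p^{k_p(\varepsilon)}\mathbb{Z}_p$ for $n,m\ge N$, and let $m\to\infty$; since each $p^{k_p(\varepsilon)}\mathbb{Z}_p$ is closed in $\mathbb{Q}_p$, the limit gives $x^{(n)}-x\in\bar B_\varepsilon(0)$, i.e. $\rho(x^{(n)},x)\le\varepsilon$ for $n\ge N$, which proves convergence. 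I expect the main obstacle to be the strong triangle inequality, where the piecewise definition of $\|\cdot\|$ forces the case split above; the topology identification, though conceptually central, then follows cleanly once one exploits the norm gap between integral and non-integral adeles.
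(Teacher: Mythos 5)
Your proof is correct, and its overall architecture (metric axioms, identification of the two topologies through the product structure of balls, completeness via coordinatewise limits) matches the paper's; the execution, however, genuinely differs at three points. First, you actually prove the ultrametric inequality by splitting on whether $a+b$ is integral; the paper dismisses this as an easy case-by-case check, and your observation that a non-integral sum forces one summand to be non-integral at the same prime, where $\|\cdot\|=\|\cdot\|_{1}$ applies, is precisely what makes the nontrivial case work. Second, for the equality of topologies the paper works with the base $x+yU$, $U=\prod_{p}\mathbb{Z}_{p}$, of the restricted product topology and must invoke that $\left(\mathbb{A}_{f},+,\cdot\right)$ is a topological ring for $\tau_{\rho}$ (continuity of multiplication by $y$) to conclude such sets are $\rho$-open; you avoid this entirely by using translation invariance of both topologies and comparing neighbourhood filters at $0$ against the standard compact-open subgroups \eqref{compactsubset}, with the norm gap $\|x\|\geq2$ for non-integral $x$ identifying closed balls of radius $r<1$ with subgroups $\prod_{p}p^{k_{p}(r)}\mathbb{Z}_{p}$. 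This is more elementary and self-contained: it is essentially the paper's formula \eqref{set} specialized at the origin, but exploited in both directions, so that no continuity of multiplication is needed. Third, your completeness argument has the same skeleton as the paper's (coordinatewise Cauchy, hence coordinatewise limits; tail confinement to show the limit is a finite adele), but you conclude by passing to the limit inside the closed sets $p^{k_{p}(\varepsilon)}\mathbb{Z}_{p}$, whereas the paper runs an explicit estimate splitting the primes into $p<N^{\prime}$ and $p\geq N^{\prime}$; both are valid, and yours is slightly cleaner since the product structure of the balls does all the work.
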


\begin{proof}
The fact that $\rho(x,y)$ is a non-Archimedean metric is a consequence of the
fact that
\[
\|x+y\|\le\max\{ \|x\|, \|y\|\},\qquad x,y\in\mathbb{A}_{f},
\]
which can be checked easily case by case.


We now show that $\left(  \mathbb{A}_{f},\rho\right)  $ is a complete metric
space. Let $x^{\left(  n\right)  }=\bigl(x_{p}^{\left(  n\right)  }\bigr)_{p}$
be a Cauchy sequence in $\mathbb{A}_{f}$ with respect to $\rho$. Since we have
coordinate-wise convergence, we may define $\widetilde{x}_{p}:=\lim
_{n\rightarrow\infty}x_{p}^{\left(  n\right)  }$ in $\mathbb{Q}_{p}$ and
$\widetilde{x}:=(\widetilde{x}_{p})_{p}$. We assert that $\widetilde{x}%
\in\mathbb{A}_{f}$. Indeed, $\rho(x^{(n)}, x^{(m)})<1$ for all $n,m\ge M_{0}$,
hence $\rho(x^{(n)}, x^{(m)})=\|x^{(n)}- x^{(m)}\|_{0}$. Due to the properties
of $p$-adic absolute value it follows from $\frac{|x_{p}-y_{p}|_{p}}{p}<1$
that $|x_{p}-y_{p}|_{p}\leq1$. Therefore $\bigl|x_{p}^{\left(  n_{0}\right)
}-x_{p}^{\left(  m\right)  }\bigr|_{p}\leq1$ for all $p$ and $n_{0}$, $m\geq
M_{0}$. Then $\bigl|x_{p}^{\left(  n_{0}\right)  }-\widetilde
x_{p}\bigr|_{p}\leq1$ for all $p$ and $n_{0}\ge M_{0}$. Since $\big(x_{p}%
^{(n_{0})}\big)_{p}\in\mathbb{A}_{f} $, there exist a constant $N$ such that
$x_{p}^{(n_{0})}\in\mathbb{Z}_{p}$ for $p\geq N$. Then also $\widetilde{x}%
_{p}\in\mathbb{Z}_{p}$ for $p\geq N$. To show that $\lim_{n\rightarrow\infty
}\rho(x^{(n)},\widetilde{x})=0$, consider arbitrary $\epsilon>0$ and take an
integer $N^{\prime}\geq N$ such that $1/N^{\prime}<\epsilon$. Since
$\bigl|x_{p}^{\left(  n\right)  }-\widetilde x_{p}\bigr|_{p}\leq1$ for all $p$
and $n\geq M_{0}$, and $x_{p}^{(n)}\rightarrow\widetilde{x}_{p}$ for any
$p$, we have for $n$ big enough
\begin{multline*}
\rho(x^{\left(  n\right)  },\widetilde{x})=\max\Bigl\{\max_{p<N^{\prime}}%
\frac{|x_{p}^{(n)}-\widetilde{x}_{p}|_{p}}{p},\max_{p\geq N^{\prime}}%
\frac{|x_{p}^{(n)}-\widetilde{x}_{p}|_{p}}{p}\Bigr\}\leq\\
\max\Bigl\{\max_{p<N^{\prime}}\frac{|x_{p}^{(n)}-\widetilde{x}_{p}|_{p}}%
{p},\frac{1}{N^{\prime}}\Bigr\}\leq\max\Bigl\{\max_{p<N^{\prime}}\frac
{|x_{p}^{(n)}-\widetilde{x}_{p}|_{p}}{p},\epsilon\Bigr\}=\epsilon.
\end{multline*}

Let $\tau_{\mathbb{A}_{f}}$ denote the restricted product topology on
$\mathbb{A}_{f}$, and let $\tau_{\rho}$ denote the topology\ induced by $\rho$
on $\mathbb{A}_{f}$. We want to show that $\tau_{\mathbb{A}_{f}}=\tau_{\rho}$.
Set $U:=\prod_{p}\mathbb{Z}_{p}$. Then the family
\[
x+yU,\quad x\in\mathbb{A}_{f}\text{, }y\in\mathbb{A}_{f}\setminus\{0\},
\]
is a base for $\tau_{\mathbb{A}_{f}}$. Note that $\left(  \mathbb{A}%
_{f},+,\cdot\right)  $ is a topological ring with respect to $\tau_{\rho}$,
and that the set $U$ coincides with $\left\{  x\in\mathbb{A}_{f}:\rho\left(
0,x\right)  \leq\frac{1}{2}\right\}  $ which is open in $\tau_{\rho}$ due to
the non-Archimedean nature of the metric. Then $x+yU\in\tau_{\rho}$ for any
$x\in\mathbb{A}_{f}$, $y\in\mathbb{A}_{f}\setminus\{0\}$, i.e. $\tau
_{\mathbb{A}_{f}}\subset\tau_{\rho}$. We now show that $\tau_{\rho}\subset
\tau_{\mathbb{A}_{f}}$. The family of balls
\begin{equation}
B_{\epsilon}\big(x^{\left(  0\right)  }\big)=\left\{  x\in\mathbb{A}_{f}%
:\rho\big(x^{\left(  0\right)  },x\big)\leq\epsilon\right\}  ,\quad x^{\left(
0\right)  }=\big(x_{p}^{\left(  0\right)  }\big)_{p}\in\mathbb{A}_{f}
\label{adelic_ball}%
\end{equation}
is a base for $\tau_{\rho}$. We have
\begin{equation}
B_{\epsilon}\big(x^{\left(  0\right)  }\big)=\prod_{p}\left(  x_{p}^{\left(
0\right)  }+p^{-\alpha_{p}\left(  \epsilon\right)  }\mathbb{Z}_{p}\right)  ,
\label{set}%
\end{equation}
where $\alpha_{p}\left(  \epsilon\right)  =[[\log_{p}\epsilon]]$, here the
function $[[\cdot]]$ is defined by \eqref{def[[]]}. Note that for $p$ big
enough $\alpha_{p}(\epsilon)=0$ and $x_{p}^{(0)}\in\mathbb{Z}_{p}$. Therefore
by (\ref{set}) we have $B_{\epsilon}(x^{\left(  0\right)  })\in\tau
_{\mathbb{A}_{f}}$.
\end{proof}

\begin{corollary}
\label{Cor1}$B_{\epsilon}\big(x^{\left(  0\right)  }\big)$ is a compact subset
for any $\epsilon>0$.
\end{corollary}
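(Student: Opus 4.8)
The plan is to read compactness off directly from the explicit description of a ball already obtained while proving Proposition \ref{Prop1}. By \eqref{set} we have
\[
B_{\epsilon}\big(x^{(0)}\big)=\prod_{p}\big(x_{p}^{(0)}+p^{-\alpha_{p}(\epsilon)}\mathbb{Z}_{p}\big),
\]
and, as observed there, $\alpha_{p}(\epsilon)=0$ and $x_{p}^{(0)}\in\mathbb{Z}_{p}$ for all but finitely many $p$. So first I would fix the finite set $S$ of primes for which $\alpha_{p}(\epsilon)\neq0$ or $x_{p}^{(0)}\notin\mathbb{Z}_{p}$; for every $p\notin S$ the corresponding factor is exactly $\mathbb{Z}_{p}$, so the ball is a product in which only finitely many factors differ from $\mathbb{Z}_{p}$.

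Next I would exploit the group structure. The ball $B_{\epsilon}(x^{(0)})$ is the translate by the adele $x^{(0)}=(x_{p}^{(0)})_{p}$ of the centered set
\[
\prod_{p\in S}p^{-\alpha_{p}(\epsilon)}\mathbb{Z}_{p}\times\prod_{p\notin S}\mathbb{Z}_{p},
\]
which is of the form \eqref{compactsubset} (with $l_{p}=-\alpha_{p}(\epsilon)$) and hence compact. Since $\left(\mathbb{A}_{f},+\right)$ is a topological group, translation by $x^{(0)}$ is a homeomorphism of $\mathbb{A}_{f}$ onto itself, so it carries this compact set to the compact set $B_{\epsilon}(x^{(0)})$, which proves the claim.

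The only point deserving care is the compactness of the centered set underlying \eqref{compactsubset} in the finite-adelic setting. If a self-contained justification is wanted, I would note that this set is contained in the subgroup $G_{S}=\prod_{p\in S}\mathbb{Q}_{p}\times\prod_{p\notin S}\mathbb{Z}_{p}$, on which the restricted product topology coincides with the product topology; being a product of the compact balls $p^{-\alpha_{p}(\epsilon)}\mathbb{Z}_{p}\subset\mathbb{Q}_{p}$ for $p\in S$ together with the compact rings $\mathbb{Z}_{p}$ for $p\notin S$, it is compact by Tychonoff's theorem, and a subset compact for the subspace topology of $G_{S}$ is compact in $\mathbb{A}_{f}$. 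I do not expect any genuine obstacle, since the substantive step — the product decomposition \eqref{set} of an arbitrary ball — was already carried out in the proof of Proposition \ref{Prop1}, and what remains is only the elementary combination of that decomposition with translation invariance and Tychonoff's theorem.
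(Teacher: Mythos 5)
Your proposal is correct and follows essentially the same route as the paper: the paper's proof also reads compactness off from the decomposition \eqref{set}, observing that $B_{\epsilon}\big(x^{(0)}\big)$ is a translate of the compact set $\prod_{p}p^{-\alpha_{p}(\epsilon)}\mathbb{Z}_{p}$, which is of the form \eqref{compactsubset}. Your additional justification of the compactness of such product sets (via $G_{S}$ and Tychonoff) is a correct filling-in of a fact the paper simply cites from the preliminaries.
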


\begin{proof}
By (\ref{set}), $B_{\epsilon}(x^{\left(  0\right)  })$ is a translation of a
compact subset $%
{\textstyle\prod\nolimits_{p}}
p^{-\alpha_{p}\left(  \epsilon\right)  }\mathbb{Z}_{p}$, cf.
(\ref{compactsubset}).
\end{proof}

\begin{remark}
\label{nota2}The following properties of the space $\left(  \mathbb{A}%
_{f},\rho\right)  $ hold.

\begin{itemize}
\item[(i)] $\left(  \mathbb{A}_{f},\rho\right)  $ is $\sigma$-compact space.
Indeed, consider
\[
K_{N}:=\prod_{p\leq N}p^{-N}\mathbb{Z}_{p}\times\prod\limits_{p>N}%
\mathbb{Z}_{p}\quad\text{for }N\in\mathbb{N}.
\]
Then $K_{N}$ is a compact subgroup with respect to $\tau_{\mathbb{A}_{f}}$,
see e.g. \cite[Section 5.1]{R-V} and $\mathbb{A}_{f}=\cup_{N}K_{N}$.

\item[(ii)] $\left(  \mathbb{A}_{f},\rho\right)  $ is second-countable
topological space. Indeed, by applying twice the Weak Approximation Theorem,
see e.g. \cite[Theorem 1.4.4]{G-H}, one gets that $\beta+\alpha\prod
_{p}\mathbb{Z}_{p}$, $\beta\in\mathbb{Q}$, $\alpha\in\mathbb{Q\setminus
}\left\{  0\right\}  $ is a countable base for the topology of $\mathbb{A}%
_{f}$.

\item[(iii)] $\left(  \mathbb{A}_{f},\rho\right)  $ is a semi-compact space,
i.e. a locally compact Hausdorff space with a countable base.
\end{itemize}
\end{remark}

Metric $\rho$ allows us to introduce an adelic ball (given by
\eqref{adelic_ball}) and an adelic sphere, given by
\begin{equation}
S_{r}\bigl(x^{\left(  0\right)  }\bigr)=\left\{  x\in\mathbb{A}_{f}%
:\rho\bigl(x^{\left(  0\right)  },x\bigr)=r\right\}  ,\quad x^{\left(
0\right)  }=\bigl(x_{p}^{\left(  0\right)  }\bigr)_{p}\in\mathbb{A}_{f}.
\label{adelic_sphere}%
\end{equation}
Note that by Remark \ref{Rmk Metric Range} the radius $r$ of the adelic sphere
may possess only values equal to any non-zero integer power of prime number.
We now introduce some notations and compute volumes of adelic balls and
adelic spheres.

Given a positive real number $x$, we define
\begin{equation}
\Phi(x)=\prod_{p}p^{[[ \log_{p}x]] }, \label{defphi}%
\end{equation}
where $[[\cdot]]$ is defined by \eqref{def[[]]}, i.e. for $x\ge1$ we take a
product over all prime numbers each taken in the largest power $\alpha_{p}$
such that $p^{\alpha_{p}}\leq x$ and for $x<1$ we take a product over all
prime numbers each taken in the largest power $\alpha_{p}$ such that
$p^{\alpha_{p}}\leq px$, see also \eqref{set}. Note that only a finite number
of terms in this product differs from $1$ and that the function $\Phi(x)$ is
non-decreasing, right-continuous and piecewise constant. Then $\Phi(x)=1$ if
$1/2\leq x<2$. If $x\geq2$, $\Phi(x)$ coincides with the exponential of the
second Chebyshev function $\psi(x)=\sum_{p}[\log_{p}x]\ln p=\sum_{p^{k}\leq
x}\ln p$, where the last sum is taken over all powers of prime numbers not
exceeding $x$.

It is easy to check using \eqref{defphi}, \eqref{def[[]]} and properties of
the entire part function that for any prime number $p$ and any $j\in
\mathbb{Z}\setminus\{0\}$,
\begin{equation}
\label{Phi(1/x)}\Phi(p^{-j}) = \frac{p}{\Phi(p^{j})}.
\end{equation}

\begin{definition}
For $n\in\mathbb{R}$, $n>0$ we define the \textit{next} and \textit{previous
non-zero power of a prime operators} as
\begin{align}
n_{+}  &  =\min\left\{  p^{\beta}:n<p^{\beta},\ p\ \text{prime},\ \beta
\in\mathbb{Z}\setminus\{0\}\right\}  ,\label{nplusdef}\\
n_{-}  &  =\max\left\{  p^{\beta}:p^{\beta}<n,\ p\ \text{prime},\ \beta
\in\mathbb{Z}\setminus\{0\}\right\}  . \label{nminusdef}%
\end{align}

\end{definition}

It is easy to see that the following relations hold for any number $n=p^{j}$,
where $p$ is a prime and $j\in\mathbb{Z} \setminus\{0\}$
\begin{gather}
\begin{aligned} (n_{-})_{+} & =n, & \quad && (n_{+})^{-1} & = (n^{-1})_{-},\\ (n_{+})_{-} & =n, &&& (n_{-})^{-1} & = (n^{-1})_{+}, \end{aligned}\nonumber\\
\Phi\left(  (p^{j})_{-}\right)  =\frac{\Phi(p^{j})}{p}. \label{identity}%
\end{gather}

By using the operators $(\cdot)_{-}$ and $(\cdot)_{+}$ we can completely order
the set of non-zero powers of primes. This total order will be very relevant
in the next sections. To simplify notations we will write $p^{j}_{-}$
instead of $(p^{j})_{-}$ and $p^{j}_{+}$ instead of $(p^{j})_{+}$.

\begin{lemma}
\label{Lemma2A} (i) The the adelic ball $B_{r}:=B_{r}(0)$ is a compact subset
and its volume is given by
\[
\operatorname{vol}\left(  B_{r}\right)  =\Phi(r).
\]
(ii) The the adelic sphere $S_{r}:=S_{r}(0)$ is a compact subset and its
volume is given by
\[
\operatorname{vol}\left(  S_{r}\right)  =\Phi(r)-\Phi(r_{-}).
\]

\end{lemma}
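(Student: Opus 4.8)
The plan is to reduce both parts to the explicit product description of adelic balls already obtained in the proof of Proposition \ref{Prop1}. Setting $x^{(0)}=0$ in formula (\ref{set}), I would first record that
\[
B_{r}=\prod_{p}p^{-\alpha_{p}(r)}\mathbb{Z}_{p},\qquad\alpha_{p}(r)=[[\log_{p}r]].
\]
As noted right after (\ref{set}), for all but finitely many primes $p$ one has $\alpha_{p}(r)=0$, so $B_{r}$ is precisely a set of the form (\ref{compactsubset}); hence $B_{r}$ is compact (this is already Corollary \ref{Cor1}, so nothing new is needed for that assertion).

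For the volume in (i), I would invoke that the Haar measure $dx_{\mathbb{A}_{f}}$ restricts to the product measure $\prod_{p}dx_{p}$ on any open subgroup $\prod_{p\leq N}\mathbb{Q}_{p}\times\prod_{p>N}\mathbb{Z}_{p}$ containing $B_{r}$. Since $\operatorname{vol}(\mathbb{Z}_{p})=1$ and $\operatorname{vol}(p^{-m}\mathbb{Z}_{p})=|p^{-m}|_{p}=p^{m}$, the volume factorizes as a finite product,
\[
\operatorname{vol}(B_{r})=\prod_{p}p^{\alpha_{p}(r)}=\prod_{p}p^{[[\log_{p}r]]}=\Phi(r),
\]
the last equality being exactly definition (\ref{defphi}).

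For (ii), the crux is to identify the sphere as a difference of two balls. By Remark \ref{Rmk Metric Range} the range of $\rho$ is the discrete set $\{0\}\cup\{p^{j}\}$, so there are no attained values of $\rho(0,\cdot)$ strictly between $r_{-}$ and $r$. Consequently $\{x:\rho(0,x)<r\}$ coincides with the closed ball $B_{r_{-}}$, and therefore
\[
S_{r}=B_{r}\setminus B_{r_{-}}.
\]
Since $B_{r_{-}}\subset B_{r}$ is open (the metric is non-Archimedean) while $B_{r}$ is compact, $S_{r}$ is a closed subset of a compact set, hence compact; and additivity of the measure on the disjoint decomposition $B_{r}=S_{r}\sqcup B_{r_{-}}$, together with part (i) applied to both radii, gives $\operatorname{vol}(S_{r})=\Phi(r)-\Phi(r_{-})$.

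The only genuinely delicate step is the identification $\{x:\rho(0,x)<r\}=B_{r_{-}}$; it rests entirely on the discreteness of the range of $\rho$ (Remark \ref{Rmk Metric Range}), which is what guarantees that the conditions ``$<r$'' and ``$\leq r_{-}$'' cut out the same set (and, when $r$ is not a power of a prime, forces $S_{r}=\emptyset$ with $\Phi(r)=\Phi(r_{-})$, consistent with the formula). Everything else is bookkeeping with the product measure, where the main point to verify is that only finitely many local factors $p^{-\alpha_{p}(r)}\mathbb{Z}_{p}$ differ from $\mathbb{Z}_{p}$, so that the infinite product converges to the finite value $\Phi(r)$.
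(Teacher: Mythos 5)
Your proof is correct and follows essentially the same route as the paper's (much terser) argument: compactness of $B_{r}$ via Corollary \ref{Cor1}, compactness of $S_{r}$ as a closed subset of the compact ball, and the volume formulas from the product representation (\ref{set}) together with (\ref{defphi}) and (\ref{def[[]]}). Your explicit decomposition $S_{r}=B_{r}\setminus B_{r_{-}}$ and the product-measure computation are exactly the details the paper compresses into ``follows immediately.''
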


\begin{proof}
The compactness of $B_{r}(0)$ was established in Corollary \ref{Cor1}. Since
$S_{r}(0)$ is a closed subset of $\mathbb{A}_{f}$ and $S_{r}(0)\subset
B_{r}(0)$ we conclude that $S_{r}(0)$ is compact. The formulas for volumes
follows immediately from (\ref{set}), (\ref{defphi}) and (\ref{def[[]]}).
\end{proof}

\subsection{The Fourier transform of radial functions}

\begin{definition}
A function $f:\mathbb{A}_{f}\rightarrow\mathbb{C}$ is said to be radial if its
restriction to any sphere $S_{r}$, $r>0$, is a constant function, i.e.
$\left.  f\right\vert _{S_{r}}=f_{r}\in\mathbb{C}$, $r>0$.
\end{definition}

By abuse of notation we will denote a radial function $f$ in the form
$f=f\left(  \left\Vert \xi\right\Vert \right)  $.

\begin{lemma}
\label{integral_radi_function}Let $f:\mathbb{A}_{f}\rightarrow\mathbb{C}$ be
an integrable function. Then the following assertions hold:

\noindent(i)
\[
\int_{\mathbb{A}_{f}} f\left(  \xi\right)  d\xi_{\mathbb{A}_{f}}=\sum
_{p^{m},\,m\in \mathbb{Z}\setminus\{0\}  }\int_{S_{p^{m}}}
f\left(  \xi\right)  d\xi_{\mathbb{A}_{f}} .
\]
In the particular case in which $f$ is a radial function this formula takes
the form
\[
\int_{\mathbb{A}_{f}} f\left(  \xi\right)  d\xi_{\mathbb{A}_{f}}=\sum
_{p^{m},\,m\in \mathbb{Z}\setminus\{0\} } f\left(  p^{m}\right)
\operatorname{vol}\left(  S_{p^{m}}\right)  .
\]

\noindent(ii) Take $A^{(i)}=\bigsqcup_{m\in J} S_{p^{m}}\subset\mathbb{A}_{f}%
$, where $J$ is a (countable) subset of $\mathbb{Z}\setminus\{0\}$, then
\[
\int_{\mathbb{A}_{f}} f\left(  \xi\right)  \boldsymbol{1}_{A^{(i)}}\left(
\xi\right)  d\xi_{\mathbb{A}_{f}}=\sum_{p^{m},\ m\in J }\int_{S_{p^{m}}}
f\left(  \xi\right)  d\xi_{\mathbb{A}_{f}} .
\]
In the particular case in which $f$ is a radial function this formula takes
the form
\[
\int_{\mathbb{A}_{f}} f\left(  \xi\right)  \boldsymbol{1}_{A^{(i)}}\left(
\xi\right)  d\xi_{\mathbb{A}_{f}}=\sum_{p^{m},\,m\in J} f\left(  p^{m}\right)
\operatorname{vol}\left(  S_{p^{m}}\right)  .
\]

\noindent(iii) Assume that $\mathbb{A}_{f}=\bigsqcup_{i\in\mathbb{N}}A^{(i)}$
with each $A^{(i)}$ is a disjoint union of spheres, then
\[
\int_{\mathbb{A}_{f}} f\left(  \xi\right)  d\xi_{\mathbb{A}_{f}}%
=\sum\limits_{i\in\mathbb{N}} \int_{A^{(i)}} f\left(  \xi\right)
d\xi_{\mathbb{A}_{f}}.
\]

\end{lemma}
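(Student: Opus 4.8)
The plan is to deduce all three parts from one structural fact together with countable additivity of the Lebesgue integral: the spheres $S_{p^m}$, indexed by the nonzero integer powers of primes, form a countable measurable partition of $\mathbb{A}_f\setminus\{0\}$, and $f$ is integrable.

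First I would record the partition. By Remark \ref{Rmk Metric Range} the norm $\|\cdot\|$ takes only the values in $\{0\}\cup\{p^j : p\text{ prime},\ j\in\mathbb{Z}\setminus\{0\}\}$, and $\|x\|=0$ precisely when $x=0$. Consequently the level sets $S_{p^m}=\{x\in\mathbb{A}_f : \|x\|=p^m\}$ are pairwise disjoint, their union is $\mathbb{A}_f\setminus\{0\}$, and the family is countable since there are countably many primes and countably many nonzero integers. Each $S_{p^m}$ is compact by Lemma \ref{Lemma2A}, hence measurable of finite volume, while $\{0\}$ is a null set.

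Part (i) then follows by integrating over this partition: $\int_{\mathbb{A}_f} f\, d\xi_{\mathbb{A}_f}=\int_{\mathbb{A}_f\setminus\{0\}} f\, d\xi_{\mathbb{A}_f}=\sum_{p^m}\int_{S_{p^m}} f\, d\xi_{\mathbb{A}_f}$. The single delicate point is the interchange of sum and integral for a complex-valued (not necessarily nonnegative) $f$; I would justify it via absolute convergence, noting that integrability gives $\sum_{p^m}\int_{S_{p^m}}|f|\, d\xi_{\mathbb{A}_f}=\int_{\mathbb{A}_f}|f|\, d\xi_{\mathbb{A}_f}<\infty$, so the series of integrals converges absolutely and term-by-term integration is legitimate (equivalently, apply dominated convergence to the finite partial sums $\sum_{p^m\in F} f\,\boldsymbol{1}_{S_{p^m}}$, dominated by $|f|$). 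When $f$ is radial its restriction to $S_{p^m}$ is the constant $f(p^m)$, so $\int_{S_{p^m}} f\, d\xi_{\mathbb{A}_f}=f(p^m)\operatorname{vol}(S_{p^m})$, with the volume supplied by Lemma \ref{Lemma2A}, yielding the stated closed form.

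Parts (ii) and (iii) are regroupings of this same absolutely convergent series. For (ii), since $A^{(i)}$ is a subfamily of the spheres, restricting the sum to the chosen indices — equivalently integrating $f\,\boldsymbol{1}_{A^{(i)}}$, itself integrable as it is dominated by $|f|$ — gives the formula, and the radial case specializes exactly as in (i). For (iii), the decomposition $\mathbb{A}_f=\bigsqcup_i A^{(i)}$ with each $A^{(i)}$ a union of spheres groups the family $\{S_{p^m}\}$ into the blocks $A^{(i)}$; because the full series $\sum_{p^m}\int_{S_{p^m}} f\, d\xi_{\mathbb{A}_f}$ converges absolutely, it may be summed block by block, which is precisely $\sum_i \int_{A^{(i)}} f\, d\xi_{\mathbb{A}_f}$. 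The main (and essentially only) obstacle is the justification of term-by-term integration for a signed integrand, which the integrability hypothesis settles through absolute convergence.
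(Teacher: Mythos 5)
Your proposal is correct and follows essentially the same route as the paper: the paper's proof is a one-line appeal to general measure-theoretic techniques, the compactness of the adelic balls and spheres (Lemma \ref{Lemma2A}), and the characterization of adelic integrals of positive functions, and your argument is simply a detailed implementation of exactly those ingredients (the sphere partition of $\mathbb{A}_f\setminus\{0\}$, countable additivity justified by absolute convergence, and the volume formula for the radial case). Nothing is missing; in particular your observations that $\|x\|=0$ only for $x=0$ and that the series regroupings in (ii) and (iii) are licensed by absolute convergence are precisely the points the paper leaves implicit.
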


\begin{proof}
The proof follows by general techniques in measure theory, the compactness of
the adelic balls and spheres, see Lemma \ref{Lemma2A}, and the
characterization of the adelic integrals for positive functions given in
\cite[p. 21]{G-H}.
\end{proof}

To simplify notations, throughout this subsection the expressions $\|0\|^{-1}$
and $|0|^{-1}_{p}$ in the inequalities mean $\infty$. The following theorem
describes the Fourier transform of a radial function.

\begin{theorem}
\label{LemmaFourierRadial} Let $f=f(\Vert\xi\Vert):\mathbb{A}_{f}%
\rightarrow\mathbb{C}$ be a radial function in $L^{1}(\mathbb{A}_{f})$. Then
the following formula holds:
\begin{equation}
\check{f}(x):=\bigl(\mathcal{F}_{\xi\rightarrow x}^{-1}f\bigr)(x)=\sum
_{q^{j}<\Vert x\Vert^{-1}}\Phi\left(  q^{j}\right)  \bigl(f(q^{j})-f(q_{+}%
^{j})\bigr)\quad\text{ for any }x\in\mathbb{A}_{f}, \label{FTradial}%
\end{equation}
where $q^{j}$ runs through all non-zero powers of prime numbers; the functions
$\Vert x\Vert$, $\Phi(x)$ and $q_{+}^{j}$ are defined by \eqref{Norm},
\eqref{defphi} and \eqref{nplusdef}.
\end{theorem}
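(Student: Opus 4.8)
The plan is to evaluate the inverse Fourier transform directly as the absolutely convergent integral $\check f(x)=\int_{\mathbb{A}_f}f(\xi)\chi(x\xi)\,d\xi_{\mathbb{A}_f}$, which is well defined pointwise since $f\in L^1(\mathbb{A}_f)$ and $|\chi|\equiv 1$, and is insensitive to the sign in the character because radiality gives $f(-\xi)=f(\xi)$ (as $\|-\xi\|=\|\xi\|$). First I would slice $\mathbb{A}_f$ into spheres: applying Lemma \ref{integral_radi_function}(i) to the integrable function $\xi\mapsto f(\xi)\chi(x\xi)$ and using that $f\equiv f(q^j)$ on $S_{q^j}$ gives
\[
\check f(x)=\sum_{q^j,\,j\in\mathbb{Z}\setminus\{0\}}f(q^j)\int_{S_{q^j}}\chi(x\xi)\,d\xi_{\mathbb{A}_f},
\]
the interchange being legitimate because $\sum_{q^j}\int_{S_{q^j}}|f|=\|f\|_{L^1(\mathbb{A}_f)}$. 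Writing each sphere as $S_{q^j}=B_{q^j}\setminus B_{q^j_-}$ (there is no power of a prime strictly between $q^j_-$ and $q^j$), the task reduces to computing the ball integral $I(r,x):=\int_{B_r}\chi(x\xi)\,d\xi_{\mathbb{A}_f}$ and then reading off $\int_{S_{q^j}}\chi(x\xi)\,d\xi_{\mathbb{A}_f}=I(q^j,x)-I(q^j_-,x)$.

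Next I would compute $I(r,x)$ by factorization. Using the explicit form of the ball from \eqref{set}, $B_r=\prod_p p^{-\alpha_p(r)}\mathbb{Z}_p$ with $\alpha_p(r)=[[\log_p r]]$, together with $\chi=\prod_p\chi_p$, the integral splits as a product (finite, since $\alpha_p(r)=0$ and $x_p\in\mathbb{Z}_p$ for all but finitely many $p$) of standard local integrals. The substitution $\xi_p=p^{-\alpha}u$ reduces $\int_{p^{-\alpha}\mathbb{Z}_p}\chi_p(x_p\xi_p)\,d\xi_p$ to $p^{\alpha}\int_{\mathbb{Z}_p}\chi_p(p^{-\alpha}x_pu)\,du$, which equals $p^{\alpha}$ when $\operatorname{ord}_p(x_p)\ge\alpha$ and $0$ otherwise. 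Since $\prod_p p^{\alpha_p(r)}=\Phi(r)$ by \eqref{defphi}, this yields
\[
I(r,x)=\Phi(r)\,\boldsymbol{1}\bigl\{\operatorname{ord}_p(x_p)\ge[[\log_p r]]\ \text{for all }p\bigr\}.
\]

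The crux of the argument, and the step I expect to be the main obstacle, is to recognize this indicator in metric terms: I must prove that, for $x\neq 0$,
\[
\operatorname{ord}_p(x_p)\ge[[\log_p r]]\ \text{for all }p\quad\Longleftrightarrow\quad r<\|x\|^{-1},
\]
so that $I(r,x)=\Phi(r)\,\boldsymbol{1}\{r<\|x\|^{-1}\}$ (with the convention $\|0\|^{-1}=\infty$). This equivalence is exactly where the precise shape of the metric \eqref{Norm} enters, and it must be verified by a case analysis according to whether $x\in\prod_p\mathbb{Z}_p$, together with careful manipulation of the modified integer-part function \eqref{def[[]]}; the delicate point is that the two branches of $\|\cdot\|$ conspire to make the $p$-by-$p$ thresholds collapse to the single number $\|x\|^{-1}$. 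Equivalently, one may show that the annihilator $\prod_p p^{\alpha_p(r)}\mathbb{Z}_p$ is itself the ball $B_{(r^{-1})_-}$ and then invoke that $\|x\|$ takes only values that are powers of primes (Remark \ref{Rmk Metric Range}), whence $\|x\|\le(r^{-1})_-\iff\|x\|<r^{-1}$.

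Finally I would assemble and telescope. Substituting into the sphere decomposition,
\[
\check f(x)=\sum_{q^j}f(q^j)\bigl(I(q^j,x)-I(q^j_-,x)\bigr),
\]
a series that converges absolutely since $|I(q^j,x)-I(q^j_-,x)|\le\operatorname{vol}(S_{q^j})=\Phi(q^j)-\Phi(q^j_-)$ and $\sum_{q^j}|f(q^j)|\operatorname{vol}(S_{q^j})=\|f\|_{L^1}$ by Lemma \ref{Lemma2A}. Inserting $I(q^j,x)=\Phi(q^j)\boldsymbol{1}\{q^j<\|x\|^{-1}\}$ and using the total order on powers of primes with $(q^j_-)_+=q^j$, an Abel summation rearranges the sum; the boundary contribution at the small end is dominated by $\Phi(q^j)\to 0$ as $q^j\to 0$ (via \eqref{Phi(1/x)}) and hence vanishes, leaving exactly
\[
\check f(x)=\sum_{q^j<\|x\|^{-1}}\Phi(q^j)\bigl(f(q^j)-f(q^j_+)\bigr).
\]
As a consistency check, at $x=0$ this is the full telescoping of $\int_{\mathbb{A}_f}f=\sum_{q^j}f(q^j)\operatorname{vol}(S_{q^j})$ furnished by Lemma \ref{integral_radi_function}(i).
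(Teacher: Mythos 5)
Your proposal is correct, but it is organized quite differently from the paper's proof. The paper splits $\mathbb{A}_{f}\setminus\{0\}$ into the sets $\mathbb{A}^{(0,q)}$, $\mathbb{A}^{(1,q)}$ according to which prime coordinate realizes the norm and whether $\Vert\xi\Vert<1$ or $\Vert\xi\Vert>1$, and then evaluates each piece through the three-case local integral \eqref{eq8} over $p$-adic spheres, tracking the exceptional terms with a $\delta$-function (Claims \ref{Claim1} and \ref{Claim2}) before telescoping. You instead slice into adelic spheres $S_{q^{j}}=B_{q^{j}}\setminus B_{q^{j}_{-}}$ and reduce everything to the Fourier transform of ball indicators, computed by the two-case local ball integral; in effect you prove Corollary \ref{FTBall} first and deduce the theorem from it, reversing the paper's logical order (the paper derives that corollary from the theorem). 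Your route buys simpler local computations and no $\delta$-bookkeeping, at the price of concentrating the whole difficulty into the support identification $I(r,x)=\Phi(r)\,\boldsymbol{1}\{r<\Vert x\Vert^{-1}\}$ — which you rightly flag as the crux, and which does hold: the case analysis on $x\in\prod_{p}\mathbb{Z}_{p}$ versus $x\notin\prod_{p}\mathbb{Z}_{p}$, using that $[\cdot]$ is the floor and the two branches of \eqref{def[[]]}, goes through exactly as you describe, and matches the paper's own statement of Corollary \ref{FTBall}. One small repair in your final step: the Abel-summation boundary term is $f(s)\Phi(s_{-})$, not $\Phi(s_{-})$ alone, and $f$ need not be bounded near the origin for a radial $L^{1}$ function; you should kill it via $|f(s)|\Phi(s_{-})\leq 2|f(s)|\bigl(\Phi(s)-\Phi(s_{-})\bigr)=2|f(s)|\operatorname{vol}(S_{s})\to 0$, which follows from convergence of the $L^{1}$ series and the inequality $\Phi(s)-\Phi(s_{-})\geq\frac{1}{2}\Phi(s)$ — the same inequality the paper invokes to justify its rearrangement. (Equivalently, that inequality lets you split the telescoping series into two absolutely convergent sums and reindex, avoiding boundary terms altogether.)
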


\begin{remark}
It follows from \eqref{FTradial} that the Fourier transform of a radial
function is again a radial function.
\end{remark}

\begin{proof}
We represent the ring of finite adeles $\mathbb{A}_{f}$ as a disjoint union of
the following sets
\[
\mathbb{A}_{f}=\{0\}\sqcup\bigsqcup_{q}\mathbb{A}^{\left(  0,q\right)  }%
\sqcup\bigsqcup_{q}\mathbb{A}^{\left(  1,q\right)  },
\]
where
\begin{align*}
\mathbb{A}^{\left(  0,q\right)  }  &  :=\bigsqcup_{j <0} S_{q^{j}}
=\Bigl\{\xi\in\mathbb{A}_{f}:0<\Vert\xi\Vert<1,\ \left\Vert \xi\right\Vert
=\frac{|\xi_{q}|_{q}}{q}\text{ and }\frac{|\xi_{p}|_{p}}{p}<\frac{|\xi
_{q}|_{q}}{q}\text{ for }p\neq q\Bigr\},\\
\mathbb{A}^{\left(  1,q\right)  }  &  :=\bigsqcup_{j >0} S_{q^{j}}
=\bigl\{\xi\in\mathbb{A}_{f}:\Vert\xi\Vert>1,\ \left\Vert \xi\right\Vert
=|\xi_{q}|_{q}\text{ and }|\xi_{p}|_{p}<|\xi_{q}|_{q}\text{ for }p\neq
q\bigr\}.
\end{align*}
Note that on the sets $A^{(0,q)}$ we have $\Vert\xi\Vert=\Vert\xi\Vert_{0}$
and on the sets $A^{(1,q)}$ we have $\Vert\xi\Vert=\Vert\xi\Vert_{1}$. Then
$\check{f}\left(  x\right)  =\sum_{q}\check{f}^{\left(  0,q\right)  }\left(
x\right)  +\sum_{q}\check{f}^{\left(  1,q\right)  }\left(  x\right)  $, where
\[
\check{f}^{\left(  k,q\right)  }\left(  x\right)  :=\int_{\mathbb{A}^{\left(
k,q\right)  }}\chi\left(  \xi\cdot x\right)  f(\left\Vert \xi\right\Vert
)\,d\xi_{\mathbb{A}_{f}},\qquad k=0,1,\ q\ \text{is a prime}.
\]

We set
\begin{equation}\label{eq_beta_q}
\beta_{q}:=\beta_{q}\left(  x\right)  =-[\log_{q}\Vert x\Vert]
\end{equation}
with
convention that $\beta_{q}\left(  0\right)  =+\infty$. We also set
$\delta\left(  t\right)  =1$ if $t=0$ and $\delta\left(  t\right)  =0$ otherwise.

To simplify the proof, we first present the final formulas for the
functions $\check{f}^{\left(  k,q\right)  }\left(  x\right)  $, the proofs are
given later.

\begin{claim}
\label{Claim1}%
\begin{equation}
\sum_{q}\check{f}^{\left(  1,q\right)  }\left(  x\right)  =0\qquad
\text{if}\ \Vert x\Vert>1, \tag{A}\label{form1}
\end{equation}
\begin{equation}
\begin{split}
\sum_{q}\check{f}^{\left(  1,q\right)  }\left(  x\right)  =  &  \sum_{q<\Vert
x\Vert^{-1}}\biggl\{\left(  1-\frac{1}{q}\right)  \sum_{j=1}^{\beta_{q}\left(
x\right)  -1}f(q^{j})\Phi\left(  q^{j}\right)  \biggr\}\\
&  -\sum_{q}\frac{1}{q}f\bigl(\Vert x\Vert^{-1}\bigr)\Phi\left(  \Vert
x\Vert^{-1}\right)  \delta\Bigl(\frac{|x_{q}|_{q}}{q}-\Vert x\Vert
\Bigr)\quad\text{if}\ \Vert x\Vert<1.
\end{split}
\tag{B}\label{form2}%
\end{equation}
\end{claim}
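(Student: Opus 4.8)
The plan is to reduce both formulas to one elementary local integral and then carry out a careful count of prime powers. Since $f\in L^{1}(\mathbb{A}_f)$, Lemma~\ref{integral_radi_function} decomposes the integral over $\mathbb{A}^{(1,q)}=\bigsqcup_{j>0}S_{q^{j}}$ into a sum over spheres, and the radiality of $f$ pulls out the constant value $f(q^{j})$ on each, so $\check f^{(1,q)}(x)=\sum_{j>0}f(q^{j})\int_{S_{q^{j}}}\chi(\xi\cdot x)\,d\xi_{\mathbb{A}_f}$. For $j>0$ the sphere is the product set $\{|\xi_q|_q=q^{j}\}\times\prod_{p\ne q}\{|\xi_p|_p<q^{j}\}$, where $\{|\xi_p|_p<q^{j}\}=p^{-\alpha_p(q^{j})}\mathbb{Z}_p$ for $p\ne q$; the integral therefore factors. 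The one fact I need is that $\int_{p^{-m}\mathbb{Z}_p}\chi_p(\xi_p x_p)\,d\xi_p$ equals $p^{m}$ when $|x_p|_p\le p^{-m}$ (the character is then trivial on the ball) and $0$ otherwise. This gives
\[
\int_{S_{q^{j}}}\chi(\xi\cdot x)\,d\xi_{\mathbb{A}_f}=\Bigl(\prod_{p\ne q}p^{\alpha_p(q^{j})}\boldsymbol 1_{\{|x_p|_p\le p^{-\alpha_p(q^{j})}\}}\Bigr)\bigl(q^{j}\boldsymbol 1_{\{|x_q|_q\le q^{-j}\}}-q^{j-1}\boldsymbol 1_{\{|x_q|_q\le q^{-j+1}\}}\bigr).
\]
Using $\prod_{p\ne q}p^{\alpha_p(q^{j})}=\Phi(q^{j})/q^{j}$ and $\alpha_q(q^{j})=j$, and writing $n_q:=\operatorname{ord}_q(x_q)$, the last bracket collapses so that the whole expression equals $\Phi(q^{j})(1-\tfrac1q)$ when $n_q\ge j$, equals $-\Phi(q^{j})/q$ when $n_q=j-1$, and equals $0$ when $n_q<j-1$, in each case multiplied by the indicator that $|x_p|_p\le p^{-\alpha_p(q^{j})}$ for all $p\ne q$.

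Part (A) is then immediate. If $\|x\|>1$ then $x\notin\prod_p\mathbb{Z}_p$, so some prime $p_0$ satisfies $\operatorname{ord}_{p_0}(x_{p_0})<0$. For every $j>0$ and every $q$, the factor at $p_0$ (if $p_0\ne q$) or the $q$-bracket (if $p_0=q$) can be nonzero only if $\operatorname{ord}_{p_0}(x_{p_0})\ge 0$, which fails. Hence every sphere integral over $\|\xi\|>1$ vanishes and $\sum_q\check f^{(1,q)}(x)=0$, which is \eqref{form1}.

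For part (B) assume $\|x\|<1$; then $x\in\prod_p\mathbb{Z}_p$, every $n_p\ge 0$, and $\|x\|=\max_p p^{-n_p-1}$, so $M:=\|x\|^{-1}=\min_p p^{\,n_p+1}$ is attained at a \emph{unique} prime $p_*$ (distinct primes give distinct prime powers). The conditions $|x_p|_p\le p^{-\alpha_p(q^{j})}$ for $p\ne q$ are equivalent to $q^{j}<\min_{p\ne q}p^{\,n_p+1}$. Summing the three possible values above over $j$ for a fixed $q$ and writing $\beta_q=\lceil\log_q M\rceil$, one verifies, distinguishing $q=p_*$ (where $M=q^{\,n_q+1}$ and $\beta_q-1=n_q$) from $q\ne p_*$ (where $M$ is not a power of $q$, so $q^{\,n_q+1}>M$ forces $n_q\ge\beta_q-1$), that the surviving ``bulk'' range is exactly $1\le j\le\beta_q-1$ in both cases; this yields $(1-\tfrac1q)\sum_{j=1}^{\beta_q-1}f(q^{j})\Phi(q^{j})$, whose index set is non-empty precisely when $q<M$. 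The leftover ``boundary'' term $-\tfrac1q f(q^{\,n_q+1})\Phi(q^{\,n_q+1})$ survives only when $q^{\,n_q+1}=M$, i.e. when $q=p_*$, i.e. when $|x_q|_q/q=\|x\|$, i.e. when $\delta\bigl(|x_q|_q/q-\|x\|\bigr)=1$; there $q^{\,n_q+1}=\|x\|^{-1}$, so it equals $-\tfrac1q f(\|x\|^{-1})\Phi(\|x\|^{-1})\delta\bigl(|x_q|_q/q-\|x\|\bigr)$. Summing over all primes $q$ reproduces \eqref{form2}.

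The only genuine obstacle is the reconciliation in part (B): showing that the cutoff imposed by the indicators at the primes $p\ne q$ equals $\beta_q-1$ simultaneously in the cases $q=p_*$ and $q\ne p_*$ is a delicate juggling of floor and ceiling functions, and the uniqueness of the minimizer $p_*$ is what makes the $\delta$-term fire for exactly one prime. The auxiliary identities $\alpha_q(q^{j})=j$, $\prod_{p\ne q}p^{\alpha_p(q^{j})}=\Phi(q^{j})/q^{j}$ and $\Phi((q^{j})_-)=\Phi(q^{j})/q$ are routine consequences of the definitions of $\Phi$ and $[[\cdot]]$.
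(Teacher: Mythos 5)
Your proposal is correct and takes essentially the same route as the paper's proof: both factor the integral over $\mathbb{A}^{(1,q)}$ (equivalently, over each sphere $S_{q^{j}}$, $j>0$) into local character integrals, use the standard ball and sphere formulas together with the identity $\prod_{p\neq q}p^{\alpha_{p}(q^{j})}=\Phi(q^{j})/q^{j}$, and then run the same case analysis, your unique minimizer $p_{*}$ being exactly the prime with $|x_{q}|_{q}/q=\Vert x\Vert$ that produces the $\delta$-term in \eqref{f1qq}. The only difference is organizational: you expand into the sum over spheres at the outset and keep all indicators explicit, whereas the paper first integrates out the coordinates $p\neq q$ to reduce to a single $q$-adic integral \eqref{eq6} with the cutoff $\Vert x\Vert_{0}^{-1}$.
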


\begin{claim}
\label{Claim2}%
\begin{equation}
\sum_{q}\check{f}^{\left(  0,q\right)  }\left(  x\right)  =\sum_{q}
\biggl\{\left(  1-\frac{1}{q}\right)  \sum_{j=-\infty}^{-1}f(q^{j})\Phi\left(
q^{j}\right)  \biggr\}\qquad\text{if}\ \Vert x\Vert<1, \tag{C}\label{form3}%
\end{equation}
\begin{equation}%
\begin{split}
\sum_{q}\check{f}^{\left(  0,q\right)  }\left(  x\right)  =  &  \sum
_{q}\biggl\{\left(  1-\frac{1}{q}\right)  \sum_{j=-\infty}^{\beta_{q}\left(
x\right)  -1}f(q^{j})\Phi\left(  q^{j}\right)  \biggr\}\\
&  -\sum_{q}\frac{1}{q}f\bigl(\Vert x\Vert^{-1}\bigr)\Phi\left(  \Vert
x\Vert^{-1}\right)  \delta\bigl(|x_{q}|_{q}-\Vert x\Vert\bigr)\quad
\text{if}\ \Vert x\Vert>1.
\end{split}
\tag{D}\label{form4A}%
\end{equation}
\end{claim}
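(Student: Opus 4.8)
The plan is to evaluate each $\check f^{(0,q)}(x)$ by reducing it to integrals of the character $\chi(\xi\cdot x)$ over adelic balls. Since by Remark \ref{Rmk Metric Range} the metric takes only the value $0$ and non-zero powers of primes, each sphere is a set-difference of two consecutive balls, $S_{q^{j}}=B_{q^{j}}\setminus B_{(q^{j})_{-}}$; hence, using radiality of $f$,
\[
\check f^{(0,q)}(x)=\sum_{j<0}f(q^{j})\bigl(I(q^{j},x)-I((q^{j})_{-},x)\bigr),\qquad I(r,x):=\int_{B_{r}}\chi(\xi\cdot x)\,d\xi_{\mathbb A_{f}}.
\]
By \eqref{set} every ball factors as $B_{r}=\prod_{p}p^{-\alpha_{p}(r)}\mathbb Z_{p}$ with $\alpha_{p}(r)=[[\log_{p}r]]$, so $I(r,x)$ factors into the standard local integrals $\int_{p^{-n}\mathbb Z_{p}}\chi_{p}(\xi_{p}x_{p})\,d\xi_{p}=p^{n}\mathbf 1\{|x_{p}|_{p}\le p^{-n}\}$. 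Consequently $I(r,x)=\Phi(r)$ when $x$ lies in the dual ball $B_{r}^{\ast}:=\prod_{p}p^{\alpha_{p}(r)}\mathbb Z_{p}$ and $I(r,x)=0$ otherwise, with $\Phi(r)=\operatorname{vol}(B_{r})$ by Lemma \ref{Lemma2A}.

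The crucial step is to decide, for fixed $q$, for which $j$ one has $x\in B_{q^{j}}^{\ast}$, i.e. $\operatorname{ord}_{p}(x_{p})\ge[[\log_{p}q^{j}]]$ for every $p$. Here I would treat the coordinate $p=q$ separately from the coordinates $p\neq q$: since $[[\log_{q}q^{j}]]=j+1$ the slot $p=q$ yields the constraint $|x_{q}|_{q}\le q^{-j-1}$, whereas for $p\neq q$ the number $\log_{p}q$ is irrational and one gets $|x_{p}|_{p}<q^{-j}$. A short case analysis, according to whether the maximum $\|x\|=\max_{p}|x_{p}|_{p}$ is attained at $q$ or not, then shows that all these constraints hold precisely when $j\le\beta_{q}(x)-1$, with $\beta_{q}$ as in \eqref{eq_beta_q}. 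I expect this asymmetry between the slot $p=q$ (a non-strict bound $\le q^{-j-1}$) and the slots $p\neq q$ (a strict bound $<q^{-j}$) to be the main difficulty, and it is exactly what will later produce the $\delta$-term. When $\|x\|<1$ one has $\beta_{q}\ge1$, so every $j<0$ qualifies and $I(q^{j},x)=\Phi(q^{j})$ throughout; when $\|x\|>1$ the cut-off $j\le\beta_{q}-1<0$ is effective.

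Finally I would sum over $q$ and telescope. Setting $g(n):=I(n,x)$, the sums combine into $\sum_{q}\check f^{(0,q)}(x)=\sum_{n}f(n)\bigl(g(n)-g(n_{-})\bigr)$, where $n$ runs over all prime powers $n<1$ in the total order of \eqref{nplusdef}. By the previous step $g(n)=\Phi(n)$ for $n\le n^{\ast}$ and $g(n)=0$ for $n>n^{\ast}$, where $n^{\ast}$ is the largest prime power with $x\in B_{n^{\ast}}^{\ast}$; a direct check assembling the per-$q$ thresholds gives $n^{\ast}=(\|x\|^{-1})_{-}$. For the indices $n=q^{j}\le n^{\ast}$ the identity $\Phi(q^{j})-\Phi(q^{j}_{-})=(1-\tfrac1q)\Phi(q^{j})$ from \eqref{identity} turns each summand into $(1-\tfrac1q)f(q^{j})\Phi(q^{j})$, reproducing the double sums in \eqref{form3} and \eqref{form4A}. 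The only remaining contribution comes from the single boundary index $n=(n^{\ast})_{+}=\|x\|^{-1}$, where $g$ drops from $\Phi(n^{\ast})$ to $0$; by \eqref{identity} this contributes $-f(\|x\|^{-1})\Phi((\|x\|^{-1})_{-})=-\tfrac1Q f(\|x\|^{-1})\Phi(\|x\|^{-1})$, with $Q$ the unique prime satisfying $|x_{Q}|_{Q}=\|x\|$. If $\|x\|<1$ then $\|x\|^{-1}>1$ falls outside the range of summation and no boundary term survives, giving \eqref{form3}; if $\|x\|>1$ then $\|x\|^{-1}<1$ is included, and rewriting this lone contribution as $\sum_{q}\tfrac1q f(\|x\|^{-1})\Phi(\|x\|^{-1})\delta(|x_{q}|_{q}-\|x\|)$—in which only $q=Q$ is non-zero—yields \eqref{form4A}.
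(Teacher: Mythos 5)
Your proposal is correct, and it reaches \eqref{form3}--\eqref{form4A} by a genuinely different route than the paper. The paper computes $\check f^{(0,q)}$ by iterated integration: it first integrates out the coordinates $p\neq q$ (a product of local ball integrals, giving $\Phi(q^{-1}|\xi_q|_q)/|\xi_q|_q$ times an indicator), and then evaluates the remaining one-dimensional integral over $\mathbb{Q}_q$ using the three-case formula \eqref{eq8} for $\int_{|\xi_q|_q=q^j}\chi_q(x_q\xi_q)\,d\xi_q$; the $\delta$-term arises there from the oscillatory middle case $-q^{j-1}$, which occurs exactly when $|x_q|_q=\Vert x\Vert$. You instead treat all coordinates symmetrically: each adelic sphere is a difference of adelic balls, the full adelic ball integral factors into local ball integrals giving $I(r,x)=\Phi(r)\mathbf{1}_{B_r^{\ast}}(x)$, and the $\delta$-term emerges from the telescoping drop of $g$ at the single boundary index $n=\Vert x\Vert^{-1}$ combined with identity \eqref{identity}. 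What your route buys: no oscillatory (sign-changing) integrals ever appear, the cancellation structure is completely transparent, and the argument is manifestly uniform in $q$ -- indeed the same scheme would dispatch Claim \ref{Claim1} as well. Note that deriving $I(r,x)$ directly from the local factors, as you do, is essential: quoting Corollary \ref{FTBall} instead would be circular, since that corollary is deduced from Theorem \ref{LemmaFourierRadial}, whose proof rests on the present claim. Your threshold analysis ($|x_q|_q\leq q^{-j-1}$ at $p=q$ versus $|x_p|_p<q^{-j}$ at $p\neq q$, via integrality versus irrationality of $\log_p q^j$) is correct and is exactly the arithmetic heart of the matter.

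Two small points of hygiene. First, your identification $\Vert x\Vert=\max_p|x_p|_p$ is valid only when $\Vert x\Vert>1$; for $\Vert x\Vert<1$ one has $\Vert x\Vert=\max_p|x_p|_p/p$ by \eqref{Norm}. This does not damage the argument, because in that case $x\in\prod_p\mathbb{Z}_p$, so both constraints hold trivially for every $j<0$ (as $q^{-j-1}\geq 1\geq|x_q|_q$ and $q^{-j}\geq q>1\geq|x_p|_p$), which is all you use. Second, the regroupings $\sum_q\check f^{(0,q)}(x)=\sum_{n<1}f(n)\bigl(g(n)-g(n_-)\bigr)$ and the subsequent reordering by primes should be flagged as legitimate by absolute convergence, which follows from $f\in L^1(\mathbb{A}_f)$ together with $\Phi(n)-\Phi(n_-)\geq\tfrac12\Phi(n)$; this is the same justification the paper records before reordering \eqref{form5}.
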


Combining (\ref{form1}), (\ref{form2}), (\ref{form3}), (\ref{form4A}) we
obtain
\begin{equation}%
\begin{split}
\check{f}(x)=  &  \sum_{q}\biggl\{\left(  1-\frac{1}{q}\right)  \sum
_{\substack{j\leq\beta_{q}(x)-1,\\j\neq0}}f(q^{j})\Phi\left(  q^{j}\right)
\biggr\}\\
&  -\sum_{q,\,j}\frac{1}{q}f\bigl(\Vert x\Vert^{-1}\bigr)\Phi\left(  \Vert
x\Vert^{-1}\right)  \delta\bigl(q^{j}-\Vert x\Vert\bigr).
\end{split}
\label{form5}%
\end{equation}
Note that the last sum over $q$ and $j$ involving the function $\delta$ means
that we take the only term corresponding to the prime number $q$ such that
$\Vert x\Vert=q^{j}$ for some $j\in\mathbb{Z}\setminus\{0\}$.

Now the proof of the theorem may be finished as follows. Since $f\in
L^{1}(\mathbb{A}_{f})$ and $\operatorname{vol}(S_{p^{j}}) = \operatorname{vol}%
\bigl(  \{\xi\in\mathbb{A}_{f}:\Vert\xi\Vert=p^{j}\}\bigr)
=\Phi(p^{j})-\Phi(p_{-}^{j})$, see Lemma \ref{Lemma2A}, the series
$\sum_{q^{j}}\bigl(\Phi(q^{j})-\Phi(q_{-}^{j})\bigr)\left\vert f(q^{j}%
)\right\vert $ is convergent. Because of the inequality $\Phi(q^{j}%
)-\Phi(q_{-}^{j})\geq\frac{1}{2}\Phi(q^{j})$ the series $\sum_{q^{j}}%
\Phi\left(  q^{j}\right)  \left\vert f(q^{j})\right\vert $ converges as well,
hence we may arbitrary reorder the terms in (\ref{form5}).

By the properties of the entire part function, the following inequalities hold for the function $\beta_q$ \eqref{eq_beta_q}:
\begin{align}
& q^{j}\leq q^{-[\log_{q}\Vert x\Vert
]-1}<q^{-\log_{q}\Vert x\Vert}=\Vert x\Vert^{-1}, && j<\beta_q,\ j\in\mathbb{Z},\label{eq_beta_q1}\\
& q^{j}\geq q^{-[\log_{q}\Vert x\Vert]}\geq q^{-\log_{q}\Vert x\Vert
}=\Vert x\Vert^{-1}, && j\ge \beta_q,\ j\in\mathbb{Z},\label{eq_beta_q2}
\end{align}
where the equality in the second inequality is possible only when $\Vert x\Vert$ is a power of $q$. Suppose in \eqref{form5}, that $\|x\| = p^k$ for some prime number $p$ and integer $k\ne 0$.
It follows from inequalities \eqref{eq_beta_q1}, \eqref{eq_beta_q2} that the formula \eqref{form5} may be written as
\begin{multline*}
\check{f}(x)=  \sideset{}{'}\sum_{q^j < \|x\|^{-1}}\left(1-\frac{1}{q}\right)  f(q^{j})\Phi(  q^{j}) -\frac{1}{p}f\bigl(\Vert x\Vert^{-1}\bigr)\Phi\left(  \Vert
x\Vert^{-1}\right) \\
=\sideset{}{'}\sum_{q^j < \|x\|^{-1}}  f(q^{j})\Phi(  q^{j}) - \sideset{}{'}\sum_{q^j < \|x\|^{-1}}  f(q^{j})\Phi(  q^{j}_-) - f\bigl(\Vert x\Vert^{-1}\bigr)\Phi\left(  \Vert
x\Vert^{-1}_-\right)\\
=\sideset{}{'}\sum_{q^j < \|x\|^{-1}}  f(q^{j})\Phi(  q^{j}) -
\sideset{}{'}\sum_{q^j \le \|x\|^{-1}}  f(q^{j})\Phi(  q^{j}_-)=\sideset{}{'}\sum
_{q^{j}<\Vert x\Vert^{-1}}\Phi\left(  q^{j}\right)  \bigl(f(q^{j})-f(q_{+}
^{j})\bigr),
\end{multline*}
where we have used
(\ref{nplusdef})--(\ref{identity}) and $\sum'$ means that the value $j=0$ is omitted in the summation.

The checking of the formula \eqref{FTradial} in the case $\|x\|=0$ is left to the reader.
\end{proof}

\begin{proof}[Proof of Claim \ref{Claim1}] We assume that $x\neq0$. The case $x=0$ may be checked directly. With the use of \eqref{defFTfactorisable}, Lemma \ref{integral_radi_function} and the fact that $\mathbf{1}_{S_r}(x)$ is a factorizable function, we may write $\check{f}^{\left(  1,q\right)  }$ as
\[
\check{f}^{\left(  1,q\right)  }\left(  x\right)  =\int\limits_{|\xi_{q}%
|_{q}\geq q}\chi_{q}\left(  x_{q}\xi_{q}\right)  f(|\xi_{q}|_{q}%
)\bigg\{\prod_{p\neq q}\ \int\limits_{|\xi_{p}|_{p}<|\xi_{q}|_{q}}\chi
_{p}\left(  x_{p}\xi_{p}\right)  d\xi_{p}\bigg\}d\xi_{q}.
\]
Denote by $\alpha_{p}\left(  \xi_{q}\right)  $ the largest integer satisfying
$p^{\alpha_{p}\left(  \xi_{q}\right)  }\leq|\xi_{q}|_{q}$ (i.e. $\alpha
_{p}\left(  \xi_{q}\right)  =[\log_{p}|\xi_{q}|_{q}]=[[\log_{p}|\xi_{q}%
|_{q}]]$). Note that the equality $p^{\alpha_{p}\left(  \xi_{q}\right)  }%
=|\xi_{q}|_{q}$ is impossible for $|\xi_{q}|_{q}>1$ and $p\neq q$, hence
$p^{\alpha_{p}\left(  \xi_{q}\right)  }<|\xi_{q}|_{q}$. Recall that
\[
\int_{|\xi_{p}|_{p}\leq p^{\alpha_{p}\left(  \xi_{q}\right)  }}\chi_{p}\left(
x_{p}\xi_{p}\right)  d\xi_{p}=%
\begin{cases}
p^{\alpha_{p}\left(  \xi_{q}\right)  } & \text{if}\ |x_{p}|_{p}\leq
p^{-\alpha_{p}\left(  \xi_{q}\right)  },\\
0 & \text{if}\ |x_{p}|_{p}\geq p^{-\alpha_{p}\left(  \xi_{q}\right)  +1},
\end{cases}
\]
and since $p^{\alpha_{p}(\xi_{q})}<|\xi_{q}|_{q}<p^{\alpha_{p}(\xi_{q})+1}$,
we have
\[
\int_{|\xi_{p}|_{p}\leq p^{\alpha_{p}\left(  \xi_{q}\right)  }}\chi_{p}\left(
x_{p}\xi_{p}\right)  d\xi_{p}=%
\begin{cases}
p^{\alpha_{p}\left(  \xi_{q}\right)  } & \text{if}\ |x_{p}|_{p}< p|\xi
_{q}|_{q}^{-1},\\
0 & \text{if}\ |x_{p}|_{p}>p|\xi_{q}|_{q}^{-1},
\end{cases}
\]
which implies
\begin{equation}
\prod_{p\neq q}\int_{|\xi_{p}|_{p}<|\xi_{q}|_{q}}\chi_{p}\left(  x_{p}\xi
_{p}\right)  d\xi_{p}=\bigg(\prod_{p\neq q}p^{\alpha_{p}\left(  \xi
_{q}\right)  }\bigg)\mathbf{1}_{B}\left(  \xi_{q}\right)  =\frac{\Phi(|\xi
_{q}|_{q})}{|\xi_{q}|_{q}}\mathbf{1}_{B}\left(  \xi_{q}\right)  ,
\label{prod_pneq}%
\end{equation}
where $\mathbf{1}_{B}\left(  \xi_{q}\right)  $ is the characteristic function
of the set
\[
B:=\biggl\{\xi_{q}\in\mathbb{Q}_{q}:\max_{p\neq q}\frac{|x_{p}|_{p}}{p}
<|\xi_{q}|_{q}^{-1}\biggr\}.
\]
Therefore
\begin{equation}
\check{f}^{\left(  1,q\right)  }\left(  x\right)  =\int_{q\leq|\xi_{q}%
|_{q}<\big(\max_{p\neq q}\frac{|x_{p}|_{p}}{p}\big)^{-1}}\chi_{q}\left(
x_{q}\xi_{q}\right)  f(|\xi_{q}|_{q})\frac{\Phi(|\xi_{q}|_{q})}{|\xi_{q}|_{q}%
}d\xi_{q}. \label{eq6}%
\end{equation}
Note that it follows from \eqref{eq6} that $\check{f}^{\left(  1,q\right)
}\left(  x\right)  =0$ if $\max_{p\neq q}\frac{|x_{p}|_{p}}{p}\ge\frac{1}{q}$.

Set $\gamma_{q}$ to be the largest integer satisfying $q^{\gamma_{q}}<\left(  \max_{p\neq q}\frac{|x_{p}|_{p}}{p}\right)  ^{-1}$, then
\begin{equation}
\check{f}^{\left(  1,q\right)  }\left(  x\right)  =\sum_{j=1}^{\gamma_{q}%
}\frac{f(q^{j})\Phi\left(  q^{j}\right)  }{q^{j}}\int_{|\xi_{q}|_{q}=q^{j}%
}\ \chi_{q}\left(  x_{q}\xi_{q}\right)  d\xi_{q}. \label{eq7}%
\end{equation}
We recall that%
\begin{equation}
\int_{|\xi_{q}|_{q}=q^{j}}\ \chi_{q}\left(  x_{q}\xi_{q}\right)  d\xi_{q}=%
\begin{cases}
q^{j}\left(  1-q^{-1}\right)  & \text{if }|x_{q}|_{q}\leq q^{-j},\\
-q^{j-1} & \text{if }|x_{q}|_{q}=q^{-j+1},\\
0 & \text{if }|x_{q}|_{q}\geq q^{-j+2}.
\end{cases}
\label{eq8}%
\end{equation}
Note that the integral \eqref{eq8} is non-zero when $|\xi_{q}|_{q}\leq\frac
{q}{|x_{q}|_{q}}=\Bigl(\frac{|x_{q}|_{q}}{q}\Bigr)^{-1}$. Since for $|\xi_q|_q>1$ 
the equality $|\xi_q|_q = \Big(\max_{p\neq q}\frac{|x_{p}|_{p}
}{p}\Big)^{-1}$ is impossible, the last inequality
may be combined with $|\xi_{q}|_{q}<\Big(\max_{p\neq q}\frac{|x_{p}|_{p}
}{p}\Big)^{-1}$ into the inequality $|\xi_{q}|_{q}\leq\Big(\max_{p}
\frac{|x_{p}|_{p}}{p}\Big)^{-1}=\Vert x\Vert_{0}^{-1}$, cf. \eqref{Norm0}.
Then it follows from (\ref{eq6})--(\ref{eq8}) that
\begin{equation}
\check{f}^{\left(  1,q\right)  }\left(  x\right)  =\int_{q\leq|\xi_{q}
|_{q}\leq\Vert x\Vert_{0}^{-1}}\frac{\chi_{q}\left(  x_{q}\xi_{q}\right)
f(|\xi_{q}|_{q})\Phi(|\xi_{q}|_{q})}{|\xi_{q}|_{q}}d\xi_{q}. \label{eq9}
\end{equation}
Note that $\check{f}^{\left(  1,q\right)  }\left(  x\right)  =0$ if $\Vert
x\Vert_{0}^{-1}<q$ and that
\[
\left\{  x\in\mathbb{A}_{f}:\Vert x\Vert_{0}^{-1}\geq2\right\}  =\left\{
x\in\mathbb{A}_{f}:\max_{p}\frac{|x_{p}|_{p}}{p}\leq\frac{1} {2}\right\}
=\prod_{p}\mathbb{Z}_{p},
\]
hence for any $x$ outside of $\prod_{p}\mathbb{Z}_{p}$ the sum $\sum_{q}
\check{f}^{\left(  1,q\right)  }\left(  x\right)  $ vanishes. Therefore we
have non-zero terms in $\sum_{q}\check{f}^{\left(  1,q\right)  }\left(
x\right)  $ only if $\Vert x\Vert_{0}<1$. In such case $\Vert x\Vert_{0}=\Vert
x\Vert$. We recall definition \eqref{eq_beta_q}
of $\beta_{q}$  and inequalities \eqref{eq_beta_q1}, \eqref{eq_beta_q2}.
Then it follows from \eqref{eq7}--\eqref{eq9} that
\begin{equation}
\check{f}^{\left(  1,q\right)  }\left(  x\right)  =\left(  1-\frac{1}%
{q}\right)  \sum_{j=1}^{\beta_{q}-1}f(q^{j})\Phi(q^{j})\qquad\text{if}%
\ \frac{|x_{q}|_{q}}{q}<\Vert x\Vert\label{f1qpq}%
\end{equation}
and
\begin{equation}
\check{f}^{\left(  1,q\right)  }\left(  x\right)  =\left(  1-\frac{1}%
{q}\right)  \sum_{j=1}^{\beta_{q}-1}f(q^{j})\Phi(q^{j})-\frac{1}%
{q}f\bigl(\Vert x\Vert^{-1}\bigr)\Phi\left(  \Vert x\Vert^{-1}\right)
\quad\text{if}\ \frac{|x_{q}|_{q}}{q}=\Vert x\Vert. \label{f1qq}%
\end{equation}
By combining the formulas (\ref{f1qpq})--(\ref{f1qq}) we obtain formula
(\ref{form2}).
\end{proof}

\begin{proof}
[Proof of Claim \ref{Claim2}]We assume that $x\neq0$. The case $x=0$  may be checked directly. 
 The required calculations are mostly similar to the
previous ones, however there are some subtle variations. We have $\Vert
\xi\Vert=q^{-1}|\xi_{q}|_{q}$ and
\[
\check{f}^{\left(  0,q\right)  }\left(  x\right)  =\int\limits_{q^{-1}|\xi
_{q}|_{q}<1}\chi_{q}\left(  x_{q}\xi_{q}\right)  f(q^{-1}|\xi_{q}%
|_{q})\bigg\{\prod_{p\neq q}\ \int\limits_{p^{-1}|\xi_{p}|_{p}<q^{-1}|\xi
_{q}|_{q}}\chi_{p}\left(  x_{p}\xi_{p}\right)  d\xi_{p}\bigg\}d\xi_{q}.
\]
Let $\alpha_{p}(\xi_{q})$ denote the largest power $p^{\alpha_{p}(\xi_{q})}$
satisfying $p^{-1}p^{\alpha_{p}(\xi_{q})}<q^{-1}|\xi_{q}|_{q}$ which is equal
to $1+\bigl[\log_{p}q^{-1}|\xi_{q}|_{q}\bigr]$, and since $q^{-1}|\xi_{q}%
|_{q}<1$, the last quantity is equal to $\bigl[\bigl[\log_{p}q^{-1}|\xi
_{q}|_{q}\bigr]\bigr]$, cf. \eqref{def[[]]}. Hence similarly to
\eqref{prod_pneq} with the use of \eqref{defphi} we obtain
\[
\prod_{p\neq q}\text{ }\int\limits_{p^{-1}|\xi_{p}|_{p}<q^{-1}|\xi_{q}|_{q}%
}\chi_{p}\left(  x_{p}\xi_{p}\right)  d\xi_{p}=\bigg(\prod_{p\neq q}%
p^{\alpha_{p}\left(  \xi_{q}\right)  }\bigg)\mathbf{1}_{B}\left(  \xi
_{q}\right)  =\frac{\Phi(q^{-1}|\xi_{q}|_{q})}{q^{[[\log_{q}q^{-1}|\xi
_{q}|_{q}]]}}\mathbf{1}_{B}\left(  \xi_{q}\right)  ,
\]
where $\mathbf{1}_{B}\left(  \xi_{q}\right)  $ is the characteristic function
of the set
\begin{equation}
B:=\Bigl\{\xi_{q}\in\mathbb{Q}_{q}:|\xi_{q}|_{q}< q\Bigl(\max_{p\neq
q}{|x_{p}|_{p}}\Bigr)^{-1}\Bigr\}. \label{Bf0}%
\end{equation}
Since $|\xi_{q}|_{q}$ is a power of $q$ and $q^{-1}|\xi_{q}|_{q}<1$, we have
$[[\log_{q}q^{-1}|\xi_{q}|_{q}]]=\log_{q}|\xi_{q}|_{q}$ and $q^{[[\log
_{q}q^{-1}|\xi_{q}|_{q}]]}=q^{\log_{q}|\xi_{q}|_{q}}=|\xi_{q}|_{q}$. Since
$q^{-1}|\xi_{q}|_{q}<1$ is equivalent to $|\xi_{q}|_{q}\leq1$, we obtain
\begin{equation}
\check{f}^{\left(  0,q\right)  }\left(  x\right)  =\int\limits_{\substack{|\xi
_{q}|_{q}\leq1,\\|\xi_{q}|_{q}< q(\max_{p\neq q}{|x_{p}|_{p}})^{-1}}%
}\chi_{q}\left(  x_{q}\xi_{q}\right)  f(q^{-1}|\xi_{q}|_{q})\frac{\Phi
(q^{-1}|\xi_{q}|_{q})}{|\xi_{q}|_{q}}d\xi_{q}. \label{int_for_f0q}%
\end{equation}
It follows from \eqref{eq8} that the last integral is non-zero when
$|x_{q}|_{q}\leq q|\xi_{q}|_{q}^{-1}$, which may be combined with \eqref{Bf0}
into $\max_{p}|x_{p}|_{p}\leq q|\xi_{q}|_{q}^{-1}$ or, equivalently, into
$|\xi_{q}|_{q}\leq q(\max_{p}{|x_{p}|_{p}})^{-1}$. Hence the domain of
integration in the last integral is
\begin{equation}
|\xi_{q}|_{q}\leq\min\Bigl\{1,\frac{q}{\max_{p}{|x_{p}|_{p}}}\Bigr\},
\label{min_for_xi}%
\end{equation}
and we have to consider two cases.

\noindent\textbf{Case 1}: $\Vert x\Vert<1$, i.e. $\max_{p}|x_{p}|_{p}\leq1$.
In this case the minimum in \eqref{min_for_xi} is equal to 1 and the equality
$|x_{q}|_{q}<q|\xi_{q}|_{q}^{-1}$ always holds, hence from
\eqref{int_for_f0q}, \eqref{min_for_xi} and \eqref{eq8} similarly to
\eqref{f1qpq} we obtain
\[
\check{f}^{\left(  0,q\right)  }\left(  x\right)  =\left(  1-\frac{1}%
{q}\right)  \sum_{j=-\infty}^{0}f(q^{j-1})\Phi(q^{j-1})=\left(  1-\frac{1}%
{q}\right)  \sum_{j=-\infty}^{-1}f(q^{j})\Phi(q^{j}).
\]

\noindent\textbf{Case 2}: $\Vert x\Vert>1$, i.e. $\max_{p}|x_{p}|_{p}=\Vert
x\Vert>1$. In this case it is sufficient to determine possible values of
$|\xi_{q}|_{q}$ from the inequality $|\xi_{q}|_{q}\leq\frac{q}{\Vert x\Vert}$,
because they satisfy the inequality $|\xi_{q}|_{q}\leq1$ even in the case $\Vert
x\Vert<q$. Recall definition \eqref{eq_beta_q}
of $\beta_{q}$  and inequalities \eqref{eq_beta_q1}, \eqref{eq_beta_q2}. As a result, from \eqref{int_for_f0q}, \eqref{min_for_xi} and \eqref{eq8}
similarly to \eqref{f1qpq} and \eqref{f1qq} we have
\begin{align*}
\check{f}^{\left(  0,q\right)  }\left(  x\right)   &  =\left(  1-\frac{1}%
{q}\right)  \sum_{j=-\infty}^{\beta_{q}}f(q^{j-1})\Phi(q^{j-1})\\
&  =\left(  1-\frac{1}{q}\right)  \sum_{j=-\infty}^{\beta_{q}-1}f(q^{j}%
)\Phi(q^{j})\qquad\text{if}\ |x_{q}|_{q}<\Vert x\Vert
\end{align*}
and
\[
\check{f}^{\left(  0,q\right)  }\left(  x\right)  =\left(  1-\frac{1}%
{q}\right)  \sum_{j=-\infty}^{\beta_{q}-1}f(q^{j})\Phi(q^{j})-\frac{1}%
{q}f\bigl(\Vert x\Vert^{-1}\bigr)\Phi\left(  \Vert x\Vert^{-1}\right)
\quad\text{if}\ |x_{q}|_{q}=\Vert x\Vert.
\]
The last two equalities give us formulas (\ref{form3})--(\ref{form4A}).
\end{proof}

As a corollary from Theorem \ref{LemmaFourierRadial} we derive a sufficient condition for the Fourier transform of a radial function to be non-negative.
\begin{corollary}\label{FTNonNegative}
Let $f$ be a real-valued non-increasing radial function, i.e. $f=f(\|\xi\|)$ and $f(\xi) \ge f(\zeta)$ for any $\xi,\zeta\in\mathbb{A}_f$ satisfying $\|\xi\|\le \|\zeta\|$. Then
\[
\check{f}(x):=\bigl(\mathcal{F}_{\xi\rightarrow x}^{-1}f\bigr)(x) \ge 0 \qquad \text{for any}\ x\in\mathbb{A}_f.
\]
\end{corollary}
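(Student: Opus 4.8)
The plan is to obtain the conclusion directly from the explicit series \eqref{FTradial}, by exhibiting $\check f(x)$ as a sum of manifestly non-negative terms. I first place myself in the hypotheses of Theorem \ref{LemmaFourierRadial}: assuming, as there, that $f\in L^{1}(\mathbb{A}_{f})$, the transform $\check f$ is given for every $x\in\mathbb{A}_{f}$ by
\[
\check f(x)=\sum_{q^{j}<\Vert x\Vert^{-1}}\Phi\left(q^{j}\right)\bigl(f(q^{j})-f(q^{j}_{+})\bigr),
\]
the sum running over the non-zero powers of primes below $\Vert x\Vert^{-1}$ (with the convention $\Vert 0\Vert^{-1}=\infty$, so that for $x=0$ it runs over all such powers). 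It therefore suffices to prove that every summand is non-negative.

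I would then examine the two factors of a generic term separately. By the definition \eqref{nplusdef} of the \emph{next} operator, $q^{j}_{+}$ is the smallest non-zero prime power strictly greater than $q^{j}$, so $q^{j}<q^{j}_{+}$; since by Remark \ref{Rmk Metric Range} both $q^{j}$ and $q^{j}_{+}$ are attained as values of $\rho$, I may pick $\xi,\zeta\in\mathbb{A}_{f}$ with $\Vert\xi\Vert=q^{j}$ and $\Vert\zeta\Vert=q^{j}_{+}$. The monotonicity hypothesis (namely $\Vert\xi\Vert\le\Vert\zeta\Vert$ implies $f(\xi)\ge f(\zeta)$) then yields $f(q^{j})\ge f(q^{j}_{+})$, i.e. $f(q^{j})-f(q^{j}_{+})\ge 0$. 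On the other hand, the weight $\Phi(q^{j})$ is, by its definition \eqref{defphi} (and \eqref{Phi(1/x)} for $q^{j}<1$), a product of prime powers and hence strictly positive.

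Combining these, each term $\Phi(q^{j})\bigl(f(q^{j})-f(q^{j}_{+})\bigr)$ is the product of a positive and a non-negative number, whence non-negative, and summation gives $\check f(x)\ge 0$. Convergence of the series is not an issue, as the absolute convergence of $\sum_{q^{j}}\Phi(q^{j})\lvert f(q^{j})\rvert$ was already established in the proof of Theorem \ref{LemmaFourierRadial}. I do not anticipate any genuine obstacle: the whole point is that \eqref{FTradial} is already arranged as a telescoping / summation-by-parts expression whose increments $f(q^{j})-f(q^{j}_{+})$ carry a definite sign exactly under the stated monotonicity, so the only item requiring a word of care is the positivity of the weights $\Phi(q^{j})$.
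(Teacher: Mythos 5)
Your proof is correct and follows exactly the route the paper intends: the corollary is stated there without further argument precisely because, as you show, every term $\Phi(q^{j})\bigl(f(q^{j})-f(q^{j}_{+})\bigr)$ in formula \eqref{FTradial} is a product of a strictly positive weight and a non-negative increment under the monotonicity hypothesis. Your additional remarks on the implicit $L^{1}$ assumption and on convergence are accurate but not points of divergence from the paper.
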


On the base of Theorem \ref{LemmaFourierRadial} we may compute the Fourier
transforms of characteristic functions of balls and spheres.

\begin{corollary}
\label{FTBall} Let $f$ be a characteristic function of a ball, i.e. $f =
\mathbf{1}_{B_{r}}(x)$, $r\in\{p^{j}: p\ \text{is prime},\ j\in\mathbb{Z}%
\setminus\{0\}\}$. Then
\[
\widehat f(\xi) = \Phi(r) \mathbf{1}_{B_{R}}(\xi),\quad R = (r^{-1})_{-}.
\]
Let $g$ be a characteristic function of a sphere, i.e. $g = \mathbf{1}_{S_{r}%
}(x)$, $r\in\{p^{j}: p\ \text{is prime},\ j\in\mathbb{Z}\setminus\{0\}\}$.
Then
\[
\widehat g(\xi) = \Phi(r) \mathbf{1}_{B_{R}}(\xi) - \Phi(r_{-})\mathbf{1}%
_{B_{R_{+}}}(\xi),\quad R = (r^{-1})_{-}.
\]

\end{corollary}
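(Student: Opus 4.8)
The plan is to reduce both statements to the inverse Fourier transform formula \eqref{FTradial}. For an even (in particular radial) integrable function $h$ one has $\widehat{h}=\check{h}$: substituting $\xi\mapsto-\xi$ in $\int_{\mathbb{A}_f}\chi(\xi\cdot x)h(\xi)\,d\xi_{\mathbb{A}_f}$ and using $h(-\xi)=h(\xi)$ (the norm $\|\cdot\|$ is invariant under negation) gives $\mathcal{F}h=\mathcal{F}^{-1}h$. Both $\mathbf{1}_{B_r}$ and $\mathbf{1}_{S_r}$ are radial and, by Lemma \ref{Lemma2A}, supported on a compact set, hence lie in $L^1(\mathbb{A}_f)$; so Theorem \ref{LemmaFourierRadial} applies and it suffices to evaluate \eqref{FTradial}.

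For the ball I would write $f(q^j)=\mathbf{1}_{B_r}(q^j)$, which equals $1$ exactly when $q^j\le r$ and $0$ otherwise, and substitute into \eqref{FTradial}. The difference $f(q^j)-f(q^j_+)$ then vanishes unless $q^j\le r<q^j_+$; since $r$ is itself a nonzero power of a prime and the powers of primes are totally ordered by the operators $(\cdot)_\pm$, this forces $q^j=r$. Hence the sum collapses to the single term $\Phi(r)$, which is present precisely when $r<\|x\|^{-1}$. It then remains to rewrite this condition as a ball membership: by Remark \ref{Rmk Metric Range} the values of $\|x\|$ are powers of primes (or $0$), so $r<\|x\|^{-1}$, i.e. $\|x\|<r^{-1}$, is equivalent to $\|x\|\le(r^{-1})_-=R$, that is $x\in B_R$. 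This yields $\widehat{f}(\xi)=\Phi(r)\mathbf{1}_{B_R}(\xi)$.

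For the sphere I would proceed in either of two ways. Directly, set $g(q^j)=\mathbf{1}_{S_r}(q^j)$, equal to $1$ only at $q^j=r$; then $g(q^j)-g(q^j_+)$ equals $+1$ at $q^j=r$ and $-1$ at $q^j=r_-$, and zero elsewhere, so \eqref{FTradial} gives $\check{g}(x)=\Phi(r)\mathbf{1}_{\{r<\|x\|^{-1}\}}(x)-\Phi(r_-)\mathbf{1}_{\{r_-<\|x\|^{-1}\}}(x)$. The first indicator is $\mathbf{1}_{B_R}$ as above; for the second, the relations listed before \eqref{identity}, namely $(r_-)^{-1}=(r^{-1})_+$ and $((r^{-1})_+)_-=r^{-1}=R_+$, turn $r_-<\|x\|^{-1}$ into $\|x\|\le R_+$, i.e. $\mathbf{1}_{B_{R_+}}$. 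Alternatively, and more cleanly, I would observe that the discreteness of the norm gives the disjoint decomposition $B_r=B_{r_-}\sqcup S_r$, whence $\mathbf{1}_{S_r}=\mathbf{1}_{B_r}-\mathbf{1}_{B_{r_-}}$, and apply linearity of the Fourier transform together with the ball formula, checking that $((r_-)^{-1})_-=R_+$.

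The routine part is the substitution and the collapse of the differences $f(q^j)-f(q^j_+)$ to the finitely many points where $f$ jumps; the delicate part, which I would treat most carefully, is the conversion of the \emph{strict} summation condition $q^j<\|x\|^{-1}$ in \eqref{FTradial} into the \emph{non-strict} ball condition $\|x\|\le R$, since this relies on the exact range of $\|\cdot\|$ (Remark \ref{Rmk Metric Range}) and on the interplay of the operators $(\cdot)_+$, $(\cdot)_-$ with inversion. One should also confirm the boundary value at $x=0$ (where $\|x\|^{-1}=\infty$), for which the two formulas reduce to $\Phi(r)$ and $\Phi(r)-\Phi(r_-)=\operatorname{vol}(S_r)$ respectively, matching $\widehat{h}(0)=\int_{\mathbb{A}_f}h$.
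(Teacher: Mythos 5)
Your proposal is correct and follows essentially the same route as the paper: both apply Theorem \ref{LemmaFourierRadial}, collapse the telescoping differences $f(q^j)-f(q^j_+)$ to the single jump at $q^j=r$, convert the strict condition $r<\|x\|^{-1}$ into $\|x\|\le(r^{-1})_-$ using the discreteness of the range of $\|\cdot\|$, and obtain the sphere formula from $\mathbf{1}_{S_r}=\mathbf{1}_{B_r}-\mathbf{1}_{B_{r_-}}$ together with $\bigl((r_-)^{-1}\bigr)_-=r^{-1}=R_+$ (your direct evaluation of the sphere sum is an equivalent minor variant of this). Your explicit verification that $\widehat{h}=\check{h}$ for radial $h$ and the check of the value at $x=0$ against the volumes in Lemma \ref{Lemma2A} are details the paper leaves implicit, but they do not alter the argument.
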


\begin{proof}
As it follows from \eqref{FTradial} we have at most one non-zero term equal to
$\Phi(r)$ in the Fourier transform of the ball $B_{r}$, and this term is
present if and only if $r<\|x\|^{-1}$ which is equivalent to $\|x\|<r^{-1}$ or
$\|x\|\le(r^{-1})_{-}$.

The second statement follows from the presentation $S_{r} = B_{r} \setminus
B_{r_{-}}$ and the properties of the operators `$_{-}$' and `$_{+}$':
$\bigl((r_{-})^{-1}\bigr)_{-} = \bigl( (r^{-1})_{+}\bigr)_{-} = r^{-1} =
R_{+}$.
\end{proof}

\subsection{Distributions on $\mathbb{A}_{f}$}

In this subsection we consider $\mathbb{A}_{f}$ as the complete metric space
$(\mathbb{A}_{f},\rho)$. As it was previously mentioned in Subsection
\ref{SubSectFourier}, the space $\mathcal{S}(\mathbb{A}_{f})$ of
Bruhat-Schwartz functions consists of finite linear combinations of
factorizable functions $f=\prod_{p}f_{p}$, where a finite number of the
functions $f_{p}$ are in $\mathcal{S}(\mathbb{Q}_{p})$ and the rest of the
functions are the characteristic functions of the sets $\mathbb{Z}_{p}$, i.e.
$f=\prod_{p\leq N}f_{p}\times\prod_{p>N}\Omega_{p}(|x_{p}|_{p})$. For the sake
of simplicity, from now on we will use \textit{test function} to mean
\emph{Bruhat-Schwartz function}. The spaces $\mathcal{S} (\mathbb{Q}_{p})$ consist of
compactly supported locally constant functions. We show that the same property
characterizes the space $\mathcal{S} (\mathbb{A}_{f})$. Despite a similar
result was already proved in \cite{Kh-Ra}, the adelic metric $\rho$ allows us
to introduce the notion of `parameter of constancy' for functions in
$\mathcal{S}(\mathbb{A}_{f})$ and to give a construction of a topology for
$\mathcal{S}(\mathbb{A}_{f})$ in a similar way as for the spaces
$\mathcal{S}(\mathbb{Q}_{p})$.

\begin{definition}
We say that a function $f$ is locally constant if for any $x\in\mathbb{A}_{f}$
there exists a constant $\ell(x)>0$ such that $f(x+y)=f(x)$ for any $y\in
B_{\ell(x)}(0)$.
\end{definition}

The same reasoning as in the $p$-adic case, see e.g. \cite{V-V-Z} or
\cite{A-K-S}, shows that for a compactly supported function $f$ the same
constant $\ell$ may be chosen for all points $x\in\mathbb{A}_{f}$.

\begin{definition}
Let $f$ be a non-zero compactly supported function. We define the parameter of
constancy $\ell$ of $f$ as the largest non-zero integer power of a prime
number such that
\begin{equation}
f(x+y)=f(x)\qquad\text{\ for any }x\in\mathbb{A}_{f},\ y\in B_{\ell}(0).
\label{Local Constancy}%
\end{equation}
By definition we set the parameter of constancy of function $0$ to be equal
$+\infty$.
\end{definition}

\begin{lemma}
\label{Lemma3} The function $f\in\mathcal{S}(\mathbb{A}_{f})$ if and only if
it is locally constant with compact support.
\end{lemma}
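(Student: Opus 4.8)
The plan is to prove both implications separately: the forward direction by direct inspection of factorizable functions, and the converse by reducing $f$ to a finite sum of indicator functions of cosets of a compact open subgroup.

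First I would show that every $f\in\mathcal{S}(\mathbb{A}_{f})$ is locally constant with compact support. By linearity it suffices to treat a single factorizable function $f=\prod_{p\leq N}f_{p}\times\prod_{p>N}\Omega_{p}$, where $f_{p}\in\mathcal{S}(\mathbb{Q}_{p})$ and $\Omega_{p}=\mathbf{1}_{\mathbb{Z}_{p}}$. Each $\operatorname{supp}(f_{p})$ is compact in $\mathbb{Q}_{p}$, hence contained in some $p^{-l_{p}}\mathbb{Z}_{p}$, so $\operatorname{supp}(f)\subseteq\prod_{p\leq N}p^{-l_{p}}\mathbb{Z}_{p}\times\prod_{p>N}\mathbb{Z}_{p}$, which is compact by \eqref{compactsubset}. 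For local constancy, each $f_{p}$ with $p\leq N$ is constant on cosets of some $p^{m_{p}}\mathbb{Z}_{p}$, while $\Omega_{p}$ is invariant under translation by elements of $\mathbb{Z}_{p}$. Using the explicit description \eqref{set}, one has $B_{\ell}(0)=\prod_{p}p^{-\alpha_{p}(\ell)}\mathbb{Z}_{p}$ with $-\alpha_{p}(\ell)\to+\infty$ as $\ell\to0$; hence for a sufficiently small power of a prime $\ell$ we get $B_{\ell}(0)\subseteq\prod_{p\leq N}p^{m_{p}}\mathbb{Z}_{p}\times\prod_{p>N}\mathbb{Z}_{p}$, and translation by any $y\in B_{\ell}(0)$ leaves $f$ unchanged.

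For the converse, let $f$ be locally constant with compact support. By the observation recorded after the definition of local constancy, a single parameter of constancy $\ell$ (a non-zero integer power of a prime) may be chosen, so $f(x+y)=f(x)$ for all $x\in\mathbb{A}_{f}$ and all $y\in B_{\ell}(0)$. The ball $B_{\ell}(0)$ is a compact open subgroup of $(\mathbb{A}_{f},+)$: it is open and compact by the non-Archimedean nature of $\rho$ and Corollary \ref{Cor1}, and it is closed under addition and negation by the ultrametric inequality $\|x+y\|\leq\max\{\|x\|,\|y\|\}$ together with $\|-x\|=\|x\|$. Consequently $f$ is constant on each coset $x+B_{\ell}(0)$, so its support is a union of such cosets; being compact and covered by these open cosets, it meets only finitely many of them, say $a_{1}+B_{\ell}(0),\dots,a_{n}+B_{\ell}(0)$. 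Hence $f=\sum_{i=1}^{n}c_{i}\,\mathbf{1}_{a_{i}+B_{\ell}(0)}$ for suitable $c_{i}\in\mathbb{C}$. To finish, I would factor $\mathbf{1}_{a_{i}+B_{\ell}(0)}=\prod_{p}\mathbf{1}_{a_{i,p}+p^{-\alpha_{p}(\ell)}\mathbb{Z}_{p}}$ via \eqref{set}: for all but finitely many $p$ one has $\alpha_{p}(\ell)=0$ and $a_{i,p}\in\mathbb{Z}_{p}$, so that factor equals $\Omega_{p}$, while the remaining finitely many factors are characteristic functions of balls in $\mathbb{Q}_{p}$ and hence lie in $\mathcal{S}(\mathbb{Q}_{p})$. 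Thus each $\mathbf{1}_{a_{i}+B_{\ell}(0)}$ is factorizable of the required type, and $f\in\mathcal{S}(\mathbb{A}_{f})$.

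The step I expect to require the most care, and the main obstacle, is the reduction to a finite sum of coset indicators in the converse: one must verify that $B_{\ell}(0)$ is genuinely an additive subgroup, so that its cosets partition $\mathbb{A}_{f}$ and the support is \emph{exactly} a union of cosets, and then that compactness of the support forces only finitely many cosets to occur. Once this is secured, the factorization and the identification with $\mathcal{S}(\mathbb{Q}_{p})$-factors are routine bookkeeping built on the explicit formula \eqref{set} and the known local structure of the $p$-adic Bruhat-Schwartz spaces.
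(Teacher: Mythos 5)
Your proof is correct and follows essentially the same approach as the paper's: in both arguments the forward direction gets compact support from the factors and local constancy from the finitely many local constancy parameters, and the converse reduces $f$ to a finite linear combination of indicators of adelic balls, each factorizable by \eqref{set}. The only cosmetic difference is in the converse, where the paper covers the support by finitely many balls of possibly varying radii on which $f$ is constant and disjointifies them via ultrametricity, whereas you fix a uniform parameter of constancy $\ell$ and decompose the support into finitely many cosets of the compact open subgroup $B_{\ell}(0)$ --- either way one lands on the representation \eqref{representFS}.
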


\begin{proof}
The statement is trivial for $f\equiv0$. Suppose $f\in\mathcal{S}%
(\mathbb{A}_{f})\setminus\left\{  0\right\}  $, and $f=\sum_{m=1}^{M}%
f^{(m)}(x)$, where each function $f^{(m)}$ is factorizable, $f^{(m)}%
=\prod_{p\leq N_{m}}f_{p}^{(m)}\times\prod_{p>N_{m}}\Omega_{p}(|x_{p}|_{p})$.
Since each $f^{(m)}$ is compactly supported, so is $f$.

Let $l_{p}^{(m)}$ denote the parameter of constancy of the function
$f_{p}^{(m)}$, i.e. $f_{p}^{(m)}(x_{p}+y_{p})=f_{p}^{(m)}(x_{p})$ for all
$y_{p}$ such that $|y_{p}|_{p}\leq l_{p}^{(m)}$. Note that our definition of
the parameter of constancy on $\mathbb{Q}_{p}$ is different from the one
presented in \cite{V-V-Z}. Such change of definition is justified by necessity
to make the parameter of constancy independent on $p$. Consider
\[
\ell=\min\Big\{\frac{1}{2},\min_{p,m}\frac{l_{p}^{(m)}}{p}\Big\}.
\]
Since we have only finite number of parameters $l_{p}^{(m)}$, $\ell>0$. It is
easy to check that \eqref{Local Constancy} holds with this parameter $\ell$,
i.e. the function $f$ is locally constant.

Suppose now that $f$ is a locally constant function with compact support. Let
$K=\operatorname{supp}f$. Since $f$ is locally constant, for each $x\in K$
there exists a ball $B_{r(x)}(x)$, which is an open set, such that $f$ is
constant on $B_{r(x)}(x)$. Then there exists a finite number of these balls,
say $B_{r_{1}}(x_{1}),\ldots,B_{r_{n}}(x_{n})$, covering $K$. Since the metric
is non-Archimedean we may assume that these balls are disjoint. Therefore
\begin{equation}
\label{representFS}f(x)=f(x_{1})\cdot\mathbf{1}_{B_{r_{1}}(x_{1})}%
(x)+\ldots+f(x_{n})\cdot\mathbf{1}_{B_{r_{n}}(x_{n})}(x),
\end{equation}
where each characteristic function $\mathbf{1}_{B_{r}}(x)$ is factorizable,
cf. \eqref{set}.
\end{proof}

\begin{remark}
Let $\mathcal{P}(\mathbb{A}_{f})$ denote the set of parameters of constancy of
functions from $\mathcal{S} (\mathbb{A}_{f})$. Then
\[
\mathcal{P}(\mathbb{A}_{f})=\left\{  l\in\mathbb{Q}:l=p^{m},\ p\text{ is a
prime},\ m\in\mathbb{Z}\setminus\{0\}\right\}  \cup\{+\infty\}.
\]
By considering the characteristic functions of the adelic balls $B_{r}$ we
verify that every number in $\mathcal{P}$ is an admissible parameter of
constancy. $\mathcal{P}(\mathbb{A}_{f})$ is a countable and totally ordered set.
\end{remark}

We define by $\mathcal{S}_{R}^{l}(\mathbb{A}_{f})$ the subspace of test
functions with supports contained in the adelic ball $B_{R}$ and parameters of
constancy $\geq l$. Then the following embedding holds: $\mathcal{S}_{R}%
^{l}(\mathbb{A}_{f})\subset\mathcal{S}_{R^{\prime}}^{l^{\prime}}%
(\mathbb{A}_{f})$ whenever $R\leq R^{\prime}$, $l\geq l^{\prime}$. As in the
$p$-adic setting, see e.g. \cite{A-K-S}, \cite{Taibleson}, \cite{V-V-Z}, we
define the convergence in $\mathcal{S}(\mathbb{A}_{f})$ in the following way:
$f_{k}\rightarrow0$, $k\rightarrow\infty$ in $\mathcal{S}(\mathbb{A}_{f})$ if
and only if

\begin{itemize}
\item[(i)] $f_{k}\in\mathcal{S}_{R}^{l}(\mathbb{A}_{f})$ where $R$ and $l$ do
not depend on $k$;

\item[(ii)] $f_{k}\to0$ uniformly as $k\to\infty$.
\end{itemize}

With this notion of convergence $\mathcal{S}(\mathbb{A}_{f})$ becomes a
complete topological vector space. In addition,
\[
\mathcal{S}_{R}(\mathbb{A}_{f})=
\mathop{\operatorname{lim\, ind}}\limits_{l\rightarrow0}\mathcal{S}_{R }%
^{l}(\mathbb{A}_{f}),\qquad\mathcal{S}(\mathbb{A}_{f}%
)=\mathop{\operatorname{lim\, ind}}\limits_{R\rightarrow\infty}\mathcal{S}
_{R}(\mathbb{A}_{f}).
\]
Note that the second inductive limit makes sense because $\mathcal{P}%
(\mathbb{A}_{f})$\ is totally ordered.

The following proposition shows that the spaces $\mathcal{S}_{R}%
^{l}(\mathbb{A}_{f})$ possess similar properties to their $p$-adic analogues.

\begin{proposition}
\label{SlK finite dim}For arbitrary $l\le R$ the space $\mathcal{S}_{R}%
^{l}(\mathbb{A}_{f})$ is non-trivial and finite dimensional, its dimension is
equal to $\Phi(R)/\Phi(l)$, with a basis given by the characteristic
functions of disjoint balls $B_{l}(x^{(n)})\subset B_{R}$. If
$f\in\mathcal{S}_{R}^{l}(\mathbb{A}_{f})$ then $\widehat f(\xi) =
\mathcal{F}_{x\rightarrow\xi}f \in\mathcal{S}_{(1/l)_{-}}^{(1/R)_{-}%
}(\mathbb{A}_{f})$. Moreover, $\mathcal{F}\mathcal{S}_{R}^{l}(\mathbb{A}_{f})
= \mathcal{S}_{(1/l)_{-}}^{(1/R)_{-}}(\mathbb{A}_{f})$.
\end{proposition}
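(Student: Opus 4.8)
The plan is to establish the three assertions—finite-dimensionality with the stated dimension and basis, the Fourier inclusion, and the equality—in that order, with a single combinatorial identity serving as the common engine for the last two. First I would use the non-Archimedean character of $\rho$ (Proposition \ref{Prop1}) to write the compact ball $B_{R}$ (Corollary \ref{Cor1}) as a disjoint union $B_{R}=\bigsqcup_{n}B_{l}(x^{(n)})$ of balls of radius $l$; since $l\le R$ these are genuine, and their number is $\operatorname{vol}(B_{R})/\operatorname{vol}(B_{l})=\Phi(R)/\Phi(l)$ by Lemma \ref{Lemma2A}. A function supported in $B_{R}$ has parameter of constancy $\ge l$ exactly when it is constant on each $B_{l}(x^{(n)})$, so the characteristic functions $\mathbf{1}_{B_{l}(x^{(n)})}$ form a basis of $\mathcal{S}_{R}^{l}(\mathbb{A}_{f})$ and the dimension is $\Phi(R)/\Phi(l)$. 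Non-triviality is immediate since $\Phi$ is non-decreasing, whence $\Phi(R)/\Phi(l)\ge 1$.

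The Fourier inclusion I would prove through a duality between support and periodicity. The key observation is that the annihilator of the subgroup $B_{r}(0)=\prod_{p}p^{-\alpha_{p}(r)}\mathbb{Z}_{p}$, with $\alpha_{p}(r)=[[\log_{p}r]]$ as in \eqref{set}, under the pairing $(a,\xi)\mapsto\chi(a\xi)$ equals $B_{(1/r)_{-}}(0)$: the condition $\chi(a\xi)=1$ for all $a\in B_{r}(0)$ is equivalent to $|\xi_{p}|_{p}\le p^{-\alpha_{p}(r)}$ for every $p$, which is the defining condition of $B_{(1/r)_{-}}(0)$ once one knows $\alpha_{p}((1/r)_{-})=-\alpha_{p}(r)$. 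Granting this identity, two short computations close the inclusion. If $\operatorname{supp}f\subseteq B_{R}$, then for $\eta\in B_{(1/R)_{-}}(0)$ one has $|x_{p}\eta_{p}|_{p}\le p^{\alpha_{p}(R)-\alpha_{p}(R)}=1$ on the support, so $\chi(-x\eta)=1$ there and hence $\widehat{f}(\xi+\eta)=\widehat{f}(\xi)$, giving parameter of constancy $\ge(1/R)_{-}$. Dually, if $f$ has parameter of constancy $\ge l$, the substitution $x\mapsto x+a$ with $a\in B_{l}(0)$ yields $\widehat{f}(\xi)=\chi(-a\xi)\widehat{f}(\xi)$, forcing $\operatorname{supp}\widehat{f}\subseteq B_{(1/l)_{-}}$. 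Together these give $\widehat{f}\in\mathcal{S}_{(1/l)_{-}}^{(1/R)_{-}}(\mathbb{A}_{f})$. One may alternatively expand $f$ in the basis above and invoke Corollary \ref{FTBall} together with the shift rule $\widehat{\mathbf{1}_{B_{l}(x^{(n)})}}(\xi)=\chi(-x^{(n)}\xi)\,\Phi(l)\,\mathbf{1}_{B_{(1/l)_{-}}}(\xi)$.

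For the equality $\mathcal{F}\mathcal{S}_{R}^{l}(\mathbb{A}_{f})=\mathcal{S}_{(1/l)_{-}}^{(1/R)_{-}}(\mathbb{A}_{f})$ I would combine the inclusion just proven with a dimension count. Since $\mathcal{F}$ is an isomorphism of $\mathcal{S}(\mathbb{A}_{f})$, it is injective on $\mathcal{S}_{R}^{l}$; as $(1/R)_{-}\le(1/l)_{-}$, the already-established dimension formula applies to the target and gives $\dim\mathcal{S}_{(1/l)_{-}}^{(1/R)_{-}}=\Phi((1/l)_{-})/\Phi((1/R)_{-})$, so it suffices that this equal $\Phi(R)/\Phi(l)$. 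Both this and the annihilator identity $\alpha_{p}((1/r)_{-})=-\alpha_{p}(r)$ reduce to the single relation $\Phi(r)\Phi((1/r)_{-})=1$ for every prime power $r$: indeed $\Phi(r)\Phi((1/r)_{-})=\prod_{p}p^{\alpha_{p}(r)+\alpha_{p}((1/r)_{-})}$, and by unique factorization this product equals $1$ precisely when every exponent vanishes. I would derive this relation from the paper's own identities: writing $r_{+}=c^{s}$ one has $(1/r)_{-}=(r_{+})^{-1}$, so $\Phi((1/r)_{-})=c/\Phi(r_{+})$ by \eqref{Phi(1/x)}, while \eqref{identity} applied to $(r_{+})_{-}=r$ gives $\Phi(r_{+})=c\,\Phi(r)$, and the two combine to $\Phi((1/r)_{-})=1/\Phi(r)$. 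The main obstacle is therefore not analytic but bookkeeping: verifying $\Phi(r)\Phi((1/r)_{-})=1$ cleanly through the $[[\cdot]]$, $\Phi$, and $(\cdot)_{\pm}$ operators. Once it is in hand, the Fourier mapping property and the dimension equality—hence bijectivity of $\mathcal{F}$ between the two spaces—fall out simultaneously.
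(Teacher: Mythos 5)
Your proposal is correct, and while the first assertion is handled exactly as in the paper (partition $B_{R}$ into $\Phi(R)/\Phi(l)$ disjoint balls of radius $l$ and invoke \eqref{representFS}), your treatment of the Fourier statements departs from the paper's in two genuine ways. For the inclusion $\mathcal{F}\mathcal{S}_{R}^{l}\subseteq\mathcal{S}_{(1/l)_{-}}^{(1/R)_{-}}$ the paper computes the transform of each basis function explicitly via Corollary \ref{FTBall} and the shift rule, whereas your primary route is the support/periodicity duality, identifying the annihilator of the compact open subgroup $B_{r}(0)$ as $B_{(1/r)_{-}}(0)$; this avoids the radial Fourier machinery behind Theorem \ref{LemmaFourierRadial} entirely and isolates the arithmetic content in the single reciprocity $\alpha_{p}\bigl((1/r)_{-}\bigr)=-\alpha_{p}(r)$, equivalently $\Phi(r)\,\Phi\bigl((1/r)_{-}\bigr)=1$, which you correctly deduce from \eqref{Phi(1/x)}, \eqref{identity} and the relations $(n_{+})^{-1}=(n^{-1})_{-}$, $(n_{+})_{-}=n$. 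For the final equality the paper applies its inclusion a second time and uses $\hat{\hat{f}}(\xi)=f(-\xi)$ together with the involution $\bigl(\bigl((n^{-1})_{-}\bigr)^{-1}\bigr)_{-}=n$; you instead combine injectivity of $\mathcal{F}$ with a dimension count, the two dimensions agreeing precisely because of the same reciprocity $\Phi(r)\Phi\bigl((1/r)_{-}\bigr)=1$. Both routes are sound: the paper's involution argument is shorter once Corollary \ref{FTBall} is in hand, while yours is more self-contained (pure group duality plus linear algebra) and has the merit of making explicit the identity $\Phi(r)\Phi\bigl((1/r)_{-}\bigr)=1$ that the paper leaves buried in its supporting lemmas. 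One cosmetic slip: justifying non-triviality by $\Phi(R)/\Phi(l)\ge1$ presupposes the dimension formula you are in the middle of proving; it is cleaner to note simply that $\mathbf{1}_{B_{l}(0)}\in\mathcal{S}_{R}^{l}(\mathbb{A}_{f})$ since $l\le R$.
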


\begin{proof}
Note that $B_{R}$ is a finite disjoint union of balls of type $B_{l}(x_{i})$
and the number of such balls is $\operatorname{vol}(B_{R}) /
\operatorname{vol}(B_{l})$. The first statement follows from this observation
by \eqref{representFS}.

For the second part it is enough to consider the Fourier transform of the
characteristic function of a ball $B_{l}(x^{(n)})\subset B_{R}$. We obtain
from Corollary \ref{FTBall}
\begin{multline*}
\widehat{\mathbf{1}}_{B_{l}(x^{(n)})} (\xi) = \int_{\mathbb{A}_{f}}
\chi(-x\cdot\xi)\mathbf{1}_{B_{l}}(x-x^{(n)})\, dx_{\mathbb{A}_{f}}\\
= \chi(-x^{(n)}\cdot\xi) \widehat{\mathbf{1}}_{B_{l}}(\xi) = \chi
(-x^{(n)}\cdot\xi) \Phi(l) \mathbf{1}_{B_{(1/l)_{-}}}(\xi),
\end{multline*}
hence the Fourier transform is supported in the ball $B_{(1/l)_{-}}$. Since
$x^{(n)}\in B_{R}$, for any $y\in B_{(1/R)_{-}}$ we have $\|x^{(n)} \cdot y\|
< 1$ hence $\chi(x^{(n)} \cdot y)=1$ and the Fourier transform of a ball
$B_{l}(x^{(n)})$ is locally constant with the parameter of constancy
$\ge(1/R)_{-}$. The last part follows from the observation that
$\bigl(\bigl((n^{-1})_{-}\bigr)^{-1}\bigr)_{-} =\bigl(\bigl((n^{-1}%
)^{-1}\bigr)_{+}\bigr)_{-} =(n_{+})_{-}= n$ for any non-zero power of a prime.
\end{proof}

\begin{proposition}
\label{Prop3}(i) Let $K$ be a compact subset of $\mathcal{S}(\mathbb{A}_{f})$.
The space of test functions $\mathcal{S}(\mathbb{A}_{f})$ is dense in the
space $C(K)$ of continuous functions on $K$. (ii) The space of test functions
$\mathcal{S}(\mathbb{A}_{f})$ is dense in $L^{\varrho}\left(  \mathbb{A}%
_{f}\right)  $ for $1\leq\varrho<\infty$.
\end{proposition}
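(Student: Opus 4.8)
The plan is to prove both density statements by exhibiting explicit approximations built from characteristic functions of adelic balls, exploiting the fact (Lemma~\ref{Lemma3}) that test functions are exactly the compactly supported locally constant functions, together with the compactness of balls and spheres (Lemma~\ref{Lemma2A}) and the $\sigma$-compactness and second countability of $(\mathbb{A}_f,\rho)$ recorded in Remark~\ref{nota2}.

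For part (i) I would argue as follows. Fix a continuous function $g\in C(K)$ on a compact set $K$. By the Stone--Weierstrass theorem it suffices to show that $\mathcal{S}(\mathbb{A}_f)$ contains a subalgebra of $C(K)$ that separates points and contains the constants; but it is cleaner to argue directly by uniform continuity. Since $K$ is compact and $g$ is continuous, $g$ is uniformly continuous on $K$: given $\epsilon>0$ there is a $\delta>0$ such that $\rho(x,y)<\delta$ forces $|g(x)-g(y)|<\epsilon$. Choose a parameter of constancy $l\in\mathcal{P}(\mathbb{A}_f)$ with $l<\delta$, cover $K$ by finitely many disjoint balls $B_l(x_1),\dots,B_l(x_n)$ (possible because the metric is non-Archimedean and $K$ is compact), and set $\varphi=\sum_{i=1}^n g(x_i)\,\mathbf{1}_{B_l(x_i)}$. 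Each characteristic function is a test function by \eqref{set}, so $\varphi\in\mathcal{S}(\mathbb{A}_f)$, and by construction $\sup_K|g-\varphi|<\epsilon$, giving density in the sup-norm.

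For part (ii) I would first use that simple functions supported on sets of finite measure are dense in $L^\varrho(\mathbb{A}_f)$ for $1\le\varrho<\infty$, so it suffices to approximate $\mathbf{1}_A$ in $L^\varrho$ by test functions whenever $A$ is measurable of finite measure. Next, using that $(\mathbb{A}_f,\rho)$ is $\sigma$-compact, second countable, and locally compact Hausdorff (Remark~\ref{nota2}), the measure is regular, so one may approximate $\mathbf{1}_A$ in $L^\varrho$ by $\mathbf{1}_C$ with $C$ compact; then cover $C$ by finitely many disjoint balls of small radius as in part (i) and replace $\mathbf{1}_C$ by a finite sum of ball indicators, which is a test function. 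Controlling the $L^\varrho$ error reduces to controlling the measure of the symmetric difference, which is handled by the regularity and the compactness of balls.

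The main obstacle is the passage from pointwise or set-theoretic approximation to genuine $L^\varrho$ control, i.e.\ ensuring the approximating ball-covers have measure close to that of the target and that the tails outside a large compact set contribute negligibly in $L^\varrho$-norm; this is where $\sigma$-compactness (to exhaust $\mathbb{A}_f$ by the compact sets $K_N$ of Remark~\ref{nota2}) and the regularity of the Haar measure are essential. Once those measure-theoretic estimates are in place, the combinatorial step of writing the approximant as a finite linear combination of characteristic functions of disjoint balls---and hence as an element of $\mathcal{S}(\mathbb{A}_f)$ via \eqref{set} and \eqref{representFS}---is routine.
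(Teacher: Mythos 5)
Your proof is correct and follows exactly the classical pattern that the paper's proof invokes by citation: approximation by finite linear combinations of characteristic functions of disjoint balls, obtained from uniform continuity on a compact set for (i), and from density of simple functions plus regularity of the Haar measure (with the excess measure of the ball cover shrinking as the radius does, since the balls are compact and the target set is closed) for (ii). The paper itself gives no details, deferring to the $p$-adic analogues in \cite{A-K-S}, \cite{Taibleson}, \cite{V-V-Z}, so your argument is simply that same classical approach written out in full, and it is sound.
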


\begin{proof}
The proof follows the classical pattern, see e.g. \cite{A-K-S},
\cite{Taibleson}, \cite{V-V-Z}.
\end{proof}

Denote by $\mathcal{S}^{\prime}(\mathbb{A}_{f})$ the $\mathbb{C}$-vector space
of all (complex-valued) linear continuous functionals on $\mathcal{S}
(\mathbb{A}_{f})$. This space is \textit{the space of Bruhat-Schwartz
distributions} on $\mathbb{A}_{f}$. For the sake of simplicity we will use
\textit{distribution} instead of \textit{Bruhat-Schwartz distribution}. We
equip $\mathcal{S}^{\prime}(\mathbb{A}_{f})$ with the weak topology. The
following proposition allows to simplify checking that a functional belongs to
the space $\mathcal{S}^{\prime}(\mathbb{A}_{f})$ stating that every linear
functional on $\mathcal{S} (\mathbb{A}_{f})$ is continuous.

\begin{proposition}
\label{Prop4}(i) $\mathcal{S}^{\prime}(\mathbb{A}_{f})$ is the $\mathbb{C}%
$-vector space of all (complex-valued) linear functionals on $\mathcal{S}%
(\mathbb{A}_{f})$. (ii) $\mathcal{S}^{\prime}(\mathbb{A}_{f})$ is complete.
\end{proposition}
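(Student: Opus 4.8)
The plan is to exploit the structure of $\mathcal{S}(\mathbb{A}_{f})$ as a countable inductive limit of the finite-dimensional spaces $\mathcal{S}_{R}^{l}(\mathbb{A}_{f})$ furnished by Proposition \ref{SlK finite dim}. Both parts then reduce to the single fact that every linear functional on a finite-dimensional space is continuous.

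For part (i), I would show that an arbitrary linear functional $T$ on $\mathcal{S}(\mathbb{A}_{f})$ is automatically continuous. By the definition of convergence, a sequence $f_{k}\to 0$ in $\mathcal{S}(\mathbb{A}_{f})$ forces all $f_{k}$ into one fixed space $\mathcal{S}_{R}^{l}(\mathbb{A}_{f})$ with $f_{k}\to 0$ uniformly. Since $\mathcal{S}_{R}^{l}(\mathbb{A}_{f})$ is finite dimensional by Proposition \ref{SlK finite dim}, uniform convergence on it coincides with convergence in any norm, and every linear functional on a finite-dimensional normed space is continuous; hence $T(f_{k})\to 0$. Equivalently, by the universal property of the inductive limit topology, $T$ is continuous precisely when each restriction $T|_{\mathcal{S}_{R}^{l}(\mathbb{A}_{f})}$ is, which always holds in finite dimension. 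Thus the algebraic dual and the topological dual $\mathcal{S}^{\prime}(\mathbb{A}_{f})$ coincide.

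For part (ii), I would establish completeness in the weak topology directly. Let $(T_{\alpha})$ be a Cauchy net in $\mathcal{S}^{\prime}(\mathbb{A}_{f})$. By definition of the weak topology, for each fixed $f\in\mathcal{S}(\mathbb{A}_{f})$ the scalar net $(T_{\alpha}(f))$ is Cauchy in $\mathbb{C}$ and hence converges; denote its limit by $T(f)$. The resulting map $T$ is visibly linear, and by part (i) it is automatically continuous, so $T\in\mathcal{S}^{\prime}(\mathbb{A}_{f})$. Since $T_{\alpha}(f)\to T(f)$ for every $f$, the net $(T_{\alpha})$ converges weakly to $T$, which gives completeness.

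The step requiring the most care is the reduction underlying part (i): one must verify that the convergence notion stipulated for $\mathcal{S}(\mathbb{A}_{f})$ (eventual containment in a single $\mathcal{S}_{R}^{l}$ together with uniform convergence there) is exactly the one governing continuity, i.e.\ that it matches the inductive limit topology assembled from the finite-dimensional pieces. Once this is secured, part (i) carries the whole argument, and part (ii) becomes a soft consequence, since the only way weak completeness could fail would be through a weak-limit functional that is linear but discontinuous—precisely the possibility excluded by part (i).
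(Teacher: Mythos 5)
Your proof is correct and follows essentially the same route as the paper, which simply invokes Proposition \ref{SlK finite dim} and defers to the standard $p$-adic argument in \cite{A-K-S}, \cite{V-V-Z} --- namely, exactly your reduction: convergence in $\mathcal{S}(\mathbb{A}_{f})$ takes place inside a single finite-dimensional $\mathcal{S}_{R}^{l}(\mathbb{A}_{f})$, where every linear functional is continuous, and completeness then follows since a pointwise limit of functionals is linear and hence automatically continuous. You have merely written out in full the argument the paper cites by reference.
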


\begin{proof}
Due to Proposition \ref{SlK finite dim} the proof of this proposition is
completely similar to the proof given for the analogous statement in the
$p$-adic case, see e.g. \cite{A-K-S}, \cite{V-V-Z}.
\end{proof}

\subsection{Pseudodifferential operators and the Lizorkin space on $\mathbb{A}_{f}$}

\label{SubsectDaAf}As it was mentioned in the introduction, the classical
derivative cannot be defined for complex-valued functions on adeles. Instead
we consider pseudodifferential operators. The function $\|\cdot\|$ which
generates the metric $\rho$ allows us to introduce a natural generalization of
the Taibleson operator $D_{\mathbb{A}_{f}}^{\gamma}=:D^{\gamma}$, $\gamma>0$,
defined on $\mathcal{S}(\mathbb{A}_{f})$ by
\begin{equation}
\left(  D^{\gamma}f\right)  \left(  x\right)  =\mathcal{F}_{\xi\rightarrow
x}^{-1}\left(  \left\Vert \xi\right\Vert ^{\gamma}\mathcal{F}_{x\rightarrow
\xi}f\right)  , \qquad f\in\mathcal{S}(\mathbb{A}_{f}). \label{POperator}%
\end{equation}

\begin{lemma}
\label{lemma_Poperator}With the above notation,%
\[
D^{\gamma}: \mathcal{S}(\mathbb{A}_{f}) \to C\left(  \mathbb{A}_{f}\right)
\cap L^{2}\left(  \mathbb{A}_{f}\right)  .
\]

\end{lemma}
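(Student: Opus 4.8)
The plan is to reduce the statement to two standard facts about the Fourier transform on the locally compact abelian group $\mathbb{A}_{f}$: that the inverse transform of an $L^{1}$ function is continuous, and that the transform is unitary on $L^{2}$. First I would fix $f\in\mathcal{S}(\mathbb{A}_{f})$ and, enlarging the parameters if necessary, assume $f\in\mathcal{S}_{R}^{l}(\mathbb{A}_{f})$. By Proposition \ref{SlK finite dim} its Fourier transform $\widehat{f}=\mathcal{F}_{x\rightarrow\xi}f$ lies in $\mathcal{S}_{(1/l)_{-}}^{(1/R)_{-}}(\mathbb{A}_{f})$; in particular $\widehat{f}$ is bounded and supported in the compact ball $B_{(1/l)_{-}}$. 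Set $g(\xi):=\|\xi\|^{\gamma}\widehat{f}(\xi)$, so that $D^{\gamma}f=\mathcal{F}_{\xi\rightarrow x}^{-1}g$ by \eqref{POperator}. On the support of $\widehat{f}$ one has $\|\xi\|\leq(1/l)_{-}$, hence $\|\xi\|^{\gamma}$ is bounded there, and $g$ is a bounded function supported in a set of finite Haar measure (Lemma \ref{Lemma2A}). Consequently $g\in L^{1}(\mathbb{A}_{f})\cap L^{2}(\mathbb{A}_{f})$.

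For the $L^{2}$ membership I would invoke that the Fourier transform extends to a unitary operator on $L^{2}(\mathbb{A}_{f})$ (Subsection \ref{SubSectFourier}); since $g\in L^{2}(\mathbb{A}_{f})$, its inverse transform $D^{\gamma}f=\mathcal{F}^{-1}g$ is again in $L^{2}(\mathbb{A}_{f})$. For continuity I would use instead that $g\in L^{1}(\mathbb{A}_{f})$, so that
\[
\bigl(\mathcal{F}^{-1}g\bigr)(x)=\int_{\mathbb{A}_{f}}\chi(x\cdot\xi)\,g(\xi)\,d\xi_{\mathbb{A}_{f}}
\]
is given by an absolutely convergent integral. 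Because $\mathbb{A}_{f}$ is metrizable (Proposition \ref{Prop1}), continuity may be checked sequentially: if $x_{n}\rightarrow x$, then $\chi(x_{n}\cdot\xi)\rightarrow\chi(x\cdot\xi)$ pointwise by continuity of multiplication and of the adelic character, while $|\chi(x_{n}\cdot\xi)g(\xi)|=|g(\xi)|$ provides an integrable majorant, so dominated convergence yields $(\mathcal{F}^{-1}g)(x_{n})\rightarrow(\mathcal{F}^{-1}g)(x)$. Hence $D^{\gamma}f\in C(\mathbb{A}_{f})$, and combining with the previous step, $D^{\gamma}f\in C(\mathbb{A}_{f})\cap L^{2}(\mathbb{A}_{f})$. (The $L^{1}$ and $L^{2}$ inverse transforms agree a.e., so the continuous representative is the one that is simultaneously in $L^{2}$.)

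The point to be careful about --- and the reason the statement asserts membership in $C(\mathbb{A}_{f})\cap L^{2}(\mathbb{A}_{f})$ rather than the naive hope that $D^{\gamma}$ maps $\mathcal{S}(\mathbb{A}_{f})$ into itself --- is the behaviour of $g$ near the origin. Away from $0$ each sphere $S_{r}$ is clopen (a difference of two clopen balls, by the non-Archimedean nature of $\rho$) and $\|\xi\|^{\gamma}$ is constant on it, so $g$ inherits local constancy from $\widehat{f}$ on every ball avoiding $0$; but on any ball centred at $0$ the factor $\|\xi\|^{\gamma}$ takes infinitely many distinct values accumulating at $0$, so $g$ fails to be locally constant at the origin and is \emph{not} a Bruhat--Schwartz function. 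This is precisely why one cannot conclude $D^{\gamma}f\in\mathcal{S}(\mathbb{A}_{f})$ and must argue through $L^{1}$ and $L^{2}$ as above. What rescues the argument is that $\|\xi\|^{\gamma}\rightarrow0$ as $\xi\rightarrow0$, so $g$ stays bounded (indeed vanishes) there; the main obstacle is therefore not a deep estimate but the correct identification of the target space, after which the continuity at $0$ and the $L^{2}$ bound follow from the boundedness and compact support of $g$.
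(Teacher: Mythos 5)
Your proof is correct and takes essentially the same approach as the paper: the paper reduces by linearity to the case $\widehat{f}=\mathbf{1}_{B_{r}(\xi_{0})}$ and, when $0\in B_{r}(\xi_{0})$, makes exactly your observation that $\|\xi\|^{\gamma}\widehat{f}$ is bounded with compact support, hence lies in $L^{1}(\mathbb{A}_{f})\cap L^{2}(\mathbb{A}_{f})$, so its inverse Fourier transform is in $C(\mathbb{A}_{f})\cap L^{2}(\mathbb{A}_{f})$ (when $0\notin B_{r}(\xi_{0})$ the paper notes the product is even in $\mathcal{S}(\mathbb{A}_{f})$, a case you absorb into the uniform bound). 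One caveat on your closing aside: the claim that $g=\|\xi\|^{\gamma}\widehat{f}$ always fails to be locally constant at the origin and is never Bruhat--Schwartz is too strong, since if $\widehat{f}$ vanishes in a neighbourhood of the origin (i.e. $f\in\mathcal{L}_{0}(\mathbb{A}_{f})$) then $g$ is locally constant with compact support and hence lies in $\mathcal{S}(\mathbb{A}_{f})$ --- this is exactly Lemma \ref{LemmaDL0=L0}(i) --- so the correct statement is only that $\mathcal{S}(\mathbb{A}_{f})$ is not invariant under $D^{\gamma}$, not that no image lands in it.
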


\begin{proof}
Since $\mathcal{F}_{x\rightarrow\xi}f$ may be represented as a linear
combination of functions of type $\mathbf{1}_{B_{r}(\xi_{0})}(\xi)$, it is sufficient to 
consider the case $\mathcal{F}_{x\rightarrow\xi}f=\mathbf{1}_{B_{r}(\xi_{0})}(\xi)$. If $0\notin B_{r}(\xi_{0})$ then $\left\Vert \xi\right\Vert ^{\gamma}\mathbf{1}_{B_{r}(\xi
_{0})}(\xi)\in\mathcal{S}(\mathbb{A}_{f})$ because $\left\Vert \xi\right\Vert
^{\gamma}$ is locally constant outside of the origin, and hence
\[
\mathcal{F}_{\xi\rightarrow x}^{-1}\left(  \left\Vert \xi\right\Vert ^{\gamma
}\mathbf{1}_{B_{r}(\xi_{0})}(\xi)\right)  \in\mathcal{S}(\mathbb{A}
_{f})\subset L^{2}\left(  \mathbb{A}_{f}\right)  .
\]
If $0\in B_{r}(\xi_{0})$ then $B_{r}(\xi_{0}) = B_{r}(0)$ and $\left\Vert
\xi\right\Vert ^{\gamma}\leq r^\gamma$ on the $B_{r}(0)$, hence $\left\Vert
\xi\right\Vert ^{\gamma} \mathbf{1}_{B_{r}(0)}(\xi)\in L^{1}\left(
\mathbb{A}_{f}\right)  \cap L^{2}\left(  \mathbb{A}_{f}\right)  $. Thus
$\mathcal{F} _{\xi\rightarrow x}^{-1}\left(  \left\Vert \xi\right\Vert
^{\gamma} \mathbf{1}_{B_{r}(\xi_{0})}(\xi)\right)  \in C\left(  \mathbb{A}%
_{f}\right)  \cap L^{2}\left(  \mathbb{A}_{f}\right)  $.
\end{proof}

The space $\mathcal{S}(\mathbb{A}_{f})$ is not invariant under the action of the
operator $D^{\gamma}$. To overcome such an inconvenience, we introduce the
following space
\[
\mathcal{L}_{0}(\mathbb{A}_{f}):= \mathcal{L}_{0} = \bigl\{ f\in
\mathcal{S}(\mathbb{A}_{f}): \widehat f(0)=0\bigr\}.
\]
The space $\mathcal{L}_{0}$ can be equipped with the topology of the space
$\mathcal{S}(\mathbb{A}_{f})$, which makes $\mathcal{L}_{0}$ a complete space.
Note that
\begin{equation}
\label{eq_FL}\mathcal{L}_{0} = \mathcal{F} \bigl\{ h\in\mathcal{S}%
(\mathbb{A}_{f}): h(0) = 0\bigr\}.
\end{equation}
This space is an adelic analogue of the Lizorkin space of the second kind. We refer the reader to \cite{A-K-S} for the theory of the $p$-adic Lizorkin spaces. Recently in \cite{Kh-K-Sh} an adelic version of the Lizorkin space of the first kind was introduced.

\begin{lemma}
\label{LemmaDL0=L0} With the above notation the following assertions hold:

\begin{itemize}
\item[(i)] $D^{\gamma}\mathcal{L}_{0} = \mathcal{L}_{0}$ for $\gamma>0$.

\item[(ii)] $f\in\mathcal{L}_{0}$ if and only if $f\in\mathcal{S}$ and
$\int_{\mathbb{A}_{f}}f(x)\, dx_{\mathbb{A}_{f}} = 0$.

\item[(iii)] $\mathcal{L}_{0}$ is dense in $\mathcal{S}$ with respect to the
$L^{2}$-norm.

\item[(iv)] $\mathcal{L}_{0}$ is dense in $L^{2}(\mathbb{A}_{f})$.
\end{itemize}
\end{lemma}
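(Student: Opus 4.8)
The plan is to dispatch the four assertions in the order (ii), (i), (iii), (iv), since (ii) is the cleanest and clarifies the defining condition, while (iv) will follow formally from (iii). For (ii), I would simply unwind the definition of the Fourier transform: evaluating $\widehat f$ at the origin gives $\widehat f(0)=\int_{\mathbb{A}_{f}}\chi(0)f(x)\,dx_{\mathbb{A}_{f}}=\int_{\mathbb{A}_{f}}f(x)\,dx_{\mathbb{A}_{f}}$. Hence the condition $\widehat f(0)=0$ defining $\mathcal{L}_{0}$ is literally the vanishing of the integral, and (ii) is immediate.

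For (i), the key observation — and the conceptual heart of the whole lemma — is that if $f\in\mathcal{L}_{0}$ then $\widehat f$ is a test function with $\widehat f(0)=0$, and since by Lemma \ref{Lemma3} a test function is locally constant, $\widehat f$ actually vanishes on an entire ball $B_{\ell}(0)$ around the origin. This is precisely what keeps $\|\xi\|^{\pm\gamma}\widehat f(\xi)$ inside $\mathcal{S}(\mathbb{A}_{f})$: the factor $\|\xi\|^{\gamma}$ is locally constant away from $0$ (as already used in the proof of Lemma \ref{lemma_Poperator}), so on the complement of $B_{\ell}(0)$ the product is locally constant with compact support, while on $B_{\ell}(0)$ it vanishes identically. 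Writing $g:=\|\xi\|^{\gamma}\widehat f\in\mathcal{S}$, we then have $g(0)=0$, and since $\mathcal{F}\bigl(\mathcal{F}^{-1}g\bigr)=g$ one gets $\widehat{D^{\gamma}f}=g$ with $\widehat{D^{\gamma}f}(0)=0$, i.e. $D^{\gamma}f\in\mathcal{L}_{0}$; this proves $D^{\gamma}\mathcal{L}_{0}\subset\mathcal{L}_{0}$. For the reverse inclusion I would run the same construction with $\|\xi\|^{-\gamma}$ in place of $\|\xi\|^{\gamma}$: given $f\in\mathcal{L}_{0}$, the function $\widehat h:=\|\xi\|^{-\gamma}\widehat f$ (set equal to $0$ at the origin) is again a test function vanishing at $0$ because $\widehat f$ vanishes on $B_{\ell}(0)$; hence $h:=\mathcal{F}^{-1}\widehat h$ lies in $\mathcal{S}$ with $\mathcal{F}h=\widehat h$ vanishing at the origin, so $h\in\mathcal{L}_{0}$, and $D^{\gamma}h=\mathcal{F}^{-1}\bigl(\|\xi\|^{\gamma}\|\xi\|^{-\gamma}\widehat f\bigr)=f$. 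Thus $\mathcal{L}_{0}\subset D^{\gamma}\mathcal{L}_{0}$, and equality holds.

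For (iii), the idea is to correct the integral of an arbitrary $f\in\mathcal{S}$ by subtracting a multiple of the characteristic function of a large ball, exploiting that such a function carries a large mass but a comparatively small $L^{2}$-norm. Put $c:=\int_{\mathbb{A}_{f}}f\,dx_{\mathbb{A}_{f}}$ and, for a radius $R$, set $g_{R}:=f-\tfrac{c}{\Phi(R)}\mathbf{1}_{B_{R}}$. Then $g_{R}\in\mathcal{S}$, and since $\int \mathbf{1}_{B_{R}}=\operatorname{vol}(B_{R})=\Phi(R)$ by Lemma \ref{Lemma2A}, we get $\int g_{R}=0$, so $g_{R}\in\mathcal{L}_{0}$ by (ii). Because $\|\mathbf{1}_{B_{R}}\|_{L^{2}}^{2}=\operatorname{vol}(B_{R})=\Phi(R)$, it follows that
\[
\|f-g_{R}\|_{L^{2}}=\frac{|c|}{\Phi(R)}\,\|\mathbf{1}_{B_{R}}\|_{L^{2}}=\frac{|c|}{\sqrt{\Phi(R)}}.
\]
Letting $R\to\infty$ and using that $\Phi(R)=e^{\psi(R)}\to\infty$ (there being infinitely many primes, so $\psi(R)\to\infty$), the right-hand side tends to $0$, which gives the claimed density.

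Finally, (iv) follows by transitivity: $\mathcal{S}(\mathbb{A}_{f})$ is dense in $L^{2}(\mathbb{A}_{f})$ by Proposition \ref{Prop3}(ii), and (iii) shows $\mathcal{L}_{0}$ is dense in $\mathcal{S}$ in the $L^{2}$-norm, so $\mathcal{L}_{0}$ is dense in $L^{2}(\mathbb{A}_{f})$. The only step requiring genuine care is (i): one must verify that multiplication by $\|\xi\|^{\pm\gamma}$ preserves the Schwartz class, and the whole argument hinges on the fact that $\widehat f(0)=0$ upgrades, via local constancy, to vanishing on a whole neighborhood of the origin — otherwise $\|\xi\|^{-\gamma}\widehat f$ would fail to be locally constant (indeed even bounded) at $0$.
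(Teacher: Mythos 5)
Your proof is correct, and parts (i), (ii), (iv) follow the paper's argument essentially verbatim: for (i) the paper likewise observes that $\widehat{f}$ vanishes on a whole neighborhood of the origin (local constancy plus $\widehat{f}(0)=0$), so that multiplication by $\left\Vert \xi\right\Vert ^{\gamma}$ preserves $\mathcal{S}(\mathbb{A}_{f})$, and it settles the converse inclusion via the same observation that $\widehat{h}/\left\Vert \xi\right\Vert ^{\gamma}\in\mathcal{S}(\mathbb{A}_{f})$ for $h\in\mathcal{L}_{0}$; (ii) and (iv) are identical. Where you genuinely diverge is (iii). The paper stays on the Fourier side: using the characterization \eqref{eq_FL} and the Steklov--Parseval equality, it reduces the claim to approximating $\mathbf{1}_{B_{r}}$ in $L^{2}$ by test functions vanishing near $0$, and takes $\mathbf{1}_{B_{r}\setminus B_{p^{-m}}}$ for large $m$, the error being $\sqrt{\Phi(p^{-m})}=\sqrt{p/\Phi(p^{m})}\rightarrow 0$ by \eqref{Phi(1/x)} --- i.e.\ it exploits that \emph{small} balls have small volume. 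You instead work directly in the space variable: you kill the mean of $f$ by subtracting $\frac{c}{\Phi(R)}\mathbf{1}_{B_{R}}$, exploiting that \emph{large} balls have large volume, so the correction costs only $|c|/\sqrt{\Phi(R)}\rightarrow 0$ in $L^{2}$. Both arguments are sound and in a sense dual to each other. Yours is more elementary and self-contained --- it needs only part (ii) and the volume formula of Lemma \ref{Lemma2A}, no Fourier analysis, and it even gives an explicit rate --- while the paper's has the virtue of making everything transparent in the frequency picture, where membership in $\mathcal{L}_{0}$ is a local vanishing condition and the approximation is just excision of a shrinking ball at the origin.
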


\begin{proof}
(i) Take $f\in\mathcal{L}_{0}$, then $\|\xi\|^{\gamma}\widehat f(\xi)
\in\mathcal{S}(\mathbb{A}_{f})$ because $\widehat f$ is equal to 0 in some
neighborhood of 0 and $\|\xi\|^{\gamma}$ is locally constant outside of the
origin. Therefore $D^{\gamma}f\in\mathcal{L}_{0}$, i.e. $D^{\gamma}%
\mathcal{L}_{0}\subset\mathcal{L}_{0}$. The converse inclusion follows from
the fact that $\frac{\widehat h}{\|\xi\|^{\gamma}}\in\mathcal{S}%
(\mathbb{A}_{f})$ for any $h\in\mathcal{L}_{0}$.

(ii) The statement follows from $\widehat f(0) = \int_{\mathbb{A}_{f}}f(x)\,
dx_{\mathbb{A}_{f}}$ which is the consequence of $\mathcal{F}\mathcal{S}%
(\mathbb{A}_{f}) = \mathcal{S}(\mathbb{A}_{f})$.

(iii) By \eqref{eq_FL} and the fact that the Fourier transform preserves the
$L^{2}$ norm, it is sufficient to show that $\mathbf{1}_{B_{r}} (x)$ can be
arbitrarily closely approximated by functions from $\mathcal{L}_{0}$ in the
$L^{2}$ norm. As such approximating functions we may use $\mathbf{1}_{B_{r}}
(x) \setminus\mathbf{1}_{B_{p^{-m}}} (x)$ for $m$ big enough.

(iv) The statement follows from (iii) since $\mathcal{S}(\mathbb{A}_{f})$ is
dense $L^{2}(\mathbb{A}_{f})$, see Proposition \ref{Prop3}.
\end{proof}

We will use the notation $\mathcal{D}(D^{\gamma})$ to denote the domain of the
operator $D^{\gamma}$. We refer reader to \cite{R-S} for notions of
essentially self-adjoint operators and to \cite{Aren} and \cite{E-N} for the
definition of strongly continuous ($C_{0}$) semigroups and related notions.

Since $\mathcal{L}_{0}(\mathbb{A}_{f})\subset L^{2}(\mathbb{A}_{f})$, we may
consider the operator $D^{\gamma}$ as an operator acting on $L^{2}%
(\mathbb{A}_{f})$. It is easy to see that the operator $D^{\gamma}$ with the
domain $\mathcal{D}(D^{\gamma}) = \mathcal{L}_{0}(\mathbb{A}_{f})$ is
symmetric. Moreover, similarly to the proof of Lemma \ref{LemmaDL0=L0} (i) we may check that $(D^\gamma \pm i)\mathcal{L}_{0}(\mathbb{A}_{f}) = \mathcal{L}_{0}(\mathbb{A}_{f})$, i.e. the ranges of the operators $D^\gamma \pm i$ are dense in $L^2(\mathbb{A}_f)$, hence the operator $D^\gamma$ is essentially self-adjoint, see \cite[Corollary to
Theorem VIII.3]{R-S} for details. The following description of the
self-adjoint closure holds.

\begin{lemma}
\label{DgammaC0semigroup} The closure of the operator $D^{\gamma}$, $\gamma>0$
(let us denote it by $D^{\gamma}$ again) with domain
\begin{equation}
\label{DomDa}\mathcal{D}\left(  D^{\gamma}\right)  :=\left\{  f\in
L^{2}\left(  \mathbb{A}_{f}\right)  :\| \xi\|^{\gamma}\widehat{f}\in
L^{2}\left(  \mathbb{A}_{f}\right)  \right\}
\end{equation}
is a self-adjoint operator. Moreover, the following assertions hold:

\begin{itemize}
\item[(i)] $D^{\gamma}$ is a positive operator;

\item[(ii)] $D^{\gamma}$ is $m$-accretive, i.e. $-D^{\gamma}$ is an
$m$-dissipative operator;

\item[(iii)] the spectrum $\sigma(D^{\gamma}) = \big\{ p^{\gamma j}:
p\ \text{is a prime},\ j\in\mathbb{Z}\setminus\{0\}\big\}\cup\{0\}$;

\item[(iv)] $-D^{\gamma}$ is the infinitesimal generator of a contraction
$C_{0}$ semigroup $\bigl(\mathcal{T}(t)\bigr)_{t\ge 0}$. Moreover, the semigroup $\bigl(\mathcal{T}(t)\bigr)_{t\ge 0}$ is bounded holomorphic (or analytic) with angle $\pi/2$.
\end{itemize}
\end{lemma}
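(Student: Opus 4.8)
The plan is to diagonalize $D^{\gamma}$ by the Fourier transform and then read off every assertion from the spectral theory of multiplication operators. Since $\mathcal{F}$ is a unitary operator on $L^{2}(\mathbb{A}_{f})$ (Plancherel), the operator $D^{\gamma}$ defined by \eqref{POperator} is unitarily equivalent to the operator $M$ of multiplication by the real, non-negative, measurable function $\|\xi\|^{\gamma}$, namely $Mg=\|\xi\|^{\gamma}g$ with maximal domain $\{g\in L^{2}(\mathbb{A}_{f}):\|\xi\|^{\gamma}g\in L^{2}(\mathbb{A}_{f})\}$. Pulling this domain back through $\mathcal{F}$ gives exactly \eqref{DomDa}, so it suffices to establish the four properties for $M$ and transport them by unitary equivalence. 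First I would recall the standard fact that multiplication by a real-valued measurable function, taken with its maximal domain, is self-adjoint; combined with the observation (already noted in the discussion following Lemma \ref{LemmaDL0=L0}) that $\mathcal{L}_{0}(\mathbb{A}_{f})$ is a core, this identifies the closure of $D^{\gamma}|_{\mathcal{L}_{0}}$ with the self-adjoint operator whose domain is \eqref{DomDa}.

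Assertion (i) is immediate: for $f$ in the domain, $\langle D^{\gamma}f,f\rangle=\int_{\mathbb{A}_{f}}\|\xi\|^{\gamma}|\widehat{f}(\xi)|^{2}\,d\xi_{\mathbb{A}_{f}}\geq 0$ since $\|\xi\|^{\gamma}\geq 0$. For (ii) I would use that a non-negative self-adjoint operator is automatically $m$-accretive: accretivity is the inequality just displayed, while the range (maximality) condition holds because for every $\lambda>0$ the point $-\lambda$ lies in the resolvent set, the spectrum being contained in $[0,\infty)$ by (i). Equivalently, $-D^{\gamma}$ is $m$-dissipative.

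For (iii) I would compute the spectrum of $M$ as the essential range of $\|\xi\|^{\gamma}$. By Remark \ref{Rmk Metric Range} the function $\|\cdot\|$ takes only the values $0$ and $p^{j}$ with $p$ prime and $j\in\mathbb{Z}\setminus\{0\}$, so $\|\cdot\|^{\gamma}$ takes the values $0$ and $p^{\gamma j}$. Each value $p^{\gamma j}$ is attained on the sphere $S_{p^{j}}$, which has strictly positive volume by Lemma \ref{Lemma2A}, hence belongs to the essential range. The value $0$ is attained only at the single point $\xi=0$, a set of measure zero; nevertheless every neighborhood of $0$ pulls back to a set of positive measure (it contains all $S_{p^{j}}$ with $p^{\gamma j}$ small, i.e.\ $j\to-\infty$, using $\gamma>0$), so $0$ lies in the essential range as well. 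Since the essential range is closed, this yields $\sigma(D^{\gamma})=\{p^{\gamma j}:p\text{ prime},\ j\in\mathbb{Z}\setminus\{0\}\}\cup\{0\}$.

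Finally, for (iv), since $-D^{\gamma}$ is $m$-dissipative, the Lumer--Phillips theorem (see \cite{E-N}) produces a contraction $C_{0}$ semigroup, which by the functional calculus is $\mathcal{T}(t)=\mathcal{F}^{-1}e^{-t\|\xi\|^{\gamma}}\mathcal{F}$. To obtain holomorphy of angle $\pi/2$ I would extend this to complex time: for $\operatorname{Re}z\geq 0$ the multiplier satisfies $|e^{-z\|\xi\|^{\gamma}}|=e^{-(\operatorname{Re}z)\|\xi\|^{\gamma}}\leq 1$, so $\mathcal{T}(z):=\mathcal{F}^{-1}e^{-z\|\xi\|^{\gamma}}\mathcal{F}$ is a contraction for every $z$ in the open right half-plane, on which $z\mapsto\mathcal{T}(z)$ is holomorphic. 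Since the open right half-plane is exactly the sector $|\arg z|<\pi/2$, this gives a bounded holomorphic semigroup of angle $\pi/2$. The step requiring the most care is the spectrum computation in (iii), where one must separate the values realized on spheres of positive measure from the exceptional value $0$ realized only at the origin, and argue that $0$ re-enters the spectrum because its preimages under $\|\cdot\|^{\gamma}$ still carry positive measure; everything else reduces to textbook facts about non-negative self-adjoint operators.
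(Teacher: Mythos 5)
Your proposal is correct, and its skeleton --- conjugating $D^{\gamma}$ by the unitary Fourier transform into the operator of multiplication by $\|\xi\|^{\gamma}$ with maximal domain --- is also the backbone of the paper's argument; the differences lie in how the spectral facts are extracted. For (iii) you invoke the general theorem that the spectrum of a maximal multiplication operator equals the essential range of its symbol, locating each $p^{\gamma j}$ in that range via the positive volume of the sphere $S_{p^{j}}$ (Lemma \ref{Lemma2A}) and the value $0$ via the spheres $S_{p^{j}}$ with $j\to-\infty$. The paper argues more concretely: it exhibits $\mathcal{F}^{-1}\mathbf{1}_{S_{p^{j}}}$ as a genuine $L^{2}$-eigenfunction for the eigenvalue $p^{\gamma j}$ (so these points lie in the point spectrum, a slightly stronger conclusion than yours), shows directly that the multipliers $1/(\|\xi\|^{\gamma}-\lambda)$ and $\|\xi\|^{\gamma}/(\|\xi\|^{\gamma}-\lambda)$ are bounded for $\lambda$ off the claimed set, and places $0$ in the spectrum as a limit point. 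Note that both arguments quietly use the same arithmetic fact --- that the set of non-zero integer powers of primes is discrete in $(0,\infty)$, so the claimed set is closed --- you when passing from the essential range to the claimed set, the paper when asserting boundedness of the resolvent multiplier; a sentence on this would strengthen either write-up. For (ii) and (iv) the paper simply cites the Lumer--Phillips corollary and standard references for holomorphy, whereas you derive $m$-accretivity abstractly from non-negative self-adjointness and prove the angle-$\pi/2$ holomorphy by hand from the bound $|e^{-z\|\xi\|^{\gamma}}|\le 1$ on the open right half-plane. Your version is thus more self-contained and, through the maximal multiplication operator, also makes explicit why the closure has precisely the domain \eqref{DomDa}, a point the paper leaves implicit after its essential self-adjointness remark; what it forgoes is the explicit eigenfunction information that the paper's computation provides.
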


\begin{proof}
(i) It follows from the Steklov--Parseval equality that for any $f\in
L^{2}(\mathbb{A}_{f})$
\[
(D^{\gamma}f, f) = (\| \xi\|^{\gamma}\mathcal{F}f, \mathcal{F}f) =
\int_{\mathbb{A}_{f}} \| \xi\|^{\gamma}\big|\mathcal{F}f\big|^{2}
d\xi_{\mathbb{A}_{f}}\ge0.
\]

(ii), (iv) The result follows from the well-known corollary from the
Lumer-Phillips theorem, see e.g. \cite[Chapter 2, Section 3]{E-N} or \cite{C-H}. For the property of the semigroup of being holomorphic, see e.g. \cite[3.7]{Aren} or \cite[Chapter 2, Section 4.7]{E-N}.

(iii) Since $D^{\gamma}$ is self-adjoint and positive, $\sigma(D^{\gamma
})\subset[0,\infty)$. Consider the eigenvalue problem $D^{\gamma}f=\lambda f$,
$f\in\mathcal{D} (D^{\gamma})$, $\lambda>0$. By applying the Fourier transform
we obtain the equivalent equation
\begin{equation}
(\Vert\xi\Vert^{\gamma}-\lambda)\widehat{f}=0. \label{DaSpectr}%
\end{equation}
If $\lambda=p^{\gamma j}$ for some prime $p$ and $j\in\mathbb{Z}%
\setminus\{0\}$ then the inverse Fourier transform of the characteristic
function of $S_{p^{j}}=\{\xi\in\mathbb{A}_{f}:\Vert\xi\Vert=p^{j}\}$ is a
solution of \eqref{DaSpectr}. If $\lambda\not \in \big\{p^{\gamma
j}:p\ \text{is a prime},\ j\in\mathbb{Z}\setminus\{0\}\big\}$, then the
functions $\bigl|\frac{1}{\Vert\xi\Vert^{\gamma}-\lambda}\bigr|$ and $\bigl|\frac{\Vert\xi\Vert^{\gamma}}{\Vert\xi\Vert^{\gamma}-\lambda}\bigr|$ are bounded,
hence the equation $D^{\gamma}f-\lambda f=h$ is uniquely solvable for any
$h\in L^{2}(\mathbb{A}_{f})$ and $\lambda\in\rho(D^{\gamma})$. The point $0$
belongs to $\sigma(D^{\gamma})$ as a limit point.
\end{proof}

The representation of the generated semigroup $\bigl(\mathcal{T}(t)\bigr)_{t\ge 0}$ is presented in
detail in Theorem \ref{Theo4minus}.

\section{\label{Sect4}Metric structures, Distributions and Pseudodifferential
Operators on $\mathbb{A}$}

\subsection{A structure of complete metric space for the adeles}

We recall that $\mathbb{A}=\mathbb{R}\times\mathbb{A}_{f}$. Then any
$x\in\mathbb{A}$ can be written uniquely as $x=\left(  x_{\infty}
,x_{f}\right)  \in\mathbb{R}\times\mathbb{A}_{f}=\mathbb{A}$. Set for
$x,y\in\mathbb{A}$
\[
\rho_{\mathbb{A}}\left(  x,y\right)  :=| x_{\infty}-y_{\infty}|_{\infty}%
+\rho\left(  x_{f},y_{f}\right)  ,
\]
where $\rho\left(  x,y\right)  $ was defined in \eqref{MetrAf}. Then $\left(
\mathbb{A},\rho_{\mathbb{A}}\right)  $ is a complete metric space, see
Proposition \ref{Prop1}. Note that $\rho_{\mathbb{A}}\left(  x,y\right)  $ is
topologically equivalent to
\[
\tilde\rho(x,y) := \max\big\{  | x_{\infty}-y_{\infty}| _{\infty},\rho\left(
x,y\right)  \big\},\qquad x,y\in\mathbb{A},
\]
which induces on $\mathbb{A}$ the product topology. The topology of the
restricted product on $\mathbb{A}$ is equal to the product topology on
$\mathbb{R}\times\mathbb{A}_{f}$, where $\mathbb{R}$ is equipped with the
usual topology and $\mathbb{A}_{f}$ with the restricted product topology.
Hence the following result holds.

\begin{proposition}
\label{pro1a} The restricted product topology on $\mathbb{A}$ is metrizable, a
metric is given by $\rho_{\mathbb{A}}$. Furthermore, $\left(  \mathbb{A}%
,\rho_{\mathbb{A}}\right)  $ is a complete metric space and $\left(
\mathbb{A},\rho_{\mathbb{A}}\right)  $ as a topological space is homeomorphic
to $\left(  \mathbb{R},| \cdot| _{\infty}\right)  \times\left(  \mathbb{A}%
_{f},\rho\right)  $.
\end{proposition}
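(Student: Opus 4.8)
The plan is to treat this as a routine consequence of the product structure $\mathbb{A}=\mathbb{R}\times\mathbb{A}_{f}$ together with Proposition \ref{Prop1}, which already supplies the hard analytic content, namely that $\rho$ metrizes the restricted product topology on $\mathbb{A}_{f}$ and that $(\mathbb{A}_{f},\rho)$ is complete. First I would verify that $\rho_{\mathbb{A}}$ is genuinely a metric. Since it is the sum of the Euclidean metric on $\mathbb{R}$ and the metric $\rho$ on $\mathbb{A}_{f}$, symmetry, non-negativity and the triangle inequality all hold coordinatewise, while $\rho_{\mathbb{A}}(x,y)=0$ forces both $x_{\infty}=y_{\infty}$ and $x_{f}=y_{f}$, so $x=y$.

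Next I would identify the induced topology. From the elementary chain of inequalities
\[
\tilde\rho(x,y)\le\rho_{\mathbb{A}}(x,y)\le 2\,\tilde\rho(x,y),
\]
the metrics $\rho_{\mathbb{A}}$ and $\tilde\rho$ are Lipschitz equivalent, hence induce the same topology. As a max-metric, $\tilde\rho$ induces on $\mathbb{R}\times\mathbb{A}_{f}$ precisely the product of the usual topology on $\mathbb{R}$ with the $\rho$-topology on $\mathbb{A}_{f}$. By Proposition \ref{Prop1} the latter is the restricted product topology on $\mathbb{A}_{f}$, and, as recalled just before the statement, the restricted product topology on $\mathbb{A}$ coincides with this product topology on $\mathbb{R}\times\mathbb{A}_{f}$. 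This chain yields at once both the asserted metrizability and the homeomorphism with $\left(\mathbb{R},|\cdot|_{\infty}\right)\times\left(\mathbb{A}_{f},\rho\right)$, since the identity map is exactly the map realizing these identifications.

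Finally, for completeness I would take a $\rho_{\mathbb{A}}$-Cauchy sequence $x^{(n)}=(x_{\infty}^{(n)},x_{f}^{(n)})$. The estimate $\max\{|x_{\infty}^{(n)}-x_{\infty}^{(m)}|_{\infty},\rho(x_{f}^{(n)},x_{f}^{(m)})\}\le\rho_{\mathbb{A}}(x^{(n)},x^{(m)})$ shows that $(x_{\infty}^{(n)})_{n}$ is Cauchy in $\mathbb{R}$ and $(x_{f}^{(n)})_{n}$ is Cauchy in $(\mathbb{A}_{f},\rho)$. Completeness of $\mathbb{R}$ and of $(\mathbb{A}_{f},\rho)$ (Proposition \ref{Prop1}) produces limits $x_{\infty}$ and $x_{f}$, and then $\rho_{\mathbb{A}}\bigl(x^{(n)},(x_{\infty},x_{f})\bigr)=|x_{\infty}^{(n)}-x_{\infty}|_{\infty}+\rho(x_{f}^{(n)},x_{f})\to 0$, so the limit lies in $\mathbb{A}$. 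Since every step rests on Proposition \ref{Prop1} and standard product-space facts, I do not expect a genuine obstacle; the only point requiring care is the already-recorded identification of the restricted product topology on $\mathbb{A}$ with the product topology on $\mathbb{R}\times\mathbb{A}_{f}$, which is precisely what lets the product-metric machinery apply.
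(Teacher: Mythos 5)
Your proposal is correct and follows essentially the same route as the paper: decompose $\mathbb{A}=\mathbb{R}\times\mathbb{A}_{f}$, observe that $\rho_{\mathbb{A}}$ is topologically equivalent to the max-metric $\tilde\rho$ which induces the product topology, invoke the identification of the restricted product topology on $\mathbb{A}$ with the product topology on $\mathbb{R}\times\mathbb{A}_{f}$, and reduce completeness to completeness of the factors via Proposition \ref{Prop1}. You merely fill in details (the $\tilde\rho\le\rho_{\mathbb{A}}\le 2\tilde\rho$ estimate and the coordinatewise Cauchy argument) that the paper leaves implicit.
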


\begin{remark}
\label{nota3}$\left(  \mathbb{A},\rho_{\mathbb{A}}\right)  $ is a
second-countable topological space. Indeed, $\big(  B_{\infty}^{\left(
i\right)  }\times B_{f}^{\left(  j\right)  }\big)_{i,j\in\mathbb{N}}$ is a
countable base, where $\big(  B_{\infty}^{\left(  i\right)  }\big)
_{i\in\mathbb{N}}$ is a countable base of $\big(  \mathbb{R},| \cdot|
_{\infty}\big)  $ and $\big(  B_{f}^{\left(  j\right)  }\big)  _{j\in
\mathbb{N}}$ is a countable base of $\left(  \mathbb{A}_{f},\rho\right)  $,
see Remark \ref{nota2} (ii). Therefore $\left(  \mathbb{A},\rho_{\mathbb{A}%
}\right)  $ is a semi-compact space.
\end{remark}

\subsection{Distributions on $\mathbb{A}$}

The\textit{ space of Bruhat-Schwartz functions}, denoted $\mathcal{S}%
(\mathbb{A})$, consists of finite linear combinations of functions of type
$h\left(  x\right)  =h_{\infty}\left(  x_{\infty}\right)  h_{f}\left(
x_{f}\right)  $ with $h_{\infty}\in\mathcal{S}(\mathbb{R})$, Schwartz space on
$\mathbb{R}$, and $h_{f}\in\mathcal{S}(\mathbb{A}_{f})$. The space
$\mathcal{S}(\mathbb{A})$ is dense in $L^{\varrho}\left(  \mathbb{A}%
,dx_{\mathbb{A}}\right)  $ for $1\leq\varrho<+\infty$, see e.g. \cite[Theorem
2.9]{D-R-K}. The \textit{space of distributions on} $\mathcal{S}(\mathbb{A})$
is the strong dual space of $\mathcal{S}(\mathbb{A})$.

\subsection{Pseudodifferential operators and the Lizorkin space on $\mathbb{A}$}
We consider the pseudodifferential operator $D_{\mathbb{R}}^{\beta
}=:D^{\beta}$, $\beta>0$ on $\mathcal{S}(\mathbb{R})$ defined
by
\begin{equation}
\left(  D^{\beta}h\right)  \left(  x_\infty\right)  =\mathcal{F}
_{\xi_\infty\rightarrow x_\infty}^{-1}\left(  |\xi_{\infty}|_{\infty}^{\beta
}\mathcal{F}_{x_\infty\rightarrow\xi_\infty
}h\right)  ,\qquad h\in\mathcal{S}(\mathbb{R}). \label{POperatorInf}%
\end{equation}
Recall that the operator $D^\beta$ is the real Riesz fractional operator and represents a fractional power of the Laplacian, see e.g. \cite[\S8]{Samko}, \cite[\S25]{S-K-M}.

We introduce the pseudodifferential operator $D_{\mathbb{A}}^{\alpha,\beta
}=:D^{\alpha,\beta}$, $\alpha,\beta>0$ on $\mathcal{S}(\mathbb{A})$ defined
by
\begin{equation}
\left(  D^{\alpha,\beta}h\right)  \left(  x\right)  =\mathcal{F}%
_{\xi\rightarrow x}^{-1}\left(  \big(|\xi_{\infty}|_{\infty}^{\beta
}+\left\Vert \xi_{f}\right\Vert ^{\alpha}\big)\mathcal{F}_{x\rightarrow\xi
}h\right)  ,\qquad h\in\mathcal{S}(\mathbb{A}). \label{POperator2}%
\end{equation}

\begin{lemma}
With the above notation
\[
D^{\alpha,\beta}: \mathcal{S}(\mathbb{A}) \to C\left(  \mathbb{A}%
,\mathbb{C}\right)  \cap L^{2}\left(  \mathbb{A}\right)
\]

\end{lemma}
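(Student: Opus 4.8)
The plan is to reduce to a single factorizable function and then exploit the additive structure of the symbol $|\xi_{\infty}|_{\infty}^{\beta}+\left\Vert \xi_{f}\right\Vert ^{\alpha}$ together with the multiplicativity of the adelic Fourier transform. Since $\mathcal{S}(\mathbb{A})$ consists of finite linear combinations of products $h(x)=h_{\infty}(x_{\infty})h_{f}(x_{f})$ with $h_{\infty}\in\mathcal{S}(\mathbb{R})$ and $h_{f}\in\mathcal{S}(\mathbb{A}_{f})$, and $D^{\alpha,\beta}$ is linear, it suffices by \eqref{POperator2} to treat such a factorizable $h$. For it the Fourier transform factorizes, $\mathcal{F}h=\widehat{h_{\infty}}(\xi_{\infty})\,\widehat{h_{f}}(\xi_{f})$, cf. \eqref{defFTfactorisable}, and since the symbol is a sum, the inverse Fourier transform splits into two tensor terms:
\begin{equation*}
D^{\alpha,\beta}h=(D^{\beta}h_{\infty})(x_{\infty})\,h_{f}(x_{f})+h_{\infty}(x_{\infty})\,(D^{\alpha}h_{f})(x_{f}),
\end{equation*}
where $D^{\beta}$ is the Riesz operator \eqref{POperatorInf} on $\mathbb{R}$ and $D^{\alpha}$ is the Taibleson-type operator \eqref{POperator} on $\mathbb{A}_{f}$.

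Next I would verify that each of the four factors lies in $C\cap L^{2}$ of the appropriate factor space. For the finite-adelic factors this is immediate: $h_{f}\in\mathcal{S}(\mathbb{A}_{f})\subset C(\mathbb{A}_{f})\cap L^{2}(\mathbb{A}_{f})$, and $D^{\alpha}h_{f}\in C(\mathbb{A}_{f})\cap L^{2}(\mathbb{A}_{f})$ by Lemma \ref{lemma_Poperator}. For the real factors, clearly $h_{\infty}\in\mathcal{S}(\mathbb{R})\subset C(\mathbb{R})\cap L^{2}(\mathbb{R})$; for $D^{\beta}h_{\infty}$ I would observe that $\widehat{h_{\infty}}$ is Schwartz, so $|\xi_{\infty}|_{\infty}^{\beta}\widehat{h_{\infty}}(\xi_{\infty})$ is continuous and rapidly decreasing, hence lies in $L^{1}(\mathbb{R})\cap L^{2}(\mathbb{R})$. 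Its inverse Fourier transform is therefore bounded and continuous (from the $L^{1}$ membership) and square-integrable (by Plancherel), so $D^{\beta}h_{\infty}\in C(\mathbb{R})\cap L^{2}(\mathbb{R})$.

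Finally I would assemble the two tensor terms. Continuity of each product on $\mathbb{A}=\mathbb{R}\times\mathbb{A}_{f}$ follows because the restricted product topology coincides with the product topology (Proposition \ref{pro1a}), and a product $f(x_{\infty})g(x_{f})$ of continuous factors is continuous for the product topology. Square integrability follows from Tonelli together with $dx_{\mathbb{A}}=dx_{\infty}\,dx_{\mathbb{A}_{f}}$, since then $\left\Vert f\cdot g\right\Vert _{L^{2}(\mathbb{A})}=\left\Vert f\right\Vert _{L^{2}(\mathbb{R})}\left\Vert g\right\Vert _{L^{2}(\mathbb{A}_{f})}$. As $C(\mathbb{A})\cap L^{2}(\mathbb{A})$ is a linear space, summing the two terms keeps us inside it, which completes the argument.

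The only delicate point is the real factor $D^{\beta}h_{\infty}$, where the non-smoothness of $|\xi_{\infty}|_{\infty}^{\beta}$ at the origin prevents $D^{\beta}h_{\infty}$ from being Schwartz; but since $\beta>0$ this singularity is merely a continuous (non-differentiable) point of the multiplier, and the product $|\xi_{\infty}|_{\infty}^{\beta}\widehat{h_{\infty}}$ remains integrable near $0$ and rapidly decaying at infinity, so no loss of $L^{1}\cap L^{2}$ regularity occurs and the conclusion follows.
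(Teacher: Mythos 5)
Your proof is correct and follows essentially the same route as the paper: reduce by linearity to a factorizable $h=h_{\infty}h_{f}$, use the additivity of the symbol to split $D^{\alpha,\beta}h$ into the two tensor terms $(D^{\beta}h_{\infty})h_{f}+h_{\infty}(D^{\alpha}h_{f})$, invoke Lemma \ref{lemma_Poperator} for the finite-adelic factor, and conclude via $dx_{\mathbb{A}}=dx_{\infty}\,dx_{\mathbb{A}_{f}}$ and the product topology. In fact you supply a justification the paper leaves implicit, namely that $|\xi_{\infty}|_{\infty}^{\beta}\widehat{h_{\infty}}\in L^{1}(\mathbb{R})\cap L^{2}(\mathbb{R})$ so that $D^{\beta}h_{\infty}\in C(\mathbb{R})\cap L^{2}(\mathbb{R})$, whereas the paper only cites its Lemma \ref{lemma_Poperator}, which is stated for $\mathbb{A}_{f}$.
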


\begin{proof}
It is sufficient to show the result for a factorizable function $h=h_{\infty
}h_{f}$, $h_{\infty}\in\mathcal{S}\left(  \mathbb{R}\right)  $, $h_{f}%
\in\mathcal{S}(\mathbb{A}_{f})$. Since $\widehat{h}\left(  \xi\right)
=\widehat{h_{\infty}}\left(  \xi_{\infty}\right)  \widehat{h_{f}}\left(
\xi_{f}\right)  $,
\begin{equation}
\left(  D^{\alpha,\beta}h\right)  \left(  x\right)  =h_{f}\left(
x_{f}\right)  \left(  D^{\beta}h_{\infty}\right)  \left(  x_{\infty}\right)
+h_{\infty}\left(  x_{\infty}\right)  \left(  D^{\alpha}h_{f}\right)  \left(
x_{f}\right)  . \label{eq12}%
\end{equation}
Note that $D^{\beta}h_{\infty}\in L^{2}(\mathbb{R})\cap C\left(  \mathbb{R},\mathbb{C}\right)$, $D^{\alpha}
h_{f}\in L^{2}(\mathbb{A}_{f})\cap C\left(  \mathbb{A}_{f},\mathbb{C}\right)
$, cf. Lemma \ref{lemma_Poperator}, and since $dx_{\mathbb{A}}=dx_{\infty
}dx_{\mathbb{A}_{f}}$ we conclude $h_{f}D^{\beta}h_{\infty}$, $h_{\infty
}D^{\alpha}h_{f}\in C\left(  \mathbb{A},\mathbb{C}\right)  \cap
L^{2}\left(  \mathbb{A}\right)  $.
\end{proof}

The space $\mathcal{S}(\mathbb{A})$ is not invariant under the action of  the
operator $D^{\alpha,\beta}$. To overcome such an inconvenience, we introduce an adelic version of the Lizorkin space of the second kind. First we recall that the real Lizorkin space of test functions, see e.g. \cite[\S2]{Samko} or \cite[\S25]{S-K-M}, is defined by
\begin{equation*}
    \mathcal{L}_0(\mathbb{R}) = \biggl\{ f_\infty\in \mathcal{S}(\mathbb{R}): \int_{\mathbb{R}} x^n_\infty f_\infty(x_\infty)\,dx_\infty = 0,\ \text{for}\  n\in\mathbb{N}\biggr\}.
\end{equation*}
The real Lizorkin space can be equipped with the topology of the space $\mathcal{S}(\mathbb{R})$, which makes $\mathcal{L}_0(\mathbb{R})$ a complete space. The real Lizorkin space is invariant with respect to $D^\beta$, is dense in $L^p(\mathbb{R})$, $1<p<\infty$, and admits the following characterization: $f_\infty\in\mathcal{L}_0(\mathbb{R})$ if and only if $f_\infty\in\mathcal{S}(\mathbb{R})$ and
\[
\left.\frac{d^n}{d\xi_\infty^n} \mathcal{F} f(\xi_\infty) \right |_{\xi_\infty = 0} = 0,\qquad \text{for}\ n\in\mathbb{N}.
\]
We introduce an \emph{adelic Lizorkin space of the second kind} $\mathcal{L}_{0}:=\mathcal{L}_{0}(\mathbb{A})$ as
\[
\mathcal{L}_{0}(\mathbb{A}) = \mathcal{L}_0(\mathbb{R}) \otimes\mathcal{L}_{0}(\mathbb{A}_{f}).
\]
The space $\mathcal{L}_{0}(\mathbb{A})$ consists of finite linear combinations of factorizable functions $h(x) = h_\infty(x_\infty) h_f(x_f)$ with $h_{\infty}\in\mathcal{L}_0\left(  \mathbb{R}\right)  $, $h_{f}
\in\mathcal{L}_0(\mathbb{A}_{f})$. Note that $\mathcal{L}_{0}(\mathbb{A})$ is a subspace of $\mathcal{S}(\mathbb{A})$ and it may be equipped with the topology of $\mathcal{S}(\mathbb{A})$.

\begin{lemma}\label{LemmaDabL0=L0}
With the above notation the following assertions hold:

\begin{itemize}
\item[(i)] $D^{\alpha,\beta} \mathcal{L}_{0} = \mathcal{L}_{0}$ for
$\alpha,\beta>0$;

\item[(ii)] $\mathcal{L}_{0}$ is dense in $L^{2}(\mathbb{A})$.
\end{itemize}
\end{lemma}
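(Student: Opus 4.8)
The plan is to prove both assertions by reducing everything to the tensor-product structure $\mathcal{L}_0(\mathbb{A}) = \mathcal{L}_0(\mathbb{R})\otimes\mathcal{L}_0(\mathbb{A}_f)$ together with the factorization formula \eqref{eq12} for $D^{\alpha,\beta}$, and then invoking the one-factor results already established: the invariance $D^\beta\mathcal{L}_0(\mathbb{R})=\mathcal{L}_0(\mathbb{R})$ (the quoted real Lizorkin theory) and the invariance $D^\alpha\mathcal{L}_0(\mathbb{A}_f)=\mathcal{L}_0(\mathbb{A}_f)$ from Lemma \ref{LemmaDL0=L0}(i).

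For part (i), I would argue that $D^{\alpha,\beta}\mathcal{L}_0\subseteq\mathcal{L}_0$ by taking a factorizable $h=h_\infty h_f$ with $h_\infty\in\mathcal{L}_0(\mathbb{R})$, $h_f\in\mathcal{L}_0(\mathbb{A}_f)$ and reading off from \eqref{eq12} that
\[
D^{\alpha,\beta}h = h_f\,(D^\beta h_\infty) + h_\infty\,(D^\alpha h_f).
\]
The first summand lies in $\mathcal{L}_0(\mathbb{R})\otimes\mathcal{L}_0(\mathbb{A}_f)$ because $D^\beta h_\infty\in\mathcal{L}_0(\mathbb{R})$ and $h_f\in\mathcal{L}_0(\mathbb{A}_f)$; symmetrically for the second. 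Since $\mathcal{L}_0$ is closed under finite sums, this gives the inclusion on all of $\mathcal{L}_0$. For the reverse inclusion $\mathcal{L}_0\subseteq D^{\alpha,\beta}\mathcal{L}_0$, I would work on the Fourier side: given $h\in\mathcal{L}_0$, its transform $\widehat h$ vanishes (to infinite order at $\xi_\infty=0$ and identically near $\xi_f=0$), and since the symbol $|\xi_\infty|_\infty^\beta+\|\xi_f\|^\alpha$ vanishes only at the origin of $\mathbb{A}$, the function $\widehat h/(|\xi_\infty|_\infty^\beta+\|\xi_f\|^\alpha)$ is again the Fourier transform of an element of $\mathcal{L}_0$, which is the desired preimage.

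Part (ii) follows quickly from (i) and density of the factors. Because $\mathcal{L}_0(\mathbb{R})$ is dense in $L^2(\mathbb{R})$ (quoted real Lizorkin theory) and $\mathcal{L}_0(\mathbb{A}_f)$ is dense in $L^2(\mathbb{A}_f)$ by Lemma \ref{LemmaDL0=L0}(iv), their algebraic tensor product is dense in $L^2(\mathbb{R})\otimes L^2(\mathbb{A}_f)=L^2(\mathbb{A})$ under the identification $dx_{\mathbb{A}}=dx_\infty\,dx_{\mathbb{A}_f}$; finite linear combinations of products approximate simple tensors, and simple tensors are total in the Hilbert-space tensor product.

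The main obstacle I anticipate is the reverse inclusion in (i): unlike the finite-adelic case in Lemma \ref{LemmaDL0=L0}, the symbol here does not vanish on a full neighborhood of the origin in the Archimedean direction but only to infinite order, so I must verify that dividing $\widehat h$ by the symbol keeps it in the Schwartz class $\mathcal{S}(\mathbb{R})$ in the $\xi_\infty$-variable near $\xi_\infty=0$. The point is that the infinite-order vanishing of $\widehat{h_\infty}$ at $0$ (the defining Lizorkin condition) dominates the mild singularity of $|\xi_\infty|_\infty^{-\beta}$, so the quotient extends to a smooth, rapidly decreasing function; this is precisely the content of the real Lizorkin theory I am quoting, and combining it tensorwise with the locally constant behavior of $\|\xi_f\|^{-\alpha}$ away from $\xi_f=0$ completes the argument.
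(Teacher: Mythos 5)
Your forward inclusion in (i) and your argument for (ii) coincide with the paper's proof. The genuine gap is in the reverse inclusion of (i). Membership in $\mathcal{L}_{0}(\mathbb{A})=\mathcal{L}_0(\mathbb{R})\otimes\mathcal{L}_{0}(\mathbb{A}_{f})$ means being a \emph{finite sum of factorizable functions}, and the obstruction is that the symbol $|\xi_{\infty}|_{\infty}^{\beta}+\Vert\xi_{f}\Vert^{\alpha}$ is a \emph{sum}, not a product: the quotient
\[
\widehat{g}(\xi)=\frac{\widehat{h}_{\infty}(\xi_{\infty})\,\widehat{h}_{f}(\xi_{f})}{|\xi_{\infty}|_{\infty}^{\beta}+\Vert\xi_{f}\Vert^{\alpha}}
\]
does not split into functions of $\xi_{\infty}$ times functions of $\xi_{f}$ in any obvious way, so ``combining it tensorwise'' is not an argument; the one-variable facts you quote (division by $|\xi_{\infty}|_{\infty}^{\beta}$ preserves $\mathcal{F}\mathcal{L}_0(\mathbb{R})$, division by $\Vert\xi_{f}\Vert^{\alpha}$ preserves $\mathcal{F}\mathcal{L}_{0}(\mathbb{A}_{f})$) say nothing about division by the sum, and your sentence ``the function $\widehat{h}/(|\xi_{\infty}|_{\infty}^{\beta}+\Vert\xi_{f}\Vert^{\alpha})$ is again the Fourier transform of an element of $\mathcal{L}_{0}$'' is exactly the assertion that has to be proved. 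The missing step is the paper's key move: since $h_{f}\in\mathcal{L}_{0}(\mathbb{A}_{f})$, one can write $\widehat{h}_{f}=\sum_{i=1}^{N}c_{i}\mathbf{1}_{B_{R}(\xi_{i})}$ with disjoint balls not containing $0$; by the non-Archimedean property $\Vert\xi_{f}\Vert^{\alpha}$ equals a constant $d_{i}=\Vert\xi_{i}\Vert^{\alpha}>0$ on each ball, whence
\[
\widehat{g}(\xi)=\sum_{i=1}^{N}\frac{c_{i}\,\widehat{h}_{\infty}(\xi_{\infty})}{|\xi_{\infty}|_{\infty}^{\beta}+d_{i}}\,\mathbf{1}_{B_{R}(\xi_{i})}(\xi_{f})
\]
\emph{is} a finite sum of factorizable functions, each Archimedean factor being the Fourier transform of a real Lizorkin function.

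This decomposition also shows that the ``main obstacle'' you single out is misplaced: on the support of $\widehat{h}$ one has $\Vert\xi_{f}\Vert^{\alpha}\geq\min_{i}d_{i}>0$, so the denominators $|\xi_{\infty}|_{\infty}^{\beta}+d_{i}$ are bounded away from zero and the singularity $|\xi_{\infty}|_{\infty}^{-\beta}$ never arises. What does need checking --- and is \emph{not} the quoted invariance $D^{\beta}\mathcal{L}_0(\mathbb{R})=\mathcal{L}_0(\mathbb{R})$, which concerns division by $|\xi_{\infty}|_{\infty}^{\beta}$ itself --- is that $\widehat{h}_{\infty}(\xi_{\infty})/(|\xi_{\infty}|_{\infty}^{\beta}+d_{i})$ is still Schwartz with all derivatives vanishing at $\xi_{\infty}=0$: here the only delicate point is the non-smoothness of $|\xi_{\infty}|_{\infty}^{\beta}$ at the origin, which is absorbed by the infinite-order vanishing of $\widehat{h}_{\infty}$. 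That is an easy verification, but it is a different statement from the one you invoke.
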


\begin{proof}(i) It is sufficient to consider a factorizable function $h=h_{\infty
}h_{f}$, $h_{\infty}\in\mathcal{L}_0\left(  \mathbb{R}\right)  $, $h_{f}
\in\mathcal{L}_0(\mathbb{A}_{f})$. Since $D^\alpha h_f \in\mathcal{L}_0(\mathbb{A}_f)$ and $D^\beta h_\infty \in\mathcal{L}_0(\mathbb{R})$, see Lemma \ref{LemmaDL0=L0} and \cite[(9.1)]{Samko}, we conclude from \eqref{eq12} that $D^{\alpha,\beta}\mathcal{L}_0(\mathbb{A}) \subset \mathcal{L}_0(\mathbb{A})$.

Conversely, take $h\in\mathcal{L}_0(\mathbb{A})$. We want to show that the equation $D^{\alpha,\beta}g = h$ has a solution $g\in\mathcal{L}_0(\mathbb{A})$. We may assume without loss of generality that $h=h_{\infty
}h_{f}$, $h_{\infty}\in\mathcal{L}_0\left(  \mathbb{R}\right)  $, $h_{f}
\in\mathcal{L}_0(\mathbb{A}_{f})$. Applying the Fourier transform we obtain
\begin{equation}\label{eq_FTg}
\widehat g(\xi) = \frac{\widehat{h}_\infty(\xi_\infty) \widehat{h}_f(\xi_f)}{|\xi_\infty|_\infty^\beta + \|\xi_f\|^\alpha}.
\end{equation}
Since $h_f\in\mathcal{L}_0(\mathbb{A}_{f})$, it follows from \eqref{representFS} that
\[
\widehat{h}_f(\xi_f) = \sum_{i=1}^N c_i \mathbf{1}_{B_R(\xi_i)}(\xi_f),
\]
where the balls $B_R(\xi_i)$ are disjoint and $0\not\in B_R(\xi_i)$ for $i=1,\ldots,N$. It follows from non-Archimedean property that the function $\|\xi_f\|$ is constant on each of the balls $B_R(\xi_i)$, hence we may rewrite \eqref{eq_FTg} as
\begin{equation*}
\widehat g(\xi) = \sum_{i=1}^N \frac{c_i \widehat{h}_\infty(\xi_\infty) }{|\xi_\infty|_\infty^\beta + d_i} \mathbf{1}_{B_R(\xi_i)}(\xi_f),
\end{equation*}
where $d_i:=\|\xi_i\|^\alpha>0$ are constants. It may be easily checked that the functions $\frac{c_i \widehat{h}_\infty(\xi_\infty) }{|\xi_\infty|_\infty^\beta + d_i}$ are Fourier transforms of real Lizorkin functions and the functions $\mathbf{1}_{B_R(\xi_i)}(\xi_f)$ are Fourier transforms of Lizorkin functions on $\mathbb{A}_f$, thus $g\in \mathcal{L}_0(\mathbb{A})$.

(ii) Since $\mathcal{L}_0(\mathbb{R})$ is dense in $L^2(\mathbb{R})$, see \cite[Thm. 3.2]{Samko}, and  $\mathcal{L}_0(\mathbb{A}_f)$ is dense in $L^2(\mathbb{A}_f)$ by  Lemma \ref{LemmaDL0=L0}, the tensor product $L_0(\mathbb{A}) = \mathcal{L}_0(\mathbb{R}) \otimes\mathcal{L}_{0}(\mathbb{A}_{f})$ is dense in the tensor product $L^2(\mathbb{R})\otimes L^2(\mathbb{A}_f)$ which is isomorphic to the space $L^2(\mathbb{A})$, see e.g. \cite[Theorem II.10]{R-S}.
\end{proof}

Similarly to Subsection \ref{SubsectDaAf} we may consider the operator
$D^{\alpha,\beta}$ as an operator acting on $L^{2}(\mathbb{A})$. It is easy to
see that the operator $D^{\alpha,\beta}$ with the domain $\mathcal{D}%
(D^{\alpha,\beta}) = \mathcal{S}(\mathbb{A})$ is symmetric. Moreover, similarly to the proof of Lemma \ref{LemmaDabL0=L0} (i) we may check that $(D^{\alpha, \beta} \pm i)\mathcal{L}_{0}(\mathbb{A}) = \mathcal{L}_{0}(\mathbb{A})$, i.e. the ranges of the operators $D^{\alpha, \beta} \pm i$ are dense in $L^2(\mathbb{A}_f)$, hence the operator $D^{\alpha, \beta}$ is essentially
self-adjoint. The following description of the closure holds.

\begin{lemma}
\label{Lemma5} The closure of the operator $D^{\alpha,\beta}$, $\alpha
,\beta>0$ (let us denote it by $D^{\alpha,\beta}$ again) with domain
\begin{equation}
\label{DabDom}\mathcal{D}\left(  D^{\alpha,\beta}\right)  :=\left\{  f\in
L^{2}\left(  \mathbb{A}\right)  : \big( | \xi_{\infty}| _{\infty}^{\beta
}+\left\Vert \xi\right\Vert ^{\alpha}\big)\widehat{f}\in L^{2}\left(
\mathbb{A}\right)  \right\}
\end{equation}
is a self-adjoint operator. Moreover, the following assertions hold:

\begin{itemize}
\item[(i)] $D^{\alpha,\beta}\ge0$;

\item[(ii)] $D^{\alpha,\beta}$ is $m$-accretive, i.e. $-D^{\alpha,\beta}$ is
an $m$-dissipative operator;

\item[(iii)] the spectrum $\sigma(D^{\alpha,\beta}) = [0,\infty)$;

\item[(iv)] $-D^{\alpha,\beta}$ is the infinitesimal generator of a
contraction $C_{0}$ semigroup $\bigl(\mathcal{T_{\alpha,\beta}}(t)\bigr)_{t\ge 0}$. Moreover, the semigroup $\bigl(\mathcal{T_{\alpha,\beta}}(t)\bigr)_{t\ge 0}$ is bounded holomorphic (or analytic) with angle $\pi/2$.
\end{itemize}
\end{lemma}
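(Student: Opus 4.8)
The plan is to follow closely the proof of Lemma~\ref{DgammaC0semigroup}, the key observation being that via the Fourier transform $D^{\alpha,\beta}$ is unitarily equivalent to the operator $M_a$ of multiplication by the real-valued symbol $a(\xi):=|\xi_{\infty}|_{\infty}^{\beta}+\|\xi_{f}\|^{\alpha}$, cf.\ \eqref{POperator2}. The essential self-adjointness of $D^{\alpha,\beta}$ on the core $\mathcal{L}_{0}(\mathbb{A})$ has already been established before the statement. Since $M_{a}$ equipped with its maximal domain $\{f:a\widehat{f}\in L^{2}(\mathbb{A})\}$ is self-adjoint and, for $f\in\mathcal{L}_{0}(\mathbb{A})$, one has $\mathcal{F}D^{\alpha,\beta}\mathcal{F}^{-1}\widehat{f}=a\widehat{f}$ with $a\widehat{f}\in L^{2}$, the operator $\mathcal{F}^{-1}M_{a}\mathcal{F}$ is a self-adjoint extension of our essentially self-adjoint operator. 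By uniqueness of the self-adjoint extension, the closure of $D^{\alpha,\beta}$ equals $\mathcal{F}^{-1}M_{a}\mathcal{F}$, and its domain is exactly \eqref{DabDom}.

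For assertions (i), (ii) and (iv) the arguments are verbatim those of Lemma~\ref{DgammaC0semigroup}. Positivity follows from the Steklov--Parseval equality,
\[
(D^{\alpha,\beta}f,f)=\int_{\mathbb{A}}\bigl(|\xi_{\infty}|_{\infty}^{\beta}+\|\xi_{f}\|^{\alpha}\bigr)\,\bigl|\mathcal{F}f\bigr|^{2}\,d\xi_{\mathbb{A}}\ge 0,
\]
since the symbol is nonnegative. Given positivity and self-adjointness, the corollary of the Lumer--Phillips theorem cited in \cite[Chapter~2, Section~3]{E-N} shows that $-D^{\alpha,\beta}$ is $m$-dissipative and generates a contraction $C_{0}$ semigroup; the holomorphy with angle $\pi/2$ is the standard fact that a nonnegative self-adjoint operator generates a bounded analytic semigroup of angle $\pi/2$, see \cite[3.7]{Aren} or \cite[Chapter~2, Section~4.7]{E-N}.

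The genuinely new point is (iii). For $D^{\gamma}$ on $\mathbb{A}_{f}$ the spectrum was the discrete set $\{p^{\gamma j}\}\cup\{0\}$, but the presence of the Archimedean factor now fills in the whole half-line. Since $D^{\alpha,\beta}$ is unitarily equivalent to $M_{a}$, its spectrum equals the essential range of $a$, and $a\ge 0$ gives $\sigma(D^{\alpha,\beta})\subset[0,\infty)$; the nontrivial task is to show that every $\lambda\ge 0$ belongs to the essential range. First I would record that $\|\xi_{f}\|$ attains arbitrarily small positive values $p^{-m}$ on the spheres $S_{p^{-m}}\subset\mathbb{A}_{f}$, each of strictly positive measure $\operatorname{vol}(S_{p^{-m}})=\Phi(p^{-m})-\Phi(p_{-}^{-m})\ge\tfrac{1}{2}\Phi(p^{-m})>0$ by Lemma~\ref{Lemma2A}. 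Fixing $m$ with $p^{-\alpha m}<\epsilon$, on the product set $S_{p^{-m}}\times\{\xi_{\infty}:|\xi_{\infty}|_{\infty}^{\beta}\in(\lambda-p^{-\alpha m}-\epsilon,\,\lambda-p^{-\alpha m}+\epsilon)\cap[0,\infty)\}$ the symbol $a$ lies in $(\lambda-2\epsilon,\lambda+2\epsilon)$, and this set has positive measure, being a positive-measure sphere times a positive-measure subset of $\mathbb{R}$. Hence every $\lambda\ge 0$ lies in the essential range (for $\lambda=0$ one simply lets both $|\xi_{\infty}|^{\beta}$ and $\|\xi_{f}\|^{\alpha}$ tend to $0$), which yields $\sigma(D^{\alpha,\beta})=[0,\infty)$.

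The main obstacle I expect is precisely this measure-theoretic verification in (iii): one must merge the continuous spectrum contributed by the Archimedean place with the discrete values contributed by the finite adeles. The point that makes it work is that the small values of $\|\xi_{f}\|$ occur on \emph{spheres of positive Haar measure}, so their product with suitable real intervals is a positive-measure set on which $a$ is close to any prescribed $\lambda\ge 0$. Equivalently, one could exhibit an explicit Weyl sequence $(f_{n})$ with $\|(D^{\alpha,\beta}-\lambda)f_{n}\|/\|f_{n}\|\to 0$, constructed as a tensor product of an approximate eigenfunction on $\mathbb{R}$ with a normalized characteristic function of a sphere in $\mathbb{A}_{f}$; both routes reduce to the same positive-measure fact.
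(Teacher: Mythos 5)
Your proposal is correct, and it actually supplies more than the paper does: the paper states Lemma \ref{Lemma5} with no proof at all, relying on the reader to transpose the proof of Lemma \ref{DgammaC0semigroup} and the essential self-adjointness remark preceding the statement. Your treatment of the closure (unique self-adjoint extension identified with $\mathcal{F}^{-1}M_{a}\mathcal{F}$ on the maximal domain), of positivity via Steklov--Parseval, and of (ii), (iv) via Lumer--Phillips and the standard holomorphy results is exactly the intended transposition. The genuinely new content is your part (iii), and it is needed: the finite-adelic argument in Lemma \ref{DgammaC0semigroup}(iii) produces the spectrum as a discrete set of eigenvalues (inverse Fourier transforms of characteristic functions of spheres are genuine $L^{2}$ eigenfunctions, since spheres in $\mathbb{A}_{f}$ have positive finite Haar measure) together with resolvent bounds off that set; this mechanism cannot yield $[0,\infty)$, because the Archimedean factor $|\xi_{\infty}|_{\infty}^{\beta}$ has no $L^{2}$ level sets and contributes purely continuous spectrum. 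Your essential-range argument --- pairing a sphere $S_{p^{-m}}\subset\mathbb{A}_{f}$ of positive volume $\Phi(p^{-m})\bigl(1-\tfrac1p\bigr)$ carrying the small value $p^{-\alpha m}<\epsilon$ with a real annulus where $|\xi_{\infty}|_{\infty}^{\beta}$ lies in a suitable interval (nonempty precisely because $p^{-\alpha m}<\epsilon$) --- correctly shows every $\lambda\ge0$ lies in the essential range of the symbol, hence in the spectrum; the equivalent Weyl-sequence formulation is also fine. In short: where the paper's proof exists only by analogy, you identified and closed the one step where the analogy breaks down.
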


\section{The Adelic Heat Kernel on $\mathbb{A}_{f}$}

\label{SectAdelicHeat}

In this section we introduce the adelic heat kernel on $\mathbb{A}_{f}$ as the
inverse Fourier transform of $e^{-t\Vert y\Vert^{\alpha}}$ with $y\in
\mathbb{A}_{f}$, $\Vert y\Vert$ defined by \eqref{Norm}, $\alpha>1$ and $t>0$.

In Sections \ref{SectAdelicHeat}, \ref{SectMarkov} and \ref{SectCauchy} we
work only with finite adeles, for this reason in the variables we omit the
subindex `$_{f}$'.

\begin{proposition}
\label{pro1} Consider the function $\left\Vert y\right\Vert ^{\beta
}e^{-t\left\Vert y\right\Vert ^{\alpha}}$ for fixed $t>0$, $\beta\geq0$ and
$\alpha>1$. Then
\[
\left\Vert y\right\Vert ^{\beta}e^{-t\left\Vert y\right\Vert ^{\alpha}}\in
L^{\varrho}\left(  \mathbb{A}_{f},dy_{\mathbb{A}_{f}}\right)
\]
for any $1\leq\varrho<+\infty$.
\end{proposition}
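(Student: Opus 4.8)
The plan is to reduce the statement to the convergence of a single numerical series indexed by the non-zero powers of primes, and then to control that series using the Chebyshev/Prime Number Theorem asymptotics for $\Phi$. Since $\|y\|^{\beta}e^{-t\|y\|^{\alpha}}$ depends only on $\|y\|$, it is a non-negative radial function, and so is its $\varrho$-th power. Hence I would write
\[
\int_{\mathbb{A}_{f}}\bigl(\|y\|^{\beta}e^{-t\|y\|^{\alpha}}\bigr)^{\varrho}\,dy_{\mathbb{A}_{f}}=\int_{\mathbb{A}_{f}}\|y\|^{\beta\varrho}e^{-t\varrho\|y\|^{\alpha}}\,dy_{\mathbb{A}_{f}},
\]
and apply the radial integration formula of Lemma \ref{integral_radi_function}(i) together with the volume computation $\operatorname{vol}(S_{p^{m}})=\Phi(p^{m})-\Phi(p^{m}_{-})$ of Lemma \ref{Lemma2A}(ii). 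This expresses the integral as
\[
\sum_{p^{m},\,m\in\mathbb{Z}\setminus\{0\}}p^{m\beta\varrho}\,e^{-t\varrho p^{m\alpha}}\bigl(\Phi(p^{m})-\Phi(p^{m}_{-})\bigr),
\]
the sum being over all non-zero prime powers. Because the integrand is non-negative, finiteness of this series is equivalent to the asserted $L^{\varrho}$-integrability, and I would establish it by splitting according to the sign of $m$.

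For the terms with $m<0$ one has $p^{m}<1$, so $p^{m\beta\varrho}\le 1$ (as $\beta\ge 0$, $\varrho\ge 1$) and $e^{-t\varrho p^{m\alpha}}\le 1$; each summand is therefore bounded by $\operatorname{vol}(S_{p^{m}})$. The corresponding spheres are pairwise disjoint and all contained in $\{0<\|y\|<1\}=\prod_{p}\mathbb{Z}_{p}\setminus\{0\}$, whose total measure equals $\Phi(1/2)=1$. Hence the contribution of the negative indices is at most $1$, and in particular finite.

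For the terms with $m>0$ the exponential factor must defeat the growth of the sphere volumes, and this is the heart of the matter. Using $\operatorname{vol}(S_{p^{m}})\le\Phi(p^{m})$ and the identification, recorded after \eqref{defphi}, of $\Phi(x)$ with $e^{\psi(x)}$ for $x\ge 2$, I would invoke the Prime Number Theorem in the form $\psi(x)\sim x$ to obtain, for any fixed $\varepsilon>0$, a bound $\Phi(p^{m})\le e^{(1+\varepsilon)p^{m}}$ valid for all sufficiently large $p^{m}$. Each such summand is then dominated by $p^{m\beta\varrho}\,e^{-t\varrho p^{m\alpha}+(1+\varepsilon)p^{m}}$. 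Since every non-zero prime power $p^{m}>1$ is a distinct integer $n\ge 2$, the positive part of the series is majorised by $\sum_{n\ge 2}n^{\beta\varrho}\,e^{-t\varrho n^{\alpha}+(1+\varepsilon)n}$, whose general term decays faster than any geometric sequence because $\alpha>1$ forces the exponent to behave like $-t\varrho n^{\alpha}$ as $n\to\infty$; this series converges.

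Combining the two estimates shows the full series is finite, giving $\|y\|^{\beta}e^{-t\|y\|^{\alpha}}\in L^{\varrho}(\mathbb{A}_{f})$ for every $1\le\varrho<\infty$. The only genuine obstacle is the control of the sphere volumes $\Phi(p^{m})$ for large $m$, which is precisely where arithmetic input on the distribution of prime powers is indispensable; everything else is elementary. It is also decisive that $\alpha>1$: for $\alpha=1$ the factor $e^{-t\varrho p^{m}}$ would beat the volume growth $e^{\psi(p^{m})}\approx e^{p^{m}}$ only for large $t$, whereas $\alpha>1$ secures convergence for every $t>0$.
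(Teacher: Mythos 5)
Your proposal is correct and follows essentially the same route as the paper: the same reduction of the $L^{\varrho}$ question to a numerical series via Lemmas \ref{integral_radi_function} and \ref{Lemma2A}, the same split according to the sign of $m$ with the bound $\Phi(1/2)$ for the negative part, and the same appeal to the Prime Number Theorem (in the form $\ln\Phi(x)=\psi(x)\sim x$) to control the sphere volumes for $m>0$. The only difference is cosmetic: where the paper dominates each positive term by $Mp^{-1-m}$ and sums the double series $\sum_{p}\sum_{m}p^{-1-m}$, you observe that distinct prime powers are distinct integers and compare with $\sum_{n\ge 2}n^{\beta\varrho}e^{-t\varrho n^{\alpha}+(1+\varepsilon)n}$; both devices are elementary and equally valid (in yours, the finitely many terms with $p^{m}$ below the PNT threshold are of course individually finite and harmless).
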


\begin{proof}
It is sufficient to show that for any $t>0$ and $\beta\geq0$
\[
I(t):=\int_{\mathbb{A}_{f}}\left\Vert y\right\Vert ^{\beta}e^{-t\left\Vert
y\right\Vert ^{\alpha}}dy_{\mathbb{A}_{f}}<+\infty.
\]

According to Lemmas \ref{integral_radi_function} and \ref{Lemma2A}
\[
\int_{\mathbb{A}_{f}} \left\Vert y\right\Vert ^{\beta}e^{-t\left\Vert
y\right\Vert ^{\alpha}} \,dy_{\mathbb{A}_{f}} = \sum_{p^{m},\ m\ne0}
p^{m\beta} e^{-tp^{m\alpha}} \bigl(\Phi(p^{m}) - \Phi(p^{m}_{-})\bigr),
\]
thus we have to prove the convergence of the latter series. We consider two cases:
$m<0$ and $m>0$.

If $m<0$, then $p^{m\beta} e^{-tp^{m\alpha}}\le1$ and
\begin{multline*}
S_{-}(t):=\sum_{p^{m},\ m< 0} p^{m\beta} e^{-tp^{m\alpha}} \bigl(\Phi(p^{m}) -
\Phi(p^{m}_{-})\bigr)\\
\le\sum_{p^{m},\ m< 0} \bigl(\Phi(p^{m}) - \Phi(p^{m}_{-})\bigr) = \Phi(1/2).
\end{multline*}

Let $m>0$. We have
\[
S_{+}(t) := \sum_{p^{m},\ m> 0} p^{m\beta} e^{-tp^{m\alpha}} \bigl(\Phi(p^{m})
- \Phi(p^{m}_{-})\bigr) \le\sum_{p^{m},\ m> 0} p^{m\beta} e^{-tp^{m\alpha}}
\Phi(p^{m}).
\]
We recall that the Prime Number Theorem is equivalent to
\[
\ln\Phi(x) = \psi\left(  x\right)  \sim x,\quad x\to\infty,
\]
see e.g. \cite{D}, hence there exists a constant $C$ such that
\[
S_{+}(t)\le\sum_{p}\sum_{m=1}^{\infty}p^{m\beta} e^{-tp^{m\alpha} +Cp^{m}}.
\]
We want to show the existence of a positive constant $M=M\left(  \beta\right)
$ such that
\[
p^{m \beta}e^{-tp^{m\alpha}+Cp^{m}}\leq Mp^{-1-m}\quad\text{ for all }%
m\ge1\ \text{and prime }p,
\]
or equivalently that $p^{1+m\left(  \beta+1\right)  }e^{-tp^{m\alpha}+Cp^{m}%
}\leq M$. Since $p^{1+m\left(  \beta+1\right)  }\leq e^{(\beta+1) p^{m}}$ for
all $m\ge1$ and $p\geq2$, consider $e^{\left(  C+\beta+1\right)
p^{m}-tp^{\alpha m}}$. This expression is less than or equal to $1$ when
$p^{m}\geq\left(  \frac{C+\beta+1}{t}\right)  ^{\frac{1}{\alpha-1}}$ and hence
there exist only a finite number of pairs $\left(  p,m\right)  $ for which it
can be greater than $1$, so the announced constant exists. Therefore
\[
S_{+}(t)\le M\sum_{p} \sum_{m=1}^{\infty} p^{-1-m}\le2M \sum_{p}
p^{-2}<+\infty.\qedhere
\]

\end{proof}

\begin{definition}
\label{DefAdelicheathkernel}We define \textit{the adelic heat kernel} on
$\mathbb{A}_{f}$ as%
\begin{equation}
Z\left(  x,t;\alpha\right)  :=Z\left(  x,t\right)  =\int_{\mathbb{A}_{f}}%
\chi\left(  \xi\cdot x\right)  e^{-t\left\Vert \xi\right\Vert ^{\alpha}}%
d\xi_{\mathbb{A}_{f}},\quad x\in\mathbb{A}_{f},\ t>0,\ \alpha>1.
\label{AdelicHeat}%
\end{equation}

\end{definition}

By Proposition \ref{pro1} the integral is convergent.
When considering $Z\left(  x,t\right)  $ as a function of $x$ for $t$ fixed we
will write $Z_{t}\left(  x\right)  $. By applying Theorem
\ref{LemmaFourierRadial} to the function $e^{-t\Vert\xi\Vert^{\alpha}}$ we obtain
the following result.

\begin{proposition}
\label{pro2}The following representation holds for the heat kernel:
\begin{equation}
Z\left(  x,t\right)  =\sum_{q^{j}<\|x\|^{-1},\ j\ne0 }\Phi\left(
q^{j}\right)  \Big(e^{-tq^{j\alpha}}-e^{-t(q_{+}^{j})^{\alpha}}\Big)\quad
\text{ for}\ t>0,\ x\in\mathbb{A}_{f}, \label{formula}%
\end{equation}
where $q^{j}$ runs through all non-zero powers of prime numbers; functions
$\|x\|$, $\Phi(x)$ and $q_{+}^{j}$ are defined by \eqref{Norm}, \eqref{defphi}
and \eqref{nplusdef}. For $x=0$ the expression $\|0\|^{-1}$ in the
representation means $\infty$.
\end{proposition}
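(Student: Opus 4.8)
The plan is to recognize the heat kernel $Z(x,t)$ as the inverse Fourier transform of a radial $L^1$-function and then invoke Theorem \ref{LemmaFourierRadial} essentially verbatim. Comparing the definition \eqref{AdelicHeat} with the Fourier conventions of Subsection \ref{SubSectFourier}, the integral $\int_{\mathbb{A}_f} \chi(\xi\cdot x)\, e^{-t\|\xi\|^\alpha}\, d\xi_{\mathbb{A}_f}$ is precisely $\check{f}(x) = \bigl(\mathcal{F}^{-1}_{\xi\to x} f\bigr)(x)$ for the choice $f(\xi) = e^{-t\|\xi\|^\alpha}$, since the inverse transform carries the character $\chi(+\xi\cdot x)$ rather than $\chi(-\xi\cdot x)$.

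First I would verify the two hypotheses of Theorem \ref{LemmaFourierRadial}. Radiality is immediate: because $\|\xi\|$ is constant on each sphere $S_r$, the function $f(\xi) = e^{-t\|\xi\|^\alpha}$ restricts to the constant $e^{-tr^\alpha}$ on $S_r$, so $f = f(\|\xi\|)$ with $f(r) = e^{-tr^\alpha}$ (the value at the single point $\xi = 0$ is irrelevant, being of measure zero). Integrability is exactly the content of Proposition \ref{pro1} specialized to $\beta = 0$, which guarantees $f \in L^1(\mathbb{A}_f)$; this is where the Prime Number Theorem enters, but that work is already carried out in the preceding proposition.

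With both hypotheses in hand, the representation \eqref{formula} follows by substituting the sphere values $f(q^j) = e^{-tq^{j\alpha}}$ and $f(q_+^j) = e^{-t(q_+^j)^\alpha}$ into \eqref{FTradial}, keeping the summation restricted to non-zero prime powers $q^j < \|x\|^{-1}$ with $j \neq 0$, and adopting the convention $\|0\|^{-1} = \infty$ for $x = 0$. The only points requiring care are the bookkeeping conventions, namely matching the sign in the additive character so that $Z = \check{f}$ rather than $\widehat{f}$, and confirming the $L^1$-hypothesis, rather than any new analytic estimate. Consequently I expect no genuine obstacle here: the substantive work resides entirely in Theorem \ref{LemmaFourierRadial} and Proposition \ref{pro1}, both already established, and this proposition is a direct corollary of them.
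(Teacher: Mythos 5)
Your proposal is correct and coincides with the paper's own argument: the paper proves Proposition \ref{pro2} precisely by noting that the integral \eqref{AdelicHeat} converges by Proposition \ref{pro1} (the $\beta=0$ case) and then applying Theorem \ref{LemmaFourierRadial} to the radial function $e^{-t\Vert\xi\Vert^{\alpha}}$, exactly as you describe. Your additional remarks on the sign convention in the character (so that $Z_t=\mathcal{F}^{-1}\bigl(e^{-t\Vert\cdot\Vert^{\alpha}}\bigr)$ matches $\check{f}$ in \eqref{FTradial}) and on the convention $\Vert 0\Vert^{-1}=\infty$ are consistent with the paper's conventions and require no further work.
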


\begin{lemma}
\label{HeatKernelEstimate} The following estimate holds for the heat kernel:
\begin{equation}
\label{EqHKestimate}Z(x,t) \le2t \|x\|^{-\alpha} \Phi\bigl(\|x\|^{-1}%
_{-}\bigr),\quad x\in\mathbb{A}_{f}\setminus\{0\},\ t>0.
\end{equation}

\end{lemma}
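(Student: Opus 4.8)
The plan is to start from the explicit series \eqref{formula} of Proposition \ref{pro2} and reorganize it into a single telescoping sum to which a crude exponential estimate can then be applied. Write $N:=\|x\|^{-1}$, which by Remark \ref{Rmk Metric Range} is a nonzero power of a prime (so that $N_-=\|x\|^{-1}_-$), and let $s$ range over the nonzero powers of primes, ordered by the operators `$_{-}$', `$_{+}$'. Since $e^{-t\|\xi\|^{\alpha}}\in L^{1}(\mathbb{A}_{f})$ by Proposition \ref{pro1}, the argument in the proof of Theorem \ref{LemmaFourierRadial} shows that $\sum_{s}\Phi(s)e^{-ts^{\alpha}}$ converges absolutely, so all the rearrangements below are legitimate.

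First I would split \eqref{formula} as $Z(x,t)=\sum_{s<N}\Phi(s)e^{-ts^{\alpha}}-\sum_{s<N}\Phi(s)e^{-ts_{+}^{\alpha}}$ and reindex the second sum by $u=s_{+}$. Using the identity \eqref{identity} in the form $\Phi(s_-)=\Phi(s)/q$ and the fact that $s<N$ corresponds exactly to $u=s_{+}\le N$ (which isolates a single boundary term at $u=N$), the two sums combine into
\[
Z(x,t)=\sum_{s<N}\bigl(\Phi(s)-\Phi(s_-)\bigr)e^{-ts^{\alpha}}-\Phi(N_-)e^{-tN^{\alpha}}.
\]
By Lemma \ref{Lemma2A} one has $\Phi(s)-\Phi(s_-)=\operatorname{vol}(S_s)$ and $\sum_{s<N}\operatorname{vol}(S_s)=\operatorname{vol}(B_{N_-})=\Phi(N_-)$ (the point $0$ being Haar-null), so the boundary term is absorbed and I would arrive at the clean identity
\[
Z(x,t)=\sum_{s<N}\operatorname{vol}(S_s)\bigl(e^{-ts^{\alpha}}-e^{-tN^{\alpha}}\bigr),
\]
in which every summand is nonnegative because $s<N$.

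Finally I would bound the exponential difference by $e^{-ts^{\alpha}}-e^{-tN^{\alpha}}=\int_{ts^{\alpha}}^{tN^{\alpha}}e^{-u}\,du\le t(N^{\alpha}-s^{\alpha})\le tN^{\alpha}$, whence
\[
Z(x,t)\le tN^{\alpha}\sum_{s<N}\operatorname{vol}(S_s)=t\|x\|^{-\alpha}\Phi\bigl(\|x\|^{-1}_-\bigr).
\]
This already yields the asserted inequality with room to spare (constant $1$ rather than $2$), so the factor $2$ in \eqref{EqHKestimate} leaves a comfortable margin.

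The main obstacle is the bookkeeping in the reindexing step: one must track the shift $s\mapsto s_{+}$ carefully, verify that the summation ranges match up exactly (that $\{s<N\}$ maps onto $\{u\le N\}$ and produces precisely the one boundary term at $u=N$), and correctly apply \eqref{identity} to rewrite $\Phi(s)/q$ as $\Phi(s_-)$. Once the telescoped identity for $Z(x,t)$ is established, the remaining estimate is entirely routine.
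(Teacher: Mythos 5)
Your argument is correct, and it in fact proves a slightly stronger statement than \eqref{EqHKestimate}. The paper also starts from Proposition \ref{pro2}, but it estimates term by term: each difference is bounded via $e^{-tq^{j\alpha}}-e^{-t(q_{+}^{j})^{\alpha}}\le 1-e^{-t(q_{+}^{j})^{\alpha}}\le t(q_{+}^{j})^{\alpha}\le t\Vert x\Vert^{-\alpha}$, and then $\Phi(q^{j})$ is traded for the volume difference through the inequality $\tfrac{1}{2}\Phi(q^{j})\le\Phi(q^{j})-\Phi(q^{j}_{-})$ before telescoping; that inequality is precisely where the factor $2$ in \eqref{EqHKestimate} comes from. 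You instead carry out the summation by parts exactly, before estimating anything: your reindexing $s\mapsto s_{+}$ is indeed a bijection from $\{s<N\}$ onto $\{u\le N\}$ (in your notation $N=\Vert x\Vert^{-1}$), the telescoping sum $\sum_{s<N}\bigl(\Phi(s)-\Phi(s_{-})\bigr)$ equals $\Phi(N_{-})$ because $\Phi(s)\to 0$ as $s\to 0$ through prime powers, and your appeal to Proposition \ref{pro1} together with the rearrangement argument in the proof of Theorem \ref{LemmaFourierRadial} is exactly the right justification for reordering. The resulting identity
\[
Z(x,t)=\sum_{s<N}\operatorname{vol}(S_{s})\bigl(e^{-ts^{\alpha}}-e^{-tN^{\alpha}}\bigr)
\]
is exact, so the crude bound $e^{-ts^{\alpha}}-e^{-tN^{\alpha}}\le tN^{\alpha}$ then yields the estimate with constant $1$ instead of $2$. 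Beyond the sharper constant, the identity has independent value: it displays $Z(x,t)$ as a sum of manifestly nonnegative terms, giving another proof of Corollary \ref{cor1}(i); the paper's route is shorter but lossy, yours is a little more bookkeeping for a cleaner and stronger conclusion.
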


\begin{proof}
From the inequality $1-e^{-x}\le x$ valid for $x\ge0$ we obtain
\[
e^{-tq^{j\alpha}}-e^{-t(q_{+}^{j})^{\alpha}} \le1-e^{-t(q_{+}^{j})^{\alpha}%
}\le t (q_{+}^{j})^{\alpha}.
\]
Then with the use of the inequality $\frac12 \Phi(q^{j}) \le\Phi(q^{j})
-\Phi(q^{j}_{-})$ we have
\begin{multline*}
Z(x,t) = \sum_{q^{j}<\|x\|^{-1}}\Phi( q^{j}) \Big(e^{-tq^{j\alpha}%
}-e^{-t(q_{+}^{j})^{\alpha}}\Big) \le t\sum_{q^{j}<\|x\|^{-1}}\Phi( q^{j})
(q_{+}^{j})^{\alpha}\\
\le2 t\|x\|^{-\alpha} \sum_{q^{j}<\|x\|^{-1}}\bigl(\Phi( q^{j}) - \Phi
(q^{j}_{-})\bigr) = 2t \|x\|^{-\alpha} \Phi(\|x\|^{-1}_{-}).
\end{multline*}
\vskip-1.5em
\end{proof}

\begin{corollary}
\label{cor1}With the above notation the following assertions hold:

\begin{itemize}
\item[(i)] $Z\left(  x,t\right)  \geq0$ for $t>0$;

\item[(ii)] $\lim_{t\rightarrow0+}Z\left(  x,t\right)  =0$ for any
$x\in\mathbb{A}_{f}\setminus\left\{  0\right\}  $;

\item[(iii)] For any $\epsilon>0$ there exists a constant $C=C(\epsilon)$ such
that for any $t>0$
\begin{equation}
\label{EqHKintEstimate}\int_{\|y\|>\epsilon} Z_{t}(y)\, dy_{\mathbb{A}_{f}}
\le C t <+\infty.
\end{equation}

\end{itemize}
\end{corollary}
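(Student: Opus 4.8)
The three assertions all flow from the pointwise bound in Lemma \ref{HeatKernelEstimate}, so the plan is to combine that estimate with the series representation \eqref{formula} and the radial integration formula of Lemma \ref{integral_radi_function}. For part (i) I would argue directly from \eqref{formula}: since $q^{j}<q_{+}^{j}$ for every non-zero power of a prime, we have $q^{j\alpha}<(q_{+}^{j})^{\alpha}$, hence $e^{-tq^{j\alpha}}-e^{-t(q_{+}^{j})^{\alpha}}\ge 0$ for $t>0$; as $\Phi(q^{j})>0$, every summand in \eqref{formula} is non-negative and therefore $Z(x,t)\ge 0$. (Equivalently, this is immediate from Corollary \ref{FTNonNegative}, because $e^{-t\|\xi\|^{\alpha}}$ is a real-valued non-increasing radial function of $\|\xi\|$ and $Z_t=\mathcal{F}^{-1}_{\xi\to x}\,e^{-t\|\xi\|^{\alpha}}$.)

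For part (ii) I would fix $x\in\mathbb{A}_{f}\setminus\{0\}$ and sandwich $Z$ using part (i) and Lemma \ref{HeatKernelEstimate}:
\[
0\le Z(x,t)\le 2t\,\|x\|^{-\alpha}\,\Phi\bigl(\|x\|^{-1}_{-}\bigr).
\]
For fixed $x\ne 0$ the quantity $\|x\|^{-\alpha}\Phi(\|x\|^{-1}_{-})$ is a finite constant, so letting $t\to 0+$ and applying the squeeze theorem yields $\lim_{t\to 0+}Z(x,t)=0$.

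For part (iii) the plan is to integrate the estimate of Lemma \ref{HeatKernelEstimate}. The bound $\|y\|^{-\alpha}\Phi(\|y\|^{-1}_{-})$ is radial, so Lemma \ref{integral_radi_function} gives
\[
\int_{\|y\|>\epsilon}Z_{t}(y)\,dy_{\mathbb{A}_{f}}\le 2t\sum_{\substack{p^{m}>\epsilon\\ m\neq 0}}(p^{m})^{-\alpha}\,\Phi\bigl((p^{m})^{-1}_{-}\bigr)\bigl(\Phi(p^{m})-\Phi(p^{m}_{-})\bigr),
\]
and it remains to show the series is finite, after which $C(\epsilon)$ is twice its value. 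I would split the sum by the sign of $m$. For $m<0$ the condition $p^{m}>\epsilon$ forces $p^{|m|}<1/\epsilon$, leaving only finitely many pairs $(p,m)$, so this part is a finite sum. For $m>0$ I would use \eqref{Phi(1/x)} and \eqref{identity}: from $(p^{m})^{-1}=p^{-m}$ one gets $\Phi\bigl((p^{m})^{-1}_{-}\bigr)=\Phi(p^{-m})/p=1/\Phi(p^{m})$, while $\Phi(p^{m})-\Phi(p^{m}_{-})=\Phi(p^{m})(1-1/p)$, so the $\Phi(p^{m})$-factors cancel:
\[
(p^{m})^{-\alpha}\,\Phi\bigl((p^{m})^{-1}_{-}\bigr)\bigl(\Phi(p^{m})-\Phi(p^{m}_{-})\bigr)=p^{-m\alpha}\bigl(1-\tfrac1p\bigr)\le p^{-m\alpha}.
\]
Summing, $\sum_{p}\sum_{m\ge 1}p^{-m\alpha}\le(1-2^{-\alpha})^{-1}\sum_{p}p^{-\alpha}<+\infty$ since $\alpha>1$, which establishes \eqref{EqHKintEstimate}.

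The main obstacle is the bookkeeping in part (iii): one must carefully track the $(\cdot)_{-}$ operator through the identities \eqref{Phi(1/x)} and \eqref{identity} to see that the $\Phi$-factors cancel. The welcome consequence of this cancellation is that convergence reduces to the elementary fact that $\sum_{p}p^{-\alpha}$ converges for $\alpha>1$; in particular, unlike the proof of Proposition \ref{pro1}, part (iii) requires no appeal to the Prime Number Theorem.
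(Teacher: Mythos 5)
Your proposal is correct and follows essentially the same route as the paper: parts (i) and (ii) read off directly from \eqref{formula} and \eqref{EqHKestimate}, and part (iii) integrates the estimate of Lemma \ref{HeatKernelEstimate} over the spheres $S_{p^m}$ with $p^m>\epsilon$ via Lemma \ref{integral_radi_function}, using \eqref{Phi(1/x)} and \eqref{identity} to cancel the $\Phi$-factors and reduce everything to $2t\sum_{p^m>\epsilon}p^{-m\alpha}<+\infty$. The only cosmetic difference is that you keep the exact sphere volume $\Phi(p^m)-\Phi(p^m_-)$ (hence the harmless factor $1-1/p$) where the paper bounds $\operatorname{vol}(S_{p^k})\le\Phi(p^k)$, and your closing observation that no appeal to the Prime Number Theorem is needed here matches the paper's argument.
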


\begin{proof}
The statements (i) and (ii) immediately follows from formulas (\ref{formula})
and (\ref{EqHKestimate}), respectively.

(iii) By Proposition \ref{pro2}, Lemma \ref{integral_radi_function} and
\eqref{EqHKestimate} we have
\begin{multline*}
\int_{\|y\|>\epsilon} Z_{t}(y)dy_{\mathbb{A}_{f}} = \sum_{p^{k}>\epsilon
}\operatorname{vol}(S_{p^{k}})\cdot\sum_{q^{j}<p^{-k}}\Phi( q^{j})
\bigl(e^{-tq^{j\alpha}}-e^{-t(q_{+}^{j})^{\alpha}}\bigr)\\
\le\sum_{p^{k}>\epsilon} \Phi(p^{k}) \cdot2t p^{-k\alpha}\Phi(p^{-k}%
_{-})=2t\cdot\sum_{p^{k}>\epsilon}p^{-k\alpha}<+\infty,
\end{multline*}
where we have used \eqref{Phi(1/x)} and \eqref{identity}.
\end{proof}

\begin{theorem}
\label{Theo1} The adelic heat kernel on $\mathbb{A}_{f}$ satisfies the following:

\begin{itemize}
\item[(i)] $Z\left(  x,t\right)  \geq0$ for any $t>0$;

\item[(ii)] $\int_{\mathbb{A}_{f}} Z_{t}\left(  x\right)  dx_{\mathbb{A}_{f}%
}=1$ for any $t>0$;

\item[(iii)] $Z_{t}(x)\in L^{1}(\mathbb{A}_{f})$ for any $t>0$;

\item[(iv)] $Z_{t}\left(  x\right)  \ast Z_{t^{\prime}}\left(  x\right)
=Z_{t+t^{\prime}}\left(  x\right)  $ for any $t$, $t^{\prime}>0$;

\item[(v)] $\lim_{t\rightarrow0+}Z_{t}\left(  x\right)  =\delta\left(
x\right)  $ in $S^{\prime}\left(  \mathbb{A}_{f}\right)  $;

\item[(vi)] $Z_{t}\left(  x\right)  $ is a uniformly continuous function for
any fixed $t>0$;

\item[(vii)] $Z(x,t)$ is uniformly continuous in $t$, i.e. $Z(x,t)\in
C((0,\infty), C(\mathbb{A}_{f}))$ or $\lim_{t^{\prime}\to t}\max
_{x\in\mathbb{A}_{f}}|Z(x,t) - Z(x,t^{\prime})|=0$ for any $t>0$.
\end{itemize}
\end{theorem}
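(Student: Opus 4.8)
The plan is to treat the seven assertions as a chain of Fourier--analytic verifications, leaning on the integrability estimate of Proposition \ref{pro1}, the explicit series of Proposition \ref{pro2}, and the bounds in Lemma \ref{HeatKernelEstimate} and Corollary \ref{cor1}. Throughout write $g_t(\xi):=e^{-t\|\xi\|^{\alpha}}$, so that $Z_t=\mathcal{F}^{-1}g_t$ and $g_t\in L^{1}(\mathbb{A}_f)$ by Proposition \ref{pro1} with $\beta=0$. Assertion (i) is exactly Corollary \ref{cor1}(i). For (iii) I would first note, from Proposition \ref{pro2}, that the summands $\Phi(q^{j})\bigl(e^{-tq^{j\alpha}}-e^{-t(q_+^{j})^{\alpha}}\bigr)$ are non-negative and that the index set $\{q^{j}<\|x\|^{-1}\}$ only grows as $\|x\|$ decreases; hence $0\le Z_t(x)\le Z_t(0)=\int_{\mathbb{A}_f}g_t\,d\xi<\infty$ for every $x$. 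Splitting $\mathbb{A}_f=B_\epsilon\sqcup\{\|x\|>\epsilon\}$ for a fixed $\epsilon$, I bound $\int_{B_\epsilon}Z_t\le Z_t(0)\,\Phi(\epsilon)$ and invoke Corollary \ref{cor1}(iii) for the tail, giving $Z_t\in L^{1}(\mathbb{A}_f)$. Once $Z_t\in L^{1}$, assertion (ii) follows from Fourier inversion: since $g_t\in L^{1}$ and $Z_t=\mathcal{F}^{-1}g_t\in L^{1}$, the relation $\widehat{\widehat{f}}(\xi)=f(-\xi)$ yields $\mathcal{F}Z_t=g_t$ as $L^{1}$ functions; both sides are continuous, hence equal everywhere, so $\int_{\mathbb{A}_f}Z_t\,dx=\widehat{Z_t}(0)=g_t(0)=1$.

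For (iv) I would compute on the Fourier side: as $Z_t,Z_{t'}\in L^{1}$, their convolution lies in $L^{1}$ and $\mathcal{F}(Z_t\ast Z_{t'})=\widehat{Z_t}\,\widehat{Z_{t'}}=g_t g_{t'}=g_{t+t'}=\widehat{Z_{t+t'}}$; injectivity of the Fourier transform on $L^{1}$ gives $Z_t\ast Z_{t'}=Z_{t+t'}$. For (v), given $\phi\in\mathcal{S}(\mathbb{A}_f)$ with parameter of constancy $\ell$, I would write $\int Z_t\phi\,dx=\phi(0)\int Z_t\,dx+\int Z_t(x)\bigl(\phi(x)-\phi(0)\bigr)\,dx$; by (ii) the first term equals $\phi(0)$, while local constancy forces $\phi(x)-\phi(0)=0$ on $B_\ell$, so the second is bounded by $2\|\phi\|_\infty\int_{\|x\|>\ell}Z_t\le 2\|\phi\|_\infty\,C(\ell)\,t\to0$ by Corollary \ref{cor1}(iii). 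Thus $Z_t\to\delta$ in $\mathcal{S}'(\mathbb{A}_f)$.

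Assertion (vi) is the standard fact that the Fourier transform of an $L^{1}$ function on a locally compact abelian group is uniformly continuous, applied to $Z_t=\mathcal{F}^{-1}g_t$ (see \cite{Rudin}); concretely $|Z_t(x)-Z_t(x')|\le\int_{\mathbb{A}_f}|g_t(\xi)|\,|\chi(\xi(x-x'))-1|\,d\xi$, and for $\|x-x'\|$ small the character equals $1$ on an arbitrarily large ball, leaving an arbitrarily small tail by integrability of $g_t$. Finally, for (vii) the key observation is the $x$-uniform bound
\[
\max_{x\in\mathbb{A}_f}|Z(x,t)-Z(x,t')|\le\int_{\mathbb{A}_f}\bigl|e^{-t\|\xi\|^{\alpha}}-e^{-t'\|\xi\|^{\alpha}}\bigr|\,d\xi=\|g_t-g_{t'}\|_{1};
\]
for $t'$ in a fixed compact subinterval of $(0,\infty)$ containing $t$ the integrand is dominated by $2e^{-(t/2)\|\xi\|^{\alpha}}\in L^{1}$ (Proposition \ref{pro1}) and tends to $0$ pointwise, so dominated convergence gives $\|g_t-g_{t'}\|_{1}\to0$, i.e. $Z\in C\bigl((0,\infty),C(\mathbb{A}_f)\bigr)$.

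Given the preparatory lemmas, the individual steps are largely mechanical; the one point demanding care is the interlock between (ii) and (iii). The normalization $\int_{\mathbb{A}_f}Z_t=1$ is cleanest via Fourier inversion, but inversion presupposes $Z_t\in L^{1}$, so I must establish integrability \emph{directly} from the split $B_\epsilon\sqcup\{\|x\|>\epsilon\}$ first, rather than circularly deducing it from $\|Z_t\|_{1}=\widehat{Z_t}(0)$.
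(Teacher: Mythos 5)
Your proposal is correct, but it resolves the delicate points by a genuinely different route. For the pair (ii)--(iii), the paper goes in the opposite order: it deduces (ii) from the Fourier inversion formula applied to $e^{-t\Vert\xi\Vert^{\alpha}}\in L^{1}\cap L^{2}$ (at a stage where only $Z_{t}\in C\cap L^{2}$ is known, so this tacitly uses the positivity of $Z_{t}$ to make the inversion/normalization legitimate), and then obtains (iii) as an immediate consequence of (i) and (ii). You instead prove (iii) \emph{first}, directly from the series of Proposition \ref{pro2} (monotonicity of the index set giving $0\le Z_{t}(x)\le Z_{t}(0)<\infty$), the volume formula for $B_{\epsilon}$, and the tail bound of Corollary \ref{cor1}(iii), and only then invoke the clean $L^{1}$--$L^{1}$ inversion theorem to get (ii); this removes the circularity you correctly flagged and makes the argument more self-contained, at the cost of a slightly longer proof. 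For (v) the approaches also differ: the paper stays on the Fourier side, pairing $e^{-t\Vert\xi\Vert^{\alpha}}$ with $\mathcal{F}^{-1}f$ and using dominated convergence over the compact support of the test function, whereas you run a classical approximate-identity argument in the $x$-variable, splitting off $\phi(0)\int Z_{t}\,dx_{\mathbb{A}_{f}}=\phi(0)$ and killing the remainder via local constancy on $B_{\ell}$ together with $\int_{\Vert x\Vert>\ell}Z_{t}\,dx_{\mathbb{A}_{f}}\le C(\ell)\,t$; both are valid, and yours makes transparent \emph{why} $Z_{t}$ concentrates at the origin, while the paper's needs fewer of the preparatory estimates. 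Parts (i), (iv) and (vi) coincide with the paper's proofs, and for (vii) your dominated-convergence bound on $\Vert g_{t}-g_{t'}\Vert_{1}$ is an equivalent substitute for the paper's mean-value-theorem estimate $|Z(x,t)-Z(x,t')|\le|t-t'|\int_{\mathbb{A}_{f}}\Vert\xi\Vert^{\alpha}e^{-t_{0}\Vert\xi\Vert^{\alpha}}d\xi_{\mathbb{A}_{f}}$, which draws on Proposition \ref{pro1} with $\beta=\alpha$.
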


\begin{proof}
(i) It follows from Corollary \ref{cor1}.

(ii) For any $t>0$ the function $e^{-t\left\Vert \xi\right\Vert ^{\alpha}}$ is
continuous at $\xi=0$ and by Proposition \ref{pro1} we have $e^{-t\left\Vert
\xi\right\Vert ^{\alpha}}\in L^{1}\left(  \mathbb{A}_{f}\right)  \cap
L^{2}\left(  \mathbb{A}_{f}\right)  $. Then $Z_{t}\left(  x\right)  \in
C\left(  \mathbb{A}_{f},\mathbb{R}\right)  \cap L^{2}\left(  \mathbb{A}%
_{f}\right)  $. Now the statement follows from the inversion formula for the
Fourier transform on $\mathbb{A}_{f}$.

(iii) The statement follows from (i) and (ii).

(iv) By the previous property $Z_{t}\left(  x\right)  \in L^{1}\left(
\mathbb{A}_{f}\right)  $ for any $t>0$. Then%
\[
Z_{t}\left(  x\right)  \ast Z_{t^{\prime}}\left(  x\right)  =\mathcal{F}%
_{\xi\rightarrow x}^{-1}\left(  e^{-t\left\Vert \xi\right\Vert ^{\alpha}%
}e^{-t^{\prime}\left\Vert \xi\right\Vert ^{\alpha}}\right)  =\mathcal{F}%
_{\xi\rightarrow x}^{-1}\left(  e^{-\left(  t+t^{\prime}\right)  \left\Vert
\xi\right\Vert ^{\alpha}}\right)  =Z_{t+t^{\prime}}\left(  x\right)  .
\]

(v) Since $e^{-t\left\Vert \xi\right\Vert ^{\alpha}}\in C\left(
\mathbb{A}_{f},\mathbb{R}\right)  \cap L^{1}\left(  \mathbb{A}_{f}\right)  $,
cf. Proposition \ref{pro1}, the scalar product
\[
\big(e^{-t\left\Vert \xi\right\Vert ^{\alpha}},f\left(  \xi\right)
\big)=\int_{\mathbb{A}_{f}}e^{-t\left\Vert \xi\right\Vert ^{\alpha}}%
\overline{f\left(  \xi\right)  }d\xi_{\mathbb{A}_{f}}\quad\text{ with }%
f\in\mathcal{S}\left(  \mathbb{A}_{f}\right)
\]
defines a distribution on $\mathbb{A}_{f}$. Since the support of $f$ is
compact, cf. Lemma \ref{Lemma3}, and $e^{-t\left\Vert \xi\right\Vert ^{\alpha
}}\in L^{1}\left(  \mathbb{A}_{f}\right)  $, the Dominated Convergence Lemma
with the characteristic function of the support of $f$ as a dominant function
implies
\[
\lim_{t\rightarrow0{+}}\left(  e^{-t\left\Vert \xi\right\Vert ^{\alpha}%
},f\left(  \xi\right)  \right)  =\left(  1,f\right)
\]
and then, as $\mathcal{F}\left(  \mathcal{S}\left(  \mathbb{A}_{f}\right)
\right)  =\mathcal{S}\left(  \mathbb{A}_{f}\right)  $, we have
\[
\lim_{t\rightarrow0{+}}\big(Z(x,t),f\big)=\lim_{t\rightarrow0{+}}\left(
\mathcal{F}_{\xi\rightarrow x}^{-1}\big(e^{-t\left\Vert \xi\right\Vert
^{\alpha}}\big),f\left(  x\right)  \right)  =\left(  1,\mathcal{F}%
_{\xi\rightarrow x}^{-1}f\right)  =\left(  \delta,f\right)  .
\]

(vi) Since $Z_{t}(x)=\mathcal{F}_{\xi\rightarrow x}^{-1}\left(
e^{-t\left\Vert \xi\right\Vert ^{\alpha}}\right)  $ and $e^{-t\left\Vert
\xi\right\Vert ^{\alpha}}\in L^{1}\left(  \mathbb{A}_{f}\right)  $ for $t>0$,
$Z_{t}(x)$ is uniformly continuous in $x$ for any fixed $t>0$.

(vii) Suppose that $t<t^{\prime}$. By the Mean value theorem $e^{-t\left\Vert
\xi\right\Vert ^{\alpha}}-e^{-t^{\prime}\left\Vert \xi\right\Vert ^{\alpha}} =
(t^{\prime}-t)\|\xi\|^{\alpha}e^{-t(\|\xi\|)\left\Vert \xi\right\Vert
^{\alpha}}$, where $t<t(\|\xi\|)<t^{\prime}$. Hence
\begin{multline*}
|Z(x,t) - Z(x,t^{\prime})| = \biggl|\int_{\mathbb{A}_{f}} \chi\left(  \xi\cdot
x\right)  \Bigl(e^{-t\left\Vert \xi\right\Vert ^{\alpha}}-e^{-t^{\prime
}\left\Vert \xi\right\Vert ^{\alpha}}\Bigr) d\xi_{\mathbb{A}_{f}}\biggr|\\
= |t-t^{\prime}| \biggl|\int_{\mathbb{A}_{f}} \chi\left(  \xi\cdot x\right)
\|\xi\|^{\alpha}e^{-t(\|\xi\|)\left\Vert \xi\right\Vert ^{\alpha}}
d\xi_{\mathbb{A}_{f}}\biggr| \le|t-t^{\prime}| \int_{\mathbb{A}_{f}}
\|\xi\|^{\alpha}e^{-t_{0}\left\Vert \xi\right\Vert ^{\alpha}} d\xi
_{\mathbb{A}_{f}},
\end{multline*}
for some $0<t_{0}<t,t^{\prime}$. Now the statement follows from Proposition
\ref{pro1}.
\end{proof}

\section{Markov Processes on $\mathbb{A}_{f}$}

\label{SectMarkov} Along this section we consider $\left(  \mathbb{A}_{f}%
,\rho\right)  $ as the complete non-Archimedean metric space and use the
terminology, notation and results of \cite[Chapters Two, Three]{Dyn}. Let
$\mathcal{B}$ denote the $\sigma$-algebra of the Borel sets of $\mathbb{A}%
_{f}$. Then $\left(  \mathbb{A}_{f},\mathcal{B},dx_{\mathbb{A}_{f}}\right)  $
is a measure space. Let $\mathbf{1}_{B}\left(  x\right)  $ denote the
characteristic function of a set $B\in\mathcal{B}$.

We assume along this section that $\alpha>1$ and set
\[
p\left(  t,x,y\right)  :=Z\left(  x-y,t\right)  \qquad\text{for }%
t>0,\,x,y\in\mathbb{A}_{f},
\]
and
\[
P\left(  t,x,B\right)  :=%
\begin{cases}
\int_{B} p\left(  t,x,y\right)  dy_{\mathbb{A}_{f}}, & \text{for }
t>0,\,x\in\mathbb{A}_{f},B\in\mathcal{B}\\
\mathbf{1}_{B}\left(  x\right)  , & \text{for } t=0.
\end{cases}
\]

\begin{lemma}
\label{lema1}With the above notation the following assertions hold:

\begin{itemize}
\item[(i)] $p\left(  t,x,y\right)  $ is a normal transition density;

\item[(ii)] $P\left(  t,x,B\right)  $ is a normal transition function.
\end{itemize}
\end{lemma}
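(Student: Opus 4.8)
The plan is to verify directly the defining axioms of a normal transition density and a normal transition function in the sense of \cite[Chapters Two, Three]{Dyn}, extracting each required property from the corresponding item of Theorem \ref{Theo1}. For the density $p(t,x,y)=Z(x-y,t)$ one must check non-negativity, the normalization $\int_{\mathbb{A}_{f}}p(t,x,y)\,dy_{\mathbb{A}_{f}}=1$, measurability, and the Chapman--Kolmogorov equation. For the transition function $P(t,x,B)$ one must additionally check that $P(t,x,\cdot)$ is a probability measure for each fixed $t,x$, that $P(t,\cdot,B)$ is measurable, that the initial condition $P(0,x,B)=\mathbf{1}_{B}(x)$ holds, and that the semigroup identity holds in integrated form. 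The word \emph{normal} in both cases refers precisely to the total mass being equal to $1$.

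First I would dispose of the elementary properties of the density. Non-negativity $p(t,x,y)\ge 0$ is immediate from Theorem \ref{Theo1}(i). The normalization is Theorem \ref{Theo1}(ii) combined with the translation invariance of the Haar measure, which yields $\int_{\mathbb{A}_{f}}Z(x-y,t)\,dy_{\mathbb{A}_{f}}=\int_{\mathbb{A}_{f}}Z_{t}(y)\,dy_{\mathbb{A}_{f}}=1$; this is the normality statement. Joint continuity, and hence measurability, of $(x,y)\mapsto Z(x-y,t)$ follows from the uniform continuity of $Z_{t}$ in Theorem \ref{Theo1}(vi), while continuity in $t$ comes from Theorem \ref{Theo1}(vii).

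The key step is the Chapman--Kolmogorov equation $\int_{\mathbb{A}_{f}}p(t,x,z)p(s,z,y)\,dz_{\mathbb{A}_{f}}=p(t+s,x,y)$. Substituting $p(t,x,z)=Z(x-z,t)$ and $p(s,z,y)=Z(z-y,s)$ and performing the change of variables $w=z-y$, again using translation invariance of the Haar measure, rewrites the integral as the convolution $(Z_{t}\ast Z_{s})(x-y)$, which equals $Z_{t+s}(x-y)=p(t+s,x,y)$ by the semigroup property Theorem \ref{Theo1}(iv). This establishes (i).

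For (ii), since $p(t,x,y)\ge 0$ and $Z_{t}\in L^{1}(\mathbb{A}_{f})$ by Theorem \ref{Theo1}(iii), the set function $B\mapsto P(t,x,B)=\int_{B}p(t,x,y)\,dy_{\mathbb{A}_{f}}$ is a Borel measure for each fixed $t>0$ and $x$, and it is a probability measure by the normalization above, so $P$ is normal. Measurability of $x\mapsto P(t,x,B)$ follows from the continuity of $p$ in $x$ together with the Dominated Convergence Theorem, the integrable dominating function being $Z_{t}$ by Theorem \ref{Theo1}(iii). The initial condition $P(0,x,B)=\mathbf{1}_{B}(x)$ holds by definition, and the integrated Chapman--Kolmogorov identity $P(t+s,x,B)=\int_{\mathbb{A}_{f}}P(s,z,B)\,p(t,x,z)\,dz_{\mathbb{A}_{f}}$ follows from the density version via Fubini's theorem. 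I expect the only genuinely computational point to be the convolution change of variables in the Chapman--Kolmogorov step; all analytic ingredients have already been supplied by Theorem \ref{Theo1}, so no new estimates are required and the remainder is routine bookkeeping.
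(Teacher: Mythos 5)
Your proposal is correct and follows essentially the same route as the paper: the paper's entire proof is the one-line remark that the result follows from Theorem \ref{Theo1} together with the definitions in \cite[Sec.~2.1]{Dyn}, and your verification (non-negativity and normalization from Theorem \ref{Theo1}(i)--(ii) plus Haar invariance, Chapman--Kolmogorov from the convolution semigroup property (iv), and the measure-theoretic bookkeeping for $P(t,x,B)$) is exactly the routine detail the paper delegates to Dynkin. The only cosmetic point is that measurability of $x\mapsto P(t,x,B)$ for arbitrary Borel $B$ is most cleanly obtained from joint measurability of $p$ and Tonelli's theorem rather than dominated convergence, but this does not affect the argument.
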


\begin{proof}
The result follows from Theorem \ref{Theo1}, see \cite[Sec.2.1]{Dyn} for
further details.
\end{proof}

\begin{lemma}
\label{lema22}The transition function $P\left(  t,y,B\right)  $ satisfies the
following two conditions:

\begin{itemize}
\item[(i)] for each $u\geq0$ and a compact $B$
\[
\lim_{x\rightarrow\infty}\sup_{t\leq u}P\left(  t,x,B\right)  =0;\qquad
[\text{Condition } L(B)]
\]

\item[(ii)] for each $\epsilon>0$ and a compact $B$
\[
\lim_{t\rightarrow0{+}}\sup_{x\in B}P\left(  t,x,\mathbb{A}_{f}\setminus
B_{\epsilon}\left(  x\right)  \right)  =0.\qquad[\text{Condition } M(B)]
\]

\end{itemize}
\end{lemma}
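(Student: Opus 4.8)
The plan is to derive both conditions directly from the positivity and the sharp estimates of the heat kernel obtained in Section~\ref{SectAdelicHeat}, together with the non-Archimedean nature of the metric $\rho$. The common starting point is that for $t>0$ one has $P(t,x,B)=\int_B Z(x-y,t)\,dy_{\mathbb{A}_f}$ with $Z\ge 0$ (Theorem~\ref{Theo1}(i)), so in each case it suffices to bound the integrand using Lemma~\ref{HeatKernelEstimate} and Corollary~\ref{cor1} and then to check that the resulting bound is uniform in the appropriate variable.

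For Condition $M(B)$ I would first invoke the translation invariance of the Haar measure. Since $B_\epsilon(x)=\{y:\|x-y\|\le\epsilon\}$, the measure-preserving substitution $z=x-y$ gives
\[
P\bigl(t,x,\mathbb{A}_f\setminus B_\epsilon(x)\bigr)=\int_{\|z\|>\epsilon} Z(z,t)\,dz_{\mathbb{A}_f},
\]
an expression independent of $x$. By Corollary~\ref{cor1}(iii) this is bounded by $C(\epsilon)\,t$, so taking the supremum over $x\in B$ and letting $t\to 0+$ yields Condition $M(B)$ immediately; note that the compactness of $B$ is not even needed here.

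For Condition $L(B)$ the first and decisive step is the isosceles property of the non-Archimedean norm: if $\|x\|>\sup_{y\in B}\|y\|$, which is finite because the compact set $B$ is contained in some ball $B_r$, then $\|x-y\|=\|x\|$ for every $y\in B$. Combining this with Lemma~\ref{HeatKernelEstimate} gives, for all such $x$ and all $t\le u$,
\[
Z(x-y,t)\le 2t\,\|x\|^{-\alpha}\,\Phi\bigl(\|x\|^{-1}_{-}\bigr)\le 2u\,\|x\|^{-\alpha}\,\Phi\bigl(\|x\|^{-1}_{-}\bigr).
\]
Using \eqref{Phi(1/x)} and \eqref{identity} I would simplify $\Phi\bigl(\|x\|^{-1}_{-}\bigr)=1/\Phi(\|x\|)$, so that after integrating over $B$ and taking the supremum over $t\le u$,
\[
\sup_{t\le u}P(t,x,B)\le 2u\,\operatorname{vol}(B)\,\frac{\|x\|^{-\alpha}}{\Phi(\|x\|)}\le 2u\,\operatorname{vol}(B)\,\|x\|^{-\alpha}.
\]
Since $\operatorname{vol}(B)<\infty$ for compact $B$ and $\alpha>1$, the right-hand side tends to $0$ as $\|x\|\to\infty$; because the compact balls $B_r$ exhaust $\mathbb{A}_f$, the limit $x\to\infty$ in the one-point compactification is exactly $\|x\|\to\infty$, which establishes Condition $L(B)$.

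I do not expect any genuine obstacle here. The only points demanding care are the reduction $\|x-y\|=\|x\|$ via the isosceles inequality, valid \emph{uniformly} for $y$ in the bounded set $B$ once $\|x\|$ exceeds the radius of $B$, the bookkeeping with the operators $(\cdot)_-$, $(\cdot)_+$ and the function $\Phi$ needed to collapse $\|x\|^{-\alpha}\Phi\bigl(\|x\|^{-1}_{-}\bigr)$ into $\|x\|^{-\alpha}/\Phi(\|x\|)$, and the finiteness of $\operatorname{vol}(B)$. The essential feature making both arguments work is that the bounds are uniform in the relevant variable, in $x$ for $M(B)$ and in $t\le u$ for $L(B)$, which is precisely what the two conditions require.
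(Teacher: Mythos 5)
Your proof is correct and follows essentially the same route as the paper: Condition $M(B)$ is reduced, exactly as in the paper, to the estimate \eqref{EqHKintEstimate} of Corollary \ref{cor1}(iii), and Condition $L(B)$ to the kernel bound \eqref{EqHKestimate} integrated over the compact set $B$. The only (harmless) variation is that for $L(B)$ you obtain uniformity over $y\in B$ via the ultrametric isosceles equality $\|x-y\|=\|x\|$ together with the identities \eqref{Phi(1/x)} and \eqref{identity}, whereas the paper bounds $\|x-y\|$ from below by $\operatorname{dist}(x,B)$ and uses the monotonicity of $\Phi$; both give the same uniform decay as $x\rightarrow\infty$.
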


\begin{proof}
Since $B$ is a compact, $\operatorname{dist} (x, B)=: d(x)\to\infty$ as
$x\to\infty$. Since the function $\Phi(x)$ is non-decreasing, we obtain from
\eqref{EqHKestimate} that $Z(x-y,t)\le2u\bigl(d(x)\bigr)^{-\alpha}
\Phi\bigl((d(x))^{-1}\bigr)$ for any $y\in B$ and $t\le u$. Hence $P(t, x, B)
\le2u\bigl(d(x)\bigr)^{-\alpha}\cdot\Phi\bigl((d(x))^{-1}\bigr) \cdot
\operatorname{vol} (B) \to0$ as $x\to\infty$.

To verify Condition $M(B)$ we proceed as follows: for $y\in\mathbb{A}%
_{f}\setminus B_{\epsilon}\left(  x\right)  $ we have $\|x-y\|>\epsilon$. The
statement follows from \eqref{EqHKintEstimate}:
\[
P\left(  t,x,\mathbb{A}_{f}\setminus B_{\epsilon}\left(  x\right)  \right)
\le C(\epsilon)t \to0,\quad t\to0+.\qedhere
\]

\end{proof}

\begin{theorem}
\label{Theo2}$Z(x,t)$ is the transition density of a time- and space
homogenous Markov process which is bounded, right-continuous and has no
discontinuities other than jumps.
\end{theorem}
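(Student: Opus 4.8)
The plan is to derive both the existence and the regularity of the process entirely from the general theory in \cite[Chapters Two, Three]{Dyn}, once its standing hypotheses are matched to what the preceding lemmas already supply. First I would record that the state space $(\mathbb{A}_{f},\rho)$ is exactly of the type required there: by Remark \ref{nota2} it is a locally compact, second-countable Hausdorff (semi-compact) space. Next I would invoke Lemma \ref{lema1}, which provides a normal transition density $p(t,x,y)=Z(x-y,t)$ together with the associated normal transition function $P(t,x,B)$. The canonical construction of \cite{Dyn} then produces a Markov process $\mathfrak{X}$ with this transition function, and since $P(t,x,B)$ depends on the elapsed time alone, the process is automatically time-homogeneous.

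For space-homogeneity I would exploit the explicit form $p(t,x,y)=Z(x-y,t)$: the kernel depends on $x$ and $y$ only through the difference $x-y$, and since the Haar measure $dx_{\mathbb{A}_{f}}$ is translation invariant, one obtains $P(t,x+a,B+a)=P(t,x,B)$ for every $a\in\mathbb{A}_{f}$. Hence the transition law commutes with the translations of the additive group $(\mathbb{A}_{f},+)$, which is precisely the assertion of space-homogeneity.

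The heart of the argument --- and the step I expect to be the main obstacle --- is the upgrade from a bare Markov process to one with good sample paths. Here I would appeal to the theorems of \cite[Chapter Three]{Dyn} that convert analytic hypotheses on the transition function into path regularity. Conditions $L(B)$ and $M(B)$, verified for every compact $B$ in Lemma \ref{lema22}, are exactly the hypotheses under which these theorems guarantee a realization of $\mathfrak{X}$ that is bounded, right-continuous and free of discontinuities other than jumps. Heuristically, Condition $M(B)$ supplies the stochastic continuity excluding oscillatory (second-kind) discontinuities, while Condition $L(B)$ controls the escape of the process to infinity; their conjunction delivers the full regularity statement by direct citation.

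The genuine mathematical content has in fact already been expended in the earlier estimates: the positivity, normalization and Chapman--Kolmogorov (convolution) properties of $Z_{t}$ from Theorem \ref{Theo1} underlie Lemma \ref{lema1}, while the pointwise bound \eqref{EqHKestimate} and the integral bound \eqref{EqHKintEstimate} drive Conditions $L(B)$ and $M(B)$. Consequently the only delicate point that remains is bookkeeping: checking that the semi-compact space $(\mathbb{A}_{f},\rho)$, rather than the more familiar $\mathbb{R}^{n}$ or $\mathbb{Q}_{p}^{n}$, genuinely satisfies the standing assumptions imposed on the state space throughout \cite{Dyn}, so that the construction and regularity theorems apply verbatim.
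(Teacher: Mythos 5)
Your proposal is correct and follows essentially the same route as the paper: the paper's proof is precisely the citation of \cite[Theorem 3.6]{Dyn} combined with Remark \ref{nota2} (ii) and Lemmas \ref{lema1} and \ref{lema22}, which is exactly the bookkeeping you describe. Your additional remarks on space-homogeneity via translation invariance of the Haar measure and on the roles of Conditions $L(B)$ and $M(B)$ are accurate elaborations of what the paper leaves implicit.
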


\begin{proof}
The result follows from \cite[Theorem 3.6]{Dyn}, Remark \ref{nota2} (ii) and
Lemmas \ref{lema1}, \ref{lema22}.
\end{proof}

\begin{remark}
The more strict version of Condition $M(B)$ which is sufficient for the
continuity of a Markov process, namely, that for each $\epsilon>0$ and a
compact $B$
\[
\lim_{t\rightarrow0{+}}\frac1t\sup_{x\in B}P\left(  t,x,\mathbb{A}%
_{f}\setminus B_{\epsilon}\left(  x\right)  \right)  =0,\qquad[\text{Condition
}N(B)]
\]
does not hold for the function $Z(x,t)$. This may be easily seen if we take
$\epsilon= 1/4$. In such case by Proposition \ref{pro2} and Lemma
\ref{integral_radi_function} we have
\begin{multline*}
\int_{\mathbb{A}_{f}\setminus B_{1/4}\left(  x\right)  } Z_{t}\left(
x-y\right)  dy_{\mathbb{A}_{f}} \ge\int_{S_{1/3}\left(  x\right)  }
Z_{t}\left(  x-y\right)  dy_{\mathbb{A}_{f}}\\
=\operatorname{vol} S_{1/3}\left(  x\right)  \cdot\sum_{q^{j}<3, j\ne0}
\Phi\left(  q^{j}\right)  \left(  e^{-tq^{j\alpha}}-e^{-t\left(  q^{j}
_{+}\right)  ^{\alpha}}\right)  \ge\frac13 \left(  e^{-2^{\alpha}%
t}-e^{-3^{\alpha}t}\right)  ,
\end{multline*}
hence
\[
\lim_{t\rightarrow0{+}}\frac1t\sup_{x\in B}P\left(  t,x,\mathbb{A}%
_{f}\setminus B_{1/4}\left(  x\right)  \right)  \ge\frac{3^{\alpha}-
2^{\alpha}}3 \ne0.
\]

\end{remark}

\section{Cauchy problem for parabolic type equations on $\mathbb{A}_{f}$}
\label{SectCauchy}

Consider the following Cauchy problem
\begin{equation}
\left\{
\begin{aligned} &\frac{\partial u(x,t)}{\partial t}+D^{\alpha}u(x,t)=0, && x\in \mathbb{A}_{f},\ t\in\left[ 0,+\infty\right),\\ &u(x,0)=u_{0}(x), && u_{0}(x)\in\mathcal{D}(D^\alpha), \end{aligned}\right.
\label{CauchyProb}%
\end{equation}
where $\alpha>1$, $D^{\alpha}$ is the pseudodifferential operator defined by
(\ref{POperator}) with the domain given by \eqref{DomDa} and $u:\mathbb{A}%
_{f}\times\lbrack0,\infty)\rightarrow\mathbb{C}$ is an unknown function.

We say that a function $u(x,t)$ is a \textit{solution of} (\ref{CauchyProb})
if \linebreak$u\in C\bigl([0,\infty),\mathcal{D}(D^{\alpha})\bigr)\cap
C^{1}\bigl([0,\infty),L^{2}(\mathbb{A}_{f})\bigr)$ and $u$ satisfies equation
\eqref{CauchyProb} for all $t\geq0$.

We understand the notions of continuity in $t$, differentiability in $t$ and
equalities in the $L^{2}(\mathbb{A}_{f})$ sense, as it is customary in the
semigroup theory. More precisely, we say that a function $u(x,t)$ is
continuous in $t$ at $t_{0}$ if $\lim_{t\rightarrow t_{0}}\Vert
u(x,t)-u(x,t_{0})\Vert_{L^{2}(\mathbb{A}_{f})}=0$; the function $u_{t}%
^{\prime}(x,t)$ is the time derivative of function $u(x,t)$ at $t_{0}$ if
$\lim_{t\rightarrow t_{0}}\bigl\|\frac{u(x,t)-u(x,t_{0})}{t-t^{\prime}}%
-u_{t}^{\prime}(x,t_{0})\bigr\|_{L^{2}(\mathbb{A}_{f})}=0$; two functions
$f(x,t)$ and $g(x,t)$ are equal at $t_{0}$ if $\Vert f(x,t_{0})-g(x,t_{0}%
)\Vert_{L^{2}(\mathbb{A}_{f})}=0$.

We know from Lemma \ref{DgammaC0semigroup} that the operator $-D^{\alpha}$
generates a $C_{0}$ semigroup. Therefore Cauchy problem \eqref{CauchyProb} is
well-posed, i.e. it is uniquely solvable with the solution continuously
dependent on the initial data, and its solution is given by
$u(x,t)=\mathcal{T}(t)u_{0}(x)$, $t\geq0$, see e.g. \cite{Aren}, \cite{C-H}, \cite{E-N}.
However the general theory does not give an explicit formula for the semigroup
$\bigl(\mathcal{T}(t)\bigr)_{t\ge 0}$. We show that the operator $\mathcal{T}(t)$ for $t>0$
coincides with the operator of convolution with the heat kernel $Z_{t}%
\ast\cdot$. In order to prove this, we first construct a solution of Cauchy
problem \eqref{CauchyProb} with the initial value from $\mathcal{S}%
(\mathbb{A}_{f})$ without using the semigroup theory. Then we extend the
result to all initial values from $\mathcal{D}(D^{\alpha})$, see Proposition
\ref{Corr3} and Theorem \ref{Theo4minus}.

We show in Theorem \ref{Theo3} that in the case $u_{0}\in\mathcal{L}%
_{0}(\mathbb{A}_{f})$, the function $u(x,t)$ is the solution of Cauchy problem
\eqref{CauchyProb} in a stricter sense, i.e. $u(x,t)\in C^{1}\bigl([0,\infty
),\mathcal{L}_{0}(\mathbb{A}_{f})\bigr)$ and all limits and equalities are
understood pointwise.

\subsection{\label{initialValSA}Homogeneous equations with initial values in
$\mathcal{S}(\mathbb{A}_{f})$}

We first consider Cauchy problem \eqref{CauchyProb} with the initial value
from the space $\mathcal{S}(\mathbb{A}_{f})$. To simplify notations, set
$Z_{0}\ast u_{0}=\left.  \big(Z_{t}\ast u_{0}\big)\right\vert _{t=0}:=u_{0}$.
Note that such definition is consistent with Theorem \ref{Theo1} (v). We
define the function
\begin{equation}
u(x,t)=Z_{t}(x)\ast u_{0}(x),\qquad t\geq0. \label{defn_u(x,t)}%
\end{equation}
Since $Z_{t}\left(  x\right)  \in L^{1}\left(  \mathbb{A}_{f}\right)  $ for
$t>0$ and $u_{0}\left(  x\right)  \in\mathcal{S}(\mathbb{A}_{f})\subset
L^{\infty}\left(  \mathbb{A}_{f}\right)  $, the convolution exists and is a
continuous function, see \cite[Theorem 1.1.6]{Rudin}.


\begin{lemma}
\label{LemmaCP1} Let $u_{0}\in\mathcal{S}(\mathbb{A}_{f})$ and $u(x,t)$,
$t\ge0$ is defined by \eqref{defn_u(x,t)}. Then $u(x,t)$ is continuously
differentiable in time for $t\ge0$ and the derivative is given by
\begin{equation}
\label{eq_dt_u}\frac{\partial u}{\partial t}(x,t) = -\mathcal{F}%
_{\xi\rightarrow x}^{-1}\bigl( \|\xi\|^{\alpha}e^{-t\|\xi\|^{\alpha}}
\cdot\mathbf{1}_{B_{R}}(\xi)\bigr) \ast u_{0}(x),
\end{equation}
where $\mathbf{1}_{B_{R}}(\cdot)$ is the characteristic function of the ball
$B_{R}$, $R= (1/\ell)_{-}$ and $\ell$ is the parameter of constancy of the
function $u_{0}$, see \eqref{Local Constancy} and Proposition
\ref{SlK finite dim}.
\end{lemma}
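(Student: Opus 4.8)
The plan is to transfer the problem to the Fourier side, where the convolution $Z_t\ast u_0$ becomes the product $e^{-t\|\xi\|^{\alpha}}\widehat{u_0}(\xi)$ so that the entire $t$-dependence sits in the scalar factor $e^{-t\|\xi\|^{\alpha}}$; the decisive point is that $\widehat{u_0}$ has \emph{compact} support, which keeps the symbol $\|\xi\|^{\alpha}$ (produced by differentiating in $t$) bounded on that support and thereby legitimizes differentiation under the integral.

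First I would invoke Proposition \ref{SlK finite dim}: since $u_0\in\mathcal{S}(\mathbb{A}_f)$ has parameter of constancy $\ell$, its Fourier transform $\widehat{u_0}$ is supported in the ball $B_R$ with $R=(1/\ell)_-$. Because $Z_t=\mathcal{F}^{-1}_{\xi\to x}\bigl(e^{-t\|\xi\|^{\alpha}}\bigr)$, the relation between convolution and the Fourier transform gives
\[
u(x,t)=Z_t\ast u_0=\mathcal{F}^{-1}_{\xi\to x}\bigl(e^{-t\|\xi\|^{\alpha}}\bigr)\ast\mathcal{F}^{-1}_{\xi\to x}\bigl(\widehat{u_0}(\xi)\bigr)=\mathcal{F}^{-1}_{\xi\to x}\bigl(e^{-t\|\xi\|^{\alpha}}\widehat{u_0}(\xi)\bigr)(x),\qquad t\ge0,
\]
equivalently $\widehat{u}(\xi,t)=e^{-t\|\xi\|^{\alpha}}\widehat{u_0}(\xi)$, a function supported in $B_R$.

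Next I would compute the time derivative in the $L^2(\mathbb{A}_f)$ sense. By the mean value theorem the Fourier-side difference quotient satisfies
\[
\frac{e^{-t'\|\xi\|^{\alpha}}-e^{-t\|\xi\|^{\alpha}}}{t'-t}=-\|\xi\|^{\alpha}e^{-s\|\xi\|^{\alpha}}
\]
for some $s$ between $t$ and $t'$; on $B_R$ one has $\|\xi\|^{\alpha}\le R^{\alpha}$, so this quantity is bounded by $R^{\alpha}$ uniformly in $t,t'\ge0$ and $\xi\in B_R$. Hence $\bigl(\widehat{u}(\cdot,t')-\widehat{u}(\cdot,t)\bigr)/(t'-t)$ is dominated by the fixed function $R^{\alpha}|\widehat{u_0}|\mathbf{1}_{B_R}\in L^2(\mathbb{A}_f)$ (recall $\operatorname{vol}(B_R)=\Phi(R)<\infty$) and converges pointwise to $-\|\xi\|^{\alpha}e^{-t\|\xi\|^{\alpha}}\widehat{u_0}(\xi)$; by dominated convergence it converges in $L^2(\mathbb{A}_f)$. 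Since $\mathcal{F}^{-1}$ is unitary on $L^2$, the derivative exists and
\[
\frac{\partial u}{\partial t}(x,t)=-\mathcal{F}^{-1}_{\xi\to x}\bigl(\|\xi\|^{\alpha}e^{-t\|\xi\|^{\alpha}}\widehat{u_0}(\xi)\bigr)(x).
\]
This uniform domination is exactly what the compact support of $\widehat{u_0}$ buys, and it is what allows $t=0$ to be included (yielding the one-sided derivative there, with $s\in(0,t')$ still obeying the bound $R^{\alpha}$); without it the factor $\|\xi\|^{\alpha}$ would be unbounded and the estimate would degenerate as $t\to0+$. This passage $t\to0+$ is the only genuine subtlety in the argument.

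Finally, since $\widehat{u_0}=\mathbf{1}_{B_R}\widehat{u_0}$, I would factor $\|\xi\|^{\alpha}e^{-t\|\xi\|^{\alpha}}\widehat{u_0}=\bigl(\|\xi\|^{\alpha}e^{-t\|\xi\|^{\alpha}}\mathbf{1}_{B_R}(\xi)\bigr)\widehat{u_0}(\xi)$; both factors lie in $L^1\cap L^2(\mathbb{A}_f)$ (the first is bounded with compact support, the second is a test function), so the convolution theorem rewrites the inverse transform of the product as the convolution
\[
\frac{\partial u}{\partial t}(x,t)=-\mathcal{F}^{-1}_{\xi\to x}\bigl(\|\xi\|^{\alpha}e^{-t\|\xi\|^{\alpha}}\mathbf{1}_{B_R}(\xi)\bigr)\ast u_0(x),
\]
which is the asserted formula. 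For continuous differentiability it remains to note that $t\mapsto\|\xi\|^{\alpha}e^{-t\|\xi\|^{\alpha}}\mathbf{1}_{B_R}$ is continuous from $[0,\infty)$ into $L^2(\mathbb{A}_f)$ (again by dominated convergence, with dominant $2R^{\alpha}\mathbf{1}_{B_R}$); applying the unitary $\mathcal{F}^{-1}$ and then convolving with the fixed $u_0\in L^1(\mathbb{A}_f)$ (bounded on $L^2$ by Young's inequality) preserves this continuity, so $\partial_t u$ is continuous in $t$. Everything apart from the behaviour at $t=0$ is routine.
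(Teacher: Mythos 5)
Your proof is correct and follows essentially the same route as the paper: pass to the Fourier side, use Proposition \ref{SlK finite dim} to confine $\operatorname{supp}\widehat{u}_{0}$ to $B_{R}$ so that the symbol $\|\xi\|^{\alpha}$ is bounded there, and transfer $L^{2}$-convergence of the difference quotient back through the Steklov--Parseval equality. The only cosmetic difference is that the paper applies the mean value theorem a second time to obtain the quantitative bound $|t-t_{0}|\,R^{2\alpha}\,\|\widehat{u}_{0}\|_{L^{2}(\mathbb{A}_{f})}$, whereas you use the mean value theorem once and then invoke dominated convergence; both are sound.
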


\begin{proof}
Let $h_{t}(x)$ be a function defined by the right-hand side of \eqref{eq_dt_u}.
Since $\|\xi\|^{\alpha}e^{-t\|\xi\|^{\alpha}} \cdot\mathbf{1}_{B_{R}}(\xi)\in
L^{1}(\mathbb{A}_{f}) \cap L^{2}(\mathbb{A}_{f})$ for any $t\ge0$, the
function $h_{t}(x)$ is well-defined and belongs to $C(\mathbb{A}_{f})\cap
L^{2}(\mathbb{A}_{f})$.

Let $t_{0}\ge0$. Consider a limit
\begin{multline*}
\lim_{t\to t_{0}} \Bigl\|\frac{u(x,t)-u(x,t_{0})}{t-t_{0}} - h_{t}%
(x,t_{0})\Bigr\|_{L^{2}(\mathbb{A}_{f})}\\
=\lim_{t\to t_{0}}\Bigl\| \frac{e^{-t\|\xi\|^{\alpha}} - e^{-t_{0}%
\|\xi\|^{\alpha}}}{t-t_{0}} \widehat u_{0}(\xi) + \|\xi\|^{\alpha}%
e^{-t_{0}\|\xi\|^{\alpha}}\mathbf{1}_{B_{R}}(\xi)\cdot\widehat u_{0}(\xi)
\Bigr\|_{L^{2}(\mathbb{A}_{f})}\\
=\lim_{t\to t_{0}}\Bigl\| \Bigl(\frac{e^{-t\|\xi\|^{\alpha}} - e^{-t_{0}%
\|\xi\|^{\alpha}}}{t-t_{0}} + \|\xi\|^{\alpha}e^{-t_{0}\|\xi\|^{\alpha}%
}\Bigr) \mathbf{1}_{B_{R}}(\xi)\cdot\widehat u_{0}(\xi) \Bigr\|_{L^{2}%
(\mathbb{A}_{f})},
\end{multline*}
where we have applied Steklov-Parseval equality and the fact that
$\operatorname{supp} \widehat u_{0} \subset B_{R}$ which follows from
Proposition \ref{SlK finite dim}. By applying the Mean-Value Theorem twice we
obtain
\[%
\begin{split}
\frac{e^{-t\|\xi\|^{\alpha}} - e^{-t_{0}\|\xi\|^{\alpha}}}{t-t_{0}} +
\|\xi\|^{\alpha}e^{-t_{0}\|\xi\|^{\alpha}}  &  = -\|\xi\|^{\alpha
}e^{-t^{\prime}\|\xi\|^{\alpha}} + \|\xi\|^{\alpha}e^{-t_{0}\|\xi\|^{\alpha}%
}\\
&  = (t^{\prime}-t_{0})\|\xi\|^{2\alpha} e^{-t^{\prime\prime}\|\xi\|^{\alpha}%
},
\end{split}
\]
where $t^{\prime}= t^{\prime}(\|\xi\|)$ is a point between $t_{0}$ and $t$ and
$t^{\prime\prime}= t^{\prime\prime}(\|\xi\|)$ is a point between $t_{0}$ and
$t^{\prime}$ (and thus between $t_{0}$ and $t$). Hence
\begin{multline*}
\Bigl\| \Bigl(\frac{e^{-t\|\xi\|^{\alpha}} - e^{-t_{0}\|\xi\|^{\alpha}}%
}{t-t_{0}} + \|\xi\|^{\alpha}e^{-t_{0}\|\xi\|^{\alpha}}\Bigr) \mathbf{1}%
_{B_{R}}(\xi)\cdot\widehat u_{0}(\xi) \Bigr\|_{L^{2}(\mathbb{A}_{f})}\\
\le|t-t_{0}| R^{2\alpha} \| \widehat u_{0}(\xi)\|_{L^{2}(\mathbb{A}_{f})}
\to0,\qquad t\to t_{0},
\end{multline*}
i.e. $h_{t}(x)$ is the time derivative of the function $u(x,t)$ for any
$t\ge0$.

The proof of the continuous differentiability in time of $u(x,t)$ follows from
the time continuity of $h_{t}(x)$ which can be checked similarly.
\end{proof}

\begin{lemma}
\label{LemmaCP2} Let $u_{0}\in\mathcal{S}(\mathbb{A}_{f})$ and $u(x,t)$,
$t\ge0$ is defined by \eqref{defn_u(x,t)}. Then $u(x,t)\in\mathcal{D}%
(D^{\alpha})$ for any $t\ge0$ and
\begin{equation}
\label{eq_dalpha_u}D^{\alpha}u(x,t) = \mathcal{F}_{\xi\rightarrow x}%
^{-1}\bigl( \|\xi\|^{\alpha}e^{-t\|\xi\|^{\alpha}} \cdot\mathbf{1}_{B_{R}}%
(\xi)\bigr) \ast u_{0}(x),
\end{equation}
where $\mathbf{1}_{B_{R}}(\cdot)$ is the characteristic function of the ball
$B_{R}$, $R= (1/\ell)_{-}$ and $\ell$ is the parameter of constancy of the
function $u_{0}$, see \eqref{Local Constancy} and Proposition
\ref{SlK finite dim}.
\end{lemma}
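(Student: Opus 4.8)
The plan is to pass to the Fourier side and exploit that $u_0$ is a test function, so that $\widehat{u_0}$ is compactly supported. First I would record that, since $Z_t\in L^1(\mathbb{A}_f)$ for $t>0$ by Theorem \ref{Theo1}(iii) and $u_0\in\mathcal{S}(\mathbb{A}_f)\subset L^1(\mathbb{A}_f)$, the convolution theorem (see \cite{Rudin}) together with $\widehat{Z_t}=e^{-t\|\xi\|^\alpha}$ gives
\[
\widehat{u}(\xi,t)=\widehat{Z_t}(\xi)\,\widehat{u_0}(\xi)=e^{-t\|\xi\|^\alpha}\widehat{u_0}(\xi),\qquad t\ge0,
\]
where the case $t=0$ is covered by the convention $Z_0\ast u_0=u_0$.

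Next I would verify the membership $u(x,t)\in\mathcal{D}(D^\alpha)$ using the Fourier characterization \eqref{DomDa} of the domain. By Proposition \ref{SlK finite dim} the support of $\widehat{u_0}$ is contained in the ball $B_R$ with $R=(1/\ell)_-$, where $\ell$ is the parameter of constancy of $u_0$. Hence
\[
\|\xi\|^\alpha\widehat{u}(\xi,t)=\|\xi\|^\alpha e^{-t\|\xi\|^\alpha}\mathbf{1}_{B_R}(\xi)\,\widehat{u_0}(\xi).
\]
On $B_R$ one has $\|\xi\|^\alpha\le R^\alpha$ and $e^{-t\|\xi\|^\alpha}\le1$ for $t\ge0$, so the factor $\|\xi\|^\alpha e^{-t\|\xi\|^\alpha}\mathbf{1}_{B_R}(\xi)$ is bounded with compact support, hence lies in $L^1(\mathbb{A}_f)\cap L^2(\mathbb{A}_f)$; since $\widehat{u_0}\in\mathcal{S}(\mathbb{A}_f)$ is bounded, the product is in $L^2(\mathbb{A}_f)$. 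By \eqref{DomDa} this shows $u(x,t)\in\mathcal{D}(D^\alpha)$ for every $t\ge0$.

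It remains to identify $D^\alpha u(x,t)$. Since $D^\alpha$ acts on its domain as multiplication by $\|\xi\|^\alpha$ in the Fourier variable (Lemma \ref{DgammaC0semigroup}), I would write
\[
D^\alpha u(x,t)=\mathcal{F}^{-1}_{\xi\to x}\bigl(\|\xi\|^\alpha e^{-t\|\xi\|^\alpha}\widehat{u_0}(\xi)\bigr)=\mathcal{F}^{-1}_{\xi\to x}\bigl(\|\xi\|^\alpha e^{-t\|\xi\|^\alpha}\mathbf{1}_{B_R}(\xi)\bigr)\ast u_0(x),
\]
where the last step uses $\operatorname{supp}\widehat{u_0}\subset B_R$ together with the convolution theorem, applied with the $L^2$ factor $g_t:=\mathcal{F}^{-1}\bigl(\|\xi\|^\alpha e^{-t\|\xi\|^\alpha}\mathbf{1}_{B_R}\bigr)$ and the $L^1$ factor $u_0$. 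The main point to check carefully is precisely this last interchange: one must confirm that $\widehat{g_t}=\|\xi\|^\alpha e^{-t\|\xi\|^\alpha}\mathbf{1}_{B_R}$ (Fourier inversion, legitimate since the symbol is in $L^1\cap L^2$) and that $\widehat{g_t}\,\widehat{u_0}=\|\xi\|^\alpha e^{-t\|\xi\|^\alpha}\widehat{u_0}$, the latter equality being exactly where the compact support of $\widehat{u_0}$ enters. This yields the claimed formula \eqref{eq_dalpha_u} for all $t\ge0$; note in particular that it is perfectly parallel to the derivative computation of Lemma \ref{LemmaCP1}, so that the two together will give $\partial_t u+D^\alpha u=0$.
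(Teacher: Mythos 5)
Your proposal is correct and follows essentially the same route as the paper's own proof: pass to the Fourier side via the convolution theorem, use $\operatorname{supp}\widehat{u_0}\subset B_R$ (Proposition \ref{SlK finite dim}) to insert the factor $\mathbf{1}_{B_R}$, check that $\|\xi\|^{\alpha}e^{-t\|\xi\|^{\alpha}}\widehat{u_0}\in L^{1}\cap L^{2}$ so that the domain condition \eqref{DomDa} holds and $D^{\alpha}$ acts as multiplication by $\|\xi\|^{\alpha}$, and then convert back to the stated convolution form. Your write-up is in fact slightly more explicit than the paper's about why the final interchange (pulling $\widehat{u_0}$ out as the $L^{1}$ factor of a convolution) is legitimate, but the argument is the same.
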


\begin{proof}
Note that $\widehat u_{0}\in\mathcal{S}(\mathbb{A}_{f})$ which implies that
$e^{-t\Vert\xi\Vert^{\alpha}}\cdot\widehat u_{0}(\xi)\in L^{1}\left(
\mathbb{A}_{f}\right)  \cap L^{2}\left(  \mathbb{A}_{f}\right)  $ and
$\Vert\xi\Vert^{\alpha}e^{-t\Vert\xi\Vert^{\alpha}} \cdot\widehat u_{0}%
(\xi)\in L^{1}\left(  \mathbb{A}_{f}\right)  \cap L^{2}\left(  \mathbb{A}%
_{f}\right)  $ for any $t\ge0$. Hence we may calculate $D^{\alpha}u(x,t)$ by
formula (\ref{POperator}). For $t>0$ we obtain
\begin{multline*}
D^{\alpha}u(x,t)=\mathcal{F}_{\xi\rightarrow x}^{-1}\bigl(\|\xi\|^{\alpha
}\cdot\widehat u(\xi,t)\bigr)=\mathcal{F}_{\xi\rightarrow x}^{-1}\big(\Vert
\xi\Vert^{\alpha}\widehat{Z}_{t}(\xi)\cdot\widehat{u}_{0}(\xi)\big)\\
=\mathcal{F}_{\xi\rightarrow x}^{-1}\big(\Vert\xi\Vert^{\alpha}e^{-t\left\Vert
\xi\right\Vert ^{\alpha}}\mathbf{1}_{B_{R}}(\xi)\cdot\widehat{u}_{0}%
(\xi)\big)=\mathcal{F}_{\xi\rightarrow x}^{-1}\bigl( \|\xi\|^{\alpha
}e^{-t\|\xi\|^{\alpha}} \cdot\mathbf{1}_{B_{R}}(\xi)\bigr) \ast u_{0}(x),
\end{multline*}
where we have used the fact that $\operatorname{supp} \widehat u_{0} \subset
B_{R}$.

For $t=0$ we obtain
\begin{multline*}
D^{\alpha}u(x,0)=D^{\alpha}u_{0}(x)=\mathcal{F}_{\xi\rightarrow x}%
^{-1}\bigl(\|\xi\|^{\alpha}\cdot\widehat u_{0}(\xi)\bigr)=\\
=\mathcal{F}_{\xi\rightarrow x}^{-1}\big(\Vert\xi\Vert^{\alpha}e^{-0\left\Vert
\xi\right\Vert ^{\alpha}}\mathbf{1}_{B_{R}}(\xi)\cdot\widehat{u}_{0}%
(\xi)\big)=\mathcal{F}_{\xi\rightarrow x}^{-1}\bigl( \|\xi\|^{\alpha
}e^{-0\|\xi\|^{\alpha}} \cdot\mathbf{1}_{B_{R}}(\xi)\bigr) \ast u_{0}(x).
\end{multline*}
\vskip-1.2em
\end{proof}

As an immediate consequence from Lemmas \ref{LemmaCP1} and \ref{LemmaCP2} we obtain

\begin{proposition}
\label{Corr3}Let the function $u_{0}\in\mathcal{S}(\mathbb{A}_{f})$. Then the
function $u(x,t)$ defined by \eqref{defn_u(x,t)} is a solution of Cauchy
problem (\ref{CauchyProb}).
\end{proposition}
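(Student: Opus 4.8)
The plan is to assemble the statement directly from Lemmas \ref{LemmaCP1} and \ref{LemmaCP2}, verifying in turn the three requirements in the definition of a solution of \eqref{CauchyProb}: the two function-space memberships, the evolution equation, and the initial condition.

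First I would note the crucial coincidence that the right-hand side of \eqref{eq_dalpha_u} (giving $D^{\alpha}u$) is exactly the negative of the right-hand side of \eqref{eq_dt_u} (giving $\partial_t u$). Concretely, writing $h_t(x):=\mathcal{F}_{\xi\rightarrow x}^{-1}\bigl(\|\xi\|^{\alpha}e^{-t\|\xi\|^{\alpha}}\mathbf{1}_{B_R}(\xi)\bigr)\ast u_0(x)$, Lemma \ref{LemmaCP1} gives $\partial_t u(x,t)=-h_t(x)$ and Lemma \ref{LemmaCP2} gives $D^{\alpha}u(x,t)=h_t(x)$, for every $t\ge0$. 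Adding these two identities yields
\[
\frac{\partial u}{\partial t}(x,t)+D^{\alpha}u(x,t)=0,\qquad t\ge0,
\]
which is precisely the parabolic equation in \eqref{CauchyProb}. The initial condition is then handled by the convention $Z_0\ast u_0=u_0$ adopted before \eqref{defn_u(x,t)} (consistent with Theorem \ref{Theo1} (v)), which gives $u(x,0)=u_0(x)$; note also that $u_0\in\mathcal{S}(\mathbb{A}_f)\subset\mathcal{D}(D^{\alpha})$, so the data are admissible.

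Next I would read off the space memberships. Lemma \ref{LemmaCP1} asserts that $u$ is continuously differentiable in time on $[0,\infty)$ with derivative depending continuously on $t$ in the $L^2$-sense, which is exactly the assertion $u\in C^{1}\bigl([0,\infty),L^{2}(\mathbb{A}_f)\bigr)$. Lemma \ref{LemmaCP2} gives $u(\cdot,t)\in\mathcal{D}(D^{\alpha})$ for each $t\ge0$. Since $D^{\alpha}u(\cdot,t)=h_t=-\partial_t u(\cdot,t)$ and the continuity of $t\mapsto h_t$ in $L^2$ was already established within the proof of Lemma \ref{LemmaCP1}, the map $t\mapsto u(\cdot,t)$ is continuous into $\mathcal{D}(D^{\alpha})$ in the graph norm, i.e. $u\in C\bigl([0,\infty),\mathcal{D}(D^{\alpha})\bigr)$.

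Because the two lemmas already carry out all the genuine analysis (differentiation under the convolution, the Steklov--Parseval and Mean-Value estimates, and the compact-support truncation via $\mathbf{1}_{B_R}$), there is no real obstacle remaining. The only point meriting a line of care is confirming that the continuity in $t$ of $D^{\alpha}u$ in the $L^2$-norm --- which is what upgrades the pointwise-in-$t$ membership $u(\cdot,t)\in\mathcal{D}(D^{\alpha})$ to continuity into $\mathcal{D}(D^{\alpha})$ --- is supplied by the same continuity argument used for the time derivative in Lemma \ref{LemmaCP1}, precisely because $D^{\alpha}u$ and $-\partial_t u$ are the identical function $h_t$.
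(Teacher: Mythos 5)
Your proposal is correct and follows exactly the paper's route: the paper derives Proposition \ref{Corr3} as an immediate consequence of Lemmas \ref{LemmaCP1} and \ref{LemmaCP2}, which is precisely the assembly you carry out (cancellation of \eqref{eq_dt_u} against \eqref{eq_dalpha_u}, the convention $Z_{0}\ast u_{0}=u_{0}$ for the initial datum, and the two space memberships). Your extra remark that $L^{2}$-continuity of $t\mapsto h_{t}$ upgrades pointwise membership in $\mathcal{D}(D^{\alpha})$ to graph-norm continuity is a detail the paper leaves implicit, but it is the same argument.
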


\subsection{\label{initialValL2}Homogeneous equations with initial values in
$L^{2}(\mathbb{A}_{f})$}

Consider the operator $T(t)$, $t\ge0$ of convolution with the heat kernel,
i.e.
\begin{equation}
\label{ConvZu}T(t)u = Z_{t} \ast u.
\end{equation}
Since $Z_{t} \in L^{2}(\mathbb{A}_{f})$, the convolution $Z_{t} \ast u$ is a
continuous function of $x$ for $t>0$ and any $u\in L^{2}(\mathbb{A}_{f})$, see
\cite[Theorem 1.1.6]{Rudin}.

\begin{lemma}
\label{LemmaConvZBdd} The operator $T(t):L^{2}(\mathbb{A}_{f})\to
L^{2}(\mathbb{A}_{f})$ is bounded.
\end{lemma}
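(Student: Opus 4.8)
The plan is to exploit the two standard facts that convolution against an $L^1$ kernel maps $L^2$ into $L^2$, and that under the Fourier transform convolution becomes multiplication by a symbol which here is bounded by $1$. Either route makes the boundedness essentially immediate.

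First I would recall from Theorem \ref{Theo1} that for every $t>0$ one has $Z_t\in L^1(\mathbb{A}_f)$ with $Z_t\ge 0$ and $\int_{\mathbb{A}_f} Z_t(x)\,dx_{\mathbb{A}_f}=1$, so that $\|Z_t\|_{L^1(\mathbb{A}_f)}=1$. Since $\mathbb{A}_f$ is a locally compact abelian group with Haar measure $dx_{\mathbb{A}_f}$, Young's inequality for convolution applies: for any $u\in L^2(\mathbb{A}_f)$ the convolution $Z_t\ast u$ belongs to $L^2(\mathbb{A}_f)$ and satisfies $\|T(t)u\|_{L^2(\mathbb{A}_f)}=\|Z_t\ast u\|_{L^2(\mathbb{A}_f)}\le \|Z_t\|_{L^1(\mathbb{A}_f)}\,\|u\|_{L^2(\mathbb{A}_f)}=\|u\|_{L^2(\mathbb{A}_f)}$. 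This already exhibits $T(t)$ as a bounded operator of operator norm at most $1$, i.e. a contraction.

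Alternatively, and perhaps more transparently, I would argue through the Fourier transform. Since $Z_t\in L^1(\mathbb{A}_f)$ and $u\in L^2(\mathbb{A}_f)$, the convolution theorem recalled in Subsection \ref{SubSectFourier} gives $\widehat{Z_t\ast u}=\widehat{Z_t}\cdot\widehat u = e^{-t\|\xi\|^\alpha}\,\widehat u(\xi)$, where the last identity is Definition \ref{DefAdelicheathkernel}. Because $t>0$ and $\|\xi\|^\alpha\ge 0$, the multiplier obeys $0<e^{-t\|\xi\|^\alpha}\le 1$ everywhere, so $\bigl|e^{-t\|\xi\|^\alpha}\widehat u(\xi)\bigr|\le |\widehat u(\xi)|$ pointwise. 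Applying the Steklov--Parseval equality to pass to the Fourier side and back then yields $\|T(t)u\|_{L^2(\mathbb{A}_f)}=\bigl\|e^{-t\|\cdot\|^\alpha}\widehat u\bigr\|_{L^2(\mathbb{A}_f)}\le \|\widehat u\|_{L^2(\mathbb{A}_f)}=\|u\|_{L^2(\mathbb{A}_f)}$, which is exactly the asserted bound.

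There is no serious obstacle here; the statement is routine given the results already established. The only point deserving a word of care is the justification that $Z_t\ast u$ is a genuine $L^2$ function for \emph{arbitrary} $u\in L^2(\mathbb{A}_f)$, whereas the surrounding text has so far used the convolution only against bounded or Schwartz $u$, where it is in addition a continuous function. This gap is closed precisely by the $L^1\ast L^2\hookrightarrow L^2$ mapping property in Young's inequality, which simultaneously supplies the norm estimate, so the well-definedness and the bound are obtained in a single step.
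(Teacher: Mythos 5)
Your first argument is exactly the paper's proof: Young's inequality together with Theorem \ref{Theo1} (ii) and (iii), giving $\|Z_t\ast u\|_{L^2}\le\|Z_t\|_{L^1}\|u\|_{L^2}=\|u\|_{L^2}$. The alternative route via the Steklov--Parseval equality and the bound $0<e^{-t\|\xi\|^{\alpha}}\le 1$ is also correct, but it is an extra; the core of your proposal coincides with the paper's argument.
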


\begin{proof}
Consider a function $u\in L^{2}(\mathbb{A}_{f})$. Since $Z_{t} \in
L^{1}(\mathbb{A}_{f})$, see Theorem \ref{Theo1} (iii), by the Young inequality
and Theorem \ref{Theo1} (ii)
\[
\|Z_{t} \ast u\|_{L^{2}}\le\|Z_{t}\|_{L^{1}} \cdot\| u\|_{L^{2}} = \|
u\|_{L^{2}}.
\]
Hence $T(t)u = Z_{t} \ast u\in L^{2}(\mathbb{A}_{f})$ and $\|T(t)\|\le1$.
\end{proof}

\begin{theorem}
\label{Theo4minus} Let $\alpha>1$. Then the following assertions hold.

\begin{itemize}
\item[(i)] The operator $-D^{\alpha}$ generates a $C_0$ semigroup $\bigl(\mathcal{T}(t)\bigr)_{t\ge 0}$. The operator $\mathcal{T}(t)$ coincides for each $t\ge0$ with the operator $T(t)$ given by \eqref{ConvZu}.

\item[(ii)] Cauchy problem \eqref{CauchyProb} is well-posed and its solution
is given by $u(x,t)=Z_{t}\ast u_{0}$, $t\geq0$.
\end{itemize}
\end{theorem}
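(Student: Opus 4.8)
The plan is to establish the two assertions of Theorem \ref{Theo4minus} by combining the abstract semigroup theory with the explicit convolution formula already developed. For part (i), I would first recall from Lemma \ref{DgammaC0semigroup} that $-D^{\alpha}$ is the infinitesimal generator of a contraction $C_{0}$ semigroup $\bigl(\mathcal{T}(t)\bigr)_{t\ge 0}$; this gives existence and uniqueness of the semigroup for free. The substance of the claim is that $\mathcal{T}(t)$ coincides with the convolution operator $T(t)$ from \eqref{ConvZu}. To prove this, I would exploit the density of $\mathcal{S}(\mathbb{A}_{f})$ in $L^{2}(\mathbb{A}_{f})$ (Proposition \ref{Prop3}) and the fact that both $\mathcal{T}(t)$ and $T(t)$ are bounded operators on $L^{2}(\mathbb{A}_{f})$: for $\mathcal{T}(t)$ this is contractivity, and for $T(t)$ this is Lemma \ref{LemmaConvZBdd}. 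Since two bounded operators agreeing on a dense subspace are equal, it suffices to show $\mathcal{T}(t)u_{0} = T(t)u_{0}$ for every $u_{0}\in\mathcal{S}(\mathbb{A}_{f})$.

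For initial data $u_{0}\in\mathcal{S}(\mathbb{A}_{f})$, I would argue as follows. By Proposition \ref{Corr3}, the function $u(x,t)=Z_{t}\ast u_{0} = T(t)u_{0}$ solves Cauchy problem \eqref{CauchyProb}, meaning it lies in $C\bigl([0,\infty),\mathcal{D}(D^{\alpha})\bigr)\cap C^{1}\bigl([0,\infty),L^{2}(\mathbb{A}_{f})\bigr)$ and satisfies $\partial_{t}u = -D^{\alpha}u$ with $u(x,0)=u_{0}$. On the other hand, the abstract theory guarantees that $v(x,t):=\mathcal{T}(t)u_{0}$ is the unique solution of the same abstract Cauchy problem for initial data in the domain of the generator. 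The key point is to invoke uniqueness of solutions: since $-D^{\alpha}$ generates a $C_{0}$ semigroup, the initial value problem $w' = -D^{\alpha}w$, $w(0)=u_{0}$ has a unique solution in the appropriate class, so $T(t)u_{0}$ and $\mathcal{T}(t)u_{0}$ must coincide for all $t\ge0$. This identification then extends by density and boundedness to all of $L^{2}(\mathbb{A}_{f})$, proving $\mathcal{T}(t)=T(t)$.

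Part (ii) then follows almost immediately. Once $-D^{\alpha}$ is known to generate a $C_{0}$ semigroup (Lemma \ref{DgammaC0semigroup}), the standard theory of abstract Cauchy problems, see e.g. \cite{Aren}, \cite{E-N}, gives that \eqref{CauchyProb} is well-posed for any $u_{0}\in\mathcal{D}(D^{\alpha})$: the solution exists, is unique, depends continuously on the initial data, and is represented as $u(x,t)=\mathcal{T}(t)u_{0}$. By part (i) this representation is precisely $u(x,t)=Z_{t}\ast u_{0}$, which is the desired explicit formula.

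I expect the main obstacle to be the rigorous justification that the explicit solution $T(t)u_{0}$ constructed in Proposition \ref{Corr3} is the \emph{same} object as the abstract semigroup solution $\mathcal{T}(t)u_{0}$, rather than merely another solution. The cleanest route is through the uniqueness theorem for the abstract Cauchy problem associated with a $C_{0}$ semigroup generator, which requires checking that $T(t)u_{0}$ really lies in the class $C\bigl([0,\infty),\mathcal{D}(D^{\alpha})\bigr)\cap C^{1}\bigl([0,\infty),L^{2}(\mathbb{A}_{f})\bigr)$ in the $L^{2}$ sense—but this is exactly the content of Lemmas \ref{LemmaCP1} and \ref{LemmaCP2}, so the groundwork is already in place. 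The extension from $\mathcal{S}(\mathbb{A}_{f})$ to all of $L^{2}(\mathbb{A}_{f})$ is then a routine density-plus-uniform-boundedness argument.
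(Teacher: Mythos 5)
Your proposal is correct and follows essentially the same route as the paper: invoke Lemma \ref{DgammaC0semigroup} for generation of the semigroup, use well-posedness of the abstract Cauchy problem together with Proposition \ref{Corr3} to identify $\mathcal{T}(t)$ with $T(t)$ on $\mathcal{S}(\mathbb{A}_{f})$, and then extend the identity to all of $L^{2}(\mathbb{A}_{f})$ by boundedness (Lemma \ref{LemmaConvZBdd}) and density. The only difference is cosmetic: you spell out explicitly the uniqueness argument that the paper leaves implicit in the phrase ``By Proposition \ref{Corr3}, $\mathcal{T}(t)|_{\mathcal{S}(\mathbb{A}_{f})}=T(t)|_{\mathcal{S}(\mathbb{A}_{f})}$.''
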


\begin{proof}According to Lemma \ref{DgammaC0semigroup}, the operator $-D^{\alpha}$ generates a $C_0$ semigroup $\bigl(\mathcal{T}(t)\bigr)_{t\ge 0}$.
Hence  Cauchy problem \eqref{CauchyProb} is well-posed, see e.g.
\cite[Theorem 3.1.1]{C-H}. By Proposition \ref{Corr3}, $\left.  \mathcal{T}
(t)\right\vert _{\mathcal{S}(\mathbb{A}_{f})}=\left.  T(t)\right\vert
_{\mathcal{S}(\mathbb{A}_{f})}$ and both operators $\mathcal{T}(t)$ and $T(t)$
are defined on the whole $L^{2}(\mathbb{A}_{f})$ and bounded. By the
continuity we conclude that $\mathcal{T}(t)=T(t)$ on $L^{2}\left(
\mathbb{A}_{f}\right)  $. Now the statements follow from well-known results of
the semigroup theory, see e.g. \cite[Proposition 3.1.9.]{Aren}, \cite[Theorem 3.1.1]{C-H}, \cite[Ch. 2, Proposition 6.2]{E-N}.
\end{proof}

\begin{remark}
Since the semigroup $\bigl(\mathcal{T}(t)\bigr)_{t\ge 0}$ is holomorphic, Cauchy problem \eqref{CauchyProb} possesses \emph{smoothing effect}, see e.g. \cite[Corollary 3.7.21]{Aren}. More precisely, consider Cauchy problem \eqref{CauchyProb} with weaker requirement on the initial value, namely, let $u_{0}\in L^{2}(\mathbb{A}_{f})$. Then there exists a unique function
\[
u(x,t)\in C\bigl([0,\infty),L^{2}(\mathbb{A}_{f})\bigr)\cap C\bigl((0,\infty
),\mathcal{D}(D^{\alpha})\bigr)\cap C^{\infty}\bigl((0,\infty),L^{2}(\mathbb{A}
_{f})\bigr)
\]
satisfying the equation for $t>0$ and satisfying the initial condition. That is, this weaker Cauchy problem is solvable for arbitrary initial data and the solution is infinitely differentiable in $t$ for $t>0$.
\end{remark}

\subsection{Homogeneous equations with initial values in $\mathcal{L}%
_{0}(\mathbb{A}_{f})$}

We now consider the Cauchy problem
\begin{equation}
\left\{
\begin{aligned} &\frac{\partial u(x,t)}{\partial t}+D^{\alpha}u(x,t)=0, && x\in \mathbb{A}_{f},\ t\in\left[ 0,+\infty\right),\\ &u(x,0)=u_{0}(x), && u_{0}(x)\in\mathcal{L}_0(\mathbb{A}_f), \end{aligned}\right.
\label{CauchyProbS}%
\end{equation}
with the initial value from the space $\mathcal{L}_{0}(\mathbb{A}_{f})$ and
the pseudodifferential operator $D^{\alpha}$ with the smaller domain
$\mathcal{D}(D^{\alpha})=\mathcal{L}_{0}(\mathbb{A}_{f})$.

We say that a function $u(x,t)$ is \textit{a classical} \textit{solution of}
(\ref{CauchyProbS}), if $u\in C^{1}\bigl([0,\infty),\mathcal{L}_{0}(\mathbb{A}_{f})\bigr)$ and $u$ satisfies equation \eqref{CauchyProb} for all
$t\geq0$, with the understanding that all the involved limits are taken in the
topology of $\mathcal{L}_{0}(\mathbb{A}_{f})$.

\begin{lemma}
\label{Lemma41} Let $u_{0}\in\mathcal{L}_{0}(\mathbb{A}_{f})$ and the function
$u(x,t)$ is defined by \eqref{defn_u(x,t)}. Then $u(x,t)\in\mathcal{L}%
_{0}(\mathbb{A}_{f})$ for any $t>0$.
\end{lemma}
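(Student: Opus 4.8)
The plan is to pass to the Fourier side, where the convolution defining $u$ becomes a product, and then show that the resulting function is again Bruhat--Schwartz and vanishes at the origin. First I would note that since $u_{0}\in\mathcal{L}_{0}(\mathbb{A}_{f})\subset\mathcal{S}(\mathbb{A}_{f})$, its Fourier transform $\widehat{u}_{0}$ lies in $\mathcal{S}(\mathbb{A}_{f})$ and satisfies $\widehat{u}_{0}(0)=0$. Because $Z_{t}\in L^{1}(\mathbb{A}_{f})$ by Theorem \ref{Theo1}(iii) and $u_{0}\in L^{1}\cap L^{\infty}$, the convolution $u(\cdot,t)=Z_{t}\ast u_{0}$ is well defined and its Fourier transform is the pointwise product $\widehat{u}(\xi,t)=\widehat{Z}_{t}(\xi)\,\widehat{u}_{0}(\xi)=e^{-t\|\xi\|^{\alpha}}\widehat{u}_{0}(\xi)$, where $\widehat{Z}_{t}(\xi)=e^{-t\|\xi\|^{\alpha}}$ by Definition \ref{DefAdelicheathkernel}. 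Thus the claim reduces to showing that $e^{-t\|\xi\|^{\alpha}}\widehat{u}_{0}(\xi)\in\mathcal{S}(\mathbb{A}_{f})$ and vanishes at $\xi=0$; membership of $u(\cdot,t)$ in $\mathcal{S}(\mathbb{A}_{f})$ then follows because $\mathcal{F}$ is an isomorphism of $\mathcal{S}(\mathbb{A}_{f})$, and $\widehat{u}(0,t)=0$ places it in $\mathcal{L}_{0}(\mathbb{A}_{f})$.

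Next I would exploit local constancy, mirroring the argument used in Lemma \ref{LemmaDL0=L0}(i). By Lemma \ref{Lemma3} the function $\widehat{u}_{0}$ is locally constant with compact support; let $\ell$ be its parameter of constancy and $B_{R}$ a ball containing its support. Since $\widehat{u}_{0}(0)=0$ and $\widehat{u}_{0}$ is constant on $B_{\ell}(0)$, it vanishes identically on $B_{\ell}(0)$, so the product $e^{-t\|\xi\|^{\alpha}}\widehat{u}_{0}(\xi)$ vanishes on $B_{\ell}(0)$ as well (giving $\widehat{u}(0,t)=0$) and its support remains inside $B_{R}$. The one point requiring care is the local constancy of the factor $e^{-t\|\xi\|^{\alpha}}$ on $\operatorname{supp}\widehat{u}_{0}$. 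Here I would use that $\|\cdot\|$ is locally constant on $\mathbb{A}_{f}\setminus\{0\}$: by the non-Archimedean property of $\rho$, if $\|\xi\|=p^{j}$ then $\|\eta\|=\|\xi\|$ for every $\eta$ with $\rho(\xi,\eta)\le p^{j}_{-}$, since $\|\eta-\xi\|<\|\xi\|$ forces $\|\eta\|=\max\{\|\xi\|,\|\eta-\xi\|\}=p^{j}$. Hence $\|\cdot\|$, and therefore $e^{-t\|\cdot\|^{\alpha}}$, is constant on each such ball. On the compact set $\operatorname{supp}\widehat{u}_{0}$, which avoids the neighborhood $B_{\ell}(0)$ of the origin, the norm takes only finitely many values, so $e^{-t\|\xi\|^{\alpha}}$ is locally constant there with a uniform constancy parameter.

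Combining these observations, $e^{-t\|\xi\|^{\alpha}}\widehat{u}_{0}(\xi)$ is a product of two locally constant functions of compact support, hence itself locally constant with compact support, i.e. an element of $\mathcal{S}(\mathbb{A}_{f})$ by Lemma \ref{Lemma3}. Applying $\mathcal{F}^{-1}$ gives $u(\cdot,t)\in\mathcal{S}(\mathbb{A}_{f})$, and $\widehat{u}(0,t)=0$ yields $u(\cdot,t)\in\mathcal{L}_{0}(\mathbb{A}_{f})$. I do not expect a serious obstacle here: the compact-support and vanishing-at-zero claims are immediate once $\widehat{u}_{0}$ is known to vanish near the origin, and the only genuinely non-formal step is the local constancy of $e^{-t\|\xi\|^{\alpha}}$ away from $0$, which I would settle by the ultrametric argument above rather than by any quantitative estimate on the exponential factor.
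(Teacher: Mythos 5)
Your proof is correct and takes essentially the same route as the paper's: pass to the Fourier side, note that $\widehat{u}(\xi,t)=e^{-t\|\xi\|^{\alpha}}\widehat{u}_{0}(\xi)$, use that $\widehat{u}_{0}$ vanishes on a neighborhood of the origin while $e^{-t\|\xi\|^{\alpha}}$ is locally constant away from the origin, conclude the product is in $\mathcal{S}(\mathbb{A}_{f})$ with value $0$ at $\xi=0$, and apply $\mathcal{F}^{-1}$. The only difference is that you supply the ultrametric justification for the local constancy of $\|\cdot\|$ (and hence of $e^{-t\|\cdot\|^{\alpha}}$) outside the origin, a fact the paper simply asserts.
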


\begin{proof}
Since $e^{-t\|\xi\|^{\alpha}}$ is locally constant outside of the origin, the
function $h_{t}(\xi) = e^{-t\|\xi\|^{\alpha}} \cdot\widehat u_{0}(\xi
)\in\mathcal{S}(\mathbb{A}_{f})$ with $h_{t}(0)=0$. Then $\mathcal{F}%
_{\xi\rightarrow x}^{-1}\bigl(e^{-t\|\xi\|^{\alpha}} \cdot\widehat u_{0}%
(\xi)\bigr) = Z_{t}(x)\ast u_{0}(x) \in\mathcal{L}_{0}(\mathbb{A}_{f})$ for
$t>0$.
\end{proof}

\begin{theorem}
\label{Theo3}Let the function $u_{0}\in\mathcal{L}_{0}(\mathbb{A}_{f})$. Then
the function $u(x,t)$ defined by \eqref{defn_u(x,t)} is the classical solution
of Cauchy problem (\ref{CauchyProbS}).
\end{theorem}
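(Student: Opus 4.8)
The plan is to exploit the inclusion $\mathcal{L}_0(\mathbb{A}_f)\subset\mathcal{S}(\mathbb{A}_f)$, so that all the analytic content is already supplied by Lemmas \ref{LemmaCP1} and \ref{LemmaCP2}: for $u_0\in\mathcal{S}(\mathbb{A}_f)$ the function $u(x,t)=Z_t\ast u_0$ is continuously differentiable in $t$ on $[0,\infty)$, lies in $\mathcal{D}(D^\alpha)$, and satisfies $\partial_t u=-D^\alpha u$, all in the $L^2(\mathbb{A}_f)$ sense. What is left is to \emph{upgrade} these $L^2$ statements to the topology of $\mathcal{L}_0(\mathbb{A}_f)$ (that is, the $\mathcal{S}(\mathbb{A}_f)$-topology) and to confirm that the trajectory never leaves $\mathcal{L}_0$. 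The latter is immediate from Lemma \ref{Lemma41}, which gives $u(\cdot,t)\in\mathcal{L}_0$ for $t>0$, together with $u(\cdot,0)=u_0\in\mathcal{L}_0$.

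The key observation driving the upgrade is that \emph{the whole family of relevant functions lives in one fixed finite-dimensional space}. Since $u_0\in\mathcal{L}_0$, Proposition \ref{SlK finite dim} shows that $\widehat{u_0}$ is supported in some ball $B_{R_0}$ with $0\notin\operatorname{supp}\widehat{u_0}$ and has parameter of constancy $\geq\lambda_0$. On the Fourier side the three objects of interest are $\widehat{u}(\xi,t)=e^{-t\|\xi\|^\alpha}\widehat{u_0}(\xi)$, the difference quotient $\frac{e^{-t\|\xi\|^\alpha}-e^{-t_0\|\xi\|^\alpha}}{t-t_0}\widehat{u_0}(\xi)$, and $\widehat{D^\alpha u}(\xi,t)=\|\xi\|^\alpha e^{-t\|\xi\|^\alpha}\widehat{u_0}(\xi)$. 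In each case $\widehat{u_0}$ is multiplied by a \emph{radial} function, which is therefore constant on every sphere $S_{p^j}$; such multiplication neither enlarges the support (it remains inside $B_{R_0}$) nor decreases the parameter of constancy (it remains $\geq\lambda_0$), and these bounds are \emph{uniform in $t$}. Applying Proposition \ref{SlK finite dim} once more, the inverse Fourier transforms of all these functions lie in the single finite-dimensional space $\mathcal{S}_{(1/\lambda_0)_-}^{(1/R_0)_-}(\mathbb{A}_f)$.

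I would then invoke the elementary fact that on a finite-dimensional vector space all Hausdorff vector-space topologies coincide. Restricted to $\mathcal{S}_{(1/\lambda_0)_-}^{(1/R_0)_-}(\mathbb{A}_f)$, the $\mathcal{S}(\mathbb{A}_f)$-topology therefore agrees with the $L^2(\mathbb{A}_f)$-topology. Consequently the $L^2$ limits established in Lemmas \ref{LemmaCP1} and \ref{LemmaCP2} --- the convergence of the difference quotient to $-D^\alpha u(\cdot,t_0)$ and the $t$-continuity of $D^\alpha u(\cdot,t)$ --- hold verbatim in the topology of $\mathcal{L}_0(\mathbb{A}_f)$. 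This gives $u\in C^1\bigl([0,\infty),\mathcal{L}_0(\mathbb{A}_f)\bigr)$ and the pointwise identity $\partial_t u+D^\alpha u=0$ in $\mathcal{L}_0(\mathbb{A}_f)$ for every $t\geq0$, which is precisely the definition of a classical solution of \eqref{CauchyProbS}.

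The only point that requires genuine care is the uniformity in $t$ of the support and parameter-of-constancy bounds, since this is exactly what confines the trajectory to a single finite-dimensional space and thereby licenses the passage from $L^2$-convergence to $\mathcal{S}$-convergence. This uniformity rests entirely on two facts: the multipliers $e^{-t\|\xi\|^\alpha}$ and $\|\xi\|^\alpha e^{-t\|\xi\|^\alpha}$ are radial, and $\operatorname{supp}\widehat{u_0}$ is bounded away from the origin, so these multipliers are locally constant there with a $t$-independent parameter of constancy. Once this is verified, no new estimates beyond those of Lemmas \ref{LemmaCP1} and \ref{LemmaCP2} are needed and the remainder is a soft finite-dimensional argument.
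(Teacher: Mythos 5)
Your proposal is correct and follows essentially the same route as the paper's own proof: it combines Lemma \ref{Lemma41}, the uniform-in-$t$ confinement of $\widehat{u}(\cdot,t)$, the difference quotients, and $\widehat{D^{\alpha}u}(\cdot,t)$ to a single finite-dimensional space $\mathcal{S}_{R}^{\ell}$ (via local constancy of the radial multipliers on $\operatorname{supp}\widehat{u}_{0}$, which is bounded away from the origin, and Proposition \ref{SlK finite dim}), and then upgrades the $L^{2}$ statements of Lemmas \ref{LemmaCP1} and \ref{LemmaCP2} to the $\mathcal{S}(\mathbb{A}_{f})$-topology by the equivalence of topologies on finite-dimensional spaces. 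The paper phrases this last step as the equivalence of uniform and $L^{2}$ convergence on $\mathcal{S}_{R}^{\ell}$, which is the same finite-dimensional fact you invoke.
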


\begin{proof}
By Lemma \ref{Lemma41} the function $u(x,t)$ is correctly defined and
$u(x,t)\in\mathcal{D}(D^{\alpha}) = \mathcal{L}_{0}(\mathbb{A}_{f})$ for all
$t\ge0$.

We assert that there exist constants $\ell$ and $R$ not dependent on $t$ such
that $u(x,t)\in\mathcal{S}_{R}^{\ell}$ and $D^{\alpha}u(x,t)\in\mathcal{S}%
_{R}^{\ell}$ for all $t\geq0$. Consider the function $h_{t}(\xi)=e^{-t\Vert
\xi\Vert^{\alpha}}\cdot\widehat{u}_{0}(\xi)$, $t\geq0$. Since $u_{0}%
\in\mathcal{L}_{0}$, the function $e^{-t\Vert\xi\Vert^{\alpha}}$ is locally
constant on the support of $\widehat{u}_{0}$. Moreover, the parameter of
constancy of $e^{-t\Vert\xi\Vert^{\alpha}}$ on the support of $\widehat{u}%
_{0}$ does not depend on $t$. Hence there exist parameters $\ell^{\prime}$ and
$R^{\prime}$ such that $h_{t}\in\mathcal{S}_{R^{\prime}}^{\ell^{\prime}}$ for
any $t\geq0$. By Proposition \ref{SlK finite dim} we have $\bigl(\mathcal{F}%
^{-1}h_{t}\bigr)(x)=u(x,t)\in\mathcal{S}_{(1/\ell^{\prime})_{-}}%
^{(1/R^{\prime})_{-}}$ for any $t\geq0$. Similar proof works for the function
$g_{t}(\xi):=\mathcal{F}(D^{\alpha}u(x,t))=\Vert\xi\Vert^{\alpha}e^{-t\Vert
\xi\Vert^{\alpha}}\cdot\widehat{u}_{0}(\xi)$.

We recall that for finite dimensional spaces, the uniform convergence is
equivalent to the $L^{2}$-convergence. Since $\mathcal{S}_{R}^{\ell}$ is a
finite dimensional space, cf. Proposition \ref{SlK finite dim}, by\ applying
Lemmas \ref{LemmaCP1} and \ref{LemmaCP2}, we have $\frac{\partial u}{\partial
t}(x,t)\in\mathcal{L}_{0}(\mathbb{A}_{f})$,  $u(x,t)$ is a solution of
Cauchy problem (\ref{CauchyProbS}) and $u\in C^{1}\bigl([0,\infty
),\mathcal{L}_{0}(\mathbb{A}_{f})\bigr)$.
\end{proof}

\subsection{Non homogeneous equations}\label{SubsectNHE}

Consider the following Cauchy problem
\begin{equation}
\left\{
\begin{aligned} &\frac{\partial u(x,t)}{\partial t}+D^{\alpha}u(x,t)=f(x,t), && x\in \mathbb{A}_{f},\ t\in[0,T],\ T>0,\\ &u(x,0)=u_{0}(x), && u_{0}(x)\in\mathcal{D}(D^\alpha). \end{aligned}\right.
\label{CauchyProb2}%
\end{equation}
We say that a function $u(x,t)$ is a \textit{solution of} (\ref{CauchyProb2}),
if $u\in C\bigl([0,T],\mathcal{D}(D^{\alpha})\bigr)\cap C^{1}%
\bigl([0,T],L^{2}(\mathbb{A}_{f})\bigr)$ and if $u$ satisfies equation
\eqref{CauchyProb2} for $t\in[0,T]$.

\begin{theorem}
\label{Theo4}Let $\alpha>1$ and let $f\in C\bigl([0,T], L^{2}(\mathbb{A}%
_{f})\bigr)$. Assume that at least one of the following conditions is satisfied:

\begin{itemize}
\item[(i)] $f\in L^{1}\bigl((0,T), \mathcal{D}(D^{\alpha})\bigr)$;

\item[(ii)] $f\in W^{1,1}\bigl((0,T), L^{2}(\mathbb{A}_{f})\bigr)$.
\end{itemize}

Then Cauchy problem \eqref{CauchyProb2} has a unique solution given by
\[
u(x,t)=\int_{\mathbb{A}_{f}} Z\left(  x-y,t\right)  u_{0}\left(  y\right)
dy_{\mathbb{A}_{f}}+\int_{0}^{t} \biggl\{
\int_{\mathbb{A}_{f}} Z\left(  x-y,t-\tau\right)  f\left(  y,\tau\right)
dy_{\mathbb{A}_{f}}\biggr\}  d\tau.
\]

\end{theorem}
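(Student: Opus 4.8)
The plan is to recognize Cauchy problem \eqref{CauchyProb2} as a standard inhomogeneous abstract Cauchy problem in the Banach space $X=L^{2}(\mathbb{A}_{f})$ and to reduce the assertion to the classical variation-of-constants theory for $C_{0}$ semigroups. First I would recast \eqref{CauchyProb2} in the abstract form
\[
u^{\prime}(t)=-D^{\alpha}u(t)+f(t),\qquad u(0)=u_{0},
\]
where, by Lemma \ref{DgammaC0semigroup} and Theorem \ref{Theo4minus}(i), the operator $-D^{\alpha}$ is the infinitesimal generator of the contraction $C_{0}$ semigroup $\bigl(\mathcal{T}(t)\bigr)_{t\ge0}$, and for each $s\ge0$ this semigroup acts as convolution with the heat kernel: $\mathcal{T}(s)g=Z_{s}\ast g$, i.e. $\bigl(\mathcal{T}(s)g\bigr)(x)=\int_{\mathbb{A}_{f}}Z(x-y,s)\,g(y)\,dy_{\mathbb{A}_{f}}$.

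Next I would identify the claimed formula with the Duhamel (variation-of-constants) representation. The first summand is exactly $\mathcal{T}(t)u_{0}$, and since $\int_{\mathbb{A}_{f}}Z(x-y,t-\tau)f(y,\tau)\,dy_{\mathbb{A}_{f}}=\bigl(\mathcal{T}(t-\tau)f(\cdot,\tau)\bigr)(x)$, the proposed solution takes the abstract shape
\[
u(t)=\mathcal{T}(t)u_{0}+\int_{0}^{t}\mathcal{T}(t-\tau)f(\tau)\,d\tau .
\]
Because $f\in C\bigl([0,T],L^{2}(\mathbb{A}_{f})\bigr)$ and $\bigl(\mathcal{T}(t)\bigr)_{t\ge0}$ is strongly continuous, the integrand $\tau\mapsto\mathcal{T}(t-\tau)f(\tau)$ is continuous with values in $L^{2}(\mathbb{A}_{f})$, so the Bochner integral is well defined and $u\in C\bigl([0,T],L^{2}(\mathbb{A}_{f})\bigr)$; this is precisely the \emph{mild solution}.

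The core of the argument is then a direct appeal to the standard regularity theory, whose two classical sufficient conditions for a mild solution to be a genuine solution (continuously differentiable in $t$ with values in $L^{2}(\mathbb{A}_{f})$ and taking values in $\mathcal{D}(D^{\alpha})$) are exactly hypotheses (i) and (ii): under $f\in L^{1}\bigl((0,T),\mathcal{D}(D^{\alpha})\bigr)$, or under $f\in W^{1,1}\bigl((0,T),L^{2}(\mathbb{A}_{f})\bigr)$, the convolution term is continuously differentiable, lies in $\mathcal{D}(D^{\alpha})$, and the equation holds on $[0,T]$; here I would simply invoke the relevant statements, see e.g. \cite{Aren}, \cite{C-H}, \cite{E-N}. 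Uniqueness is immediate within the semigroup framework: the difference of two solutions solves the homogeneous problem \eqref{CauchyProb} with zero initial data and therefore vanishes by Theorem \ref{Theo4minus}(ii).

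The main obstacle is not analytic depth, as the substantive work was already carried out in proving that $-D^{\alpha}$ generates the convolution semigroup $\mathcal{T}(t)=Z_{t}\ast\cdot$. The point that requires care is the faithful translation between the integral-kernel form of the stated solution and the abstract Duhamel formula, together with the verification that the measurability and continuity needed for the Bochner integral are genuinely supplied by the hypothesis $f\in C\bigl([0,T],L^{2}(\mathbb{A}_{f})\bigr)$, and that conditions (i) and (ii) match verbatim the two standard hypotheses guaranteeing that the mild solution attains the regularity demanded in the definition of \emph{solution} preceding the statement.
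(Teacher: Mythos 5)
Your proposal is correct and follows essentially the same route as the paper: the paper's proof likewise reduces \eqref{CauchyProb2} to the abstract inhomogeneous Cauchy problem, invokes Theorem \ref{Theo4minus} to identify the semigroup generated by $-D^{\alpha}$ with convolution against $Z_{t}$, and then cites the standard variation-of-constants results (\cite[Proposition 3.1.16]{Aren}, \cite[Proposition 4.1.6]{C-H}) under hypotheses (i) or (ii). Your write-up simply makes explicit the Duhamel identification, the Bochner-integral continuity, and the uniqueness argument that the paper leaves to those references.
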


\begin{proof}
With the use of Theorem \ref{Theo4minus} the proof follows from well-known
results of the semigroup theory, see e.g. \cite[Proposition 3.1.16]{Aren}, \cite[Proposition 4.1.6]{C-H}.
\end{proof}

\section{\label{SEctHeatKernelA}The Adelic Heat Kernel on $\mathbb{A}$}

We recall that the \textit{Archimedean heat kernel} is defined as%
\[
Z(x_{\infty},t;\beta)=\int_{\mathbb{R}}\chi_{\infty}\left(  \xi_{\infty
}x_{\infty}\right)  e^{-t\left\vert \xi_{\infty}\right\vert _{\infty}^{\beta}%
}d\xi_{\infty},\quad t>0,\ \beta\in\left(  0,2\right]  .
\]
This heat kernel is a solution of the pseudodifferential equation
\[
\frac{\partial u\left(  x_{\infty},t\right)  }{\partial t}+\mathcal{F}%
_{\xi_{\infty}\rightarrow x_{\infty}}^{-1}\left(  \left\vert \xi_{\infty
}\right\vert _{\infty}^{\beta}\mathcal{F}_{x_{\infty}\rightarrow\xi_{\infty}%
}^{-1}u\left(  x_{\infty},t\right)  \right)  .
\]
For a more detailed discussion of the Archimedean heat kernel and its
properties the reader may consult \cite[Section 2]{D-G-V} and references therein.

From now on we will denote heat kernel \eqref{AdelicHeat} as $Z(x_{f}%
,t;\alpha)$.

\begin{definition}
\label{DefHeatKerA}For fixed $\alpha>1$, $\beta\in\left(  0,2\right]  $ we
define \textit{the heat kernel on} $\mathbb{A}$ as
\[
Z\left(  x,t;\alpha,\beta\right)  :=\int_{\mathbb{A}}\chi\left(  -\xi\cdot
x\right)  e^{-t\left(  \left\vert \xi_{\infty}\right\vert _{\infty}^{\beta
}+\left\Vert \xi_{f}\right\Vert ^{\alpha}\right)  }d\xi_{\mathbb{A}},\quad
x\in\mathbb{A},\ t>0.
\]
\end{definition}
Since $e^{-t\left\vert \xi_{\infty}\right\vert _{\infty}^{\beta}}\in
L^{1}\left(  \mathbb{R},d\xi_{\infty}\right)  $, $e^{-t\left\Vert \xi
_{f}\right\Vert ^{\alpha}}\in L^{1}\left(  \mathbb{A}_{f},d\xi_{\mathbb{A}%
_{f}}\right)  $, cf. \cite[Property 2.2]{D-G-V} and Proposition \ref{pro1},
and $d\xi_{\mathbb{A}}=d\xi_{\infty}d\xi_{\mathbb{A}_{f}}$, we have
\begin{equation}
Z\left(  x,t;\alpha,\beta\right)  =\mathcal{F}^{-1}\bigl(e^{-t\left\vert
\xi_{\infty}\right\vert _{\infty}^{\beta}}\bigr)\mathcal{F}^{-1}%
\bigl(e^{-t\left\Vert \xi_{f}\right\Vert ^{\alpha}}\bigr)=Z(x_{\infty}%
,t;\beta)Z\left(  x_{f},t;\alpha\right)  . \label{heat_kernels}%
\end{equation}
For $t>0$ fixed, we use the notation $Z_{t}\left(  x;\alpha,\beta\right)  $
instead of $Z\left(  x,t;\alpha,\beta\right)  $.

\begin{theorem}
\label{Theo5}The adelic heat kernel on $\mathbb{A}$ possesses the following properties

\begin{itemize}
\item[(i)] $Z\left(  x,t;\alpha,\beta\right)  \geq0$ for any $t>0$;

\item[(ii)] $\int_{\mathbb{A}}Z\left(  x,t;\alpha,\beta\right)  dx_{\mathbb{A}%
}=1$ for any $t>0$;

\item[(iii)] $Z_{t}\left(  x;\alpha,\beta\right)  \in L^{1}(\mathbb{A})$ for
any $t>0$;

\item[(iv)] $Z_{t}\left(  x;\alpha,\beta\right)  \ast Z_{t^{\prime}}\left(
x;\alpha,\beta\right)  =Z_{t+t^{\prime}}\left(  x;\alpha,\beta\right)  $ for
any $t,t^{\prime}>0$;

\item[(v)] $\lim_{t\rightarrow0{+}}Z\left(  x,t;\alpha,\beta\right)
=\delta\left(  x_{\infty}\right)  \times\delta\left(  x_{f}\right)
=\delta\left(  x\right)  $ in $S^{\prime}(\mathbb{A})$;

\item[(vi)] $Z_{t}\left(  x;\alpha,\beta\right)  $ is a uniformly
continuous function for any fixed $t>0$;

\item[(vii)] $Z(x,t;\alpha,\beta)$ is uniformly continuous in $t$, i.e. $Z(x,t;\alpha,\beta)\in
C((0,\infty), C(\mathbb{A}))$ or $\lim_{t^{\prime}\to t}\max
_{x\in\mathbb{A}}|Z(x,t;\alpha,\beta) - Z(x,t^{\prime};\alpha,\beta)|=0$ for any $t>0$.
\end{itemize}
\end{theorem}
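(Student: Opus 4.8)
The plan is to exploit the product structure \eqref{heat_kernels}, which factors the adelic heat kernel as $Z(x,t;\alpha,\beta)=Z(x_\infty,t;\beta)\,Z(x_f,t;\alpha)$, together with the factorization $dx_{\mathbb{A}}=dx_\infty\,dx_{\mathbb{A}_f}$ of the Haar measure. Each of the seven properties then reduces, via Fubini--Tonelli, to the corresponding property of the two factors: the finite adelic statements are supplied by Theorem \ref{Theo1} and Proposition \ref{pro1}, while the Archimedean statements for $Z(x_\infty,t;\beta)$ with $\beta\in(0,2]$ are classical (see \cite[Section 2]{D-G-V}).

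First I would dispose of the elementary items. For (i), both factors are non-negative — the finite adelic one by Theorem \ref{Theo1}(i), and $Z(x_\infty,t;\beta)$ because it is the transition density of a symmetric $\beta$-stable process — so their product is non-negative. For (ii) and (iii), integrability of $Z_t$ follows from integrability of each factor, and Tonelli gives
\[
\int_{\mathbb{A}} Z(x,t;\alpha,\beta)\,dx_{\mathbb{A}} = \Bigl(\int_{\mathbb{R}} Z(x_\infty,t;\beta)\,dx_\infty\Bigr)\Bigl(\int_{\mathbb{A}_f} Z(x_f,t;\alpha)\,dx_{\mathbb{A}_f}\Bigr) = 1\cdot 1,
\]
using the Archimedean normalization and Theorem \ref{Theo1}(ii). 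For (iv), convolution on the product group $\mathbb{A}=\mathbb{R}\times\mathbb{A}_f$ factors: for functions of the form $F(x)=f(x_\infty)g(x_f)$ one has $(F\ast G)(x)=(f\ast h)(x_\infty)(g\ast k)(x_f)$, so the semigroup property reduces to the Archimedean one and Theorem \ref{Theo1}(iv) applied factor by factor.

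For (vi), the symbol $e^{-t(|\xi_\infty|_\infty^\beta+\|\xi_f\|^\alpha)}$ belongs to $L^1(\mathbb{A})$ for each fixed $t>0$, being a product of $e^{-t|\xi_\infty|_\infty^\beta}\in L^1(\mathbb{R})$ and $e^{-t\|\xi_f\|^\alpha}\in L^1(\mathbb{A}_f)$ (the latter by Proposition \ref{pro1} with $\beta=0$); hence its inverse Fourier transform $Z_t$ is uniformly continuous. Property (v) follows by testing against a factorizable function $f=f_\infty f_f\in\mathcal{S}(\mathbb{A})$, since each factor converges to the respective Dirac distribution, exactly as in the proof of Theorem \ref{Theo1}(v).

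The item requiring the most care, and the main obstacle, is (vii). Here I would mimic the argument for Theorem \ref{Theo1}(vii): for $t<t'$ the Mean Value Theorem yields
\[
e^{-t(|\xi_\infty|_\infty^\beta+\|\xi_f\|^\alpha)}-e^{-t'(|\xi_\infty|_\infty^\beta+\|\xi_f\|^\alpha)} = (t'-t)\bigl(|\xi_\infty|_\infty^\beta+\|\xi_f\|^\alpha\bigr)e^{-\tilde t(|\xi_\infty|_\infty^\beta+\|\xi_f\|^\alpha)}
\]
for some $\tilde t\in(t,t')$, so that $|Z(x,t;\alpha,\beta)-Z(x,t';\alpha,\beta)|$ is bounded by $|t-t'|$ times $\int_{\mathbb{A}}\bigl(|\xi_\infty|_\infty^\beta+\|\xi_f\|^\alpha\bigr)e^{-t_0(|\xi_\infty|_\infty^\beta+\|\xi_f\|^\alpha)}\,d\xi_{\mathbb{A}}$ for some $0<t_0<t,t'$. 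The crux is the finiteness of this integral. Splitting the factor $|\xi_\infty|_\infty^\beta+\|\xi_f\|^\alpha$ into its two summands and using $d\xi_{\mathbb{A}}=d\xi_\infty\,d\xi_{\mathbb{A}_f}$, the integral becomes a sum of products whose finite adelic factors are $\int_{\mathbb{A}_f}\|\xi_f\|^\alpha e^{-t_0\|\xi_f\|^\alpha}\,d\xi_{\mathbb{A}_f}$ and $\int_{\mathbb{A}_f}e^{-t_0\|\xi_f\|^\alpha}\,d\xi_{\mathbb{A}_f}$ — both finite by Proposition \ref{pro1} (with $\beta=\alpha$ and $\beta=0$ respectively) — and whose Archimedean factors $\int_{\mathbb{R}}|\xi_\infty|_\infty^\beta e^{-t_0|\xi_\infty|_\infty^\beta}\,d\xi_\infty$ and $\int_{\mathbb{R}}e^{-t_0|\xi_\infty|_\infty^\beta}\,d\xi_\infty$ are finite by elementary estimates. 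This establishes uniform continuity in $t$ and completes the proof.
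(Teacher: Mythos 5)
Your proposal is correct and takes essentially the same approach as the paper: the paper's proof is a one-line reduction via the factorization \eqref{heat_kernels}, $Z(x,t;\alpha,\beta)=Z(x_{\infty},t;\beta)\,Z(x_{f},t;\alpha)$, to the corresponding properties of the two factors (Theorem \ref{Theo1} for the finite adelic part and \cite[Section 2]{D-G-V} for the Archimedean part), which is precisely the reduction you carry out item by item. Your only deviation is cosmetic: in (vii) you rerun the Mean Value Theorem estimate directly on the full adelic symbol rather than combining the factor properties, but that is the same computation as in the proof of Theorem \ref{Theo1}(vii) and is equally valid.
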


\begin{proof}
The statement follows from (\ref{heat_kernels}) and the corresponding
properties for \linebreak$Z(x_{\infty},t;\beta)$ and $Z\left(  x_{f}%
,t;\alpha\right)  $, see \cite[Section 2]{D-G-V} and Theorem \ref{Theo1}.
\end{proof}

\section{\label{SEctMarkovA}Markov Processes on $\mathbb{A}$}

Let $\mathcal{B}(\mathbb{A})$ denote the $\sigma$-algebra of the Borel sets of
$\left(  \mathbb{A},\rho_{\mathbb{A}}\right)  $. Along this section we suppose
that $\alpha>1$ and $\beta\in\left(  0,2\right]  $ are fixed parameters. We
set
\[
p\left(  t,x,y;\alpha,\beta\right)  :=Z(x-y,t;\alpha,\beta)\qquad\text{for }
t>0,x,y\in\mathbb{A}.
\]
Note that
\begin{multline*}
p\left(  t,x,y;\alpha,\beta\right)  =Z(x_{\infty}-y_{\infty} ,t;\beta)Z\left(
x_{f}-y_{f},t;\alpha\right) \\
=:p\left(  t,x_{\infty},y_{\infty};\beta\right)  p\left(  t,x_{f}
,y_{f};\alpha\right)  ,
\end{multline*}
where $p\left(  t,x_{\infty},y_{\infty};\beta\right)  =Z(x_{\infty}-y_{\infty
},t;\beta)$ and $p\left(  t,x_{f},y_{f};\alpha\right)  :=p\left(
t,x_{f},y_{f}\right)  = Z(x_{f}-y_{f}, t; \alpha)$. We also define for
$x_{\infty},y_{\infty}\in\mathbb{R}$ and $B_{\infty}\in\mathcal{B}\left(
\mathbb{R}\right)  $
\[
P\left(  t,x_{\infty},B_{\infty};\beta\right)  :=
\begin{cases}
\int_{B_{\infty}}p\left(  t,x_{\infty},y_{\infty};\beta\right)  dy_{\infty}, &
\text{for }t>0,\\
\mathbf{1}_{B_{\infty}}\left(  x_{\infty}\right)  , & \text{for }t=0
\end{cases}
\]
and for $x,y\in\mathbb{A}$ and $B\in\mathcal{B}\left(  \mathbb{A}\right)  $
\[
P\left(  t,x,B;\alpha,\beta\right)  :=
\begin{cases}
\int_{B}p\left(  t,x_{\infty},y_{\infty};\beta\right)  p\left(  t,x_{f}%
,y_{f};\alpha\right)  dy_{\infty}dy_{\mathbb{A}_{f}}, & \text{for }t>0,\\
\mathbf{1}_{B}\left(  x\right)  , & \text{for }t=0.
\end{cases}
\]

\begin{lemma}
\label{lema5}With the above notation the following assertions hold:

\begin{itemize}
\item[(i)] $p\left(  t,x,y;\alpha,\beta\right)  $ is a normal transition density;

\item[(ii)] $P\left(  t,x,B;\alpha,\beta\right)  $ is a normal transition function.
\end{itemize}
\end{lemma}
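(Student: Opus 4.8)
The plan is to exploit the product factorization of the adelic heat kernel recorded in \eqref{heat_kernels}, namely $p(t,x,y;\alpha,\beta) = Z(x-y,t;\alpha,\beta) = Z(x_\infty-y_\infty,t;\beta)\,Z(x_f-y_f,t;\alpha)$, together with the full list of properties collected in Theorem \ref{Theo5}. The argument will run exactly parallel to the one used for Lemma \ref{lema1} in the finite-adelic case, where the conclusion was drawn from Theorem \ref{Theo1}. Both assertions amount to checking the defining axioms of a normal transition density and a normal transition function in the sense of \cite[Chapters Two, Three]{Dyn}, and all the analytic content needed for this has already been assembled in Theorem \ref{Theo5}.

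For part (i), I would verify the requirements on $p(t,x,y;\alpha,\beta)$ for $t>0$ one at a time. Non-negativity is immediate from Theorem \ref{Theo5}(i). That $p(t,x,\cdot;\alpha,\beta)$ is a probability density, i.e. $\int_{\mathbb{A}} p(t,x,y;\alpha,\beta)\,dy_{\mathbb{A}} = 1$, follows from the translation invariance of the Haar measure $dx_{\mathbb{A}}$ and Theorem \ref{Theo5}(ii); this also gives \emph{normality}, i.e. total mass exactly $1$ rather than merely $\le 1$. Joint measurability in $(x,y)$ is a consequence of the uniform continuity of $Z_t$ stated in Theorem \ref{Theo5}(vi). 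The central point is the Chapman--Kolmogorov identity
\[
p(t+s,x,y;\alpha,\beta) = \int_{\mathbb{A}} p(t,x,z;\alpha,\beta)\, p(s,z,y;\alpha,\beta)\, dz_{\mathbb{A}},
\]
which, because $p$ is space-homogeneous, is precisely the convolution semigroup property $Z_t \ast Z_s = Z_{t+s}$ of Theorem \ref{Theo5}(iv); one unfolds it by writing both sides as the value of a convolution and applying Fubini's theorem, which is legitimate since $Z_t \in L^1(\mathbb{A})$ by Theorem \ref{Theo5}(iii). The consistency of the definition at $t=0$, where $P(0,x,B;\alpha,\beta) = \mathbf{1}_B(x)$, is matched by the limit $\lim_{t\to 0+} Z_t = \delta$ of Theorem \ref{Theo5}(v).

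For part (ii), the transition function $P(t,x,B;\alpha,\beta) = \int_B p(t,x,y;\alpha,\beta)\,dy_{\mathbb{A}}$ inherits all its properties from the density by integration: for fixed $t,x$ the map $B \mapsto P(t,x,B;\alpha,\beta)$ is a probability measure because $p(t,x,y;\alpha,\beta)\,dy_{\mathbb{A}}$ is a non-negative measure of total mass $1$; measurability in $x$ comes from part (i); and the Chapman--Kolmogorov equation for $P$ is obtained by integrating the one for $p$ over $B$ and invoking Fubini once more. Thus the statement follows from Theorem \ref{Theo5}, in the same way Lemma \ref{lema1} followed from Theorem \ref{Theo1}; see \cite[Section 2.1]{Dyn} for the precise form of the axioms.

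I do not anticipate any serious obstacle, since the difficult analytic estimates are already contained in Theorem \ref{Theo5}. The only step warranting a little care is the application of Fubini's theorem in the verification of Chapman--Kolmogorov, and this is justified at once by the integrability $Z_t \in L^1(\mathbb{A})$ together with the non-negativity of the kernel.
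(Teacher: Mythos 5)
Your proof is correct and follows essentially the same route as the paper: the paper disposes of the lemma by noting that $p(t,x,y;\alpha,\beta)$ factors as the product of the Archimedean and finite-adelic transition densities and citing Lemma \ref{lema1}, while you check Dynkin's axioms directly from Theorem \ref{Theo5} (whose proof is itself that factorization), exactly mirroring how Lemma \ref{lema1} was deduced from Theorem \ref{Theo1}. Your expanded verification --- non-negativity, normalization via translation invariance of Haar measure, Chapman--Kolmogorov as the convolution semigroup property with Fubini justified by $Z_t\in L^{1}(\mathbb{A})$ --- is precisely the routine content the paper leaves implicit.
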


\begin{proof}
The statement follows from the corresponding properties for the functions
$p\left(  t,x_{\infty},y_{\infty};\beta\right)  $ and $p\left(  t,x_{f}%
,y_{f}\right)  $, see Lemma \ref{lema1}.
\end{proof}

\begin{lemma}
\label{lema6}The transition function $P\left(  t,x,B;\alpha,\beta\right)  $
satisfies the following two conditions:

\begin{itemize}
\item[(i)] for each $u\geq0$ and a compact $B$
\[
\lim_{x\rightarrow\infty}\sup_{t\leq u}P\left(  t,x,B;\alpha,\beta\right)
=0;\qquad[\text{Condition }L(B)]
\]

\item[(ii)] for each $\epsilon>0$ and a compact $B$
\[
\lim_{t\rightarrow0{+}}\sup_{x\in B}P\Bigl(  t,x,\mathbb{A} \setminus
\overset{\circ}{B}_{\epsilon}\left(  x\right)  ; \alpha,\beta\Bigr)
=0,\qquad[\text{Condition }M(B)]
\]
where $\overset{\circ}{B}_{\epsilon}\left(  x\right)  :=\left\{
y\in\mathbb{A}:\rho_{\mathbb{A}}\left(  x,y\right)  <\epsilon\right\}  $.
\end{itemize}
\end{lemma}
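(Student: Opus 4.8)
The plan is to exploit the factorization of the adelic heat kernel established in \eqref{heat_kernels}, namely $p(t,x,y;\alpha,\beta)=p(t,x_{\infty},y_{\infty};\beta)\,p(t,x_{f},y_{f};\alpha)$, and to reduce both conditions to the corresponding one-component statements: Condition $L$ and Condition $M$ for the Archimedean transition density $p(t,x_{\infty},y_{\infty};\beta)$ (standard properties of the real stable heat kernel, cf. \cite[Section 2]{D-G-V}, and in any case provable by the same estimates as in the finite case) and for the finite-adelic density $p(t,x_{f},y_{f};\alpha)$, which were established in Lemma \ref{lema22}. Throughout I would use that each one-component transition function integrates to $1$ over the whole space, so that each factor $P(t,x_{\infty},\cdot;\beta)$ and $P(t,x_{f},\cdot;\alpha)$ is bounded by $1$.

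For Condition $L(B)$, I would first note that since $B$ is compact its projections $B_{\infty}\subset\mathbb{R}$ and $B_{f}\subset\mathbb{A}_{f}$ are compact and $B\subseteq B_{\infty}\times B_{f}$. Integrating the nonnegative factorized density over the larger product set gives
\[
P(t,x,B;\alpha,\beta)\le P(t,x_{\infty},B_{\infty};\beta)\,P(t,x_{f},B_{f};\alpha),
\]
where each factor lies in $[0,1]$. Since $\rho_{\mathbb{A}}(0,x)=|x_{\infty}|_{\infty}+\|x_{f}\|$, the condition $x\to\infty$ forces $\max\{|x_{\infty}|_{\infty},\|x_{f}\|\}\ge\frac{1}{2}\rho_{\mathbb{A}}(0,x)\to\infty$. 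I would then split into two cases according to which of $|x_{\infty}|_{\infty}$ or $\|x_{f}\|$ realizes the maximum: in the first case I bound the product by its Archimedean factor and invoke Condition $L(B_{\infty})$ for the real heat kernel, and in the second case I bound it by the finite-adelic factor and invoke Condition $L(B_{f})$ from Lemma \ref{lema22}(i). In either case $\sup_{t\le u}P(t,x,B;\alpha,\beta)\to 0$.

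For Condition $M(B)$, the key observation is that $\rho_{\mathbb{A}}$ is a sum, so $\rho_{\mathbb{A}}(x,y)\ge\epsilon$ implies $|x_{\infty}-y_{\infty}|_{\infty}\ge\epsilon/2$ or $\|x_{f}-y_{f}\|\ge\epsilon/2$. Hence
\[
\mathbb{A}\setminus\overset{\circ}{B}_{\epsilon}(x)\subseteq\bigl\{y:|x_{\infty}-y_{\infty}|_{\infty}\ge\epsilon/2\bigr\}\cup\bigl\{y:\|x_{f}-y_{f}\|\ge\epsilon/2\bigr\},
\]
and, integrating the factorized density and using that the unconstrained component integrates to $1$, I obtain
\[
P\bigl(t,x,\mathbb{A}\setminus\overset{\circ}{B}_{\epsilon}(x);\alpha,\beta\bigr)\le P\bigl(t,x_{\infty},\mathbb{R}\setminus\overset{\circ}{B}_{\epsilon/2}(x_{\infty});\beta\bigr)+\int_{\|x_{f}-y_{f}\|\ge\epsilon/2}p(t,x_{f},y_{f};\alpha)\,dy_{\mathbb{A}_{f}}.
\]
The first summand tends to $0$ as $t\to 0+$, uniformly in $x_{\infty}$, by Condition $M$ for the real heat kernel; the second, after the substitution $z=x_{f}-y_{f}$ and the inclusion $\{\|z\|\ge\epsilon/2\}\subseteq\{\|z\|>\delta\}$ for any $\delta<\epsilon/2$, is bounded via \eqref{EqHKintEstimate} by $C(\delta)\,t\to 0$ uniformly in $x_{f}$. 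Taking the supremum over $x\in B$ and letting $t\to 0+$ yields Condition $M(B)$.

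The only genuinely delicate point is the case analysis in Condition $L(B)$: because $x\to\infty$ in $(\mathbb{A},\rho_{\mathbb{A}})$ does not force both coordinates to diverge simultaneously, one cannot simply multiply two limits, and must instead argue that whichever coordinate is large already drives the corresponding factor to $0$ while the other factor stays bounded by $1$. Everything else reduces, via the product structure and the sub-additivity of $\rho_{\mathbb{A}}$, to the already-established one-component estimates.
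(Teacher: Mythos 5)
Your proposal is correct and follows essentially the same route as the paper: both bound $B$ by a product of compact sets and factor $P(t,x,B;\alpha,\beta)$ for Condition $L(B)$ (with the same case analysis on which coordinate of $x$ diverges, the other factor being bounded by $1$), and both handle Condition $M(B)$ by covering $\mathbb{A}\setminus\overset{\circ}{B}_{\epsilon}(x)$ with the two slabs $\bigl\{|x_{\infty}-y_{\infty}|_{\infty}\geq\epsilon/2\bigr\}$ and $\bigl\{\|x_{f}-y_{f}\|\geq\epsilon/2\bigr\}$, reducing to the finite-adelic estimates of Lemma \ref{lema22} and \eqref{EqHKintEstimate}. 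The only difference is that where you cite the Archimedean Conditions $L$ and $M$ as standard, the paper derives them explicitly from the kernel bound $Z(t,x_{\infty};\beta)\leq Ct^{1/\beta}/\bigl(t^{2/\beta}+x_{\infty}^{2}\bigr)$ of \cite[(2.2)]{D-G-V}, which is a routine computation and does not affect the correctness of your argument.
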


\begin{proof}
(i) Note that there exist compact subsets $K_{\infty}\subset\mathbb{R}$ and
$K_{f}\subset\mathbb{A}_{f}$ such that $B\subset K_{\infty}\times K_{f}$.
Then
\[
P(t,x,B; \alpha,\beta)\leq P(t,x_{\infty},K_{\infty}; \beta)P(t,x_{f},K_{f}).
\]

Since $\rho_{\mathbb{A}}\left(  0,x\right)  \rightarrow\infty$ we have either
$\rho\left(  0,x_{f}\right)  \rightarrow\infty$ or $| x_{\infty}| _{\infty
}\rightarrow\infty$. Therefore it is sufficient to show that
\begin{equation}
\lim_{x_{f}\rightarrow\infty}\sup_{t\leq u}P(t,x_{f},K_{f})=0 \label{eq10}%
\end{equation}
and
\begin{equation}
\lim_{x_{\infty}\rightarrow\infty}\sup_{t\leq u}P(t,x_{\infty},K_{\infty
};\beta)=0. \label{eq11}%
\end{equation}

The equality (\ref{eq10}) follows from Lemma \ref{lema22}. By \cite[(2.2)]%
{D-G-V}%
\begin{equation}
Z(t,x_{\infty};\beta)\leq\frac{Ct^{\frac{1}{\beta}}}{t^{\frac{2}{\beta}%
}+x_{\infty}^{2}}\qquad\text{for }t>0,x_{\infty}\in\mathbb{R}.
\label{heatkernerlint}%
\end{equation}
Then
\[
P(t,x_{\infty},K_{\infty};\beta)=\int_{K_{\infty}} Z(t,x_{\infty}-y_{\infty
};\beta)dy_{\infty}\leq Ct^{\frac{1}{\beta}}\int_{K_{\infty}} \frac
{1}{t^{\frac{2}{\beta}}+\left(  x_{\infty}-y_{\infty}\right)  ^{2}}dy_{\infty
}.
\]
As $x_{\infty}\rightarrow\infty$ we have $\operatorname{dist}\left(
x_{\infty},K_{\infty}\right)  \to\infty$ and $|x_{\infty}-y_{\infty}%
|\ge\operatorname{dist}(x_{\infty}; K_{\infty})$ for any $y_{\infty}\in
K_{\infty}$ and
\[
\frac{1}{t^{\frac{2}{\beta}}+\left(  x_{\infty}-y_{\infty}\right)  ^{2} }%
\le\frac1 {\operatorname{dist}^{2}(x_{\infty}; K_{\infty})}.
\]
Hence
\[
\lim_{x_{\infty}\rightarrow\infty}\sup_{t\leq u}P(t,x_{\infty},K_{\infty};
\beta) \leq\lim_{x_{\infty}\rightarrow\infty}Cu^{\frac{1}{\beta}}%
\int_{K_{\infty}} \frac{1}{\operatorname{dist}^{2}(x_{\infty}; K_{\infty}%
)}dy_{\infty}=0.
\]

(ii) Since $\overset{\circ}{B}_{\epsilon}\left(  x\right)  \supseteq
\overset{\circ}{B}_{\frac{\epsilon}{2}}\left(  x_{\infty}\right)  \times
{B}_{\frac{\epsilon}{2}}\left(  x_{f}\right)  $, where
\[
\overset{\circ}{B}_{\frac{\epsilon}{2}}\left(  x_{\infty}\right)  =\left\{
y_{\infty}\in\mathbb{R}:\left|  x_{\infty}-y_{\infty}\right|  <\frac{\epsilon
}{2}\right\}
\]
and $B_{\frac{\epsilon}{2}}\left(  x_{f}\right)  $ is given by
\eqref{adelic_ball}, we have $\mathbb{A}\setminus\overset{\circ}{B}_{\epsilon
}\left(  x\right)  \subseteq\bigl(\mathbb{R}\times\mathbb{A}_{f}%
\bigr)\setminus\bigl(\overset{\circ}{B}_{\frac{\epsilon}{2}}\left(  x_{\infty
}\right)  \times{B}_{\frac{\epsilon}{2}}\left(  x_{f}\right)  \bigr)\subseteq
\Bigl(\bigl(\mathbb{R}\setminus\overset{\circ}{B}_{\frac{\epsilon}{2}}\left(
x_{\infty}\right)  \bigr)\times\mathbb{A}_{f}\Bigr) \cup\Bigl(\mathbb{R}%
\times\bigl(\mathbb{A}_{f}\setminus{B}_{\frac{\epsilon}{2}}\left(
x_{f}\right)  \bigr)\Bigr)$ and with the use of \cite[(2.1)]{D-G-V} and
Theorem \ref{Theo1} (ii) we obtain
\begin{multline*}
P\bigl( t,x,\mathbb{A}\setminus\overset{\circ}{B} _{\epsilon}\left(  x\right)
; \alpha,\beta\bigr) \leq\biggl( \int_{| x_{\infty}-y_{\infty}| _{\infty}%
\geq\frac{\epsilon}{2}} p\left(  t,x_{\infty},y_{\infty};\beta\right)
dy_{\infty}\biggr)\\
+\biggl(\int_{\rho\left(  x_{f},y_{f}\right)  >\frac{\epsilon}{2}} p\left(
t,x_{f},y_{f}\right)  dy_{\mathbb{A}_{f}}\biggr)\\
\le P\bigl(t, x_{\infty}, \mathbb{R}\setminus\overset{\circ}{B} _{\frac
\epsilon2}\left(  x\right)  ; \beta\bigr) + P\bigl( t,x_{f},\mathbb{A}%
_{f}\setminus B_{\frac{\epsilon}{2}}\left(  x_{f}\right)  \bigr).
\end{multline*}
Now the result follows from Lemma \ref{lema22} and the inequality
\begin{multline*}
P\bigl(t, x_{\infty}, \mathbb{R}\setminus\overset{\circ}{B} _{\frac\epsilon
2}\left(  x\right)  \bigr) = \int_{|y_{\infty}|\ge\frac\epsilon2}
Z(t,y_{\infty};\beta)\,dy_{\infty}\le C\int_{|y_{\infty}|\ge\frac\epsilon
2}\frac{t^{\frac1\beta}}{t^{\frac2\beta}+y_{\infty}^{2}}\,dy_{\infty}\\
= C\int_{|z_{\infty}|\ge\frac\epsilon2 t^{-1/\beta}}\frac1{1+z_{\infty}^{2}
}dz_{\infty}\to0\qquad\text{as}\ t\to+0.
\end{multline*}
\vskip-1.2em
\end{proof}

\begin{theorem}
\label{Theo6}$Z\left(  x,t;\alpha,\beta\right)  $ with $\alpha>1$ and
$\beta\in\left(  0,2\right]  $ is the transition density of a time- and space
homogenous Markov process which is bounded, right-continuous and has no
discontinuities other than jumps.
\end{theorem}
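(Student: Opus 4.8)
The plan is to deduce the statement directly from the general construction theorem for Markov processes, namely \cite[Theorem 3.6]{Dyn}, in exactly the same way as Theorem \ref{Theo2} was obtained for the finite adeles. That theorem asserts that a normal transition function on a semi-compact space which satisfies Conditions $L(B)$ and $M(B)$ is the transition function of a Markov process that is right-continuous and has no discontinuities other than jumps. Thus the whole argument should reduce to checking that the three hypotheses of \cite[Theorem 3.6]{Dyn} are met by $P(t,x,B;\alpha,\beta)$ on $(\mathbb{A},\rho_{\mathbb{A}})$, and each of these has already been prepared above.

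First I would record that $(\mathbb{A},\rho_{\mathbb{A}})$ is a semi-compact space, i.e. a locally compact Hausdorff space with a countable base: this is precisely Remark \ref{nota3}, which follows from the product structure $\mathbb{A}=\mathbb{R}\times\mathbb{A}_{f}$ together with the second-countability of each factor. Next, Lemma \ref{lema5} supplies that $p(t,x,y;\alpha,\beta)$ is a normal transition density and hence that $P(t,x,B;\alpha,\beta)$ is a normal transition function. Finally, Lemma \ref{lema6} provides Conditions $L(B)$ and $M(B)$ for $P(t,x,B;\alpha,\beta)$. With these three ingredients in place, \cite[Theorem 3.6]{Dyn} yields the desired Markov process together with its regularity. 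The space- and time-homogeneity of the process would then be immediate from the convolution form of the transition density: since $p(t,x,y;\alpha,\beta)=Z(x-y,t;\alpha,\beta)$ depends on $(x,y)$ only through the difference $x-y$ and carries no explicit dependence on the initial time, both invariance properties hold.

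The essential observation is that there is no remaining analytic obstacle here, since all the genuine work has already been absorbed into the preparatory lemmas. In particular, the heat-kernel estimate \eqref{EqHKestimate} and its Archimedean counterpart \eqref{heatkernerlint} are the substantial inputs, and they were combined in Lemma \ref{lema6} to establish $L(B)$ and $M(B)$; it is the product decomposition \eqref{heat_kernels} of the kernel $Z(x,t;\alpha,\beta)=Z(x_{\infty},t;\beta)\,Z(x_{f},t;\alpha)$ that lets those two conditions be verified factor by factor, on $\mathbb{R}$ and on $\mathbb{A}_{f}$ separately. The present theorem is therefore a formal corollary of the foregoing development. The only genuine point of care, and the place I would check most carefully, is that $(\mathbb{A},\rho_{\mathbb{A}})$ is no longer non-Archimedean, so that in Condition $M(B)$ one must work with the open balls $\overset{\circ}{B}_{\epsilon}(x)$ appearing in Lemma \ref{lema6} rather than with the closed balls used for $(\mathbb{A}_{f},\rho)$ in Theorem \ref{Theo2}; one should confirm that \cite[Theorem 3.6]{Dyn} is stated for semi-compact spaces in the generality that accommodates this mixed Archimedean/non-Archimedean setting.
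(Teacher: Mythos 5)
Your proposal is correct and matches the paper's own proof, which likewise deduces the theorem from \cite[Theorem 3.6]{Dyn} combined with the semi-compactness of $\left(\mathbb{A},\rho_{\mathbb{A}}\right)$ (Remark \ref{nota3}) and Lemmas \ref{lema5} and \ref{lema6}. Your additional remarks on homogeneity and on the use of open balls $\overset{\circ}{B}_{\epsilon}(x)$ in the Archimedean setting are sound but not needed beyond what the cited lemmas already provide.
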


\begin{proof}
The result follows from \cite[Theorem 3.6]{Dyn}, Remark \ref{nota3} (ii) and
Lemmas \ref{lema5}, \ref{lema6}.
\end{proof}

\section{\label{SectionCauchyA}Cauchy problem for parabolic type equations on
$\mathbb{A}$}
In this section we study Cauchy problems for parabolic type equations on $\mathbb{A}$ and present analogues of the results of
Subsections \ref{initialValSA}, \ref{initialValL2} and \ref{SubsectNHE}.

\subsection{Homogeneous equations}
Consider the following Cauchy problem
\begin{equation}
\left\{
\begin{aligned}
& \frac{\partial u\left(  x,t\right)  }{\partial t}+D^{\alpha,\beta}u\left(
x,t\right)  =0, && x\in\mathbb{A},\ t\in\left[  0,+\infty\right)  ,\\
& u(x,0)=u_{0}(x), && u_{0}(x)\in\mathcal{D}\left(  D^{\alpha,\beta}\right)  ,
\end{aligned}
\right.  \label{CauchyProb3}%
\end{equation}
where $\alpha>1$, $\beta\in\left(  0,2\right]  $, $D^{\alpha,\beta}$ is the
pseudodifferential operator defined by (\ref{POperator2}) with the domain given by
\eqref{DabDom} and $u:\mathbb{A}\times\lbrack0,\infty)\rightarrow\mathbb{C}$
is an unknown function. We say that a function $u(x,t)$\ is a \textit{solution
of} (\ref{CauchyProb3}), if $u\in C\bigl([0,\infty),\mathcal{D}(D^{\alpha
,\beta})\bigr)\cap C^{1}\bigl([0,\infty),L^{2}(\mathbb{A})\bigr)$ and if
$u$ satisfies equation (\ref{CauchyProb3}) for all $t\geq0$.

As in Section \ref{SectCauchy} we understand
the notions of continuity, differentiability and equalities in the sense of  $L^{2}(\mathbb{A})$.

We first consider Cauchy problem (\ref{CauchyProb3}) with the initial value from
$\mathcal{S}(\mathbb{A})$. We define the function
\begin{equation}
u(x,t):=u(x,t;\alpha,\beta)=Z_{t}(x;\alpha,\beta)\ast u_{0}(x)=Z_{t}(x)\ast
u_{0}(x),\qquad t\geq0, \label{solAdelic}%
\end{equation}
where $Z_{0}\ast u_{0}=\left.  \big(Z_{t}\ast u_{0}\big)\right\vert
_{t=0}:=u_{0}$. Note that such definition is consistent with Theorem
\ref{Theo5} (v). Since $Z_{t}\left(  x\right)  \in L^{1}\left(  \mathbb{A}%
\right)  $ for $t>0$ and $u_{0}\left(  x\right)  \in\mathcal{S}(\mathbb{A}%
)\subset L^{\infty}\left(  \mathbb{A}\right)  $, the convolution exists and is
a continuous function, see Theorem \ref{Theo5} (ii), \cite[Theorem
1.1.6]{Rudin}.

\begin{lemma}
\label{lemmaAdeli1}Let $u_{0}\in\mathcal{S}(\mathbb{A})$ and $u$ is defined by (\ref{solAdelic}). Then $u\in C\bigl([0,\infty),\mathcal{D}(D^{\alpha,\beta})\bigr)$ and
\begin{equation}
D^{\alpha,\beta}u=\mathcal{F}_{\xi\rightarrow x}^{-1}\left(  \left(
|\xi_{\infty}|_{\infty}^{\beta}+\left\Vert \xi_{f}\right\Vert ^{\alpha
}\right)  e^{-t\left(  |\xi_{\infty}|_{\infty}^{\beta}+\left\Vert \xi
_{f}\right\Vert ^{\alpha}\right)  }\mathcal{F}_{x\rightarrow\xi}u_{0}\right)
\label{formDAdelic}
\end{equation}
for $t\geq0$.
\end{lemma}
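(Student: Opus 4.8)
The plan is to transcribe the finite-adelic argument of Lemmas \ref{LemmaCP1} and \ref{LemmaCP2} into the product structure of $\mathbb{A}=\mathbb{R}\times\mathbb{A}_{f}$. By linearity I would first reduce to a factorizable initial datum $u_{0}=u_{0,\infty}u_{0,f}$ with $u_{0,\infty}\in\mathcal{S}(\mathbb{R})$ and $u_{0,f}\in\mathcal{S}(\mathbb{A}_{f})$. Writing $a(\xi):=|\xi_{\infty}|_{\infty}^{\beta}+\|\xi_{f}\|^{\alpha}$ for the symbol of $D^{\alpha,\beta}$, the factorization \eqref{heat_kernels} together with the defining property of the two heat kernels gives $\widehat{Z_{t}}(\xi)=e^{-t|\xi_{\infty}|_{\infty}^{\beta}}e^{-t\|\xi_{f}\|^{\alpha}}=e^{-ta(\xi)}$, so the convolution theorem yields $\widehat{u}(\xi,t)=e^{-ta(\xi)}\widehat{u}_{0}(\xi)$ for $t\ge0$ (the convolution itself lying in $L^{2}(\mathbb{A})$ by Young's inequality, since $Z_{t}\in L^{1}$ and $u_{0}\in\mathcal{S}(\mathbb{A})\subset L^{2}$).

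Next I would show $u(\cdot,t)\in\mathcal{D}(D^{\alpha,\beta})$ and derive \eqref{formDAdelic}. By Proposition \ref{SlK finite dim} the support of $\widehat{u}_{0,f}$ is contained in some ball $B_{R}$, on which $\|\xi_{f}\|\le R$; since $\widehat{u}_{0,\infty}\in\mathcal{S}(\mathbb{R})$ decays faster than any polynomial, the product $a(\xi)\widehat{u}_{0}(\xi)$ splits as $|\xi_{\infty}|_{\infty}^{\beta}\widehat{u}_{0,\infty}\,\widehat{u}_{0,f}+\|\xi_{f}\|^{\alpha}\widehat{u}_{0,\infty}\,\widehat{u}_{0,f}$, each factorizing into an $L^{2}(\mathbb{R})$ function times an $L^{2}(\mathbb{A}_{f})$ function, whence $a(\xi)\widehat{u}_{0}\in L^{2}(\mathbb{A})$. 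Since $e^{-ta(\xi)}\le1$, this gives $a(\xi)\widehat{u}(\xi,t)\in L^{2}(\mathbb{A})$ for all $t\ge0$, so by \eqref{DabDom} we have $u(\cdot,t)\in\mathcal{D}(D^{\alpha,\beta})$, and applying \eqref{POperator2} produces \eqref{formDAdelic}.

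It then remains to prove continuity of $u$ in the graph norm, i.e. $L^{2}$-continuity in $t$ of both $u(\cdot,t)$ and $D^{\alpha,\beta}u(\cdot,t)$. By the Steklov--Parseval equality this reduces to controlling $\|(e^{-ta}-e^{-t_{0}a})\widehat{u}_{0}\|_{L^{2}}$ and $\|a\,(e^{-ta}-e^{-t_{0}a})\widehat{u}_{0}\|_{L^{2}}$. Using the Mean-Value Theorem exactly as in Lemma \ref{LemmaCP1}, one has $|e^{-ta}-e^{-t_{0}a}|\le|t-t_{0}|\,a$ and $|a(e^{-ta}-e^{-t_{0}a})|\le|t-t_{0}|\,a^{2}$, so these norms are bounded by $|t-t_{0}|\,\|a\widehat{u}_{0}\|_{L^{2}}$ and $|t-t_{0}|\,\|a^{2}\widehat{u}_{0}\|_{L^{2}}$. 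The first factor is finite by the previous paragraph; for the second I would expand $a^{2}=|\xi_{\infty}|_{\infty}^{2\beta}+2|\xi_{\infty}|_{\infty}^{\beta}\|\xi_{f}\|^{\alpha}+\|\xi_{f}\|^{2\alpha}$ and treat each term as before, using Schwartz decay in $\xi_{\infty}$ and $\|\xi_{f}\|\le R$ in $\xi_{f}$. Letting $t\to t_{0}$, including $t_{0}=0$, yields both continuity statements.

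The main obstacle, and the only real departure from the finite-adelic proof, is that the Archimedean symbol $|\xi_{\infty}|_{\infty}^{\beta}$ is \emph{unbounded} on the support of $\widehat{u}_{0}$, so the crude bound by a power of $R$ that suffices in Lemma \ref{LemmaCP1} is unavailable in the real direction. This is resolved precisely by the rapid decay of $\widehat{u}_{0,\infty}\in\mathcal{S}(\mathbb{R})$, which guarantees $a^{2}\widehat{u}_{0}\in L^{2}(\mathbb{A})$; with that in hand the remaining estimates are a routine transcription of the finite-adelic computation.
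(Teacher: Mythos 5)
Your proposal is correct and takes essentially the same route as the paper's proof: reduction to a factorizable datum $u_{0}=u_{\infty}u_{f}$, passage to the Fourier side via the Steklov--Parseval equality, splitting of the symbol $|\xi_{\infty}|_{\infty}^{\beta}+\Vert\xi_{f}\Vert^{\alpha}$ into its Archimedean and finite-adelic parts (rapid decay of $\widehat{u}_{\infty}$ handling the unbounded real direction, compact support of $\widehat{u}_{f}$ handling the finite one), and Mean Value Theorem bounds for the time continuity. If anything, you are more thorough: the paper's own proof of this lemma only verifies $L^{2}$-continuity of $u(\cdot,t)$ itself, deferring the quadratic-symbol estimate needed for $L^{2}$-continuity of $D^{\alpha,\beta}u$ to the proof of Lemma \ref{lemmaAdeli2}, whereas you establish directly the full graph-norm continuity required by $u\in C\bigl([0,\infty),\mathcal{D}(D^{\alpha,\beta})\bigr)$.
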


\begin{proof}
We first verify that $u(\cdot,t)\in\mathcal{D}(D^{\alpha,\beta})$ for $t\geq
0$. Without loss of generality we may assume that $u_{0}(x)=u_{\infty
}(x_{\infty})u_{f}(x_{f})$ with $u_{\infty}\in\mathcal{S}(\mathbb{R})$ and
$u_{f}\in\mathcal{S}(\mathbb{A}_{f})$. Since
\[
\mathcal{F}_{x\rightarrow \xi}u(x,t)=e^{-t\left(  |\xi_{\infty}|_{\infty
}^{\beta}+\left\Vert \xi_{f}\right\Vert ^{\alpha}\right)  }\widehat{u}%
_{\infty}\left(  \xi_{\infty}\right)  \widehat{u}_{f}\left(  \xi_{f}\right)
,
\]
we have
\begin{multline*}\displaybreak[2]
  \left\Vert \left(  |\xi_{\infty}|_{\infty}^{\beta}+\left\Vert \xi
_{f}\right\Vert ^{\alpha}\right)  \mathcal{F}_{x\rightarrow \xi}u\right\Vert
_{L^{2}(\mathbb{A})}\\
  \leq\bigl\Vert   \left\Vert \xi_{f}\right\Vert ^{\alpha
}e^{-t\left\Vert \xi_{f}\right\Vert ^{\alpha}}\widehat{u}_{f}\left(  \xi
_{f}\right) \bigr\Vert _{L^{2}
(\mathbb{A}_f)} \cdot
 \bigl\Vert   e^{-t|\xi_{\infty}|_{\infty}^{\beta}}\widehat
{u}_{\infty}\left(  \xi_{\infty}\right)  \bigr\Vert _{L^{2}
(\mathbb{R})}\\
  +\bigl\Vert   e^{-t\left\Vert \xi_{f}\right\Vert ^{\alpha}}
\widehat{u}_{f}\left(  \xi_{f}\right)  \bigr\Vert_{L^{2}
(\mathbb{A}_f)}
 \cdot \bigl\Vert    |\xi_{\infty}
|_{\infty}^{\beta}e^{-t|\xi_{\infty}|_{\infty}^{\beta}}\widehat{u}_{\infty
}\left(  \xi_{\infty}\right) \bigr\Vert _{L^{2}(\mathbb{R})}\\
\le \bigl\Vert   \left\Vert \xi_{f}\right\Vert ^{\alpha
}\widehat{u}_{f}\left(  \xi
_{f}\right) \bigr\Vert _{L^{2}
(\mathbb{A}_f)} \cdot
 \bigl\Vert   \widehat
{u}_{\infty}  \bigr\Vert _{L^{2}
(\mathbb{R})}
  +\bigl\Vert
\widehat{u}_{f}  \bigr\Vert_{L^{2}
(\mathbb{A}_f)}
 \cdot \bigl\Vert    |\xi_{\infty}
|_{\infty}^{\beta}\widehat{u}_{\infty
}\left(  \xi_{\infty}\right) \bigr\Vert _{L^{2}(\mathbb{R})}\\
= \left\Vert D^{\alpha}u_{f}\right\Vert _{L^{2}\left(  \mathbb{A}
_{f}\right)  }\left\Vert u_{\infty}\right\Vert _{L^{2}(\mathbb{R}
)} + \left\Vert u_{f}\right\Vert _{L^{2}\left(  \mathbb{A}
_{f}\right)  }\left\Vert D^\beta u_{\infty}\right\Vert _{L^{2}(\mathbb{R})},
\end{multline*}
where we used the Parseval-Steklov equality and the equality $d\xi_{\mathbb{A}}=d\xi_{\mathbb{A}_{f}}d\xi_{\infty}$.
Therefore $u(x,t)\in\mathcal{D}(D^{\alpha,\beta})$ for $t\geq0$ and formula
(\ref{formDAdelic}) holds.

To verify the continuity, assume again that $u_{0}(x)=u_{\infty}(x_{\infty
})u_{f}(x_{f})$ with $u_{\infty}\in\mathcal{S}(\mathbb{R})$, $u_{f}%
\in\mathcal{S}(\mathbb{A}_{f})$. With the use of the Parseval-Steklov equality and the Mean Value Theorem we obtain
\begin{multline*}
 \lim_{t^{\prime}\rightarrow t}\left\Vert u\left(  x,t\right)  -u\left(
x,t^{\prime}\right)  \right\Vert _{L^{2}(\mathbb{A})}\\
  =\lim_{t^{\prime}\rightarrow t}\bigl\Vert \bigl(  e^{-t(  |\xi
_{\infty}|_{\infty}^{\beta}+\Vert \xi_{f}\Vert ^{\alpha})
}-e^{-t^{\prime}(  |\xi_{\infty}|_{\infty}^{\beta}+\Vert \xi
_{f}\Vert ^{\alpha})  }\bigr)  \widehat{u}_{\infty}\left(
\xi_{\infty}\right)  \widehat{u}_{f}\left(  \xi_{f}\right)  \bigr\Vert
_{L^{2}(\mathbb{A})}\\
 =\lim_{t^{\prime}\rightarrow t}\bigl\Vert \left(  t-t^{\prime}\right)
\left(  |\xi_{\infty}|_{\infty}^{\beta}+\left\Vert \xi_{f}\right\Vert
^{\alpha}\right)  e^{-\tilde{t}\cdot(  |\xi_{\infty}|_{\infty}^{\beta
}+\left\Vert \xi_{f}\right\Vert ^{\alpha})  }\widehat{u}_{\infty}\left(
\xi_{\infty}\right)  \widehat{u}_{f}\left(  \xi_{f}\right)  \bigr\Vert
_{L^{2}(\mathbb{A})}\\
 \le \lim_{t^{\prime}\rightarrow t}|t-t'|\cdot\bigl\Vert
\left(  |\xi_{\infty}|_{\infty}^{\beta}+\left\Vert \xi_{f}\right\Vert
^{\alpha}\right)  \widehat{u}_{\infty}\left(
\xi_{\infty}\right)  \widehat{u}_{f}\left(  \xi_{f}\right)  \bigr\Vert
_{L^{2}(\mathbb{A})}\\
 \leq\bigl(  \left\Vert D^{\alpha}u_{f}\right\Vert _{L^{2}\left(
\mathbb{A}_{f}\right)  }\left\Vert u_{\infty}\right\Vert _{L^{2}(\mathbb{R}
)}+\left\Vert u_{f}\right\Vert _{L^{2}\left(  \mathbb{A}_{f}\right)
} \bigl\Vert D^\beta u_{\infty}\bigr\Vert _{L^{2}(\mathbb{R}
)} \bigr)  \lim_{t^{\prime}\rightarrow t}\left\vert t-t^{\prime}\right\vert =0,
\end{multline*}
where $\tilde t = \tilde t\bigl(|\xi_{\infty}|_{\infty}^{\beta}+\left\Vert \xi_{f}\right\Vert
^{\alpha}\bigr)$ is a point between $t$ and $t'$.
\end{proof}

\begin{lemma}
\label{lemmaAdeli2}Let $u_{0}\in\mathcal{S}(\mathbb{A})$ and $u(x,t)$,
$t\geq0$ is defined by (\ref{solAdelic}). Then $u(x,t)$ is continuously
differentiable in time for $t\geq0$ and the derivative is given by%
\begin{equation}
\frac{\partial u}{\partial t}(x,t)=-\mathcal{F}_{\xi\rightarrow x}%
^{-1}\bigl(\left(  |\xi_{\infty}|_{\infty}^{\beta}+\left\Vert \xi
_{f}\right\Vert ^{\alpha}\right)  e^{-t(  |\xi_{\infty}|_{\infty}^{\beta
}+\left\Vert \xi_{f}\right\Vert ^{\alpha})  }\mathcal{F}_{x\rightarrow
\xi}u_{0}\bigr). \label{DeriAdelic}%
\end{equation}
\end{lemma}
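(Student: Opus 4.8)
The plan is to mirror the proof of Lemma \ref{LemmaCP1}, replacing the finite-adelic symbol $\|\xi\|^{\alpha}$ by the combined symbol $\lambda(\xi):=|\xi_{\infty}|_{\infty}^{\beta}+\|\xi_{f}\|^{\alpha}$, and to lean on Lemma \ref{lemmaAdeli1} for the action of $D^{\alpha,\beta}$ on the Fourier side. Write $h_{t}(x)$ for the function on the right-hand side of \eqref{DeriAdelic}. First I would check that $h_{t}$ is well defined and belongs to $L^{2}(\mathbb{A})$: since $u_{0}\in\mathcal{S}(\mathbb{A})$ we have $\widehat{u}_{0}\in\mathcal{S}(\mathbb{A})$, and for a factorizable $\widehat{u}_{0}=\widehat{u}_{\infty}\widehat{u}_{f}$ the multiplier $\lambda(\xi)e^{-t\lambda(\xi)}\widehat{u}_{0}(\xi)$ splits, using $e^{-t\lambda}=e^{-t|\xi_{\infty}|_{\infty}^{\beta}}e^{-t\|\xi_{f}\|^{\alpha}}$, into a sum of two products: a Schwartz-times-polynomial factor in $\xi_{\infty}$ (hence in $L^{1}(\mathbb{R})\cap L^{2}(\mathbb{R})$) times a bounded compactly supported factor in $\xi_{f}$ (hence in $L^{1}(\mathbb{A}_{f})\cap L^{2}(\mathbb{A}_{f})$). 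Thus $\lambda(\xi)e^{-t\lambda(\xi)}\widehat{u}_{0}\in L^{1}(\mathbb{A})\cap L^{2}(\mathbb{A})$ for every $t\ge0$, so $h_{t}\in C(\mathbb{A})\cap L^{2}(\mathbb{A})$, and the general case follows by linearity.

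Next I would fix $t_{0}\ge0$ and estimate the $L^{2}(\mathbb{A})$-distance between the difference quotient $\bigl(u(\cdot,t)-u(\cdot,t_{0})\bigr)/(t-t_{0})$ and $h_{t_{0}}$. Passing to the Fourier side by the Steklov--Parseval equality, using $\mathcal{F}_{x\rightarrow\xi}u(\cdot,t)=e^{-t\lambda(\xi)}\widehat{u}_{0}(\xi)$ as in Lemma \ref{lemmaAdeli1}, reduces the problem to estimating the $L^{2}$-norm of
\[
\Bigl(\frac{e^{-t\lambda(\xi)}-e^{-t_{0}\lambda(\xi)}}{t-t_{0}}+\lambda(\xi)e^{-t_{0}\lambda(\xi)}\Bigr)\widehat{u}_{0}(\xi).
\]
Applying the Mean Value Theorem twice, exactly as in Lemma \ref{LemmaCP1}, one rewrites the bracket as $(t'-t_{0})\lambda(\xi)^{2}e^{-t''\lambda(\xi)}$, where $t'$ lies between $t_{0}$ and $t$ and $t''$ lies between $t_{0}$ and $t'$; in particular $t''\ge0$, so $e^{-t''\lambda(\xi)}\le1$.

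The main obstacle, and the only substantive difference from the $\mathbb{A}_{f}$ case, is that the Archimedean symbol $|\xi_{\infty}|_{\infty}^{\beta}$ is unbounded, so there is no uniform constant playing the role of $R^{2\alpha}$ from Lemma \ref{LemmaCP1}. I would resolve this by using Schwartz decay rather than a crude bound: since $\widehat{u}_{\infty}\in\mathcal{S}(\mathbb{R})$ decays faster than any polynomial and $\widehat{u}_{f}$ has compact support, on which $\|\xi_{f}\|^{\alpha}$ is bounded, the function $\lambda(\xi)^{2}\widehat{u}_{0}(\xi)$ lies in $L^{2}(\mathbb{A})$. Combining $e^{-t''\lambda}\le1$ with $|t'-t_{0}|\le|t-t_{0}|$ gives the pointwise bound $\bigl|(t'-t_{0})\lambda^{2}e^{-t''\lambda}\widehat{u}_{0}\bigr|\le|t-t_{0}|\,\lambda^{2}|\widehat{u}_{0}|$, whence the $L^{2}$-norm is at most $|t-t_{0}|\,\bigl\|\lambda(\xi)^{2}\widehat{u}_{0}\bigr\|_{L^{2}(\mathbb{A})}\to0$ as $t\to t_{0}$. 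This identifies $h_{t_{0}}$ as $\partial u/\partial t(\cdot,t_{0})$.

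Finally I would establish continuous differentiability in time by showing that $t\mapsto h_{t}$ is continuous in $L^{2}(\mathbb{A})$. This is once more a Mean-Value-Theorem estimate, now applied to $\lambda(\xi)e^{-t\lambda(\xi)}-\lambda(\xi)e^{-t_{0}\lambda(\xi)}=-(t-t_{0})\lambda(\xi)^{2}e^{-\tilde{t}\lambda(\xi)}$ with $\tilde{t}$ between $t_{0}$ and $t$, and controlled in the same way by $\|\lambda(\xi)^{2}\widehat{u}_{0}\|_{L^{2}(\mathbb{A})}$, which is finite by the same Schwartz-decay argument. I expect the factorization $\widehat{u}_{0}=\widehat{u}_{\infty}\widehat{u}_{f}$ together with the finiteness of $\|\lambda^{2}\widehat{u}_{0}\|_{L^{2}(\mathbb{A})}$ to be the crux of the argument; the rest is routine.
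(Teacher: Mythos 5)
Your proposal is correct and follows essentially the same route as the paper: reduce to factorizable $u_{0}$, pass to the Fourier side by the Steklov--Parseval equality, apply the Mean Value Theorem twice to rewrite the error as $(t'-t_{0})\lambda(\xi)^{2}e^{-t''\lambda(\xi)}\widehat{u}_{0}(\xi)$ with $e^{-t''\lambda}\le 1$, and conclude from the finiteness of $\bigl\Vert\lambda^{2}\widehat{u}_{0}\bigr\Vert_{L^{2}(\mathbb{A})}$, with the same scheme repeated for continuity of the derivative. The only cosmetic difference is that the paper records this last finiteness by expanding $\lambda^{2}$ and bounding by $\Vert D^{2\alpha}u_{f}\Vert\,\Vert u_{\infty}\Vert+2\Vert D^{\alpha}u_{f}\Vert\,\Vert D^{\beta}u_{\infty}\Vert+\Vert u_{f}\Vert\,\Vert D^{2\beta}u_{\infty}\Vert$, whereas you invoke the equivalent fact directly via Schwartz decay in $\xi_{\infty}$ and compact support in $\xi_{f}$.
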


\begin{proof}Assume
that $u_{0}(x)=u_{\infty}(x_{\infty})u_{f}(x_{f})$ with $u_{\infty}%
\in\mathcal{S}(\mathbb{R})$, $u_{f}\in\mathcal{S}(\mathbb{A}_{f})$.
By reasoning as in the proofs of Lemmas \ref{LemmaCP1} and \ref{lemmaAdeli1}, we have
\begin{multline*}
\lim_{t\rightarrow t_{0}}\Bigl\Vert \frac{\widehat{u}(\xi,t)-\widehat{u}%
(\xi,t_{0})}{t-t_{0}}+\left(  |\xi_{\infty}|_{\infty}^{\beta}+\left\Vert
\xi_{f}\right\Vert ^{\alpha}\right)  e^{-t\left(  |\xi_{\infty}|_{\infty
}^{\beta}+\left\Vert \xi_{f}\right\Vert ^{\alpha}\right)  }\mathcal{F}
_{x\rightarrow\xi}u_{0}\Bigr\Vert _{L^{2}(\mathbb{A})}\\
\lim_{t\rightarrow t_{0}} |t-t_0| \cdot\bigl\Vert \left(  |\xi_{\infty}|_{\infty}^{\beta}+\left\Vert
\xi_{f}\right\Vert ^{\alpha}\right)^2  e^{-\tilde t\cdot\left(  |\xi_{\infty}|_{\infty
}^{\beta}+\left\Vert \xi_{f}\right\Vert ^{\alpha}\right)  }\mathcal{F}
_{x\rightarrow\xi}u_{0}\bigr\Vert _{L^{2}(\mathbb{A})}\\
\le \lim_{t\rightarrow t_{0}} |t-t_0| \cdot\bigl\Vert (  |\xi_{\infty}|_{\infty}^{\beta}+\left\Vert
\xi_{f}\right\Vert ^{\alpha})^2 \mathcal{F}
_{x\rightarrow\xi}u_{0}\bigr\Vert _{L^{2}(\mathbb{A})}\\
\leq\bigl(  \Vert D^{2\alpha}u_{f}\Vert \cdot\Vert u_{\infty}\Vert+ 2\Vert D^{\alpha}u_{f}\Vert \cdot\Vert D^{\beta}u_{\infty}\Vert+\Vert u_{f}\Vert\cdot \Vert D^{2\beta} u_{\infty}\Vert \bigr)  \lim_{t^{\prime}\rightarrow t}\left\vert t-t^{\prime}\right\vert =0,
\end{multline*}
where we have used the fact that $\mathcal{S}(\mathbb{R})\subset\mathcal{D}(D^\beta)$ for any $\beta>0$ and $\mathcal{S}(\mathbb{A}_f)\subset\mathcal{D}(D^\alpha)$ for any $\alpha>0$.

To verify the continuity of $\frac{\partial u}{\partial t}(x,t)$, we proceed similarly:
\begin{multline*}
  \lim_{t\rightarrow t_{0}}\Bigl\Vert \frac{\partial u}{\partial
t}(x,t)-\frac{\partial u}{\partial t}(x,t_{0})\Bigr\Vert _{L^{2}(\mathbb{A})}\\
  =\lim_{t\rightarrow t_{0}}\left\vert t_{0}-t\right\vert \bigl\Vert(
|\xi_{\infty}|_{\infty}^{\beta}+\left\Vert \xi_{f}\right\Vert ^{\alpha
})^{2}e^{-\widetilde{t}\left(  |\xi_{\infty}|_{\infty}^{\beta
}+\left\Vert \xi_{f}\right\Vert ^{\alpha}\right)  }\widehat{u}_{\infty}\left(
\xi_{\infty}\right)  \widehat{u}_{f}\left(  \xi_{f}\right)  \bigr\Vert
_{L^{2}(\mathbb{A})}\\
  \leq\lim_{t\rightarrow t_{0}}\left\vert t_{0}-t\right\vert \bigl\Vert
(|\xi_{\infty}|_{\infty}^{\beta}+\left\Vert \xi_{f}\right\Vert ^{\alpha
})^{2}\widehat{u}_{\infty
}\left(  \xi_{\infty}\right)\widehat{u}_{f}\left(  \xi_{f}\right)  \bigr\Vert _{L^{2}(\mathbb{A})} = 0.
\end{multline*}
where we used the Mean Value Theorem with a point $\widetilde{t}$ between $t$ and $t_{0}$.
\end{proof}

As an immediate consequence from Lemmas \ref{lemmaAdeli1} and
\ref{lemmaAdeli2} we obtain

\begin{proposition}
\label{propoAdelic}Let the function $u_{0}\in\mathcal{S}(\mathbb{A})$. Then
the function $u(x,t)$ defined by (\ref{solAdelic}) is a solution of Cauchy
problem (\ref{CauchyProb3}).
\end{proposition}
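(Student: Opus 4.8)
The plan is to observe that Lemmas \ref{lemmaAdeli1} and \ref{lemmaAdeli2} between them already furnish every item in the definition of a solution, so that only a direct comparison of two formulas remains. First I would recall the definition stated just before the proposition: a function $u(x,t)$ solves (\ref{CauchyProb3}) precisely when $u\in C\bigl([0,\infty),\mathcal{D}(D^{\alpha,\beta})\bigr)\cap C^{1}\bigl([0,\infty),L^{2}(\mathbb{A})\bigr)$, when it satisfies $\frac{\partial u}{\partial t}+D^{\alpha,\beta}u=0$ for all $t\geq0$ in the $L^{2}(\mathbb{A})$ sense, and when $u(x,0)=u_{0}(x)$. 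The two regularity memberships are exactly what the two lemmas assert: Lemma \ref{lemmaAdeli1} gives $u\in C\bigl([0,\infty),\mathcal{D}(D^{\alpha,\beta})\bigr)$, while Lemma \ref{lemmaAdeli2} gives that $u$ is continuously differentiable in $t$ with $L^{2}(\mathbb{A})$-valued derivative, i.e. $u\in C^{1}\bigl([0,\infty),L^{2}(\mathbb{A})\bigr)$.

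The key step is simply to compare formulas (\ref{formDAdelic}) and (\ref{DeriAdelic}). By Lemma \ref{lemmaAdeli1},
\[
D^{\alpha,\beta}u=\mathcal{F}_{\xi\rightarrow x}^{-1}\Bigl(\bigl(|\xi_{\infty}|_{\infty}^{\beta}+\Vert\xi_{f}\Vert^{\alpha}\bigr)e^{-t(|\xi_{\infty}|_{\infty}^{\beta}+\Vert\xi_{f}\Vert^{\alpha})}\,\mathcal{F}_{x\rightarrow\xi}u_{0}\Bigr),
\]
whereas Lemma \ref{lemmaAdeli2} gives $\frac{\partial u}{\partial t}$ as the negative of the very same inverse Fourier transform. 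Hence $\frac{\partial u}{\partial t}=-D^{\alpha,\beta}u$ as elements of $L^{2}(\mathbb{A})$ for every $t\geq0$, which is precisely the equation in (\ref{CauchyProb3}). The initial condition is immediate: by the convention $Z_{0}\ast u_{0}:=u_{0}$ adopted in (\ref{solAdelic}) we have $u(x,0)=u_{0}(x)$, and the consistency of this convention with the limit $t\to0+$ is Theorem \ref{Theo5}(v).

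There is essentially no obstacle to overcome at this stage; all the analytic work---membership in the domain, differentiability in $t$, and the Mean Value Theorem estimates controlling the $L^{2}$ difference quotients---was already carried out in the two lemmas. The only point requiring a word of care is that Lemmas \ref{lemmaAdeli1} and \ref{lemmaAdeli2} are established for factorizable $u_{0}=u_{\infty}u_{f}$; since $\mathcal{S}(\mathbb{A})$ consists of finite linear combinations of such products, and since the convolution, $D^{\alpha,\beta}$, and $\frac{\partial}{\partial t}$ are all linear, the conclusion extends to every $u_{0}\in\mathcal{S}(\mathbb{A})$ by linearity. This completes the verification and yields the proposition.
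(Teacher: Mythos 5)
Your proposal is correct and is exactly the paper's argument: the paper states Proposition \ref{propoAdelic} as an immediate consequence of Lemmas \ref{lemmaAdeli1} and \ref{lemmaAdeli2}, which is precisely your comparison of formulas (\ref{formDAdelic}) and (\ref{DeriAdelic}) together with the regularity memberships those lemmas provide and the convention $Z_{0}\ast u_{0}:=u_{0}$ for the initial condition. Your remark on extending from factorizable $u_{0}$ to all of $\mathcal{S}(\mathbb{A})$ by linearity is a sound (and implicitly assumed) detail.
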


Consider the operator $T(t;\alpha,\beta)$, $t\geq0$ of convolution with the
adelic heat kernel
\begin{equation}
T(t;\alpha,\beta)u=Z_{t}\ast u. \label{Tadelic}%
\end{equation}
As in Section \ref{SectCauchy}, the convolution $Z_{t}\ast u$ is a continuous
function of $x$ for $t>0$ and any $u\in L^{2}(\mathbb{A})$ and the operator
$T(t;\alpha,\beta):L^{2}(\mathbb{A})\rightarrow L^{2}(\mathbb{A})$ is bounded.

By reasoning as in the proof of Theorem \ref{Theo4minus}, we obtain
\begin{theorem}
\label{Theo9} Let $\alpha>1$ and $\beta\in\left(  0,2\right]$. Then the
following assertions hold.

\begin{itemize}
\item[(i)] The
operator $-D^{\alpha,\beta}$ generates a $C_0$ semigroup $\bigl(\mathcal{T}(t;\alpha,\beta)\bigr)_{t\ge 0}$. The operator $\mathcal{T}(t;\alpha,\beta)$ coincides for each $t\geq0$ with the operator
$T(t;\alpha,\beta)$ given by (\ref{Tadelic}).

\item[(ii)] Cauchy problem (\ref{CauchyProb3}) is well-posed and its solution
is given by $u(x,t)=Z_{t}\ast u_{0}$, $t\geq0$.
\end{itemize}
\end{theorem}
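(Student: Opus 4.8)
The plan is to follow verbatim the argument of Theorem \ref{Theo4minus}, replacing $\mathbb{A}_{f}$ by $\mathbb{A}$ and $D^{\alpha}$ by $D^{\alpha,\beta}$ throughout. The abstract backbone is already in place: by Lemma \ref{Lemma5} the closure of $-D^{\alpha,\beta}$ is the infinitesimal generator of a contraction $C_{0}$ semigroup $\bigl(\mathcal{T}(t;\alpha,\beta)\bigr)_{t\ge 0}$ on $L^{2}(\mathbb{A})$. The general theory of $C_{0}$ semigroups then yields at once that the abstract Cauchy problem \eqref{CauchyProb3} is well-posed, i.e. for every $u_{0}\in\mathcal{D}(D^{\alpha,\beta})$ there is a unique solution $u(x,t)=\mathcal{T}(t;\alpha,\beta)u_{0}$ in the required class $C\bigl([0,\infty),\mathcal{D}(D^{\alpha,\beta})\bigr)\cap C^{1}\bigl([0,\infty),L^{2}(\mathbb{A})\bigr)$, depending continuously on the initial datum; here I would cite \cite[Theorem 3.1.1]{C-H} or \cite[Ch.~2, Prop.~6.2]{E-N}. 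This disposes of the existence, uniqueness and continuous dependence abstractly.

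The crux of the proof is the identification of the abstract semigroup $\mathcal{T}(t;\alpha,\beta)$ with the concrete convolution operator $T(t;\alpha,\beta)$ defined in \eqref{Tadelic}. The plan is to compare the two families on the dense subspace $\mathcal{S}(\mathbb{A})$. By Proposition \ref{propoAdelic}, for each $u_{0}\in\mathcal{S}(\mathbb{A})$ the function $u(x,t)=Z_{t}\ast u_{0}=T(t;\alpha,\beta)u_{0}$ is a solution of \eqref{CauchyProb3} in exactly the solution class above; by the uniqueness just established it must coincide with $\mathcal{T}(t;\alpha,\beta)u_{0}$. Hence $\left.\mathcal{T}(t;\alpha,\beta)\right|_{\mathcal{S}(\mathbb{A})}=\left.T(t;\alpha,\beta)\right|_{\mathcal{S}(\mathbb{A})}$ for every $t\ge 0$.

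To promote this equality from $\mathcal{S}(\mathbb{A})$ to all of $L^{2}(\mathbb{A})$, I would invoke boundedness of both operators: the semigroup is a contraction by Lemma \ref{Lemma5}, and $T(t;\alpha,\beta)$ is bounded on $L^{2}(\mathbb{A})$ by the remark preceding the theorem (Young's inequality together with $\|Z_{t}\|_{L^{1}(\mathbb{A})}=1$ from Theorem \ref{Theo5}(ii),(iii)). Since $\mathcal{S}(\mathbb{A})$ is dense in $L^{2}(\mathbb{A})$, two bounded operators agreeing on a dense subspace agree everywhere, so $\mathcal{T}(t;\alpha,\beta)=T(t;\alpha,\beta)$ on $L^{2}(\mathbb{A})$, proving (i). Statement (ii) then follows, the explicit formula $u(x,t)=Z_{t}\ast u_{0}$ being simply the concrete form of $u(x,t)=\mathcal{T}(t;\alpha,\beta)u_{0}$, and one may close by citing the standard semigroup results \cite[Proposition 3.1.9]{Aren}, \cite[Theorem 3.1.1]{C-H}.

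I do not expect a serious obstacle, as every ingredient is already available; the only point requiring care is the matching of the regularity class produced by Proposition \ref{propoAdelic} against the uniqueness statement of the semigroup theory, so that the abstract uniqueness genuinely applies to the concrete solution $Z_{t}\ast u_{0}$. Verifying that $Z_{t}\ast u_{0}$ indeed lies in $C\bigl([0,\infty),\mathcal{D}(D^{\alpha,\beta})\bigr)\cap C^{1}\bigl([0,\infty),L^{2}(\mathbb{A})\bigr)$ is precisely the content of Lemmas \ref{lemmaAdeli1} and \ref{lemmaAdeli2}, so this is already secured and the identification argument goes through cleanly.
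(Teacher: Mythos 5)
Your proposal is correct and follows essentially the same route as the paper: the paper proves Theorem \ref{Theo9} precisely "by reasoning as in the proof of Theorem \ref{Theo4minus}," i.e. generation of the contraction semigroup via Lemma \ref{Lemma5}, abstract well-posedness from \cite{C-H}, identification of $\mathcal{T}(t;\alpha,\beta)$ with $T(t;\alpha,\beta)$ on $\mathcal{S}(\mathbb{A})$ through Proposition \ref{propoAdelic} and uniqueness, and then extension to $L^{2}(\mathbb{A})$ by boundedness and density. Your closing remark about the regularity class being secured by Lemmas \ref{lemmaAdeli1} and \ref{lemmaAdeli2} is exactly the point the paper relies on as well.
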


\subsection{Non homogeneous equations}

Consider the following Cauchy problem
\begin{equation}
\left\{
\begin{aligned}
&\frac{\partial u\left(  x,t\right)  }{\partial t}+D^{\alpha,\beta}u\left(
x,t\right)  =f\left(  x,t\right)  , && x\in\mathbb{A},\ t\in\left[  0,T\right]
,\ T>0,\\
& u(x,0)=u_{0}(x), && u_{0}(x)\in\mathcal{D}\left(  D^{\alpha,\beta}\right)  .
\end{aligned}
\right.  \label{CauchyProb5}%
\end{equation}

We say that a function $u(x,t)$ is a \textit{solution of} (\ref{CauchyProb5}),
if $u\in C\bigl([0,T],\mathcal{D}(D^{\alpha,\beta})\bigr)\cap
C^{1}\bigl([0,T],L^{2}(\mathbb{A})\bigr)$ and if $u$ satisfies
equation (\ref{CauchyProb5}) for $t\in[0,T]$.

\begin{theorem}
\label{Theo10}Let $\alpha>1$, $\beta\in\left(  0,2\right]  $ and let $f\in
C\bigl([0,T],L^{2}(\mathbb{A})\bigr)$. Assume that at least one of the
following conditions is satisfied:

\begin{itemize}
\item[(i)] $f\in L^{1}\bigl((0,T),\mathcal{D}(D^{\alpha,\beta})\bigr)$;

\item[(ii)] $f\in W^{1,1}\bigl((0,T),L^{2}(\mathbb{A})\bigr)$.
\end{itemize}
Then Cauchy problem (\ref{CauchyProb5}) has a unique solution given by
\[
u(x,t)=\int_{\mathbb{A}}Z\left(  x-y,t;\alpha,\beta\right)
u_{0}\left(  y\right)  dy_{\mathbb{A}}+\int_{0}^{t}\biggl\{
\int_{\mathbb{A}}Z\left(  x-y,t-\tau;\alpha,\beta\right)  f\left(
y,\tau\right)  dy_{\mathbb{A}}\biggr\}  d\tau.
\]
\end{theorem}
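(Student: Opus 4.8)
The plan is to reduce Theorem \ref{Theo10} to the abstract theory of inhomogeneous Cauchy problems for generators of $C_{0}$ semigroups, exactly as Theorem \ref{Theo4} was reduced to Theorem \ref{Theo4minus} in the finite-adelic setting. The essential input has already been secured in Theorem \ref{Theo9}: the operator $-D^{\alpha,\beta}$, with domain \eqref{DabDom}, is the infinitesimal generator of a contraction $C_{0}$ semigroup $\bigl(\mathcal{T}(t;\alpha,\beta)\bigr)_{t\ge 0}$, and for each $t\ge 0$ this semigroup operator coincides with convolution by the heat kernel, $\mathcal{T}(t;\alpha,\beta)u = Z_{t}\ast u$.

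First I would recall the standard variation-of-parameters (Duhamel) representation for the mild solution of \eqref{CauchyProb5},
\[
u(x,t) = \mathcal{T}(t;\alpha,\beta)u_{0} + \int_{0}^{t}\mathcal{T}(t-\tau;\alpha,\beta)\,f(\cdot,\tau)\,d\tau.
\]
Then, using the identification $\mathcal{T}(s;\alpha,\beta)v = Z_{s}\ast v$ from Theorem \ref{Theo9}, I would rewrite each semigroup action as a convolution against $Z$, turning the two terms into exactly the two adelic integrals appearing in the statement. Here one checks, by Fubini (justified by $Z_{s}\in L^{1}(\mathbb{A})$ from Theorem \ref{Theo5} (iii) together with $f(\cdot,\tau)\in L^{2}(\mathbb{A})$), that
\[
\bigl(\mathcal{T}(t-\tau;\alpha,\beta)f(\cdot,\tau)\bigr)(x)=\int_{\mathbb{A}}Z(x-y,t-\tau;\alpha,\beta)\,f(y,\tau)\,dy_{\mathbb{A}}\qquad\text{for a.e. }x,
\]
so that the claimed formula is precisely the abstract Duhamel formula.

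The core step is then to invoke the classical regularity theorem for inhomogeneous abstract Cauchy problems: if $-A$ generates a $C_{0}$ semigroup on a Banach space $X$, $u_{0}\in\mathcal{D}(A)$, and $f$ satisfies either (i) $f\in L^{1}\bigl((0,T),\mathcal{D}(A)\bigr)$ or (ii) $f\in W^{1,1}\bigl((0,T),X\bigr)$, then the mild solution is the unique classical solution, i.e. $u\in C\bigl([0,T],\mathcal{D}(A)\bigr)\cap C^{1}\bigl([0,T],X\bigr)$ and $u$ solves the equation on $[0,T]$; see e.g. \cite[Proposition 3.1.16]{Aren} or \cite[Proposition 4.1.6]{C-H}. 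Applied with $A=D^{\alpha,\beta}$ and $X=L^{2}(\mathbb{A})$, this yields existence, the asserted regularity, and uniqueness in one stroke.

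I do not expect any serious obstacle: the delicate spectral and generation facts were all established in Lemma \ref{Lemma5} and Theorem \ref{Theo9}, and what remains is bookkeeping, namely matching the convolution representation with the abstract Duhamel formula and verifying that hypotheses (i) and (ii) are literally the hypotheses of the cited semigroup results. The only point requiring minor care is the Fubini interchange used to write the double-integral term as an iterated convolution, which is licensed by the integrability of the heat kernel together with $f\in C\bigl([0,T],L^{2}(\mathbb{A})\bigr)$.
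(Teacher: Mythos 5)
Your proposal is correct and follows essentially the same route as the paper: the paper's proof simply invokes Theorem \ref{Theo9} together with the same semigroup-theoretic results you cite (\cite[Proposition 3.1.16]{Aren}, \cite[Proposition 4.1.6]{C-H}). Your additional details — the Duhamel formula, the identification of $\mathcal{T}(t;\alpha,\beta)$ with convolution by $Z_t$, and the Fubini check — are exactly the bookkeeping the paper leaves implicit.
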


\begin{proof}
With the use of Theorem \ref{Theo9} the proof follows from well-known results
of the semigroup theory, see e.g. \cite[Proposition 3.1.16]{Aren}, \cite[Proposition 4.1.6]{C-H}.
\end{proof}

\end{document}